\def\noheaderplainsetup{

\topmargin=0pt \headheight=0pt \headsep=0pt  \oddsidemargin=0pt \evensidemargin=0pt  \textheight=9.1truein \textwidth=6.5truein}   
\begin{document}


\newcommand{\clthree}{\mbox{\bf CL12}}
\newcommand{\cltw}{\mbox{\bf CL12}}
\newcommand{\clfour}{\mbox{\bf CL4}}
\newcommand{\arfour}{\mbox{\bf CLA4}} 
\newcommand{\pa}{\mbox{\bf PA}} 
\newcommand{\bound}{\mathfrak{b}} 


\newcommand{\intimpl}{\mbox{\hspace{2pt}$\circ$\hspace{-0.14cm} \raisebox{-0.043cm}{\Large --}\hspace{2pt}}}
\newcommand{\zero}{\mbox{\small {\bf 0}}}
\newcommand{\izero}{\mbox{\scriptsize {\bf 0}}}
\newcommand{\one}{\mbox{\small {\bf 1}}}
\newcommand{\ione}{\mbox{\scriptsize {\bf 1}}}
\newcommand{\plus}{\mbox{\hspace{1pt}\raisebox{0.05cm}{\tiny\boldmath $+$}\hspace{1pt}}}
\newcommand{\iplus}{\mbox{\raisebox{0.03cm}{\tiny $+$}}}
\newcommand{\minus}{\mbox{\hspace{1pt}\raisebox{0.05cm}{\tiny\boldmath $-$}\hspace{1pt}}}
\newcommand{\iminus}{\mbox{\raisebox{0.03cm}{\tiny $-$}}}
\newcommand{\mult}{\mbox{\hspace{1pt}\raisebox{0.05cm}{\tiny\boldmath $\times$}\hspace{1pt}}}
\newcommand{\imult}{\mbox{\raisebox{0.03cm}{\tiny $\times$}}}
\newcommand{\equals}{\mbox{\hspace{1pt}\raisebox{0.05cm}{\tiny\boldmath $=$}\hspace{1pt}}}
\newcommand{\notequals}{\mbox{\hspace{1pt}\raisebox{0.05cm}{\tiny\boldmath $\not=$}\hspace{1pt}}}
\newcommand{\successor}{\mbox{\hspace{1pt}\boldmath $'$}}

\newcommand{\mless}{\mbox{\hspace{1pt}\raisebox{0.05cm}{\tiny\boldmath $<$}\hspace{1pt}}}
\newcommand{\mgreater}{\mbox{\hspace{1pt}\raisebox{0.05cm}{\tiny\boldmath $>$}\hspace{1pt}}}
\newcommand{\mleq}{\mbox{\hspace{1pt}\raisebox{0.05cm}{\tiny\boldmath $\leq$}\hspace{1pt}}}
\newcommand{\mgeq}{\mbox{\hspace{1pt}\raisebox{0.05cm}{\tiny\boldmath $\geq$}\hspace{1pt}}}

\newcommand{\elz}[1]{\mbox{$\parallel\hspace{-3pt} #1 \hspace{-3pt}\parallel$}} 
\newcommand{\elzi}[1]{\mbox{\scriptsize $\parallel\hspace{-3pt} #1 \hspace{-3pt}\parallel$}}
\newcommand{\emptyrun}{\langle\rangle} 
\newcommand{\oo}{\bot}            
\newcommand{\pp}{\top}            
\newcommand{\xx}{\wp}               
\newcommand{\legal}[2]{\mbox{\bf Lr}^{#1}_{#2}} 
\newcommand{\win}[2]{\mbox{\bf Wn}^{#1}_{#2}} 
\newcommand{\seq}[1]{\langle #1 \rangle} 
\newcommand{\code}[1]{\ulcorner #1 \urcorner}          


\newcommand{\pst}{\mbox{\raisebox{-0.01cm}{\scriptsize $\wedge$}\hspace{-4pt}\raisebox{0.16cm}{\tiny $\mid$}\hspace{2pt}}}
\newcommand{\pcost}{\mbox{\raisebox{0.12cm}{\scriptsize $\vee$}\hspace{-4pt}\raisebox{0.02cm}{\tiny $\mid$}\hspace{2pt}}}

\newcommand{\gneg}{\mbox{\small $\neg$}}                  
\newcommand{\mli}{\hspace{2pt}\mbox{\small $\rightarrow$}\hspace{2pt}}                      
\newcommand{\cla}{\mbox{$\forall$}}      
\newcommand{\cle}{\mbox{$\exists$}}        
\newcommand{\mld}{\hspace{2pt}\mbox{\small $\vee$}\hspace{2pt}}     
\newcommand{\mlc}{\hspace{2pt}\mbox{\small $\wedge$}\hspace{2pt}}   
\newcommand{\mlci}{\hspace{2pt}\mbox{\footnotesize $\wedge$}\hspace{2pt}}   
\newcommand{\ade}{\mbox{\large $\sqcup$}}      
\newcommand{\ada}{\mbox{\large $\sqcap$}}      
\newcommand{\add}{\hspace{2pt}\mbox{\small $\sqcup$}\hspace{2pt}}                     
\newcommand{\adc}{\hspace{2pt}\mbox{\small $\sqcap$}\hspace{2pt}} 
\newcommand{\adci}{\hspace{2pt}\mbox{\footnotesize $\sqcap$}\hspace{2pt}}              
\newcommand{\clai}{\forall}     
\newcommand{\clei}{\exists}        
\newcommand{\tlg}{\bot}               
\newcommand{\twg}{\top}               
\newcommand{\fintimpl}{\mbox{\hspace{2pt}$\bullet$\hspace{-0.14cm} \raisebox{-0.058cm}{\Large --}\hspace{-6pt}\raisebox{0.008cm}{\scriptsize $\wr$}\hspace{-1pt}\raisebox{0.008cm}{\scriptsize $\wr$}\hspace{4pt}}}
\newcommand{\col}[1]{\mbox{$#1$:}}


\newtheorem{theoremm}{Theorem}[section]
\newtheorem{factt}[theoremm]{Fact}
\newtheorem{definitionn}[theoremm]{Definition}
\newtheorem{lemmaa}[theoremm]{Lemma}
\newtheorem{propositionn}[theoremm]{Proposition}
\newtheorem{conventionn}[theoremm]{Convention}
\newtheorem{examplee}[theoremm]{Example}
\newtheorem{exercisee}[theoremm]{Exercise}
\newtheorem{thesiss}[theoremm]{Thesis}
\newenvironment{definition}{\begin{definitionn} \em}{ \end{definitionn}}
\newenvironment{theorem}{\begin{theoremm}}{\end{theoremm}}
\newenvironment{lemma}{\begin{lemmaa}}{\end{lemmaa}}
\newenvironment{fact}{\begin{factt}}{\end{factt}}
\newenvironment{proposition}{\begin{propositionn} }{\end{propositionn}}
\newenvironment{convention}{\begin{conventionn} \em}{\end{conventionn}}
\newenvironment{example}{\begin{examplee} \em}{\end{examplee}}
\newenvironment{thesis}{\begin{thesiss} \em}{\end{thesiss}}
\newenvironment{exercise}{\begin{exercisee} \em}{\end{exercisee}}
\newenvironment{proof}{ {\bf Proof.} }{\  \rule{2.5mm}{2.5mm} \vspace{.2in} }
\newenvironment{idea}{ {\bf Proof idea.} }{\  \rule{1.5mm}{1.5mm} \vspace{.15in} }
\newenvironment{subproof}{ {\em Proof.} }{\  \rule{2mm}{2mm} \vspace{.1in} }

\title{Introduction to clarithmetic I}
\author{Giorgi Japaridze}

\date{}
\maketitle

\begin{abstract} ``{\em Clarithmetic}'' is a  generic name for formal number theories similar to  Peano arithmetic, but based  on    {\em computability logic}   instead of the more traditional classical or intuitionistic logics. Formulas of clarithmetical theories represent interactive computational problems, and their ``truth'' is understood as existence of an algorithmic solution. Imposing various complexity constraints on such solutions yields various versions of clarithmetic. The present paper introduces a system   of clarithmetic for polynomial time computability, which is shown to be sound and complete. Sound     in the sense that every theorem $T$ of the system represents an interactive  number-theoretic computational problem with a polynomial time solution and, furthermore, such a solution can be efficiently  extracted from a proof of $T$. And complete in the sense that every interactive number-theoretic problem with a polynomial time solution is represented by some theorem $T$ of the system. The paper is written in a semitutorial style and targets readers with no prior familiarity with computability logic.  
\end{abstract}

\noindent {\em MSC}: primary: 03F50; secondary: 03F30; 03D75; 03D15; 68Q10; 68T27; 68T30

\

\noindent {\em Keywords}: Computability logic; Interactive computation; Implicit computational complexity;  Game semantics; Peano arithmetic; Bounded arithmetic 


\section{Introduction}\label{intr}

{\em Computability logic} (CoL), introduced in \cite{Jap03,Japic,Japfin}, is a semantical, mathematical and philosophical platform, and a long-term program, for redeveloping logic as a formal theory of computability, as opposed to the formal theory of truth which logic has more traditionally been. 
 Under the approach of CoL, formulas represent computational problems, 
and their ``truth'' is seen as algorithmic solvability. In turn, computational problems --- understood in their  most general, {\em interactive} sense --- are defined as games played by a machine against its environment, with ``algorithmic solvability'' meaning existence of a machine that wins the game against any possible behavior of the environment. And an open-ended collection of the most basic and natural operations 
on computational problems forms the logical vocabulary of the theory.  With this semantics, CoL provides a systematic answer to the fundamental question ``{\em what can be computed?}\hspace{1pt}'', just as classical logic is a systematic tool for telling what is true. Furthermore, as it turns out, in positive cases ``{\em what} can be computed'' always allows itself to be replaced by ``{\em how} can be computed'', which makes CoL of potential interest in not only theoretical computer science, but many  applied areas as well, including interactive knowledge base systems, resource oriented systems for planning and action, or declarative programming languages. 

While potential applications have been repeatedly pointed out in earlier papers on CoL, so far all technical efforts had been mainly focused on finding axiomatizations for various fragments of this semantically conceived and inordinately expressive logic. Considerable advances have already been made in this direction (\cite{Japtocl1}-\cite{Cirq}, \cite{Japtcs}-\cite{Japlbcs},  \cite{Ver}), and more results in the same style are probably still to come. It should be however remembered that the main value of CoL, or anything else claiming to be a ``Logic'' with a capital ``L'', will  eventually be determined by whether and how it relates to the outside, extra-logical world. In this respect, unlike many  other systems officially classified as ``logics'', the merits of classical logic  are obvious, most eloquently demonstrated by the fact that applied formal theories, a model example of which is {\em Peano arithmetic} $\pa$,\label{iPA} can be and have been successfully based on it. Unlike pure logics with their meaningless symbols, such theories are direct tools for studying and navigating the real world with its non-man-made, meaningful objects, such as natural numbers in the case of arithmetic. To make this point more clear to a computer scientist, one could compare a pure  logic  with a programming language, and applied theories based on it with application programs written in that language. A programming language created for its own sake, mathematically or esthetically appealing  but otherwise unusable as a general-purpose, comprehensive basis for application  programs, would hardly be of much interest.  

So, in parallel with studying possible axiomatizations and various metaproperties of pure CoL, it would certainly be worthwhile to devote some efforts to justifying its right on existence through revealing its power and appeal as a basis for applied systems. First and so far the only concrete steps in this direction have been made very recently in \cite{Japtowards}, where a CoL-based system {\bf CLA1} of (Peano) arithmetic was constructed.\footnote{The paper \cite{Xu} (in Chinese) is apparently another exception, focused on applications of CoL in AI.} Unlike its classical-logic-based counterpart {\bf PA}, {\bf CLA1} is not merely about what arithmetical facts are {\em true}, but about what arithmetical problems can be actually {\em computed} or effectively {\em solved}. More precisely, every formula of the language of {\bf CLA1} expresses a number-theoretic computational {\em problem} (rather than just a true/false {\em fact}), every theorem expresses a problem that has an algorithmic solution,  and every proof encodes such a solution. Does not this sound  exactly like what the constructivists have been calling for? 

Unlike the mathematical or philosophical constructivism, however, and even unlike the early-day theory of computation, modern computer science has long
understood that, what really matters, is not just {\em computability}, but rather {\em efficient computability}. So, the next natural step on the road of revealing the importance of CoL for computer science would be showing that it can be used for  studying efficient computability just as successfully as for studying computability-in-principle. Anyone familiar with the earlier work on CoL could have found reasons for optimistic expectations here. Namely, every provable formula of any of the known sound  axiomatizations of CoL happens to be a scheme of  not only ``always computable'' problems, but ``always efficiently computable'' problems just as well, whatever efficiency exactly means in the context of interactive computation that CoL operates in. That is, at the level of pure logic, computability and efficient computability yield the same classes of valid principles. The study of logic abounds with   phenomena in this style. One example would be the well known fact about classical logic, according to which validity with respect to all possible models is equivalent to validity with respect to just models with countable domains.       

At the level of reasonably expressive applied theories, however, one should certainly expect significant differences depending on whether the underlying concept of interest is  efficient computability or computability-in-principle. For instance, the earlier-mentioned system {\bf CLA1} proves formulas expressing computable but  often intractable arithmetical problems. A purpose of the present paper is to construct a CoL-based system for arithmetic which, unlike {\bf CLA1}, proves only efficiently --- specifically, polynomial time --- computable problems. The new applied formal theory $\arfour$\label{iPTA}  presented in Section \ref{ss11} achieves this purpose. It is also a good starting point for exploring the wider  class of systems under the generic name ``{\em clarithmetic}'' --- arithmetical theories based on CoL, with $\arfour$ being a model example of complexity-oriented versions of clarithmetic, a series of other variations of which, such as systems for polynomial space computability, primitive recursive computability, {\bf PA}-provably recursive computability and so on, are still to come in the near future (see \cite{cla5,cla8}).     Among the main purposes of the present piece of writing is to introduce the promising world of clarithmetic to a relatively wide audience. This explains the semitutorial style in which the paper is written.  It targets readers with no prior familiarity with CoL.

Just like {\bf CLA1}, our present system $\arfour$ is not only a cognitive, but also a problem-solving tool: in order to find a solution for a given problem, it would be sufficient to write the problem in the language of the system, and find a proof of it. An algorithmic solution for the problem then would automatically come together with such a proof. However, unlike the solutions extracted from {\bf CLA1}-proofs, which might be intractable, the solutions extracted from $\arfour$-proofs would always be efficient. 

Furthermore, $\arfour$ turns out to be not only sound, but also complete in a certain reasonable sense that we call {\em extensional completeness}.\label{iextcom} According to the latter, every number-theoretic computational problem that has a polynomial time solution is represented by some theorem of  $\arfour$. Taking into account that there are many ways to represent the same problem, extensional completeness is weaker than what can be called {\em intensional completeness},\label{iintcom} according to which any formula representing an (efficiently) computable problem is provable. In these terms, G\"{o}del's celebrated theorem,\label{igincom} here  with ``truth''=``computability'', is about intensional rather than extensional incompleteness. In fact, extensional completeness is not at all interesting in the context of classical-logic-based theories such as $\pa$. In such theories, unlike CoL-based theories, it is trivially achieved, as the provable formula $\twg$ represents every true sentence.  

Syntactically, our $\arfour$ is an extension of {\bf PA}, and the semantics of the former is a conservative generalization of the semantics of the latter. Namely, the formulas of $\pa$, which form only a proper subclass of the formulas of $\arfour$, are seen as special, ``moveless'' sorts of problems/games, automatically solved/won when true and failed/lost when false. This makes the classical concept of truth just a special case of computability in our sense --- it is nothing but computability restricted to (the problems represented by) the traditional sorts of formulas. And this means that G\"{o}del's incompleteness theorems automatically extend from $\pa$ to $\arfour$, so that, unlike extensional completeness,  intensional completeness in $\arfour$  or any other sufficiently expressive sound CoL-based applied theory is impossible to achieve in principle. As for {\bf CLA1}, it turns out to be incomplete in both senses. Section \ref{sincom} shows that any recursively axiomatizable, sufficiently expressive, sound  system would be (not only intensionally but also) extensionally incomplete, as long as the semantics of the system is based on unrestricted (as opposed to, say, efficient) computability.

Among the main moral merits of the present investigation and its contributions to the overall CoL project is an illustration of the fact that,  in constructing CoL-based applied theories, successfully switching from computability to efficient computability is possible and even more than just possible. As noted, efficient computability, in fact, turns out to be much better behaved than computability-in-principle: the former allows us to achieve completeness in a sense in which 
the latter yields inherent incompleteness.    

An advanced reader will easily understand that the present paper, while focused on the system $\arfour$ of (cl)arithmetic, in fact is not only about arithmetic, but also just as much about CoL-based applied theories or knowledge base systems in general, with $\arfour$ only serving as a model example of such systems.  
Generally, the nonlogical axioms or the knowledge base of a CoL-based applied  system would be any collection of (formulas expressing) problems whose algorithmic or efficient solutions are known. Sometimes, together with nonlogical axioms, we may also have nonlogical rules of inference, preserving the property of computability or efficient computability.   Then, the soundness of the corresponding underlying axiomatization of CoL (in our present case, it is system  \clthree\ studied in \cite{Japtowards, Japlbcs}) --- which usually comes in the strong form called {\em uniform-constructive soundness} --- guarantees that every theorem $T$ of the theory also has an effective or efficient solution and that, furthermore, such
a solution can be effectively or efficiently extracted from a proof of $T$. 
It is this fact that, as mentioned,  makes CoL-based systems problem-solving tools.

More specifically, efficiency-oriented  systems in the above style and $\arfour$ in particular can be seen as  programming languages, where ``programming'' simply means theorem-proving. The soundness of the underlying system guarantees that any proof that can be written will be translatable into a program that runs efficiently and indeed is a solution of the problem expressed by the target formula of the proof. Note that the problem of verifying whether a program meets its specification, which is generally undecidable, is fully neutralized here: the ``specification'' is nothing but the target formula of the proof, and the  proof itself, while  encoding an efficient program, also automatically serves as a verification of the correctness of that program. Furthermore,  every step/formula of the proof can be viewed as its own (best possible)  ``comment''. In a more ambitious and, at this point, somewhat fantastic perspective, after developing reasonable theorem-provers, CoL-based  efficiency-oriented systems can be seen as declarative programming languages in an extreme sense, where human ``programming'' just means writing a formula expressing the problem whose efficient solution is sought for systematic usage in the future. That is, a program simply coincides with its specification. The compiler's job would be finding a proof (the hard part) and translating it into a machine-language code (the easy part). The process of compiling could thus take long but, once compiled, the program would run fast ever after. 

Various  complexity-oriented systems  have been studied in the literature (\cite{bbb1,bbb2,Buss,Bussint,bbb3,bbb4,bbb5,bbb6,Sch} and more).    A notable advantage of CoL-based complexity-oriented systems over the other systems with similar aspirations, which typically happen to be inherently weak systems,  is having actually or potentially unlimited strength, with the latter including the   full  expressive and deductive power of classical logic and Peano arithmetic. In view of the above-outlined potential applications, the importance of this feature is obvious: the stronger a system, the better the chances that a proof/program will be found for a declarative, non-preprocessed, ad hoc specification of the goal. Among the other appealing features of clarithmetic is  being semantically meaningful in the full generality of its language, scalable, and easy to understand in its own right. Syntactically it also tends to be remarkably simple. For instance, on top of the standard Peano axioms, our present system $\arfour$ only has two   additional axioms $\ada x\ade y(y\equals x\plus 1)$ and $\ada x\ade y(y\equals 2x)$, one saying that the function $x\plus 1$ is (efficiently) computable, and the other saying the same about the function $2x$. As will be seen later, from these two innocuous-looking axioms and one (also very simple) rule of induction called {\em $\arfour$-Induction}, via CoL, one can obtain  ``practically full''  information about polynomial time computability of number-theoretic problems, in the same sense as $\pa$, despite   G\"{o}del's incompleteness, allows us to obtain ``practically full'' information about  arithmetical truth. To put it in other words, if a formula $F$ is not provable in $\arfour$, it is unlikely that anyone would  find a polynomial time algorithm solving the problem expressed by $F$: either such an algorithm does not exist, or (as will be seen from Theorem \ref{jan30}) showing its correctness requires going  beyond ordinary combinatorial reasoning formalizable in $\pa$. 

The closest ancestor of our present system $\arfour$ is Buss's {\em bounded arithmetic} for polynomial time. The similarity is related to the single yet important fact that the above-mentioned rule of $\arfour$-Induction is nothing but an adaptation of Buss's PIND (``Polynomial Induction'') principle to the new semantical environment in which $\arfour$ operates. In this sense, $\arfour$ can be characterized as a ``CoL-based bounded arithmetic'', as opposed to Buss's original versions of bounded arithmetic that are based on classical (\cite{Buss}) or intuitionistic (\cite{Bussint}) logics. The switch to CoL as the logical basis for such theories creates notable differences. Among the advantages offered by this switch is absolute flexibility (as long as certain minimum-strength requirements are satisfied) in selecting the underlying ``purely arithmetical'' axioms. Choosing the latter to be the kind old axioms of Peano, as done in $\arfour$, allows us to achieve dramatically greater (than in the case of Buss's systems) intensional strength. Furthermore, as shown in Section \ref{sculprit}, replacing Peano axioms with stronger ones can take us arbitrarily close to intensional completeness. This is just as far as one can go in similar pursuits because, as already noted, in view of G\"{o}del's incompleteness phenomenon, no particular  recursively enumerable system can be intensionally complete. In contrast, even ``slightly'' increasing the strength of the underlying (carefully hand-picked and intensionally very weak) set of arithmetical axioms in classical-logic-based or intuitionistic-logic-based bounded arithmetic immediately results in loss of soundness. We will come back to this topic in Section \ref{scesar}.

\section{An informal overview of the main operations on games}\label{ss2}
Introducing and justifying CoL in  full generality is not among the goals of the present paper --- this job has been done in \cite{Jap03,Japic,Japfin}. We will reintroduce only as much of (the otherwise much wider) CoL as technically necessary for understanding the system $\arfour$ based on  it.  

As noted, formulas in CoL represent computational problems. Such problems are understood as games between two players: $\pp$,\label{ipp} called   {\bf Machine},\label{imachine} and $\oo$,\label{ioo} called  {\bf Environment} (these names will not always be capitalized, and may take articles ``a'' or ``the'').\label{ienvironment} Machine is a mechanical device with  fully determined, algorithmic behavior.  On the other hand, there are no restrictions on the behavior of Environment. A given machine is considered to be  {\em solving} a given problem iff it wins the corresponding game no matter 
how the environment acts. 

Standard atomic sentences, such as ``$0\equals 0$'' or ``Peggy is John's mother'', are understood as special sorts of games, called {\bf elementary}.\label{ielem1} There are no moves in elementary games, and they are automatically won or lost. Specifically, the elementary game represented by a true sentence is won (without making any moves) by Machine, and the elementary game represented by a false sentence is won by Environment.  
 
Logical operators are understood as operations on games/problems. One of the important groups of such operations, termed {\bf choice operations},\label{ichoiceop} comprises  $\adc,\add,\ada,\ade$. These are called {\bf choice conjunction},\label{ichoicecon} {\bf choice disjunction},\label{ichoicedis} {\bf choice universal quantifier}\label{ichoiceuq} and {\bf choice existential quantifier},\label{ichoiceeq} respectively. $A_0\adc A_1$ is a game where the first legal move (``choice"), which should be either $0$ or $1$, is by $\oo$. After such a move/choice $i$ is made, the play continues and the winner is determined according to the rules of $A_i$; if a choice is never made, $\oo$ loses. 
 $A_0\add A_1$ is defined in a symmetric way with the roles of $\oo$ and $\pp$ interchanged: here it is $\pp$ who makes an initial choice and who loses if such a choice is not made. With the universe of discourse being $\{0,1,10,11,100,\ldots\}$ (natural numbers identified with their binary representations), the meanings of the quantifiers $\ada$ and $\ade$  can now be explained by 
\[\ada x A(x)= A(0)\adc A(1)\adc A(10)\adc A(11)\adc A(100)\adc \ldots\] and \[\ade x A(x)= A(0)\add A(1)\add A(10)\add A(11)\add A(100)\add \ldots.\] 

So, for example, 
\[\ada x\bigl(\mbox{\em Prime}(x)\add \mbox{\em Composite}(x)\bigr)\]
is a game where the first move is by Environment. Such a move should consist in selecting a particular number $n$ for $x$, intuitively amounting to asking whether $n$ is prime or composite. This move brings the game down to (in the sense that the game continues as) 
\[\mbox{\em Prime}(n)\add \mbox{\em Composite}(n).\]
Now Machine has to move, or else it loses. The move should consist in choosing one of the two disjuncts. Let us say the left disjunct is chosen, which further brings the game down to $\mbox{\em Prime}(n)$. The latter is an elementary game, and here the interaction ends. Machine wins iff it has chosen a true disjunct. The choice of the left disjunct by Machine thus amounts to claiming/answering that $n$ is prime. Overall, as we see,  $\ada x\bigl(\mbox{\em Prime}(x)\add \mbox{\em Composite}(x)\bigr)$ represents the problem of deciding the primality question.\footnote{For simplicity, here we treat ``Composite'' as the complement of ``Prime'', even though, strictly speaking, this is not quite so: the numbers $0$ and $1$ are neither prime nor composite. Writing ``Nonprime'' instead of ``Composite'' would easily correct this minor inaccuracy.} 

Similarly, 
\[\ada x\ada y\ade z(z\equals x\mult  y)\]
is the problem of computing the product of any two numbers. Here the first two moves are by Environment, which selects some particular $m= x$ and $n= y$, thus asking Machine to tell what the product of $m$ and $n$ is. Machine wins if and only if, in response, it selects a (the) number $k$ for $z$ such that $k\equals m\mult  n$.

Another group of game operations dealt with in this paper comprises $\gneg,\mlc,\mld,\mli$. Employing the classical symbols for these operations is no accident, as they are conservative generalizations of the corresponding Boolean operations from elementary games to all games. 

{\bf Negation} $\gneg$\label{igneg} is a role-switch operation: it turns $\pp$'s moves and wins into $\oo$'s moves and wins, and vice versa. Since elementary games have no moves, only the winners are switched there, so that, as noted, $\gneg$ acts just as the ordinary classical negation when applied to such games. For instance, as $\pp$ is the winner in $0\plus 1\equals 1$, the winner in $\gneg 0\plus 1\equals 1$ will be $\oo$. That is, $\pp$ wins the negation $\gneg A$ of an elementary game $A$ iff it loses $A$, i.e., if $A$ is false. As for the meaning of negation when applied to nonelementary games, at this point it may be useful to observe that $\gneg$ interacts with choice operations in the kind old 
DeMorgan fashion. For example, it would not be hard to see that \[\gneg \ada x\ada y\ade z(z\equals x\mult  y)\ = \ 
\ade x\ade y\ada z(z\notequals x\mult  y).\]

The operations $\mlc$\label{imlc} and $\mld$\label{imld} are called {\bf parallel conjunction} and {\bf parallel disjunction}, respectively.  Playing $A_0\mlc A_1$ (resp. $A_0\mld A_1$) means playing the two games in parallel where, in order to win, $\pp$ needs to win in both (resp. at least one) of the components $A_i$. It is obvious that, just as in the case of negation, $\mlc$ and $\mld$ act as classical conjunction and disjunction when applied to elementary games. For instance, $0\plus 1\equals 1\mld 0\mult  1\equals 1$ is a game automatically won by Machine. There are no moves in it as there are no moves in either disjunct, and   Machine is an automatic winner because it is so in the left disjunct. To appreciate the difference between the two --- choice and parallel --- groups of connectives, compare \[\ada x\bigl(\mbox{\em Prime}(x)\add \gneg \mbox{\em Prime}(x)\bigr)\] and \[\ada x\bigl(\mbox{\em Prime}(x)\mld \gneg \mbox{\em Prime}(x)\bigr).\] The former is a computationally nontrivial problem, existence of an easy (polynomial time) solution for which had remained an open question until a few years ago. As for the latter, it is trivial, as Machine has nothing to do in it: the first (and only) move is by Environment, consisting in choosing a number $n$ for $x$. Whatever $n$ is chosen, Machine wins, as $\mbox{\em Prime}(n)\mld \gneg \mbox{\em Prime}(n)$ is a true sentence and hence an automatically $\pp$-won elementary game.

The operation $\mli$,\label{imli} called {\bf strict reduction}, is defined by $A\mli B= (\gneg A)\mld B$. Intuitively, this is indeed the problem of {\em reducing} 
$B$ to $A$: solving $A\mli B$ means solving $B$ while having $A$ as an external {\em computational resource}. Resources are symmetric to problems: what is a problem to solve for one player is a resource that the other player can use, and vice versa. Since 
$A$ is negated in  $(\gneg A)\mld B$ and negation means switching the roles, $A$ appears as a resource rather than problem for 
$\pp$ in $A\mli B$. 

Consider $\ada x\ade  y(y\equals x^2)$. Anyone who knows the definition of $x^2$ in terms of $\mult $ (but perhaps does not know the meaning of multiplication, or is unable to compute this  function for whatever reason) would be able to solve the  problem 
\begin{equation}\label{april15}
\ada z\ada u\ade v(v\equals  z\mult  u)\ \mli \ \ada x\ade y(y\equals x^2),
\end{equation}
i.e., the problem 
\[\ade z\ade u\ada v(v\notequals z\mult  u)\ \mld \ \ada x\ade y(y\equals x^2),\]
 as it is about reducing the consequent to the antecedent. 
A solution here goes like this. Wait till Environment specifies a value $n$ for $x$, i.e. asks ``what is the square of $n$?''. Do not try to immediately answer this question, but rather specify the same value $n$ for both $z$ and $u$, thus asking the counterquestion: ``what is $n$ times $n$?''. Environment will have to provide a correct answer $m$  to this counterquestion (i.e., specify $v$ as $m$ where $m= n\mult  n$), or else it loses. Then, specify $y$ as $m$, and rest your case. Note that, in this solution, Machine did not have to compute multiplication, doing which had become Environment's responsibility. Machine only correctly reduced the problem of computing square to the problem of computing product, which made it the winner.

Another group of operations that play an important role in CoL comprises  $\cla$\label{icla} and its dual $\cle$\label{icle} (with $\cle xA(x)= \gneg\cla x\gneg A(x)$), called  {\bf blind universal quantifier} and {\bf blind existential quantifier}, respectively.  $\cla xA(x)$ 
can be thought of as a ``version" of $\ada xA(x)$ where the particular value of $x$ that Environment selects is invisible to Machine, so that it has to play blindly in a way that guarantees success no matter what that value is. 

Compare the problems
\(\ada x\bigl(\mbox{\em Even$(x)$}\add \mbox{\em Odd$(x)$}\bigr)\) and  \(\cla x\bigl(\mbox{\em Even$(x)$}\add \mbox{\em Odd$(x)$}\bigr).\) 
Both of them are about telling whether a given number is even or odd; the difference is only in whether that ``given number" is known to Machine or not. The first problem is an easy-to-win, two-move-deep game of a structure that we have already seen.  The second game, on the other hand, is one-move deep with only  Machine to make a move --- select the ``true"  disjunct, which is hardly possible to do as the value of $x$ remains unspecified.

Just like all other operations for which we use classical symbols, the meanings of $\cla$ and $\cle$ are exactly classical  when applied to elementary games. Having this full collection of classical operations makes CoL a generalization and conservative extension of classical logic. 

Going back to an earlier example, even though  (\ref{april15}) expresses a ``very easily solvable'' problem, that formula is still not logically valid. Note that the success  of the reduction strategy of the consequent to the antecedent that we provided for it relies on the nonlogical fact that $x^2\equals x\mult  x$. That strategy would fail in a general case where the meanings of $x^2$ and $x\mult  x$ may not necessarily be the same. On the other hand, the goal of CoL as a general-purpose problem-solving tool should be to allow us find purely logical solutions, i.e., solutions that do not require any special, domain-specific knowledge and (thus) would be good no matter what the particular predicate or function symbols of the formulas mean. Any knowledge that might be relevant should be explicitly stated and included either in the antecedent of a given formula or in the set of axioms (``implicit antecedents'' for every potential formula) of a CoL-based theory. 
In our present case,  formula (\ref{april15}) easily turns into a logically valid one by adding, to its antecedent,  the definition of square in terms of multiplication:
\begin{equation}\label{april16}
\cla w (w^2\equals w\mult  w) \mlc \ada z\ada u\ade v(v\equals  z\mult  u)\ \mli \ \ada x\ade y(y\equals x^2).
\end{equation}
The strategy that we provided earlier for (\ref{april15}) is just as good for (\ref{april16}), with the difference that it is successful for (\ref{april16}) no matter what $x^2$ and $z\mult  u$ mean, whereas, in the case of (\ref{april15}), it was guaranteed to be successful only under the standard arithmetical  interpretations of the square and product functions. Thus, our strategy for (\ref{april16}) is, in fact, a ``purely logical'' solution.

The above examples should not suggest that blind quantifiers are meaningful or useful  only when applied to elementary problems. The following is an example of a  winnable nonelementary $\cla$-game:

\begin{equation}\label{lkj}\cla y\Bigl(\mbox{\em Even$(y)$}\add \mbox{\em Odd$(y)$}\ \mli\ \ada x\bigl(\mbox{\em Even$(x\plus  y)$}\add
\mbox{\em Odd$(x\plus  y)$}\bigr)\Bigr).\vspace{-3pt}\end{equation}
Solving this problem, which means reducing the consequent to the antecedent without knowing the value of $y$, is easy: 
$\pp$ waits till $\oo$ selects  a value $n$ for $x$, and also tells --- by selecting a $\add$-disjunct in the antecedent --- whether $y$ is even or odd. Then, 
if $n$ and $y$ are both even or both odd, $\pp$ chooses the left $\add$-disjunct in the consequent, otherwise it chooses the right $\add$-disjunct. Replacing the $\cla y$ prefix by $\ada y$ would significantly weaken the problem, obligating Environment to specify a value for $y$. Our strategy does not really need to know the exact value of $y$, as it only exploits the information about $y$'s being even or odd, provided by the antecedent of the formula.

Many more --- natural, meaningful and useful --- operations beyond the ones discussed in this section have been introduced and studied within the framework of CoL. Here we have only surveyed those that are relevant to our present investigation.

\section{Constant games}\label{cg}

Now we are getting down to formal definitions of the concepts informally explained in the previous section. 

To define games formally, we need certain technical terms and conventions. Let us agree that   a {\bf move}\label{imove} means any finite string over the standard keyboard alphabet. 
A {\bf labeled move} ({\bf labmove})\label{ilabmove} is a move prefixed with $\pp$ or $\oo$, with such a prefix ({\bf label})\label{ilabel} indicating which player has made the move. 
A {\bf run}\label{irun} is a (finite or infinite) sequence of labmoves, and a {\bf position}\label{iposition} is a finite run.

We will be exclusively using the letters $\Gamma,\Delta,\Phi$  for runs, and  $\alpha,\beta$ for moves. The letter $\xx$\label{ixx} will always be a variable for players, and \[\overline{\xx}\label{ixxneg}\]  will mean ``$\xx$'s adversary'' (``the other player'').
Runs will be often delimited by ``$\langle$" and ``$\rangle$", with $\emptyrun$ thus denoting the {\bf empty run}.\label{iempty} The meaning of an expression such as $\seq{\Phi,\xx\alpha,\Gamma}$ must be clear: this is the result of appending to the position $\seq{\Phi}$ 
the labmove $\seq{\xx\alpha}$ and then the run $\seq{\Gamma}$.

The following is a formal definition of what we call constant games, combined with some less formal conventions regarding the usage of certain terminology.

\begin{definition}\label{game}
 A {\bf constant game}\label{iconstantgame} is a pair $A= (\legal{A}{},\win{A}{})$, where:

1. $\legal{A}{}$\label{ilr} is a set of runs  satisfying the condition that a (finite or infinite) run is in $\legal{A}{}$ iff all of its nonempty finite  initial
segments are in $\legal{A}{}$ (notice that this implies $\emptyrun\in\legal{A}{}$). The elements of $\legal{A}{}$ are
said to be {\bf legal runs}\label{ilegrun} of $A$, and all other runs are said to be {\bf illegal}.\label{iillegrun} We say that $\alpha$ is a {\bf legal move}\label{ilegmove} for $\xx$ in a position $\Phi$ of $A$ iff $\seq{\Phi,\xx\alpha}\in\legal{A}{}$; otherwise 
$\alpha$ is {\bf illegal}.\label{iillegmove} When the last move of the shortest illegal initial segment of $\Gamma$  is $\xx$-labeled, we say that $\Gamma$ is a {\bf $\xx$-illegal}\label{ipillegal} run of $A$. 

2. $\win{A}{}$\label{iwn}  is a function that sends every run $\Gamma$ to one of the players $\pp$ or $\oo$, satisfying the condition that if $\Gamma$ is a $\xx$-illegal run of $A$, then $\win{A}{}\seq{\Gamma}= \overline{\xx}$. When $\win{A}{}\seq{\Gamma}= \xx$, we say that $\Gamma$ is a {\bf $\xx$-won}\label{iwon} (or {\bf won by $\xx$}) run of $A$; otherwise $\Gamma$ is {\bf lost}\label{ilost} by $\xx$. Thus, an illegal run is always lost by the player who has made the first illegal move in it.  
\end{definition}

An important operation not explicitly mentioned in Section \ref{ss2} is what is called {\em prefixation}.\label{iprefixation}
This operation takes two arguments: a constant game $A$ and a position $\Phi$ 
 that must 
be a legal position of $A$ (otherwise the operation is undefined), and returns the game $\seq{\Phi}A$.
Intuitively, $\seq{\Phi}A$ is the game playing which means playing $A$ starting (continuing) from position $\Phi$. 
That is, $\seq{\Phi}A$ is the game to which $A$ {\bf evolves} (will be ``{\bf brought down}") after the moves of $\Phi$ have been made. We have already used this intuition when explaining the meaning of choice operations in Section \ref{ss2}: we said that after $\oo$ makes an initial move $i\in\{0,1\}$,
 the game 
$A_0\adc A_1$ continues as $A_i$. What this meant was nothing but that 
$\seq{\oo i}(A_0\adc A_1)= A_i$.
Similarly, $\seq{\pp i}(A_0\add A_1)= A_i$. Here is a definition of prefixation:

\begin{definition}\label{prfx}
Let $A$ be a constant game and $\Phi$ a legal position of $A$. The game 
$\seq{\Phi}A$\label{ipr} is defined by: 
\begin{itemize}
\item $\legal{\seq{\Phi}A}{}= \{\Gamma\ |\ \seq{\Phi,\Gamma}\in\legal{A}{}\}$;
\item $\win{\seq{\Phi}A}{}\seq{\Gamma}= \win{A}{}\seq{\Phi,\Gamma}$.
\end{itemize}
\end{definition}

A terminological convention important to remember is that we often identify a legal position $\Phi$ of a game $A$ with the game $\seq{\Phi}A$. So, for instance, we may say that the move $1$ by $\oo$ brings the game $B_0\adc B_1$ down to the position $B_1$. Strictly speaking, $B_1$ is not a position but a game, and what {\em is} a position is $\seq{\oo 1}$, which we here identified with the game $B_1=\seq{\oo 1}(B_0\adc B_1)$.

We say that a constant game $A$ is {\bf finite-depth} iff there is an integer $d$ such that no legal run of $A$ contains more than $d$ labmoves. The smallest of such integers $d$ is called the {\bf depth}\label{idepth} of $A$. ``{\bf Elementary}'' means ``of depth $0$''.

This paper will exclusively deal with finite-depth games. This restriction of focus makes many definitions and proofs simpler. Namely, in order to define a finite-depth-preserving game operation $O(A_1,\ldots,A_n)$ applied to such games, it suffices  to specify the following:

\begin{description}
\item[(i)] Who wins $O(A_1,\ldots,A_n)$ if no moves are made, i.e., the value of $\win{O(A_1,\ldots,A_n)}{}\emptyrun$.
\item[(ii)] What are the {\bf initial legal (lab)moves},\label{iilm} i.e., the elements of  $\{\xx\alpha\ |\ \seq{\xx\alpha}\in\legal{O(A_1,\ldots,A_n)}{}\}$, and to 
what game  is the game $O(A_1,\ldots,A_n)$ brought down after such an initial legal labmove $\xx\alpha$ is made. Recall that, by saying that a given labmove $\xx\alpha$ brings a given game $A$ down to $B$, we mean that $\seq{\xx\alpha}A= B$.  
\end{description}
Then, the set of legal runs of $O(A_1,\ldots,A_n)$ will be uniquely defined, and so will be the winner in every legal (and hence finite) run of the game. 

Below we define a number of operations for finite-depth games only. Each of these operations can be easily seen to preserve the finite-depth property. Of course, more general definitions of these operations --- not restricted to finite-depth games --- do exist (see, e.g., \cite{Japfin}), but in this paper we are trying to keep things as simple as possible.

\begin{definition}\label{op} Let $A$, $B$, $A_0,A_1,\ldots$ be finite-depth constant games, and $n$ be a positive integer.\vspace{9pt}

\noindent 1. $\gneg A$\label{igneg2} is defined by: 
\begin{quote}\begin{description}
\item[(i)] $\win{\gneg A}{}\emptyrun = \xx$ iff $\win{A}{}\emptyrun =\overline{\xx}$. 
\item[(ii)] $\seq{\xx\alpha}\in\legal{\gneg A}{}$ iff $\seq{\overline{\xx}\alpha}\in\legal{A}{}$. Such an initial legal labmove $\xx\alpha$ brings the game down to 
$\gneg \seq{\overline{\xx}\alpha}A$.\vspace{5pt}
\end{description}\end{quote}

\noindent 2. $A_0\adc\ldots\adc  A_n$\label{iadc2} is defined by: 
\begin{quote}\begin{description}
\item[(i)] $\win{A_0\adci\ldots\adci  A_n}{}\emptyrun = \pp$. 
\item[(ii)] $\seq{\xx\alpha}\in\legal{A_0\adci\ldots\adci  A_n}{}$ iff $\xx= \oo$ and $\alpha= i  \in\{0,\ldots,n\}$.\footnote{Here the {\em number} $i$ is identified with the standard bit {\em string} representing it in the binary notation.  The same applies to the other clauses of this definition.}  Such an initial legal labmove $\oo i$ brings the game down to 
$A_i$.\vspace{5pt} 
\end{description}\end{quote}

\noindent 3. $A_0\mlc\ldots\mlc A_n$\label{imlc2} is defined by: 
\begin{quote}\begin{description}
\item[(i)] $\win{A_0\mlci\ldots\mlci  A_n}{}\emptyrun= \pp$ iff, for each $i\in\{0,\ldots,n\}$,  $\win{A_i}{}\emptyrun= \pp$. 
\item[(ii)] $\seq{\xx\alpha}\in\legal{A_0\mlci\ldots\mlci  A_n}{}$ iff $\alpha= i.\beta$, where $i\in\{0,\ldots,n\}$ and $\seq{\xx\beta}\in\legal{A_i}{}$. Such an initial legal labmove $\xx i.\beta$ brings the game down to  
\[ A_0\mlc\ldots\mlc A_{i-1}\mlc \seq{\xx\beta}A_i\mlc A_{i\plus 1}\mlc\ldots\mlc A_n.\vspace{3pt}\] 
\end{description}\end{quote}

\noindent 4. $A_0\add\ldots\add A_n$\label{iadd2} and $A_0\mld\ldots\mld  A_n$
are defined exactly as $A_0\adc\ldots\adc A_n$ and $A_0\mlc\ldots\mlc  A_n$, respectively, only with ``$\pp$" and ``$\oo$" interchanged.\vspace{7pt}

\noindent 5. The infinite $\adc$-conjunction $A_0\adc A_1\adc\ldots$ is defined exactly as $A_0\adc\ldots\adc A_n$, only with ``$i\in\{0,1,\ldots\}$" instead of ``$i\in\{0,\ldots,n\}$". Similarly for the infinite version of $\add$.\vspace{7pt}

\noindent 6. In addition to the earlier-established meanings, the symbols $\twg$\label{itwg2} and $\tlg$ also denote two special --- simplest --- constant games, defined by  $\win{\twg}{}\emptyrun=\pp$, $\win{\tlg}{}\emptyrun= \oo$ and $\legal{\twg}{}= \legal{\tlg}{}= \{\emptyrun\}$.\vspace{7pt} 

\noindent 7. $A\mli B$\label{imli2} is treated as an abbreviation of $(\gneg A)\mld B$.
\end{definition}

\begin{example}
 The game $(0\equals 0\adc 0\equals 1)\mli(10\equals 11\adc 10\equals 10)$, i.e. \(\gneg (0\equals 0\adc 0\equals 1)\mld(10\equals 11\adc 10\equals 10),\)
 has thirteen legal runs, which are: 
\begin{description}
\item[1] $\seq{}$. It is won by $\pp$, because $\pp$ is the winner in the right $\mld$-disjunct (consequent).
\item[2] $\seq{\pp 0.0}$. (The labmove of) this run brings the game down to $\gneg 0\equals 0\mld(10\equals 11\adc 10\equals 10)$, and $\pp$ is the winner for the same reason as in the previous case.
\item[3] $\seq{\pp 0.1}$. It brings the game down to $\gneg 0\equals 1\mld(10\equals 11\adc 10\equals 10)$, and $\pp$ is the winner because it wins in both $\mld$-disjuncts. 
\item[4] $\seq{\oo 1.0}$. It brings the game down to $\gneg(0\equals 0\adc 0\equals 1)\mld 10\equals 11$.  $\pp$ loses as it loses in both $\mld$-disjuncts. 
\item[5] $\seq{\oo 1.1}$. It brings the game down to $\gneg (0\equals 0\adc 0\equals 1)\mld 10\equals 10$.  $\pp$ wins as it wins in the right $\mld$-disjunct. 
\item[6-7] $\seq{\pp 0.0,\oo 1.0}$ and $\seq{\oo 1.0, \pp 0.0}$. Both bring the game down to the false $\gneg 0\equals 0 \mld 10\equals 11$, and both are lost by  $\pp$. 
\item[8-9] $\seq{\pp 0.1,\oo 1.0}$ and $\seq{\oo 1.0, \pp 0.1}$. Both bring the game down to the true $\gneg 0\equals 1 \mld 10\equals 11$, which makes  $\pp$ the winner.
\item[10-11] $\seq{\pp 0.0,\oo 1.1}$ and $\seq{\oo 1.1, \pp 0.0}$. Both bring the game down to the true $\gneg 0\equals 0 \mld 10\equals 10$, so $\pp$ wins.
\item[12-13] $\seq{\pp 0.1,\oo 1.1}$ and $\seq{\oo 1.1, \pp 0.1}$. Both bring the game down to the true $\gneg 0\equals 1 \mld 10\equals 10$, so $\pp$ wins.
\end{description}
\end{example}

\section{Games as generalized predicates}\label{nncg}

Constant games can be seen as generalized propositions: while propositions in classical logic are just elements 
of $\{\twg,\tlg\}$, constant games are functions from runs to $\{\twg,\tlg\}$.
As we know, however, propositions only offer  very limited expressive power, 
and classical logic needs 
to consider the more general concept of predicates, with propositions being nothing but special --- constant --- cases of predicates. The situation in CoL is similar. Our concept of a (simply) game generalizes that of a constant game in the same sense as the classical concept of a predicate generalizes that of a proposition.

We fix an infinite set of expressions called {\bf variables}, for which we will be  using \(x,y,z,s,r,t,u,v,w\) as metavariables.  An expression like $\vec{x}$ will usually stand for a finite sequence of  variables. Similarly for later-defined objects such as constants or terms.

We also fix another infinite set of expressions called {\bf constants}:\label{iconstant} 
\[\{\epsilon,1,10,11,100,101,110,111,1000,\ldots\}.\] These are thus  {\bf binary numerals}\label{ibinnum} --- the strings matching the regular expression $\epsilon\cup 1(0\cup 1)^*$, where $\epsilon$ is the empty string.  We will be typically identifying such strings --- by some rather innocent abuse of concepts --- with the natural numbers represented by them in the standard binary notation, and vice versa. Note that $\epsilon$ represents $0$. For this reason, following the many-century tradition, we shall usually write $0$ instead of $\epsilon$, keeping in mind that, in such contexts, the length $|0|$ of the string $0$ should be seen to be $0$ rather than $1$.  We will be mostly using $a,b,c,d$ as metavariables for constants.

A {\bf universe} (of discourse) is a pair $(U, ^U)$, where $U$ is a nonempty set, and $^U$, called the {\bf naming function} of the universe,  is a function that sends each constant $c$  to an element $c^U$ of $U$. The intuitive meaning of $c^U=\mathfrak{s}$ is that $c$ is a {\bf name} of $\mathfrak{s}$. Both terminologically and notationally, we will typically identify each universe $(U, ^U)$ with its first component and, instead of ``$(U, ^U)$'', write simply ``$U$'', keeping in mind that each such ``universe'' $U$ comes with a fixed  associated function $^U$. A universe $U=(U,^U)$ is said to be {\bf ideal} iff $U$ coincides with the above-fixed set of constants, and $^U$ is the identity function on that set. Note that, in a non-ideal universe, such as the set of all real numbers, some objects may have several names (e.g., $1/3, 2/6, 3/9$ are different names of the same number), some have unique names (e.g. the famous number $\pi$), and some have no names at all.  The same applies to the universe of astronomy, where some stars and planets have unique names, some have several names (Venus = Morning Star = Evening Star), and most have no names at all. On the other hand, the standard universe of arithmetic is ideal: every natural number has a unique name --- the corresponding binary numeral --- with which it can be identified. 

By a {\bf valuation}\label{ivaluation} on a universe $U$, or a $U$-valuation,  we mean 
a mapping $e$ that sends each variable $x$ to an element $e(x)$ of $U$. 
When a universe $U$ is fixed,  irrelevant or clear from the context, we may omit references to it and simply say ``valuation''. In these terms, a classical predicate $p$ can be understood as 
a function that sends each valuation $e$ to a proposition, i.e., to a constant predicate.   Similarly, what we call a game sends valuations to constant games: 

\begin{definition}\label{ngame}
Let $U$ be a universe. A {\bf game on $U$} is a function $A$ from $U$-valuations   to constant games. We write $e[A]$\label{iea} (rather than $A(e)$) to denote the constant game returned by $A$ on valuation $e$. Such a constant game $e[A]$ is said to be an {\bf instance}\label{iinstance} of $A$. 
For readability, we usually write $\legal{A}{e}$\label{ilre} and $\win{A}{e}$ instead of $\legal{e[A]}{}$ and $\win{e[A]}{}$.
\end{definition}

Just as this is the case with propositions versus predicates, constant games in the sense of Definition \ref{game} will
be thought of as special, constant cases of games in the sense of Definition \ref{ngame}. In particular, each constant game $A'$ is the game $A$ such that, for every valuation $e$,
$e[A]= A' $. From now on we will no longer distinguish between such $A$ and $A' $, so that, if $A$ is a constant game,
it is its own instance, with $A= e[A]$ for every $e$.

Where $n$ is a natural number, we say that a game $A$ is {\bf $n$-ary}\label{igarity} iff there are $n$ variables such that, for any two valuations $e_1$ and $e_2$ that agree on all those variables, we have $e_1[A]= e_2[A]$. Generally, a game that is $n$-ary for some $n$, is said to be {\bf finitary}.\label{ifinitary} Our paper is going to exclusively deal with finitary games and, for this reason, we agree that, from now on, when we say ``game'', we usually mean ``finitary game''.  

For a variable $x$ and valuations  $e_1,e_2$, we write $e_1\equiv_x e_2$ to mean that the two valuations have the same universe  and agree on all variables  other than $x$.

We say that a game $A$ {\bf depends} on a variable $x$ iff there are two valuations  $e_1,e_2$ with $e_1\equiv_x e_2$  such that $e_1[A]\not= e_2[A]$. An $n$-ary game thus depends on at most $n$ variables. And constant games are nothing but $0$-ary games, i.e., games that do not depend on any variables.

We say that a (not necessarily constant) game $A$ is {\bf elementary}\label{ielem2} iff so are all of its instances $e[A]$. And we say that $A$ is {\bf finite-depth}\label{fdpth} iff there is a (smallest) integer $d$, called the {\bf depth} of $A$, such that the depth of no instance of $A$ exceeds $d$.

Just as constant games are generalized propositions, games can be treated as generalized predicates. Namely, we will see each predicate $p$ of whatever arity as  the same-arity elementary game such that, for every valuation $e$,
$\win{p}{e}\emptyrun=\pp$ iff $p$ is true at $e$.  
And vice versa: every elementary game $p$ will be seen as the same-arity predicate which is true at a given valuation $e$ iff  $\win{p}{e}\emptyrun=\pp$.   
Thus, for us, ``predicate'' and ``elementary game'' are going to be synonyms. Accordingly,  any standard terminological or notational conventions familiar from the literature for predicates also apply to them seen as elementary games. 

Just as the Boolean operations straightforwardly extend from propositions to all predicates, our operations 
$\gneg,\mlc,\mld,\mli,\adc,\add$ extend from constant games to all games. This is done by simply stipulating that $e[\ldots]$ commutes with all of those operations: $\gneg A$ is 
the game such that, for every valuation $e$, $e[\gneg A]=\gneg e[A]$; $A\adc B$ is the game such that,
for every valuation $e$, $e[A\adc B]= e[A]\adc e[B]$; etc. So does the operation of prefixation: provided that $\Phi$ is a legal position of every instance of $A$,  $\seq{\Phi}A$ is  understood as the unique game such that, for every valuation $e$, $e[\seq{\Phi}A]= \seq{\Phi}e[A]$.

\begin{definition}\label{sov}
Let $A$ be a finite-depth game on a universe $U$, $x_1,\ldots,x_n$ be pairwise distinct variables, and $c_1,\ldots,c_n$ be  constants. 
On the same universe, the game  which we call the result of {\bf substituting $x_1,\ldots,x_n$ by $c_1,\ldots,c_n$ in $A$}, denoted $A(x_1/c_1,\ldots,x_n/c_n)$, is defined by stipulating that, for every valuation $e$ on $U$, $e[A(x_1/c_1,\ldots,x_n/c_n)]= e'[A]$, where $e'$ is the valuation on $U$ that sends each $x_i$ to $c_i$ and agrees with $e$ on all other variables. 
\end{definition}

Following the standard readability-improving practice established in the literature for predicates, we will often fix pairwise distinct  variables $x_1,\ldots,x_n$ for a game $A$ and write $A$ as $A(x_1,\ldots,x_n)$. 
Representing $A$ in this form  sets a context in which we can write $A(c_1,\ldots,c_n)$ to mean the same as the more clumsy expression $A(x_1/c_1,\ldots,x_n/c_n)$. 

\begin{definition}\label{bq}
Let  $A(x)$ be a finite-depth game on a given universe. On the same universe,   $\ada xA(x)$  and  $\ade xA(x)$ are defined as the following two games, respectively:    
\[\begin{array}{l}
A(0)\adc A(1)\adc A(10)\adc A(11)\adc A(100)\adc \ldots;\\  
A(0)\add A(1)\add A(10)\add A(11)\add A(100)\add \ldots .   
\end{array}\]
\end{definition}

Thus, every initial legal move of $\ada xA(x)$ or $\ade xA(x)$ is a constant $c\in\{0,1,10,11,100,\ldots\}$,  which in our informal language we may refer to as ``the constant chosen (by the corresponding player) for $x$''.

We will say that a game $A$ is  {\bf unistructural}\label{iunistructural} iff, for any two valuations $e_1$ and $e_2$,    $\legal{A}{e_1}= \legal{A}{e_2}$. Of course, all constant or elementary games are unistructural. It can also be easily seen that all our game operations preserve the unistructural property of games. For the purposes of the present paper, considering only unistructural games would be sufficient. 

We define the remaining operations $\cla$ and $\cle$ only for unistructural games:

\begin{definition}\label{op5} Below  $A(x)$ is an arbitrary finite-depth unistructural game on a universe $U$. On the same universe:\vspace{9pt}

\noindent 1. The game $\cla x A(x)$  is defined by stipulating that, for every $U$-valuation $e$, player $\xx$ and move $\alpha$, we have: 
\begin{quote}\begin{description}
\item[(i)] $\win{\clai x A(x)}{e}\emptyrun= \pp$ iff, for every valuation $g$ with $g\equiv_x e$, $\win{A(x)}{g}\emptyrun= \pp$.
\item[(ii)] $\seq{\xx\alpha}\in\legal{\clai x A(x)}{e}$ iff $\seq{\xx\alpha}\in\legal{A(x)}{e}$. Such an initial legal labmove $\xx\alpha$ brings the game $e[\cla x A(x)]$ down to 
$e[\cla x\seq{\xx\alpha}A(x)]$.\vspace{5pt}
\end{description}\end{quote}
\noindent 2. The game $\cle x A(x)$  is defined in exactly the same way, only with $\pp$ and $\oo$ interchanged.  
\end{definition}

\begin{example}\label{may14}
Consider the game (\ref{lkj}) on the ideal universe, 
discussed earlier in Section \ref{ss2}. The sequence 
$\seq{\oo 1.11,\ \oo 0.0,\ \pp 1.1}$ 
is a legal run of (\ref{lkj}), the effects of the moves of which are shown below:
\[\begin{array}{ll}
(\ref{lkj}):  & \cla y\Bigl(\mbox{\em Even}(y)\add \mbox{\em Odd}(y) \mli \ada  x\bigl(\mbox{\em Even}(x\plus y)\add \mbox{\em Odd}(x\plus y)\bigr)\Bigr)\\
\seq{\oo 1.11}(\ref{lkj}):  & \cla y\bigl(\mbox{\em Even}(y)\add \mbox{\em Odd}(y) \mli \mbox{\em Even}(11\plus y)\add \mbox{\em Odd}(11\plus y)\bigr)\\
\seq{\oo 1.11, \oo 0.0}(\ref{lkj}): &  \cla y\bigl(\mbox{\em Even}(y) \mli \mbox{\em Even}(11\plus y)\add \mbox{\em Odd}(11\plus y)\bigr)\\
\seq{\oo 1.11, \oo 0.0,\pp 1.1}(\ref{lkj}): & \cla y\bigl(\mbox{\em Even}(y) \mli \mbox{\em Odd}(11\plus y)\bigr)
\end{array}\]
The play hits (ends as) the true proposition $\cla y\bigl(\mbox{\em Even}(y) \mli \mbox{\em Odd}(11\plus y)\bigr)$ and hence is won by $\pp$. 

\end{example}

Before closing this section, we want to make the rather straightforward observation that the DeMorgan dualities hold for all of our sorts of conjunctions, disjunctions and quantifiers, and so does the double negation principle. That is,  we always have:
\[\gneg\gneg A= A;\vspace{-3pt}\]
\[\gneg(A\mlc B)= \gneg A\mld\gneg B; \ \ \ \ \gneg(A\mld B)= \gneg A\mlc\gneg B;\vspace{-3pt}\]
\[\gneg(A\adc B)= \gneg A\add\gneg B;  \ \ \ \ \gneg(A\add B)= \gneg A\adc\gneg B;\vspace{-3pt}\]
\[\gneg \cla xA(x)= \cle x\gneg A(x); \ \ \ \ \gneg \cle xA(x)= \cla x\gneg A(x);\vspace{-3pt}\]
\[\gneg \ada xA(x)= \ade x\gneg A(x); \ \ \ \ \gneg \ade xA(x)= \ada x\gneg A(x).\]

\section{Algorithmic strategies through interactive machines}\label{icp}

In traditional game-semantical approaches, including Blass's \cite{Bla72,Bla92} approach which is the closest precursor of ours, player's strategies are understood as {\em functions} --- typically as functions from interaction histories (positions) to moves, or sometimes (\cite{Abr94}) as functions that only look at the latest move of the history. This {\em strategies-as-functions} approach, however, is inapplicable in the context of CoL, whose relaxed semantics, in striving to get rid of ``bureaucratic pollutants'' and only deal with the remaining true essence of games,  does not impose any regulations on which player can or should move in a given situation. Here, in many cases, either player may have (legal) moves, and then it is unclear whether the next move should be the one prescribed by $\pp$'s strategy function or the one prescribed by the strategy function of $\oo$. For a game semantics whose ambition is to provide a comprehensive, natural and direct tool for modeling interaction, the strategies-as-functions approach would be  less than adequate, even if technically possible. This is so for the simple reason that  the strategies that real computers follow are not functions. If the strategy of your personal computer was a function from the history of interaction with you, then its performance would keep noticeably worsening due to the need to read the continuously lengthening --- and, in fact, practically infinite --- interaction history every time before responding. Fully ignoring that history and looking only at your latest keystroke in the spirit of \cite{Abr94} is also not what your computer does, either.  

In CoL, ($\pp$'s effective) strategies are defined in terms of interactive machines, where computation is one continuous process interspersed with --- and influenced by --- multiple ``input'' (environment's moves) and ``output'' (machine's moves) events. Of several, seemingly rather different yet equivalent,  machine models of interactive computation studied in CoL, here we will employ the most basic, {\bf HPM}\label{ihpm} (``Hard-Play Machine'') model.

An HPM is nothing but a Turing machine with the additional capability of making moves. The adversary can also move at any time, with such moves being the only nondeterministic events from the machine's perspective. Along with the ordinary  work tape, the machine has an additional  tape called the run tape. The latter, serving as a dynamic input,  at any time  spells the ``current position'' of the play. Its role is to make  the run fully visible to the machine.  

In these terms,  an  algorithmic solution ($\pp$'s winning strategy) for a given  constant game $A$ is understood as an HPM $\cal M$ such that,  no matter how the environment acts during its interaction with $\cal M$ (what moves it makes and when), the run incrementally spelled on the run tape is a $\pp$-won run of $A$.  
As for $\oo$'s strategies, there is no need to define them: all possible behaviors by $\oo$ are accounted for by the different possible nondeterministic updates  of the run tape of an HPM. 

In the above outline, we described HPMs in a relaxed fashion, without being specific about technical details such as, say, how, exactly, moves are made by the machine, how many moves either player can make at once, what happens if both players attempt to move ``simultaneously'', etc. As it turns out, all reasonable design choices yield the same class of winnable games as long as we consider a certain natural subclass of games called {\bf static}.\label{istatic} Such games are obtained by imposing a certain simple formal condition on games (see, e.g., Section 5 of \cite{Japfin}), which we do not reproduce here as nothing in this paper relies on it. We shall only point out that, intuitively, static games are interactive tasks where the relative speeds of the players are irrelevant, as it never hurts a player to postpone making moves. In other words, static games are games that are contests of intellect rather than contests of speed. And one of the theses that CoL philosophically relies on is that static games present an adequate formal counterpart of our intuitive concept of ``pure'', speed-independent interactive computational problems. Correspondingly, CoL restricts its attention (more specifically, possible interpretations of the atoms of its formal language) to static games. All elementary games turn out to be trivially static, and the class of static games turns out to be closed under all game operations studied in CoL. More specifically, all games expressible in the language of the later-defined logic $\cltw$, or theory $\arfour$, are static, as well as constant, finitary and finite-depth.   
Accordingly, we agree that, in this paper, we shall use the 
 term ``{\bf computational problem}", or simply ``{\bf problem}", as a synonym of ``constant, finitary, finite-depth, static game''.

\section{The HPM model in greater detail}

As noted, computability of static games is rather robust with respect to the technical details of the underlying model of interaction. And the  loose description of HPMs that we gave in the previous section would be sufficient for most purposes, just as mankind had been rather comfortably studying and using algorithms long before the Church-Turing thesis in its precise form came around. Namely, relying on just the intuitive concept of algorithmic strategies (believed in CoL to be adequately captured by the HPM model) would be sufficient if we only needed to show existence of such strategies for various games. As it happens, however, later sections of this paper need  to arithmetize such strategies in order to prove the promised extensional completeness of $\arfour$. The complexity-theoretic concepts defined in the next section also require certain more specific details about HPMs, and in this section we provide such details. It should be pointed out again that most --- if not all --- of such details are ``negotiable'', as different reasonable arrangements would   yield equivalent models. 

Just like an ordinary Turing machine, an HPM has a finite set of {\bf states},\label{istate} one of which has the special status of being the {\bf start state}. There are no accept, reject, or halt states, but there are specially designated states called {\bf move states}.\label{imovestate} It is assumed that the start state is not among the move states. As noted earlier, this is a two-tape machine, with a  read-write {\bf work tape}\label{iworktape}  and read-only {\bf run tape}.\label{iruntape}  Each tape has a beginning but no end, and is divided into infinitely many {\bf cells},\label{icell} arranged in the left-to-right order: cell $\# 0$, cell $\# 1$, cell $\# 2$, etc. At any time, each cell will contain one symbol from a certain fixed finite set of {\bf tape symbols}.\label{itapesymbol} The {\bf blank} symbol, as well as $\pp$ and $\oo$, are among the tape symbols. 
We also assume that these three symbols  are not among the symbols that any (legal or illegal) move can ever contain.  
Either tape has its own {\bf scanning head},\label{ihead} at any given time looking (located) at one of the cells of the tape.  A transition from one {\bf computation step}  (``{\bf clock cycle}'')\label{icc}   to another happens according to the fixed {\bf transition function}\label{itf} of the machine. The latter, depending on the current state, and the symbols seen by the two heads on the corresponding tapes, deterministically prescribes the next state, the tape symbol by which the old symbol should be overwritten in the current cell   (the cell currently scanned by the  head) of the work tape, and, for each head, the direction --- one cell left or one cell right --- in which the head should move. A constraint here is that the blank symbol, $\pp$ or $\oo$ can never be written by the machine on the work tape. An attempt to move left when the head of a  given   tape is looking at the leftmost cell  results in staying put. So does an attempt 
to move right when the head is looking at the blank symbol. 

When the machine starts working, it is in its start state, both scanning heads are looking at the leftmost cells of the corresponding tapes, and (all cells of) both  tapes are blank (i.e., contain the blank symbol). Whenever the machine enters a move state, the string $\alpha$ spelled by (the contents of) its work tape cells, starting from cell $\# 0$ and ending with the cell immediately left to the work-tape scanning head,   will be automatically appended --- at the beginning of the next clock cycle --- to the contents of the run tape in the $\pp$-prefixed form  $\pp\alpha$. And, on every transition, whether the machine is in a move state or not, any finite sequence $\oo\beta_1,\ldots,\oo\beta_m$ of $\oo$-labeled moves may be nondeterministically appended to the contents of the run tape. If the above two events happen on the same clock cycle, then the moves will be appended to the contents of the run tape in the following order: $\pp\alpha\oo\beta_1\ldots\oo\beta_m$ (note the technicality that labmoves are listed on the run tape without blanks or commas between them). 

With each labmove  that emerges on the run tape, we associate its {\bf timestamp},\label{itimestamp} which is the number of the clock cycle immediately preceding the cycle on which the move first emerged on the run tape. Intuitively, the timestamp indicates on which cycle the move was  {\bf made} rather than {\em appeared} on the run tape: a move made during cycle $\# i$ appears on the run tape on cycle $\# i\plus 1$ rather than $\# i$. Also, we agree that the count of clock cycles, just like the count of cells, starts from $0$, meaning that the very first clock cycle is cycle $\#0$ rather than $\#1$. 

A {\bf configuration}  of a given HPM $\cal M$ is a full description of the contents of the two tapes, the locations of the two scanning heads, and the state of the machine at the beginning of some (``current'') clock cycle. 
A {\bf computation branch}\label{icb} of $\cal M$ is an infinite sequence $C_0,C_1,C_2,\ldots$ of configurations of $\cal M$, where $C_0$ is the initial configuration (as explained earlier), and every $C_{i\plus 1}$ is a configuration that could have legally followed (again,  in the sense explained earlier) $C_i$.  For a computation branch $B$, the {\bf run spelled by $B$}\label{irsb} is the run $\Gamma$ incrementally spelled on the run tape in the corresponding scenario of interaction. We say that such a $\Gamma$ is {\bf a run generated by}\label{irgb} the machine. 

We say that a given HPM $\cal M$ {\bf wins} ({\bf computes}, {\bf solves}) a given constant game $A$ --- and write ${\cal M}\models A$\label{imodels} --- iff every run $\Gamma$ generated by $\cal M$  is a $\pp$-won run of $A$. We say that $A$ is {\bf computable}\label{icomputable} iff there is an HPM $\cal M$ such that ${\cal M}\models A$; such an HPM is said to be an (algorithmic) {\bf solution},\label{isol} or {\bf winning strategy}, for $A$.

\section{Interactive complexity}\label{s7}

The {\bf size} of a move $\alpha$ means the length of $\alpha$ as a string.  In the context of a given computation branch of a given HPM $\cal M$, by the 
{\bf background} of a clock cycle $c$ we mean the greatest of the sizes of Environment's moves made by (before) time $c$, or $0$ if there are no such moves. 
If $\cal M$ makes a move on cycle $c$, then the background of that move\footnote{As easily understood, here and in similar contexts,  ``move'' means a move not as a {\em string}, but as an {\em event}, namely, the event of $\cal M$ making a move at time $c$.} means the background of $c$. 
Next, whenever $\cal M$ makes a move on cycle $c$, by  the {\bf timecost} of that move  we mean $c\minus d$, where $d$ is the greatest cycle with $d\mless c$ on which a move was made by either player, or is $0$ if there is no such cycle.   

Throughout this paper, an $n$-ary {\bf arithmetical function} means a  function from $n$-tuples of natural numbers to natural numbers. As always, ``unary'' means ``$1$-ary''. 

\begin{definition}\label{deftcs}
Let  $h$ be a unary arithmetical  function, and $\cal M$ an HPM. 

1. We say that {\bf $\cal M$ runs in time $h$}, or that $\cal M$ is an {\bf $h$ time machine}, or that $h$ is a {\bf bound} for the time complexity of $\cal M$, iff, in every play (computation branch), for  any clock cycle $c$ on which $\cal M$ makes a move, neither the timecost nor the size of that move  exceeds   $h(\ell)$,  where $\ell$ is the background of $c$.  

2. We say that {\bf $\cal M$ runs in space $h$}, or that $\cal M$ is an {\bf $h$ space machine}, or that $h$ is a {\bf bound} for the space complexity of $\cal M$, iff,  in every play (computation branch), for  any clock cycle $c$,    the number of cells ever visited by the work-tape head of $\cal M$ by time $c$ does not exceed $h(\ell)$, where $\ell$ is the background of $c$.       

\end{definition}

Our time complexity concept can be seen to be in the spirit of what is usually called {\em response time}. The latter generally does not and should not depend on the length of the preceding interaction history. On the other hand, it is not and should not be merely a function of the adversary's last move, either. A   similar characterization applies to our concept of space complexity. Both complexity measures are equally  meaningful whether it be in the context of ``short-lasting'' games (such as the ones represented by the formulas  of the later-defined logic $\cltw$) or  the context of games that may have ``very long'' and even infinitely long legal runs.

Let $A$ be a constant game, $h$ a unary arithmetical function, and $\cal M$ an HPM. We say that {\bf $\cal M$ wins} ({\bf computes}, {\bf solves}) {\bf $A$ in time $h$}, or that {\bf $\cal M$ is an $h$ time solution for $A$},  iff $\cal M$ is an $h$ time machine with ${\cal M}\models A$. 
We say that $A$ is {\bf computable} ({\bf winnable}, {\bf solvable}) {\bf in time $h$} iff it has an $h$ time solution. Similarly for ``{\bf space}'' instead of ``time''. 

When we say {\bf polynomial time}, it is to be understood as ``time $h$ for some polynomial function $h$''. Similarly for {\bf polynomial space}.

Many concepts introduced within the framework of CoL are generalizations ---  for  the interactive context --- of ordinary and well-studied concepts of the traditional theory of computation. The above-defined  time and space complexities are among such concepts. Let us focus on polynomial time for the rest of this section, and look at the traditional notion of  {\em polynomial time computability} of a function $f(x)$ for instance. With a moment's thought, it can be seen to be equivalent to  polynomial time computability (in our sense) of the problem $\ada x\ade y\bigl(y=f(x)\bigr)$. Similarly, {\em polynomial time decidability} of a predicate $p(x)$ means the same as polynomial time computability of the problem 
 $\ada x\bigl(\gneg p(x)\add p(x)\bigr)$.  Further, what is traditionally called (mapping) {\em polynomial time reducibility} of a predicate $p(x)$ to a predicate $q(x)$ can be seen to mean nothing but polynomial time computability of the problem $\ada x\ade y \bigl(p(x)\leftrightarrow q(y)\bigr)$, where $E\leftrightarrow F$ is an abbreviation of $(E\mli F)\mlc (F\mli E)$. If we want to say that a particular function $f(x)$ is a polynomial time reduction of $p(x)$ to $q(x)$, then we can write $\ada x\ade y\bigl(y=f(x)\bigr)\mlc \cla x\Bigl(p\bigl(x\bigr)\leftrightarrow q\bigl(f(x)\bigr)\Bigr)$. And so on. 

Our formalism can be used for systematically defining and studying an infinite variety of meaningful complexity-theoretic properties, relations and operations, only some of which (as the above ones) may have established names in the literature. Consider, for instance, the problem
\begin{equation}\label{ffb1} 
\ada x\bigl(\gneg q(x)\add q(x)\bigr)\mli \ada x\bigl(\gneg p(x)\add p(x)\bigr).
\end{equation} 
It expresses (its polynomial time computability means, that is) a sort of polynomial time reducibility of $p(x)$ to $q(x)$. This reducibility can be seen to be strictly weaker than the traditional sort of polynomial time reducibility captured by the earlier mentioned $\ada x\ade y\bigl(y=f(x)\bigr)$. For instance, every predicate $p(x)$ is reducible to its complement $\gneg p(x)$ in the sense of (\ref{ffb1}). Namely, a polynomial time strategy for  
\[\ada x\bigl(p(x)\add \gneg p(x)\bigr)\mli \ada x\bigl(\gneg p(x)\add p(x)\bigr)\]
goes as follows. Wait till a value $n$ for $x$ is specified in the consequent. Then specify the same value for $x$ in the antecedent. Further wait till a 
$\add$-disjunct is selected in the antecedent. If the first (resp. second) disjunct is selected there, select the second (resp. first) $\add$-disjunct in the consequent, and celebrate victory.  On the other hand, we cannot say that every predicate $p(x)$ is also polynomial time reducible --- in the sense of  $\ada x\ade y\bigl(y=f(x)\bigr)$ --- to its complement. Take $p(x)$ to be any coNP-complete predicate. If it was polynomial time reducible to its complement, then we would have NP=coNP.

\section{The language of logic $\cltw$ and its semantics}\label{ss6}

Logic $\cltw$ will be axiomatically constructed in Section \ref{ss8}. The present section is merely devoted to its {\em language}. The building blocks of the formulas of the latter are:

\begin{itemize} 
\item {\bf Nonlogical predicate letters},\label{ipl} for which we use $p,q$  as metavariables. With each predicate letter is associated a fixed nonnegative integer called its {\bf arity}.\label{iar2} We assume that, for any $n$, there are infinitely many $n$-ary predicate letters.   
\item {\bf Function letters},\label{ifl} for which we use $f,g$ as metavariables. Again, each function letter comes with a fixed {\bf arity},\label{iar3} and  we assume that, for any $n$, there are infinitely many $n$-ary function letters.  
\item The binary {\bf logical predicate letter} $\equals $.
\item Infinitely many {\bf variables} and {\bf constants}.  These are the same as the ones fixed in Section \ref{nncg}.
\end{itemize}

{\bf Terms},\label{ipterm} for which we use $\tau,\psi,\xi,\chi,\theta,\eta$  as metavariables,  are built from variables, constants and function letters in the standard way.  An {\bf atomic formula} is $p(\tau_1,\ldots,\tau_n)$, where $p$ is an $n$-ary predicate letter and the $\tau_i$ are terms. 
When $p$ is $0$-ary, we write $p$ instead of $p()$. Also, we write $\tau_1\equals \tau_2$ instead of $\equals (\tau_1,\tau_2)$, and $\tau_1\notequals \tau_2$ instead of $\gneg (\tau_1\equals \tau_2)$. 
{\bf Formulas} are built from atomic formulas, propositional connectives $\twg,\tlg$ ($0$-ary), $\gneg$ ($1$-ary), $\mlc,\mld,\adc,\add$ ($2$-ary), variables and quantifiers $\cla,\cle,\ada,\ade$  in the standard way, with the exception that, officially, $\gneg$ is only allowed to be applied to atomic formulas. The definitions of {\em free} and {\em bound} occurrences of variables are also standard  (with $\ada,\ade$ acting as quantifiers along with $\cla,\cle$). A formula with no free occurrences of variables is said to be {\bf closed}.

Note that, terminologically, $\twg$ and $\tlg$ do not count as atoms. For us, atoms are formulas containing no logical operators. The formulas $\twg$ and $\tlg$ do not qualify because they {\em are} ($0$-ary) logical operators themselves.

$\gneg E$, where $E$ is not atomic, will be understood as a standard abbreviation: 
$\gneg\twg=\tlg$, $\gneg\gneg E= E$, $\gneg(A\mlc B)= \gneg A\mld \gneg B$, $\gneg \ada xE= \ade x\gneg E$, etc. And $E\mli F$ will be understood as an abbreviation of $\gneg E\mld F$.

Parentheses will often be omitted --- as we just did --- if there is no danger of ambiguity. When omitting parentheses, we assume that $\gneg$ and the quantifiers have the highest precedence, and $\mli$ has the lowest precedence. 
An expression $E_1\mlc\ldots\mlc E_n$, where $n\geq 2$, is to be understood as $E_1\mlc(E_2\mlc (\ldots\mlc(E_{n-1}\mlc E_n)\ldots ) )$. Sometimes we can write this  expression for an unspecified $n\geq 0$ (rather than $n\geq 2$). Such a formula, in the case of $n= 1$, should be understood as simply $E_1$. Similarly for $\mld,\adc,\add$.  As for the case of $n=0$, $\mlc$ and $\adc$ should be understood as $\twg$ while $\mld$ and $\add$ as $\tlg$.  

Sometimes a formula $F$ will be represented as $F(s_1,\ldots,s_n)$, where the $s_i$ are variables. 
When doing so, we do not necessarily mean that each  $s_i$ has a free occurrence in $F$, or that every variable occurring free in $F$ is among $s_1,\ldots,s_n$. However, it {\em will}  always be assumed (usually only implicitly) that the $s_i$ are pairwise distinct, and have no bound occurrences in $F$.  In the context set by the above representation, $F(\tau_1,\ldots,\tau_n)$ will mean the result of replacing, in $F$, each  occurrence of each $s_i$   by term $\tau_i$. When writing $F(\tau_1,\ldots,\tau_n)$, it will always be assumed (again, usually only implicitly) that the terms $\tau_1,\ldots,\tau_n$ contain no variables that have bound occurrences in $F$, so that there are no unpleasant collisions of variables when doing replacements.  

Similar --- well established in the literature --- notational conventions apply to terms.

A {\bf sequent} is an expression $E_1,\ldots,E_n\intimpl F$, where $E_1,\ldots,E_n$ ($n\geq 0$) and $F$ are formulas. Here $E_1,\ldots,E_n$ is said to be the {\bf antecedent} of the sequent, and $F$ said to be the {\bf succedent}. 

By a {\bf free} (resp. {\bf bound}) {\bf variable} of a sequent we shall mean a variable that has a free (resp. bound) occurrence in one of the formulas of the sequent. For safety and simplicity, throughout the rest of this paper we assume that the sets of all free and bound variables of any formula or sequent that we ever consider --- unless strictly implied otherwise by the context ---  are disjoint.   This restriction, of course, does not yield any loss of expressive power, as variables can always be renamed so as to satisfy this condition. 

An {\bf interpretation}\label{iint} is a pair $(U,^*)$, where $U=(U,^U)$ is a universe and $^*$ is a function that sends:
\begin{itemize}
\item  every $n$-ary function letter $f$ to a function \(f^*:\ U^n\rightarrow U\);
\item  every nonlogical $n$-ary  predicate letter $p$ to an $n$-ary predicate (elementary game) $p^*(s_1,\ldots,s_n)$ on $U$ which does not depend on any variables other than $s_1,\ldots,s_n$. 
 \end{itemize}

The above uniquely extends to  a  mapping that sends each term $\tau$ to a function $\tau^*$, and each formula $F$ to a game $F^*$, by stipulating that: 
\begin{enumerate}
\item $c^*=c^U$ (any constant $c$).
\item $s^* =  s$ (any variable $s$). 
\item Where $f$ is an $n$-ary function letter and $\tau_1,\ldots,\tau_n$ are terms, $\bigl(f(\tau_1,\ldots,\tau_n)\bigr)^* =  f^*(\tau_{1}^{*},\ldots,\tau_{n}^{*})$. 
\item Where   $\tau_1$ and $\tau_2$ are terms, $(\tau_1\equals \tau_2)^*$ is $\tau_{1}^{*}\equals \tau_{2}^{*}$. 
\item Where $p$ is an $n$-ary nonlogical  predicate letter  and $\tau_1,\ldots,\tau_n$ are terms, $\bigl(p(\tau_1,\ldots,\tau_n)\bigr)^* =  p^*(\tau_{1}^{*},\ldots,\tau_{n}^{*})$. 
\item $^{*}$ commutes with all logical operators, seeing them as the corresponding game operations: $\tlg^* =  \tlg$,  $(E_1\mlc\ldots\mlc E_n)^{*} =  E^{*}_{1}\mlc \ldots\mlc E^{*}_n$, $(\ada x E)^{*} =  \ada x(E^{*})$, etc. 
\end{enumerate}

While an interpretation is a pair $(U,^*)$, terminologically and notationally we will usually identify it with its second component and write $^*$ instead of $(U,^*)$, keeping in mind that every such ``interpretation'' $^*$ comes with a fixed universe $U$, said to be the {\bf universe of $^*$}.  When $O$ is a function letter, a predicate letter, a constant or a formula, and $O^* =  W$, we say that $^*$ {\bf interprets} $O$ as $W$. We can also refer to such a $W$ as 
``{\bf $O$ under interpretation $^*$}''.

When a given formula is represented as $F(x_1,\ldots,x_n)$, we will typically write $F^*(x_1,\ldots,x_n)$ instead of 
$\bigl(F(x_1,\ldots,x_n)\bigr)^*$. A similar practice will be used for terms as well.

We agree that, for a formula $F$, an interpretation $^*$ and an HPM $\cal M$, whenever we say that $\cal M$ is a {\bf  solution} of $F^*$ or write ${\cal M}\models F^*$, we mean that   $\cal M$ is a  solution of the (constant) game $\ada x_1\ldots\ada x_n(F^*)$, where $x_1,\ldots,x_n$ are exactly the free variables of $F$, listed according to their lexicographic order. We call the above game the {\bf $\ada$-closure} of $F^*$, and denote it by 
$\ada F^*$. The {\bf $\cla$-closure} $\cla F^*$ is defined similarly. The same notational convention extends from games to formulas. 

Note that, for any given   formula $F$, the $\legal{}{}$ component of the game $\ada F^*$ does not depend on the interpretation $^*$. Hence we can safely 
say ``legal run of $\ada F$'' 
without indicating an interpretation applied to the formula.

\section{The axiomatization of logic $\cltw$}\label{ss8}

 Our formulations  rely on some terminology and notation, explained below. 

A formula not containing any choice operators $\adc,\add,\ada,\ade$ --- i.e., a formula of the language of classical first order logic --- is said to be {\bf elementary}. A sequent is {\bf elementary} iff all of its formulas are so. 

The {\bf elementarization} \[\elz{F}\] of a formula $F$ is the result of replacing
in $F$ all $\add$- and $\ade$-subformulas by $\tlg$, and all $\adc$- and $\ada$-subformulas by $\twg$. Note that $\elz{F}$ is (indeed) an elementary formula. The {\bf elementarization} $\elz{G_1,\ldots,G_n\intimpl F}$ of a sequent 
$G_1,\ldots,G_n\intimpl F$ is the elementary formula \(\elz{G_1}\mlc\ldots\mlc \elz{G_n}\mli \elz{F}.\) 

A sequent  is said to be {\bf stable} iff its elementarization is classically valid.  By ``classical validity'', in view of G\"{o}del's completeness theorem,  we mean provability in some standard classical first-order calculus with constants, function letters and $\equals$, where $\equals$ is treated as the logical {\em identity} predicate (so that, say, $x\equals x$, $x\equals y\mli (E(x)\mli E(y))$, etc. are provable).

A {\bf surface occurrence} of a subformula is an occurrence that is not in the scope of any choice operators.
 
We will be using the notation \[F[E]\] to mean a formula $F$ together with some (single) fixed  surface occurrence of a subformula $E$. Using this notation sets a context, in which $F[H]$ will mean the result of replacing in $F[E]$ the (fixed) occurrence of $E$ by $H$.  Note that here we are talking about some {\em occurrence} of $E$. Only that occurrence gets replaced when moving from $F[E]$ to $F[H]$, even if the formula also had some other occurrences of $E$.

By a {\bf rule} (of inference) in this section we mean a binary relation $\mathbb{Y}{\cal R} X$, where $\mathbb{Y}=\seq{Y_1,\ldots,Y_n}$ is a finite sequence of sequents and $X$ is a sequent. Instances of such a relation are schematically written as 
\[\frac{Y_1,\ldots,Y_n}{X},\]
where $Y_1,\ldots,Y_n$ are called the {\bf premises}, and $X$ is  called the {\bf conclusion}. Whenever $\mathbb{Y}{\cal R}X$ holds, we say that $X$ {\bf follows} from $\mathbb{Y}$ by $\cal R$.  

Expressions such as $\vec{G},\vec{K},\ldots$ will usually stand for finite sequences of formulas. The standard meaning of an expression such as $\vec{G},F,\vec{K}$  should also be clear.

\begin{center}
\begin{picture}(100,30)

\put(0,10){\bf THE RULES OF $\cltw$}

\end{picture}
\end{center}

$\cltw$ has the six rules listed below, with the following additional conditions/explanations: 
\begin{enumerate}
\item In $\add$-Choose and $\adc$-Choose, $i\in\{0,1\}$.
\item In $\ade$-Choose and $\ada$-Choose,  $\mathfrak{t}$ is either a constant or a variable with no bound occurrences in the premise, and $H(\mathfrak{t})$ is the result of replacing by $\mathfrak{t}$ all free occurrences of $x$ in $H(x)$ (rather than vice versa).
\end{enumerate}
\begin{center}
\begin{picture}(287,70)

\put(14,50){\bf $\add$-Choose}
\put(12,30){$\vec{G}\ \intimpl\  F[H_i]$}
\put(0,22){\line(1,0){78}}
\put(0,8){$\vec{G}\ \intimpl \ F[H_0\add H_1]$}

\put(232,50){\bf $\adc$-Choose}
\put(212,30){$\vec{G},\ E[H_i],\ \vec{K}\  \intimpl \ F$}
\put(200,22){\line(1,0){113}}
\put(200,8){$\vec{G},\ E[H_0\adc H_1], \ \vec{K}\ \intimpl\ F$}

\end{picture}
\end{center}

\begin{center}
\begin{picture}(287,70)

\put(231,50){\bf $\ada$-Choose}
\put(210,30){$\vec{G},\ E[H(\mathfrak{t})],\ \vec{K}\ \intimpl\ F$}
\put(200,22){\line(1,0){113}}
\put(200,8){$\vec{G},\ E[\ada xH(x)],\ \vec{K}\ \intimpl\ F$}

\put(16,50){\bf $\ade$-Choose}
\put(08,30){$\vec{G}\ \intimpl\  F[H(\mathfrak{t})]$}
\put(0,22){\line(1,0){78}}
\put(0,8){$\vec{G}\ \intimpl \ F[\ade x H(x)]$}

\end{picture}
\end{center}

\begin{center}
\begin{picture}(74,70)

\put(12,50){\bf Replicate}
\put(8,8){$\vec{G},E,\vec{K}\intimpl F$}
\put(0,22){\line(1,0){69}}
\put(0,30){$\vec{G},E,\vec{K},E\intimpl F$}
\end{picture}
\end{center}

\begin{center}
\begin{picture}(300,70)
\put(140,50){\bf Wait}
\put(0,30){$Y_1,\ldots,Y_n$}
\put(0,22){\line(1,0){45}}
\put(55,20){($n\geq 0$), where all of the following five conditions are satisfied:}
\put(20,8){$X$}
\end{picture}
\end{center}

\begin{enumerate}
\item {\bf $\adc$-Condition:}  Whenever $X$ has the form $\vec{G}\intimpl F[H_0\adc H_1]$, both of the sequents $\vec{G}\intimpl F[H_0]$ and 
$\vec{G}\intimpl F[H_1]$ are among $Y_1,\ldots,Y_n$.
\item {\bf $\add$-Condition:} Whenever $X$ has the form $\vec{G},E[H_0\add H_1],\vec{K}\intimpl F$, both of the sequents $\vec{G},E[H_0],\vec{K}\intimpl F$ and 
$\vec{G},E[H_1],\vec{K}\intimpl F$ are among $Y_1,\ldots,Y_n$.
\item {\bf $\ada$-Condition:} Whenever $X$ has the form $\vec{G}\intimpl F[\ada xH(x)]$, for some variable $y$ not occurring in $X$, the sequent  $\vec{G}\intimpl F[H(y)]$ is among  $Y_1,\ldots,Y_n$. Here and below, $H(y)$ is the result of replacing by $y$ all free occurrences of $x$ in $H(x)$ (rather than vice versa).
\item {\bf $\ade$-Condition:} Whenever $X$ has the form $\vec{G},E[\ade xH(x)],\vec{K}\intimpl F$, for some variable $y$ not occurring in $X$, the sequent  $\vec{G},E[H(y)],\vec{K}\intimpl F$ is among  $Y_1,\ldots,Y_n$.
\item {\bf Stability Condition:} $X$ is stable.
\end{enumerate}

A {\bf $\cltw$-proof} of a sequent $X$ is a sequence $X_1,\ldots,X_n$ of sequents, with $X_n=X$, such that, each $X_i$ follows  by one of the rules of $\cltw$ from some (possibly empty in the case of Wait, and certainly empty in the case of $i=1$) set $\cal P$ of premises such that ${\cal P}\subseteq \{X_1,\ldots, X_{i-1}\}$.
When a $\cltw$-proof of $X$ exists, we say that $X$ is {\bf provable} in $\cltw$, and write $\cltw\vdash X$.

   A {\bf $\cltw$-proof} of a formula $F$ will be understood as a  $\cltw$-proof of the empty-antecedent sequent $\intimpl F$. Accordingly, $\cltw\vdash F$ means $\cltw\vdash\intimpl F$.

 $\cltw$ is a conservative extension of classical logic (see \cite{Japtowards}). Namely,  an elementary sequent $E_1,\ldots,E_n\intimpl F$ is provable in $\cltw$ iff the formula $E_1\mlc\ldots\mlc E_n\mli F$ is valid in the classical sense. It is also a conservative extension of the earlier known logic {\bf CL3} studied in \cite{Japtcs}.\footnote{An essentially the same logic, under the name {\bf L}, was in fact known as early as in \cite{Jap02}.} The latter is nothing but the empty-antecedent fragment of $\cltw$ without function letters and identity. 

\begin{example}\label{ecube} In this example, $\mult$ is a binary function letter and $^3$ is a unary function letter. We write $x\mult y$ and $x^3$ instead of $\mult(x,y)$ and $^3(x)$, respectively. The following sequence of sequents is a $\cltw$-proof (of its last sequent):\vspace{7pt}

\noindent 1. $\cla x \bigl(x^3\equals (x\mult x)\mult x\bigr),\    t\equals s\mult s, \  r\equals t\mult s \ \intimpl \  r\equals s^3$  \ \ {Wait: (no premises)} \vspace{3pt}

\noindent 2. $\cla x \bigl(x^3\equals (x\mult x)\mult x\bigr),\    t\equals s\mult s, \  r\equals t\mult s \ \intimpl \  \ade y(y\equals s^3)$  \ \   {$\ade$-Choose: 1}\vspace{3pt}

\noindent 3. $\cla x \bigl(x^3\equals (x\mult x)\mult x\bigr),\    t\equals s\mult s, \  \ade z (z\equals t\mult s) \ \intimpl \  \ade y(y\equals s^3)$  \ \ {Wait: 2} \vspace{3pt}

\noindent 4. $\cla x \bigl(x^3\equals (x\mult x)\mult x\bigr),\    t\equals s\mult s, \ \ada y \ade z (z\equals t\mult y) \ \intimpl \  \ade y(y\equals s^3)$  \ \ {$\ada$-Choose: 3}\vspace{3pt}

\noindent 5. $\cla x \bigl(x^3\equals (x\mult x)\mult x\bigr),\    t\equals s\mult s, \ \ada x\ada y \ade z (z\equals x\mult y) \ \intimpl \  \ade y(y\equals s^3)$  \ \ {$\ada$-Choose: 4}\vspace{3pt}

\noindent 6. $\cla x \bigl(x^3\equals (x\mult x)\mult x\bigr),\    \ade z (z\equals s\mult s), \ \ada x\ada y \ade z (z\equals x\mult y) \ \intimpl \  \ade y(y\equals s^3)$  \ \ {Wait: 5}\vspace{3pt}

\noindent 7. $\cla x \bigl(x^3\equals (x\mult x)\mult x\bigr),\  \ada y \ade z (z\equals s\mult y), \ \ada x\ada y \ade z (z\equals x\mult y) \ \intimpl \  \ade y(y\equals s^3)$  \ \ { $\ada$-Choose: 6}\vspace{3pt}

\noindent 8.  $\cla x \bigl(x^3\equals (x\mult x)\mult x\bigr),\ \ada x\ada y \ade z (z\equals x\mult y), \ \ada x\ada y \ade z (z\equals x\mult y) \ \intimpl \  \ade y(y\equals s^3)$ \ \ { $\ada$-Choose: 7}\vspace{3pt}

\noindent 9.  $\cla x \bigl(x^3\equals (x\mult x)\mult x\bigr),\ \ada x\ada y \ade z (z\equals x\mult y) \ \intimpl \ \ade y(y\equals s^3)$ \ \ {Replicate: 8}\vspace{3pt}

\noindent 10. $\cla x \bigl(x^3\equals (x\mult x)\mult x\bigr),\ \ada x\ada y \ade z (z\equals x\mult y) \ \intimpl \ \ada x\ade y(y\equals x^3)$ \ \ { Wait: 9}
\end{example}

\begin{example}\label{j28a}
The formula $\cla x\hspace{1pt}p(x)\mli\ada x\hspace{1pt}p(x)$ is provable in $\cltw$. It follows  from $\cla x\hspace{1pt}p(x)\mli p(y)$ by Wait. The latter, in turn, follows by Wait from the empty set of premises. 

On the other hand, the formula $\ada x\hspace{1pt}p(x)\mli\cla x\hspace{1pt}p(x)$, i.e. $\ade x\gneg p(x)\mld \cla x\hspace{1pt}p(x)$, in not provable. Indeed, its elementarization is $\tlg\mld \cla x\hspace{1pt}p(x)$, which is not classically valid.  Hence $\ade x\gneg p(x)\mld \cla x\hspace{1pt}p(x)$ cannot be derived by Wait. Replicate can also be dismissed for obvious reasons. This leaves us with $\ade$-Choose. But if $\ade x\gneg p(x)\mld \cla x\hspace{1pt}p(x)$ is derived  by $\ade$-Choose, then the premise should be $\gneg p(\mathfrak{t})\mld \cla x\hspace{1pt}p(x)$ for some variable or constant $\mathfrak{t}$. The latter, however, is a classically non-valid elementary formula and hence unprovable. 
\end{example}

\begin{example}\label{j28b}
The formula $\ada x\ade y\bigl(p(x)\mli p(y)\bigr)$ is provable in $\cltw$ as follows:\vspace{7pt}

\noindent 1. $\begin{array}{l}
p(s)\mli p(s)
\end{array}$  \ \ Wait:\vspace{3pt}

\noindent 2. $\begin{array}{l}
\ade y\bigl(p(s)\mli p(y)\bigr)
\end{array}$  \ \ $\ade$-Choose: 1\vspace{3pt}

\noindent 3. $\begin{array}{l}
\ada x\ade y\bigl(p(x)\mli p(y)\bigr)
\end{array}$  \ \ Wait: 2\vspace{7pt}

On the other hand, the formula $\ade y\ada x \bigl(p(x)\mli p(y)\bigr)$ can be seen to be unprovable, even though its classical counterpart $\cle y\cla x \bigl(p(x)\mli p(y)\bigr)$ is a classically valid elementary formula and hence provable in $\cltw$.  
\end{example}

\begin{example}\label{j28c}
While the formula $\cla x\cle y \bigl(y\equals f(x)\bigr)$  is classically valid and hence provable in $\cltw$, its constructive counterpart 
$\ada x\ade y \bigl(y\equals f(x)\bigr)$ can be easily seen to  be unprovable. This is no surprise. In view of the expected soundness of $\cltw$,  provability  of $\ada x\ade y \bigl(y\equals f(x)\bigr)$ would imply that every function $f$ is computable, which, of course, is not the case.     
\end{example}

\begin{exercise}\label{feb1a}
To see the resource-consciousness of $\cltw$, show that it does not prove $p\adc q\mli (p\adc q)\mlc (p\adc q)$, even though this formula has the form $F\mli F\mlc F$ of a classical tautology. Then show that, in contrast, $\cltw$ proves the {\em sequent} $p\adc q\intimpl (p\adc q)\mlc (p\adc q)$ because, unlike the antecedent of a $\mli$-combination, the antecedent of a $\intimpl$-combination is reusable (trough Replicate). 
\end{exercise}

\begin{exercise}\label{feb1ae}
Show that $\cltw\vdash \ade x\ada y\hspace{2pt} p(x,y)\intimpl \ade x\bigl(\ada y\hspace{2pt}p(x,y)\mlc \ada y\hspace{2pt}p(x,y)\bigr)$. Then observe that, on the other hand,  $\cltw$ does not prove any of the formulas 
\[\begin{array}{rcl}
\ade x\ada y\hspace{2pt} p(x,y) & \mli & \ade x\bigl(\ada y\hspace{2pt}p(x,y)\mlc \ada y\hspace{2pt}p(x,y)\bigr);\\
\ade x\ada y\hspace{2pt} p(x,y)\ \mlc \ \ade x\ada y\hspace{2pt} p(x,y) & \mli & \ade x\bigl(\ada y\hspace{2pt}p(x,y)\mlc \ada y\hspace{2pt}p(x,y)\bigr);\\
\ade x\ada y\hspace{2pt} p(x,y)\ \mlc\ \ade x\ada y\hspace{2pt} p(x,y)\ \mlc\ \ade x\ada y\hspace{2pt} p(x,y) & \mli & \ade x\bigl(\ada y\hspace{2pt}p(x,y)\mlc \ada y\hspace{2pt}p(x,y)\bigr);\\
 & \ldots & 
\end{array}\]
\end{exercise}

\section{The adequacy of logic $\cltw$}\label{sadeq} 
{\bf Logical Consequence} ({\bf LC}) is the following rule, with both the premises and the conclusion being formulas:    
\[\mbox{\em From $E_1,\ldots,E_n$ conclude $F$, as long as $\cltw$ proves {\em $E_1,\ldots,E_n\intimpl F$}}.\]
When $F$ follows from $E_1,\ldots,E_n$ by this rule, i.e., when $\cltw \vdash E_1,\ldots,E_n\intimpl F$, we say that $F$ is a {\bf logical consequence} of $E_1,\ldots,E_n$.     

The official terms of the language of $\cltw$, identified with their parse trees, are tree-style structures, so let us call them {\bf tree-terms}. A more general and economical way to represent terms, however, is to allow merging some or all identical-content nodes in such trees, thus turning them into (directed, acyclic,  rooted, edge-ordered multi-) graphs.  Let us call these unofficial sorts of terms   {\bf graph-terms}. The idea of representing linguistic objects in the form of graphs rather than trees is central in the approach called {\em cirquent calculus} (\cite{Cirq,Japdeep}), and has already  proven its worth. We once again find  the usefulness of that idea  in our present, complexity-sensitive context. Figure 1 illustrates two terms representing the same polynomial function $y^8$, with the term on the right being a tree-term and the term on the left being a graph-term. As this example suggests, graph-terms are generally exponentially smaller than the corresponding tree-terms, which explains our interest in the former.  Figure 1 also makes it unnecessary to formally define graph-terms, as their meaning must be perfectly clear after looking at this single example.   

\begin{center} \begin{picture}(353,130)

\put(16,112){\circle{12}}
\put(14,111){$y$}

\put(22,90){\line(1,1){10}}
\put(32,100){\vector(-1,1){10}}
\put(10,90){\line(-1,1){10}}
\put(0,100){\vector(1,1){10}}
\put(16,86){\circle{12}}
\put(12,83){$\mult$}

\put(22,64){\line(1,1){10}}
\put(32,74){\vector(-1,1){10}}
\put(10,64){\line(-1,1){10}}
\put(0,74){\vector(1,1){10}}
\put(16,60){\circle{12}}
\put(12,57){$\mult$}

\put(22,38){\line(1,1){10}}
\put(32,48){\vector(-1,1){10}}
\put(10,38){\line(-1,1){10}}
\put(0,48){\vector(1,1){10}}
\put(16,34){\circle{12}}
\put(12,31){$\mult$}

\put(104,112){\circle{12}}
\put(102,111){$y$}
\put(137,112){\circle{12}}
\put(135,111){$y$}
\put(116,90){\vector(-2,3){10}}
\put(125,90){\vector(2,3){10}}
\put(121,86){\circle{12}}
\put(117,83){$\mult$}

\put(174,112){\circle{12}}
\put(172,111){$y$}
\put(207,112){\circle{12}}
\put(205,111){$y$}
\put(186,90){\vector(-2,3){10}}
\put(195,90){\vector(2,3){10}}
\put(191,86){\circle{12}}
\put(187,83){$\mult$}

\put(151,64){\vector(-3,2){26}}
\put(161,64){\vector(3,2){26}}
\put(156,60){\circle{12}}
\put(152,57){$\mult$}

\put(244,112){\circle{12}}
\put(242,111){$y$}
\put(277,112){\circle{12}}
\put(275,111){$y$}
\put(256,90){\vector(-2,3){10}}
\put(265,90){\vector(2,3){10}}
\put(261,86){\circle{12}}
\put(257,83){$\mult$}

\put(314,112){\circle{12}}
\put(312,111){$y$}
\put(347,112){\circle{12}}
\put(345,111){$y$}
\put(326,90){\vector(-2,3){10}}
\put(335,90){\vector(2,3){10}}
\put(331,86){\circle{12}}
\put(327,83){$\mult$}

\put(291,64){\vector(-3,2){26}}
\put(301,64){\vector(3,2){26}}
\put(296,60){\circle{12}}
\put(292,57){$\mult$}

\put(221,38){\vector(-4,1){63}}
\put(231,38){\vector(4,1){63}}
\put(226,34){\circle{12}}
\put(222,31){$\mult$}

\put(50,10){{\bf Figure 1:} A graph-term and the corresponding tree-term}
\end{picture}\end{center}

By a {\bf polynomial graph-term} $\tau$ we shall mean a graph-term  not containing (at its leaves) any constants other than $0$, and not containing (at its internal nodes) any function letters other than $\successor$ (unary), $\plus$ (binary) and $\mult$ (binary).  The total number $k$ of the variables $y_1,\ldots,y_k$ occurring in (at the leaves of) $\tau$   is said to be the {\bf arity} of $\tau$. Subsequently we shall only be interested in unary ($1$-ary) polynomial graph-terms, and will typically omit the word ``unary''. Terminologically and notationally we shall usually identify such a term $\tau$ with the unary polynomial arithmetical function represented by it under the standard arithmetical interpretation ($x\successor$ means $x\plus 1$). So, for instance, either term of Figure 1 is a polynomial graph-term, representing --- and identified with --- the function $y^8$.

We generalize the above concept of a (unary) polynomial graph-term $\tau$ to that of a $(1,n)$-ary {\bf explicit polynomial functional} by allowing the term  $\tau$ to contain, on top of variables and $0,\successor,\plus,\mult$, additional $n$ ($n\geq 0$) unary function letters $f_1,\ldots,f_n$,  semantically treated as placeholders for unary arithmetical functions.  We say that such a $\tau$ {\bf depends on} $f_1,\ldots,f_n$.  Replacing $f_1,\ldots,f_n$ by names $g_1,\ldots,g_n$ of some particular unary arithmetical functions turns $\tau$ into  the corresponding unary arithmetical function, which we shall denote by $\tau(g_1,\ldots,g_n)$.  For instance, the term of Figure 2 is a 
$(1,2)$-ary explicit polynomial functional. Let us denote it by $\tau$. Then, if $g$ means ``square'' and $h$ means ``cube'',  $\tau(g,h)$ is  the unary arithmetical function  
 $(y^2\plus y^3)^3$. 

\begin{center} \begin{picture}(353,123)

\put(157,105){\circle{14}}
\put(155,104){$y$}

\put(144,87){\vector(1,1){11}}
\put(138,82){\circle{14}}
\put(134,80){$f_1$}

\put(170,87){\vector(-1,1){11}}
\put(176,82){\circle{14}}
\put(172,80){$f_2$}

\put(152,65){\vector(-1,1){11}}
\put(162,65){\vector(1,1){11}}

\put(157,60){\circle{14}}
\put(153,58){$\plus$}
\put(157,42){\vector(0,1){11}}
\put(157,34){\circle{14}}
\put(153,32){$f_2$}

\put(0,10){{\bf Figure 2:} An explicit polynomial functional expressing $f_2\bigl(f_1(y)\plus f_2(y)\bigr)$}
\end{picture}\end{center}

We further generalize the concept of a (unary) polynomial graph-term $\tau$ to that of a (unary) {\bf explicit polynomial function}. The latter is defined as a nonempty sequence $\seq{\tau_{f_1},\ldots,\tau_{f_k}}$ of explicit polynomial functionals indexed by (associated with) pairwise distinct unary function letters $f_1,\ldots,f_k$, where each $\tau_{f_i}$ does not depend on any function letters that are not among $f_1,\ldots,f_{i-1}$. Note that this condition makes $\tau_{f_1}$ simply a polynomial graph-term, i.e., a $(1,0)$-ary explicit polynomial functional. (Meta)semantically, each index $f_i$ here is a name of (represents) a unary arithmetical function, and the corresponding $\tau_{f_i}$ is a definition of that function in terms of $0$, $\successor$, $\plus$, $\mult$ and some earlier-defined functions; then  $\tau$ itself is stipulated to represent the same function as  $f_k$ does. Again, a single example would be sufficient to fully clarify the denotation of each explicit polynomial function. Consider the explicit polynomial function  $\tau=\seq{\tau_{f_1},\tau_{f_1},\tau_{f_3}}$, where $\tau_{f_1}$ and $\tau_{f_2}$ are the $(1,0)$-ary explicit polynomial functionals of Figure 1, and $\tau_{f_3}$ is the $(1,2)$-ary explicit polynomial functional of Figure 2. Thus, both $f_1$ and $f_2$ represent the same unary polynomial function $y^8$, and $f_3$, i.e. $\tau$ itself, represents the unary polynomial function  $(y^8\plus y^8)^8$. Of course, every explicit polynomial function can be translated into an equivalent polynomial graph-term, but such a translation can increase the size exponentially. For our purposes, polynomial graph-terms (let alone tree-terms) are too inefficient means of representing polynomial functions. This explains our preference for explicit polynomial functions as a standard way of writing (in our metalanguage) polynomial terms.

As in the case of polynomial graph-terms, terminologically and notationally we shall usually identify an explicit polynomial function $\tau$ with the unary arithmetical function represented by it.  When $\tau$ is an explicit polynomial function and $\cal M$ is a $\tau$ time (resp. space) machine, we  say that $\tau$ is an {\bf explicit polynomial bound} for the time (resp. space) complexity of $\cal M$.

Another auxiliary concept that we are going to rely on in this section and later is that of a {\bf generalized HPM} ({\bf GHPM}). For a natural number $n$, an $n$-ary GHPM is defined in the same way as an HPM, with the difference that the former takes $n$ natural numbers as inputs (say, provided on a separate, read-only  {\em input tape}); such  inputs are present  at the very beginning of the work of the machine and remain unchanged throughout it. An ordinary HPM is thus nothing but a $0$-ary GHPM.  When $\cal M$ is an $n$-ary  GHPM and $c_1,\ldots,c_n$ are natural numbers, ${\cal M}(c_1,\ldots,c_n)$ denotes the HPM that works just like $\cal M$ in the scenario where the latter has received $c_1,\ldots,c_n$ as inputs. 
We will assume that some reasonable encoding (through natural numbers) of GHPMs is fixed. When $\cal M$ is a GHPM, $\code{{\cal M}}$ denotes its {\bf code}.

The paper \cite{Japlbcs} established the soundness  of $\cltw$ in the following strong sense:

\begin{theorem}\label{feb9d}  
(\cite{Japlbcs}) If a formula $F$ is a logical consequence of formulas $E_1,\ldots,E_n$ and $^*$ is an interpretation such that each $E_{i}^{*}$ ($1\leq i\leq n$) is computable, then so is $F^*$.
Furthermore:
\begin{enumerate} 
\item There is an  efficient\footnote{Here and later in similar metacontexts, ``efficient'' means ``polynomial time''.}  procedure that takes an arbitrary $\cltw$-proof of an arbitrary sequent {\em $E_1,\ldots,E_n\intimpl F$}
 and constructs a $n$-ary GHPM $\cal M$, together with a $(1,n)$-ary explicit polynomial functional $\tau$,    such that, for any interpretation $^*$, any 
$n$-ary  GHPMs ${\cal N}_1,\ldots,{\cal N}_n$   and any unary arithmetical functions $g_1,\ldots,g_n$,   if each ${\cal N}_i(\code{{\cal N}_1},\ldots,\code{{\cal N}_n})$ is a $g_{i}$ time solution of $E_{i}^{*}$, then ${\cal M}(\code{{\cal N}_1},\ldots,\code{{\cal N}_n})$ is a $\tau(g_1,\ldots,g_n)$ time solution of $F^*$. 
\item The same holds for ``space'' instead of ``time''.   
\end{enumerate}
\end{theorem}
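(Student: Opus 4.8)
The plan is to prove the statement by induction on the length of a $\cltw$-proof of the sequent $E_1,\ldots,E_n\intimpl F$ (an initial segment of such a proof ending at any line is itself a proof of that line, so the induction hypothesis applies to every premise of the last rule), reading the proof itself as a blueprint for the strategy. Concretely, I would describe, for each of the six rules of $\cltw$, how to build the GHPM $\cal M$ and the explicit polynomial functional $\tau$ for the conclusion out of the machines and functionals already obtained for the premises. The machine ${\cal M}(\code{{\cal N}_1},\ldots,\code{{\cal N}_n})$ plays $F^*$ by maintaining, on auxiliary regions of its work tape, simulated copies of the plays of ${\cal N}_i(\code{{\cal N}_1},\ldots,\code{{\cal N}_n})$ on the resource games $E_i^*$, relaying moves back and forth according to the rule applied at the current proof node, while separately running a ``classical'' strategy on the elementarization whenever the play sits at a surface position that exposes no choice operator.

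For the four Choose rules the construction is straightforward: $\cal M$ begins by making the single move that the rule records --- selecting a $\add$/$\adc$-component, or instantiating an $\ade$/$\ada$ variable by the constant or by the value that the relevant earlier environment move supplied --- thereby bringing $F^*$ (or one of the $E_i^*$) down to the instance occurring in the premise, and thereafter behaves exactly like the premise machine; $\tau$ is the premise functional plus a constant (in the $\ade$/$\ada$ case one also allows the size of that first move to be bounded by the background). For Replicate, $\cal M$ runs two independent simulated copies of the interaction with ${\cal N}_i$ on $E_i^*$ --- one per antecedent occurrence of $E$ in the conclusion --- routing each environment move to the copy that addresses its occurrence; feeding ${\cal N}_i$ the fixed input vector $\code{{\cal N}_1},\ldots,\code{{\cal N}_n}$ is precisely what makes this duplication legitimate even when $E_i^*$ is itself a reduction that must ``know'' the other strategies. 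The delicate rule is Wait: here $\cal M$ makes no move and waits; as soon as the environment moves, that move resolves one surface choice operator in the sequent, bringing the sequent down to one of the premises $Y_j$, and $\cal M$ thereafter simulates the premise machine for $Y_j$; if instead no move is ever made and the play stabilizes at an elementary-looking surface position, the Stability Condition together with the (G\"{o}del completeness based) soundness of classical first-order logic gives that the relevant instance of the elementarization is true at the current valuation, so the parallel-classical part of $\cal M$ wins. Verifying that these local strategies glue into a genuine winning strategy for $\ada F^*$ against every environment is the adequacy core, and it rests on the observation that in any legal run of $\ada F$ the environment affects only finitely many surface choice operators, so the traversal of the finite proof tree terminates at a stable leaf.

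In parallel I would define $\tau$ by the same recursion on the proof, so that $\tau(g_1,\ldots,g_n)$ dominates, at every background value $\ell$, both the size and the timecost of every move $\cal M$ makes. The bookkeeping is driven by two facts: (a) every move $\cal M$ makes is either a Choose-move, whose size is a constant or is bounded by $\ell$ (a variable's value is a substring of an earlier environment move), or a verbatim relay of a move produced by some ${\cal N}_i$, whose size is then $\leq g_i(\ell')$ for a background $\ell'$ that is $\leq \ell$ up to the translation; and (b) between two consecutive moves $\cal M$ does only a bounded amount of its own work plus a bounded number of simulated steps of at most one sub-machine, and each such sub-machine answers within its own time bound $g_i$, so the timecost stays polynomially bounded once the $g_i$'s and the fixed per-proof overheads are composed --- and this composition is exactly an explicit polynomial functional in $g_1,\ldots,g_n$, which is why the theorem yields a $\tau$ of that shape rather than merely ``some'' polynomial-time bound. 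Clause 2, for space, follows by the same construction with the clock count replaced by the count of distinct work-tape cells visited; relaying plus a bounded number of simulated sub-machine steps uses space bounded by the space already used by the sub-machines plus a polynomial overhead, and the plain (non-quantitative) preservation of computability is the special case obtained by discarding the bounds.

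I expect the main obstacle to be the Wait rule intertwined with the timecost bound. The hard part is not the ``no environment move'' case, disposed of by the Stability Condition and classical soundness, but showing that, upon each environment move, $\cal M$ can locate the premise $Y_j$ to which the position descends, switch to the corresponding simulated strategy, and emit its response all within $\tau(g_1,\ldots,g_n)(\ell)$ steps, uniformly over proofs that may stack many Wait and Replicate nodes. Making this precise requires carrying, throughout the induction, a robust invariant relating the backgrounds seen by $\cal M$ and by the various sub-machines, and choosing the recursion for $\tau$ so that it simultaneously absorbs the per-node overheads and the compositions forced by nested reductions; getting that invariant right --- and checking it survives each rule, especially Replicate where the invariant must cover two coexisting simulated plays --- is where essentially all the work lies.
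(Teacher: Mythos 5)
You should first note a mismatch with the task as posed: this paper does not prove Theorem \ref{feb9d} at all. It is explicitly imported from \cite{Japlbcs} (``The paper \cite{Japlbcs} established the soundness of $\cltw$ in the following strong sense\ldots''), so there is no in-paper proof to compare your attempt against; even the definition of the game operation $\intimpl$ on which the statement ultimately rests is deliberately omitted here. That said, your outline --- induction on the $\cltw$-derivation, building a machine and an explicit polynomial functional rule by rule, with Choose rules handled by a single prefixed move, Wait handled by polling plus the Stability Condition for the moveless case, and the bounds obtained by composing the $g_i$ with fixed per-node overheads --- is indeed the standard route and the one taken in the cited paper.

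Two concrete points in your sketch need repair before it could become a proof. First, the induction hypothesis cannot be phrased in terms of the fixed resources $E_1^*,\ldots,E_n^*$ and the target $F^*$: the antecedent changes from line to line of the proof (the $\adc$/$\ada$-Choose rules rewrite an antecedent formula, Replicate changes the number of antecedent formulas), so the statement you must carry through the induction is about the ultimate-reduction game $G_1^*,\ldots,G_m^*\intimpl H^*$ attached to \emph{each} sequent of the derivation, with the theorem as stated recovered only at the root by plugging in the $\mathcal{N}_i$. Second, your Replicate case runs backwards: in $\cltw$ the \emph{premise} is $\vec{G},E,\vec{K},E\intimpl F$ and the \emph{conclusion} is $\vec{G},E,\vec{K}\intimpl F$, so the machine for the conclusion has only one antecedent occurrence of $E$ and must fork it --- using exactly the branching recyclability built into the semantics of $\intimpl$ --- to supply the two copies the premise strategy expects; ``one simulated copy per antecedent occurrence of $E$ in the conclusion'' would give you only one. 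Your account of Wait is also slightly too optimistic: an environment move need not resolve a surface choice operator (it can be an illegal or irrelevant move, which must be detected and either ignored or converted into an immediate win), and the descent to a premise $Y_j$ in the $\ada$/$\ade$ cases requires matching the chosen constant against the fresh variable $y$ of the premise, which is where the claim that move sizes are bounded by the background actually gets used.
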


Among the  corollaries of the above theorem is that LC preserves both polynomial time and polynomial space computabilities, as well as $\Omega$-time and $\Omega$-space computabilities for any class $\Omega$ of functions containing all polynomial functions and closed under composition (such as, say, the class of all primitive recursive functions). Indeed, to see that LC preserves (for instance) polynomial time computability, assume $F$ is a logical consequence of $E_1,\ldots,E_n$, and the HPMs ${\cal N}_1,\ldots,{\cal N}_n$ are polynomial time solutions of $E_1,\ldots,E_n$. Of course, every such HPM ${\cal N}_i$ can as well be seen as an $n$-ary GHPM which simply ignores its inputs. So, each ${\cal N}_i(\code{{\cal N}_1},\ldots,\code{{\cal N}_n})$ solves $E_i$. Then, according to Theorem \ref{feb9d}, we can construct a polynomial time solution ${\cal M}(\code{{\cal N}_1},\ldots,\code{{\cal N}_n})$ of $F$. Furthermore, such a solution, together with an explicit polynomial bound for its time complexity, can be efficiently extracted from ${\cal N}_1,\ldots,{\cal N}_n$, explicit polynomial bounds for their time complexities, and the $\cltw$-proof of $E_1,\ldots,E_n\intimpl F$.   

Remember that, philosophically speaking, computational {\em resources} are symmetric to computational problems: what is a problem for one player to solve is a resource that the other player can use. Namely, having a problem $A$ as a computational resource intuitively means having the ability  to successfully solve/win $A$. For instance, as a resource, $\ada x\ade y(y=x^2)$ means the ability to tell the square of any number. 

Together with soundness, \cite{Japlbcs} also established the completeness  of $\cltw$. While in the present paper we treat 
$\intimpl$ merely as a syntactic expression separating the two parts of a sequent, in \cite{Japlbcs} it is seen as an operation on games, called  {\bf ultimate reduction}. For simplicity considerations, we do not want to reproduce the definition of $\intimpl$ here, which would be necessary to precisely state the completeness result for $\cltw$. We shall only point out that, intuitively, $A_1,\ldots,A_n\intimpl B$ is (indeed) the problem of reducing $B$ to $A_1,\ldots,A_n$ in the most general intuitive sense. It is similar to $A_1\mlc \ldots\mlc A_n\mli B$, but with the difference that, during playing this game, in the former $\pp$ can use any of the antecedental resources $A_i$ repeatedly, while in the latter at most once. The $A_i$s, as resources, are ``recyclable'', that is. Furthermore, they are ``recyclable'' in the strongest sense possible. Namely,  it is not necessary to restart $A_i$ from the beginning every time it is reused. $\pp$ may as well choose to continue $A_i$ --- in a new way --- from any of the previously reached positions. This corresponds to the way of reusage any purely software resource would offer in the presence of an advanced operating system and unlimited memory: one can start running process $A_i$, then fork it at any stage thus creating two  copies (branches) with a common past but possibly diverging futures, then fork any of the new branches again at any time, and so on. See \cite{Japfin} or \cite{Japlbcs} for more details and explanations. 

Anyway, the completeness result of \cite{Japlbcs} for $\cltw$ says that, if there is an HPM that solves  the problem $E^{*}_{1},\ldots,E_{n}^{*}\intimpl F^{*}$ for any interpretation $^*$ --- such an $^*$-independent HPM is said to be a {\em logical solution} of the sequent $E_{1},\ldots,E_{n}\intimpl F$ --- then $\cltw\vdash E_{1},\ldots,E_{n}\intimpl F$.     Furthermore, the same has been shown to hold even if HPMs are no longer required to follow algorithmic (let alone efficient) strategies --- for instance, if they are allowed to use oracles for whatever functions. This result, in view of the intuitions captured by the operation $\intimpl$, eventually translates in \cite{Japlbcs} into the following thesis relevant to our further purposes:

\begin{thesis}\label{thesis}
(\cite{Japlbcs}) Assume  $E_1,\ldots,E_n,F$ are formulas such that  there is a $^*$-independent (whatever interpretation $^*$) intuitive description and justification of a winning strategy for $F^*$, which relies on the availability and ``recyclability'' --- in the strongest sense possible --- of $E_{1}^{*},\ldots,E_{n}^{*}$ as computational resources.   Then $F$ is a logical consequence of $E_1,\ldots,E_n$. 
\end{thesis}

The above means that the rule of Logical Consequence lives up to its name, and that we can always reliably use intuition on games and strategies when reasoning about them in $\cltw$-based systems (where LC is a rule of inference) such as $\arfour$. Namely, once a formula $E$ is proven in such a system, it can be treated as a recyclable resource whose unlimited availability can be safely assumed for any new strategies that we construct. More precisely, a new strategy may assume that there is an (external) {\bf provider}  of the resource $E$ (call it an {\bf oracle} for $E$ if you prefer), capable of successfully playing $E$ for (against) us any time and in any number of sessions, whether we choose those sessions to evolve in a parallel or a branched/forked fashion.    

\begin{example}\label{intex}
 Imagine a $\cltw$-based applied formal theory, in which we have already proven two facts: $\cla x\bigl(x^3\equals(x\mult x)\mult x\bigr)$ (the meaning of ``cube'' in terms of multiplication) and $\ada x\ada y\ade z(z\equals x\mult y)$ (the computability of multiplication), and now we want to derive $\ada x\ade y(y\equals x^3)$ (the computability of ``cube''). This is how we can reason to justify $\ada x\ade y(y\equals x^3)$:  
\begin{quote}{\em Consider any $s$ 
(selected by Environment for $x$ in $\ada x\ade y(y\equals x^3)$). We need to find  $s^3$. Using the resource $\ada x\ada y\ade z(z\equals x\mult y)$ twice, we first find the value $t$ of $s\mult s$, and then  the value $r$ of $t\mult s$. According to $\cla x(x^3\equals (x\mult x)\mult x)$, such an $r$ is the sought $s^3$.}
\end{quote}   

Thesis \ref{thesis} promises that the above intuitive argument will be translatable into a $\cltw$-proof of 
\[\cla x\bigl(x^3\equals(x\mult x)\mult x\bigr),\ \ada x\ada y\ade z(z\equals x\mult y)\ \intimpl \ \ada x\ade y(y\equals x^3) \]
(and hence the succedent will be derivable in the  theory by LC as the formulas of the antecedent are already proven). Such a proof indeed exists --- see 
 Example \ref{ecube}.
\end{example}

 $\cltw$ --- more precisely, the associated rule of LC --- is adequate because, on one hand, by Theorem \ref{feb9d}, it is sound for a wide spectrum of applied theories, including --- but not limited to --- polynomial-time-oriented ones,  and,   on the other hand, by   Thesis \ref{thesis},  it is as strong as a logical rule of inference could possibly be. 

\section{Theory $\arfour$  introduced}\label{ss11}

The language of $\arfour$, whose {\bf formulas} more specifically can be referred to as {\bf $\arfour$-formulas}, is obtained from the language of $\cltw$ by removing all nonlogical predicate letters (thus only leaving the logical predicate letter $\equals $),  removing all constants but $0$, and  removing all but three function letters, which are:
 
\begin{itemize}
\item $successor$, unary. We will write $\tau\successor $ for $successor(\tau)$.
\item $sum$, binary. We will write $\tau_1\plus \tau_2$ for $sum(\tau_1,\tau_2)$.
\item $product$, binary. We will write $\tau_1\mult  \tau_2$ for $product(\tau_1,\tau_2)$.
\end{itemize}

Thus, the language of $\arfour$ extends that of {\bf Peano arithmetic} $\pa$ (see, for example, \cite{Hajek}) through adding to it $\adc,\add,\ade,\ada$. 
Formulas that have no free occurrences of variables are said to be {\bf sentences}.\label{isentence}  

The concept of an interpretation explained earlier can now be restricted to interpretations that are only defined on  $\successor $, $\plus $ and  $\mult $, as the present language has no other nonlogical function or predicate letters. Of such interpretations, the {\bf standard interpretation}\label{isi} $^\dagger$ is the one whose universe is the ideal universe $\{0,1,10,11,100,\ldots\}$, with its elements identified with the corresponding natural numbers,  and which  interprets  $\successor $ as the standard successor ($x\plus 1$) function, interprets $\plus $ as the sum function,  and interprets $\mult $ as the product function.  For a 
$\arfour$-formula $F$, the {\bf standard interpretation of} 
$F$ is the game $F^\dagger$, which we typically write simply as $F$ unless doing so may cause ambiguity. 

Where $\tau$ is a term, we will be using $\tau \zero$ and $\tau \one$ as abbreviations for the terms $0\successor\successor\mult \tau$ and $(0\successor\successor\mult \tau)\successor$, respectively. The choice of this notation is related to the fact that, given any natural number $a$, the binary representation of $0\successor\successor \mult a$ (i.e., of $2a$) is nothing but the binary representation of $a$ with a ``$0$'' added on its right. Similarly, the binary representation of $(0\successor\successor\mult a)\successor$ is nothing but the binary representation of $a$ with a ``$1$'' added to it. Of course, here an exception is the case $a\equals 0$. It can be made an ordinary case by assuming that adding any number of $0$s at the beginning of a binary numeral $b$ results in a legitimate numeral representing the same number as $b$ does. 

The number $a\zero$ (i.e. $2a$) will be said to be the {\bf binary $0$-successor}\label{ibzs} of $a$, and $a\one $ (i.e. $2a+1$) said to be the {\bf binary $1$-successor}\label{ibos} of $a$; in turn, we can refer to $a$ as the {\bf binary predecessor}\label{ibp} of $a\zero$ and $a\one$. As for $a\successor $, we can refer to it as the {\bf unary successor}\label{ius} of $a$, and refer to $a$ as the {\bf unary predecessor}\label{iup} of $a\successor $. Every number has a binary predecessor, and every number except $0$ has a unary predecessor. 
Note that  the binary predecessor of a number is the result of deleting the last digit (if present) in its binary representation. Remember that the string $0$ for us is just another name of the empty string $\epsilon$ representing the number zero. So,  $0$ and $1$ are no exceptions to the above rule: deleting the last (and only) digit of $1$ results in $\epsilon$; and ``deleting the last digit'' in $0$, i.e. in $\epsilon$, again results in $\epsilon$, as there is no digit to delete.   

The language of $\pa$ is known to be  very expressive, despite its nonlogical vocabulary officially being limited to only $0,\successor ,\plus ,\mult $. Specifically, it allows us to express, in a certain  standard way, all  recursive functions and relations, and beyond. Relying on the common knowledge of the power of the language of \pa,  we will be using standard expressions such as $x\mleq y$, $y\mgreater x$, etc. in formulas as abbreviations of the corresponding proper expressions of the language. In our metalanguage, $|x|$\label{iii} will refer to the length of the binary numeral for the number represented by $x$. In other words, $|x|\equals\lceil log_2(x\plus 1)\rceil$ \ ($\lceil y\rceil$ means the smallest integer $z$ with $y\mleq z$). Expressions like $|x|$ we refer to as {\bf pseudoterms} --- officially they are not terms of the object language, but in many contexts still can be treated as such. Another example of a pseudoterm is $2^x$ with its standard meaning.  So, when we write, say, ``$|x|\mleq y$'', it is officially to be understood as an abbreviation of a standard formula of {\bf PA} saying that $|x|$ does not exceed $y$.

For a variable $x$, by a {\bf polynomial sizebound} for $x$ we shall mean a standard   formula of the language of $\pa$ saying that $|x|\mleq\tau(|y_1|,\ldots,|y_n|)$, where $y_1,\ldots,y_n$ are any variables different from $x$, and $\tau(|y_1|,\ldots,|y_n|)$ is any $(0,\successor,\plus,\mult)$-combination of $|y_1|,\ldots,|y_n|$.  For instance, $|x|\mleq |y|\plus |z|$ is a polynomial sizebound for $x$, which is a formula of $\pa$ saying that    the size of $x$ does not exceed the sum of the sizes of $y$ and $z$.  Now, we say that a $\arfour$-formula $F$ is {\bf polynomially bounded} iff:
\begin{itemize}
\item Whenever $\ada x G(x)$ is a subformula of $F$, $G(x)$ has the form $S(x) \mli H(x)$, where $S(x)$ is a polynomial sizebound for $x$. 
\item Whenever $\ade x G(x)$ is a subformula of $F$, $G(x)$ has the form $S(x) \mlc H(x)$, where $S(x)$ is a polynomial sizebound for $x$.  
\end{itemize}

Remember that, where $F$ is a formula, $\cla  F$ means the $\cla$-closure of $F$, i.e., $\cla x_1\ldots\cla x_n F$, where $x_1,\ldots,x_n$ are the free variables of $F$.  
Similarly for $\cle F$, $\ada F$, $\ade F$.
  
The {\bf axioms} of $\arfour$ are:

\begin{description}
\item[Axiom 1:] $\cla x(0\notequals x\successor )$
\item[Axiom 2:] $\cla x\cla y(x\successor \equals y\successor \mli x\equals y)$
\item[Axiom 3:] $\cla x(x\plus 0\equals x)$
\item[Axiom 4:] $\cla x\cla y\bigl( x\plus y\successor \equals (x\plus y)\successor \bigr)$
\item[Axiom 5:] $\cla x(x\mult 0\equals 0)$
\item[Axiom 6:] $\cla x\cla y\bigl(x\mult  y\successor \equals (x\mult  y)\plus x\bigr)$
\item[Axiom 7:] $\cla \Bigl(  F(0)\mlc \cla x\bigl(F(x)\mli F(x\successor )\bigr)\mli \cla xF(x)\Bigr) $ for each elementary formula $F(x)$
\item[Axiom 8:] $\ada x\ade y(y\equals x\successor)$
\item[Axiom 9:] $\ada x\ade y(y\equals x \zero)$
\end{description}

All of the above are thus {\em nonlogical axioms} ($\arfour$ has no logical axioms).   Note that the overall number of axioms is infinite rather than nine, because Axiom 7 is not a particular sentence but a scheme of sentences. Axioms 1-7 are nothing but  {\bf Peano axioms} --- the nonlogical axioms of $\pa$. $\arfour$ thus only has two {\bf extra-Peano axioms}: Axiom 8 and Axiom 9. In view of the forthcoming soundness theorem for $\arfour$, Axiom 8 says that the unary successor function is polynomial time computable, and Axiom 9 says the same about the binary $0$-successor function.

As for the {\bf rules of inference}, $\arfour$ has a single logical rule, which is our old friend Logical Consequence, and a single nonlogical rule, which we call {\bf $\arfour$-Induction}. The latter is 
\[\frac{\ada \bigl(F(0)\bigr)\hspace{30pt} \ada\bigl( F(x)\mli F(x\zero)\bigr)\hspace{30pt} \ada\bigl( F(x)\mli F(x\one)\bigr)}{\ada \bigl(F(x)\bigr)},\]
where $F(x)$ is any {\em polynomially bounded} formula. 

Here we shall say that $\ada \bigl(F(0)\bigr)$ is  the {\bf basis} of induction,  $\ada \bigl(F(x)\mli F(x\zero)\bigr)$ is the {\bf left inductive step}, and  $\ada \bigl(F(x)\mli F(x\one)\bigr)$ is the {\bf right inductive step}. The variable $x$ has a special status here, and we say that the conclusion follows from the premises by {\bf $\arfour$-Induction on} $x$. A reader familiar with Buss's \cite{Buss} bounded arithmetic will notice a resemblance between the PIND axiom scheme of the latter and our $\arfour$-Induction rule, even though the two beasts operate in very different environments, of course.

A sentence $F$ is considered {\bf provable} in $\arfour$ iff there is a sequence of sentences, called a {\bf $\arfour$-proof} of $F$, where each sentence is either an  axiom, or  follows from some previous sentences by one of the two  rules of inference, and where the last sentence is $F$. An {\bf extended $\arfour$-proof} is defined in the same way, only, with the additional requirement that each application of LC should come together with an attached $\cltw$-proof of the corresponding sequent. With some fixed, effective, sound and complete axiomatization $L$ of classical first order logic in mind, a {\bf superextended $\arfour$-proof} is an extended $\arfour$-proof with the additional requirement that every application of Wait in the justification of a $\cltw$-derivation in it comes with an $L$-proof of the elementarization of the conclusion. Note that the property of being a superextended proof is (efficiently) decidable, while the properties of being an extended proof or just a proof are only recursively enumerable.  We write $\arfour\vdash F$ to say that $F$ is provable (has a proof) in $\arfour$, and $\arfour\not\vdash F$ to say the opposite. 

Generally, as in the above definition of provability and proofs, in $\arfour$ we will only be interested in proving {\em sentences}.\footnote{In case we do not insist that every formula in an $\arfour$-proof be a sentence, one could show that, if $F$ is not a sentence, it is provable (in this relaxed sense) if and only if its $\adc$-closure is so, anyway.} So, for technical convenience, we agree that, from now on, whenever we write $\arfour\vdash F$ (or say ``$F$ is provable'') for a non-sentence $F$, it  simply means that $\arfour\vdash \ada F$. Similarly, when we say that a given strategy solves $F$, it is to be understood as that the strategy solves $\ada F$. Similarly, when we say that $F$ is a logical consequence of $E_1,\ldots,E_n$, what we typically mean is that $\ada F$ is a logical consequence of $\ada E_1,\ldots,\ada E_n$. To summarize, in the context of $\arfour$, any formula with free variables should be understood as its $\ada$-closure. An exception is when $F(x_1,\ldots,x_n)$ is an elementary formula and we say that $F(c_1,\ldots,c_n)$ is {\bf true} (whatever constants $c_1,\ldots,c_n$). This is to be understood as that the $\cla$-closure $\cla F(c_1,\ldots,c_n)$ of $F(c_1,\ldots,c_n)$ is true (in the standard model), for ``truth'' is only meaningful for elementary formulas (which $\ada F$ generally would not be). An important fact on which we will often rely yet only implicitly so, is that the sentence $\cla F\mli \ada F$ or the closed sequent $\cla F\intimpl \ada F$ is (always) $\cltw$-provable. In view of the soundness of $\cltw$, this means that whenever ($F$ is elementary and) $\cla F$ is true, $\ada F$ is automatically won by a strategy that does nothing.

\begin{example}\label{onesuc}
The following sequence is a proof of $\ada x\ade y(y\equals x\one)$, i.e. of $\ada x\ade y\bigl(y\equals (x\zero)\successor\bigr)$  --- the sentence saying that the binary $1$-successor function is computable:\vspace{7pt}

\noindent I.   $\ada x\ade y (y\equals x\successor)$  \  Axiom 8  \vspace{3pt}

\noindent II.   $\ada x\ade y (y\equals  x\zero)$  \  Axiom 9  \vspace{3pt}

\noindent III.   $\ada x\ade y \bigl(y\equals (x\zero)\successor\bigr)$  \  LC: I,II  \vspace{7pt}

An extended version of the above proof would have to include a justification for step III where LC was used, such as the following one:\vspace{7pt}

\noindent 1.   $t\equals r\successor,\  r\equals  s\zero\ \intimpl\ t\equals (s\zero)\successor$ \  Wait:     \vspace{3pt}

\noindent 2.   $t\equals r\successor,\  r\equals  s\zero\ \intimpl\ \ade y \bigl(y\equals (s\zero)\successor\bigr)$ \  $\ade$-Choose: 1    \vspace{3pt}

\noindent 3.   $\ade y (y\equals r\successor),\  r\equals  s\zero\ \intimpl\ \ade y \bigl(y\equals (s\zero)\successor\bigr)$ \  Wait: 2    \vspace{3pt}

\noindent 4.   $\ada x\ade y (y\equals x\successor),\  r\equals  s\zero\ \intimpl\ \ade y \bigl(y\equals (s\zero)\successor\bigr)$ \  $\ada$-Choose: 3   \vspace{3pt}

\noindent 5.   $\ada x\ade y (y\equals x\successor),\ \ade y (y\equals  s\zero)\ \intimpl\ \ade y \bigl(y\equals (s\zero)\successor\bigr)$ \ Wait: 4  \vspace{3pt}

\noindent 6.   $\ada x\ade y (y\equals x\successor),\ \ada x\ade y (y\equals  x\zero)\ \intimpl\ \ade y \bigl(y\equals (s\zero)\successor\bigr)$  \ $\ada$-Choose: 5 \vspace{3pt}

\noindent 7.   $\ada x\ade y (y\equals x\successor),\ \ada x\ade y (y\equals  x\zero)\ \intimpl\ \ada x\ade y \bigl(y\equals (x\zero)\successor\bigr)$  \ Wait: \vspace{7pt}

As we just saw, (additionally) justifying an application of LC takes more space than the (non-extended) proof itself. And this would be a typical case for $\arfour$-proofs. Luckily, however, there is no real need to formally justify LC. Firstly, this is so because $\cltw$ is an analytic system, and proof-search in it is a routine (even if sometimes long) syntactic exercise. Secondly, in view of Thesis \ref{thesis}, there is no need to generate formal $\cltw$-proofs anyway: instead, we can  use intuition on games and strategies. In the present case, the whole (III+7)-step extended formal proof can be replaced by a short intuitive argument in the following style:
\begin{quote} {\em Consider any $s$ chosen by Environment for $x$. We need to compute $(s\zero)\successor$. Using Axiom 9, we   find the value $r$ of $s\zero$. Then, using Axiom 8, we find the value $t$   of $r\successor$. Such a $t$ is what we are looking for. }
\end{quote}

Furthermore, as we agreed to understand a proof of a non-sentence $F$ as a proof of its $\ada$-closure $\ada F$, the above argument can be shortened by considering simply $\ade y(y\equals x\one)$ instead of $\ada x\ade y(y\equals x\one)$. This would allow us to skip the phrase ``Consider any $s$ chosen by Environment for $x$'' and the necessity to introduce the new name/variable $s$. Instead, we can simply say ``Consider any $x$'', implicitly letting $x$ itself serve as a variable for the value chosen by Environment for $x$ when playing the $\ada$-closure of the game. Sometimes we can go even lazier and skip the phrase ``Consider any $x$'' altogether. 
\end{example}

In view of the following fact, an alternative way to present $\arfour$ would be to delete Axioms 1-7  and, instead, declare all (closed) theorems of {\bf PA} to be axioms of $\arfour$ along with Axioms 8 and 9:

\begin{fact}\label{nov7}
Every (elementary $\arfour$-) sentence provable in $\pa$ 
is also provable in $\arfour$.
\end{fact}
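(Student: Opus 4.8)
The plan is to argue by induction on the length of a $\pa$-proof of a given (elementary $\arfour$-) sentence $D$, showing at each step that the sentence in question is provable in $\arfour$. The key observation is that $\arfour$ already contains, as Axioms 1--7, exactly the nonlogical axioms of $\pa$, so the only gap to bridge is the difference between the logical machinery of $\pa$ (ordinary classical first-order logic) and the logical rule available in $\arfour$ (namely Logical Consequence, $\mathrm{LC}$, backed by $\cltw$). Thus the heart of the matter is: whenever $D$ follows in classical first-order logic from previously $\pa$-proved sentences $C_1,\ldots,C_k$, then $D$ is a logical consequence of $C_1,\ldots,C_k$ in the sense of $\arfour$, i.e. $\cltw\vdash C_1,\ldots,C_k\intimpl D$.

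First I would record the base case: each axiom used in the $\pa$-proof is one of Axioms 1--7 of $\arfour$ (or, if one prefers a Hilbert-style calculus for $\pa$ with logical axioms as well, those logical axioms are classically valid elementary sentences and hence, by the conservativity of $\cltw$ over classical logic noted after the definition of $\cltw$ --- ``an elementary sequent $E_1,\dots,E_n\intimpl F$ is provable in $\cltw$ iff $E_1\mlc\dots\mlc E_n\mli F$ is classically valid'' --- they are $\cltw$-provable, so derivable in $\arfour$ by $\mathrm{LC}$ from the empty set). Then, for the inductive step, suppose $D$ is obtained in the $\pa$-proof by a rule of classical first-order logic (modus ponens, generalization, or whatever the chosen calculus uses) from earlier sentences $C_1,\ldots,C_k$, each of which, by the induction hypothesis, is $\arfour$-provable. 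Since $D$ and all the $C_i$ are elementary sentences, the sequent $C_1,\ldots,C_k\intimpl D$ is elementary, and its elementarization is simply $C_1\mlc\cdots\mlc C_k\mli D$. Because the classical rule in question is sound, this implication is classically valid; hence, by the stated conservativity of $\cltw$ over classical logic, $\cltw\vdash C_1,\ldots,C_k\intimpl D$. Therefore $D$ is a logical consequence of $C_1,\ldots,C_k$, and since each $C_i$ is already an $\arfour$-theorem, one application of $\mathrm{LC}$ yields $\arfour\vdash D$. (One can also simply observe in one stroke that if $D$ is $\pa$-provable then $C_1\mlc\cdots\mlc C_k\mli D$ is classically valid where $C_1,\dots,C_k$ are Axioms 1--7 actually used, so $\cltw\vdash C_1,\dots,C_k\intimpl D$ and $\mathrm{LC}$ finishes the job; the induction is then unnecessary and the whole argument collapses to this remark, since $\arfour$ contains Axioms 1--7 outright.)

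The main technical point to be careful about --- and the only place where anything nontrivial is happening --- is the bookkeeping around free variables and closures. A $\pa$-proof may pass through open formulas, whereas $\arfour$-proofs are stipulated to consist of sentences, with open $\arfour$-formulas read as their $\ada$-closures. For elementary formulas, however, ``$\arfour\vdash F$'' for open $F$ amounts to $\arfour\vdash\ada F$, and for an elementary $F$ one has $\cltw\vdash \cla F\intimpl\ada F$ (a fact the excerpt flags as ``important'' and says it will ``often rely on''), while classically $\cla F$ is $\pa$-provable whenever $F$ is; so nothing is lost. Concretely, it suffices to take the $\pa$-proof of $D$, apply generalization to close off every free variable at the very end so that one has a $\pa$-proof of a sentence $D'$ which is the universal closure of $D$ (still a theorem of $\pa$ in elementary form), run the argument above to get $\arfour\vdash D'$, and then peel the $\cla$'s back off using $\cltw\vdash \cla D\intimpl D$ plus $\mathrm{LC}$ --- or, more simply, arrange the argument so that one only ever talks about $\cla$-closed sentences throughout.

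I do not expect any genuine obstacle here: the statement is essentially a soundness-of-$\cltw$-over-classical-logic observation combined with the syntactic fact that $\arfour$ literally includes Peano's axioms. The only thing requiring a moment's care is the elementary/closure formalism, and that is routine given the tools (conservativity of $\cltw$ over classical logic, and $\cltw\vdash\cla F\intimpl\ada F$ for elementary $F$) already in hand from the excerpt.
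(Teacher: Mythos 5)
Your proposal is correct and, in its collapsed form, is exactly the paper's argument: the paper simply invokes the deduction theorem to get that $H_1\mlc\ldots\mlc H_n\mli F$ is classically provable for the Peano axioms $H_1,\ldots,H_n$ used, concludes $\cltw\vdash H_1,\ldots,H_n\intimpl F$ by a single application of Wait (stability), and finishes with LC --- precisely your parenthetical ``one stroke'' remark. The induction on the length of the $\pa$-proof, and the attendant closure bookkeeping, is sound but unnecessary scaffolding, as you yourself observe.
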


\begin{proof} Suppose (the classical-logic-based) $\pa$ proves  $F$. By the deduction theorem for classical logic this means  that, for some nonlogical axioms $H_1,\ldots,H_n$ of $\pa$, the formula $H_1\mlc\ldots\mlc H_n\mli F$ is provable in classical first order logic. Hence $H_1\mlc\ldots\mlc H_n\intimpl F$ is 
provable in $\cltw$ by Wait from the empty set of premises. Hence $F$ is a logical consequence of the Peano axioms $H_1,\ldots,H_n$ of $\arfour$ and, as such, is provable by LC.
\end{proof}

The above fact, on which we will be implicitly relying in the sequel, allows us to construct ``lazy'' $\arfour$-proofs where some steps can be justified by simply indicating their provability in {\bf PA}. That is, we will  treat theorems of {\bf PA} as if they were axioms of $\arfour$. As {\bf PA} is well known and well studied, we safely assume that the reader has a good feel for what it can prove, so we do not usually further justify {\bf PA}-provability claims that we make. A reader less familiar with {\bf PA}, can take it as a rule of thumb that, despite G\"{o}del's incompleteness theorems, 
{\bf PA} proves every true number-theoretic fact that a contemporary high school  student can establish, or that mankind was or could be aware of before 1931.
 
\begin{example}\label{113}
The following sequence is a lazy proof of $\ada x(x\equals 0\add x\notequals 0)$   --- the formula saying that the ``zeroness'' predicate is decidable:\vspace{7pt}

\noindent I.   $0\equals 0\add 0\notequals 0$  \  LC:  \vspace{3pt}

\noindent II.   $\cla x(x\equals 0\mli x\zero\equals 0)$  \  $\pa$  \vspace{3pt}

\noindent III.   $\cla x(x\notequals 0\mli x\zero\notequals 0)$  \  $\pa$  \vspace{3pt}

\noindent IV.   $\ada x(x\equals 0\add x\notequals 0\mli x\zero\equals 0\add x\zero\notequals 0)$  \  LC: II,III  \vspace{3pt}

\noindent V.   $\cla x(x\one\notequals 0)$  \  $\pa$  \vspace{3pt}

\noindent VI.   $\ada x(x\equals 0\add x\notequals 0\mli x\one\equals 0\add x\one\notequals 0)$  \  LC: V  \vspace{3pt}

\noindent VII.   $\ada x(x\equals 0\add x\notequals 0)$  \  $\arfour$-Induction: I,IV,VI  \vspace{7pt}

An extended version of the above proof will include the following three additional justifications ($\cltw$-proofs):\vspace{5pt}

A justification for Step I:\vspace{5pt}

\noindent 1.   $0\equals 0$  \  Wait:  \vspace{3pt}

\noindent 2.   $0\equals 0\add 0\notequals 0$  \  $\add$-Choose: 1  \vspace{7pt}

A justification for Step IV:\vspace{5pt}

\noindent 1.   $\cla x(x\equals 0\mli x\zero\equals 0),\ \cla x(x\notequals 0\mli x\zero\notequals 0)\ \intimpl \ s\equals 0 \mli s\zero\equals 0$  \  Wait:  \vspace{3pt}

\noindent 2.   $\cla x(x\equals 0\mli x\zero\equals 0),\ \cla x(x\notequals 0\mli x\zero\notequals 0)\ \intimpl \ s\equals 0 \mli s\zero\equals 0\add s\zero\notequals 0$  \  $\add$-Choose: 1  \vspace{3pt}

\noindent 3.   $\cla x(x\equals 0\mli x\zero\equals 0),\ \cla x(x\notequals 0\mli x\zero\notequals 0)\ \intimpl \   s\notequals 0\mli   s\zero\notequals 0$  \  Wait:  \vspace{3pt}

\noindent 4.   $\cla x(x\equals 0\mli x\zero\equals 0),\ \cla x(x\notequals 0\mli x\zero\notequals 0)\ \intimpl \   s\notequals 0\mli s\zero\equals 0\add s\zero\notequals 0$  \  $\add$-Choose: 3  \vspace{3pt}

\noindent 5.   $\cla x(x\equals 0\mli x\zero\equals 0),\ \cla x(x\notequals 0\mli x\zero\notequals 0)\ \intimpl \ s\equals 0\add s\notequals 0\mli s\zero\equals 0\add s\zero\notequals 0$  \  Wait:  2,4\vspace{3pt}

\noindent 6.   $\cla x(x\equals 0\mli x\zero\equals 0),\ \cla x(x\notequals 0\mli x\zero\notequals 0)\ \intimpl \ \ada x(x\equals 0\add s\notequals 0\mli x\zero\equals 0\add x\zero\notequals 0)$  \  Wait:  5\vspace{7pt}

A justification for Step VI:\vspace{5pt}

\noindent 1.   $\cla x(x\one\notequals 0)\ \intimpl\  s\equals 0 \mli s\one\notequals 0$  \  Wait:  \vspace{3pt}

\noindent 2.   $\cla x(x\one\notequals 0)\ \intimpl\  s\notequals 0\mli s\one\notequals 0$  \  Wait:   \vspace{3pt}

\noindent 3.   $\cla x(x\one\notequals 0)\ \intimpl\  s\equals 0\add s\notequals 0\mli s\one\notequals 0$  \  Wait: 1,2  \vspace{3pt}

\noindent 4.   $\cla x(x\one\notequals 0)\ \intimpl\ s\equals 0\add s\notequals 0\mli s\one\equals 0\add s\one\notequals 0$ \  $\add$-Choose:  3\vspace{3pt}

\noindent 5.   $\cla x(x\one\notequals 0)\ \intimpl\ \ada x(x\equals 0\add x\notequals 0\mli x\one\equals 0\add x\one\notequals 0)$ \  Wait:  4\vspace{7pt}

But, again, in the sequel we will not generate formal (even if lazy) proofs in the above style but, instead, limit ourselves to informal arguments within $\arfour$. Below we illustrate such an argument for the present case. Further, as explained in the previous example, we prefer to deal with $x\equals 0\add x\notequals 0$ rather than $\ada x(x\equals 0\add x\notequals 0)$, and do not explicitly say ``consider any value $x$ chosen for the variable $x$ by Environment''.

{\em  We prove $x\equals 0\add x\notequals 0$ by $\cltw$-Induction on $x$. 

The \underline{Basis} $0\equals 0\add 0\notequals 0$ of induction is straightforward:  it is solved by choosing the left $\add$-disjunct.

Solving the \underline{left inductive step} $ x\equals 0\add x\notequals 0\mli x\zero\equals 0\add x\zero\notequals 0 $ means   solving the consequent using a single copy of the antecedent as a resource. From 
$\pa$, we know that $x\zero\equals 0$ iff $x\equals 0$. And whether $x\equals 0$ or not we can find out using the resource $x\equals 0\add x\notequals 0$. So, we can tell whether $x\zero\equals 0$ or ($\add$) $x\zero\notequals 0$, as desired.  

The \underline{right inductive step} $ x\equals 0\add x\notequals 0\mli x\one\equals 0\add x\one\notequals 0$ is solved by  simply choosing the right $\add$-disjunct in the consequent which, by $\pa$, is true. }  
  
In the future, our reliance on $\pa$ may not always be explicit as it was in the above informal argument. Further, as we advance, our reliance on some basic nonelementary facts such as the just-proven $\ada x(x\equals 0\add x\notequals 0)$ may increasingly become only implicit.   

Why did not we use the following, much simpler intuitive argument instead of the above one? 

\begin{quote}
{\em To solve  $x\equals 0\add x\notequals 0$, see if $x$ (the constant chosen by Environment for $x$) is $0$ or not. If yes,  choose the left $\add$-disjunct; otherwise choose the right $\add$-disjunct.}
\end{quote}

The above argument is not valid, in the sense that it is not ``purely logical'', and hence Thesis \ref{thesis} does not guarantee that it will be translatable into a formal proof. Namely, in being confident  that the above strategy wins the game, we rely on the extra-logical knowledge of the fact that different constants are names of different objects. This is indeed so for the standard model of arithmetic. But in some other models it is quite possible that, say, both constants $0$ and $10$ are names for the same object of the universe. Then, the above strategy prescribes to choose the false $10\notequals 0$ in the corresponding scenario.   

Thus, $\ada x(x\equals 0\add x\notequals 0)$, while provable in $\arfour$, is not logically valid, that is, is not provable in $\cltw$. The same applies to the more general principle $\ada x\ada y(x\equals y\add x\notequals y)$, whose $\arfour$-provability is shown later in Section \ref{s17}.
\end{example}

When a formula $F(x_1,\ldots,x_n)$ is a standard (in whatever informal sense) representation of an $n$-ary predicate $p$ and $\arfour\vdash \ada \bigl(F(x_1,\ldots,x_n)\add \gneg F(x_1,\ldots,x_n)\bigr)$, we say that $p$ is 
{\bf $\arfour$-provably decidable}. Similarly, when $F(y,x_1,\ldots,x_n)$ is a standard representation of the graph of an $n$-ary function $f$ and  $\arfour\vdash \ada \ade y F(y,x_1,\ldots,x_n)$, 
 we say that $f$ is 
{\bf $\arfour$-provably computable}. Since $\arfour$ is the only system of clarithmetic dealt with in this paper, the prefix ``$\arfour$'' before ``provably'' can be safely omitted.  So, for instance, Example \ref{113}  established the provable decidability of the ``zeroness'' predicate, and Example \ref{onesuc} established the provable computability of the binary $1$-successor function. The same terminology extends to partially defined functions as well. For instance, the later-proven Fact \ref{m4h} shows that $\arfour\vdash \ada x\bigl(x\notequals 0\mli \ade y(x\equals y\successor)\bigr)$. We understand this as the provable computability of the (partial) unary predecessor function.      

\begin{exercise} Let $\mbox{\em Even}(x)$ be an abbreviation of $\cle z(x\equals z\plus z)$, and $\mbox{\em Odd}(x)$ be an abbreviation of $\gneg \mbox{\em Even}(x)$. Find a lazy $\arfour$-proof of 
\(\cla x \Bigl(\mbox{\em Even}(x)\add \mbox{\em Odd}(x)\mli \ada y\bigl(\mbox{\em Even}(x\plus y)\add \mbox{\em Odd}(x\plus y)\bigr)\Bigr).\)
{\em Hint}: Relying on the $\pa$-provable fact that $x\zero$ is always even and $x\one$ is always odd, by $\arfour$-Induction prove $\ada x\bigl(\mbox{\em Even}(x)\add \mbox{\em Odd}(x)\bigr)$. The target sentence is a logical consequence of the latter and the $\pa$-provable fact that the sum of two numbers is even iff both numbers are even or both are odd.  
\end{exercise}

By an {\bf arithmetical problem} in this paper we mean a game $A$ such that, for some sentence $F$ of the language of $\arfour$, \ $A= F^\dagger$ (remember that $^\dagger$ is the standard interpretation). Such a sentence $F$ is said a {\bf representation}\label{irepresentation} of $A$.
We say that an arithmetical problem $A$ is {\bf provable} in $\arfour$ iff it has a $\arfour$-provable representation.  
In these terms, a central result of the present paper sounds as follows:

\begin{theorem}\label{tt1}
An arithmetical problem has a polynomial time solution iff it is provable in $\arfour$. 

Furthermore, there is an efficient procedure that takes an arbitrary extended $\arfour$-proof of an arbitrary sentence $X$ and constructs a   
 solution of $X$ (of $X^\dagger$, that is) together with an explicit polynomial bound for its time complexity. 
\end{theorem}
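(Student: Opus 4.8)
The plan is to prove the two halves of the ``iff'' separately, the ``furthermore'' clause being a byproduct of the provability-to-solution direction.

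\emph{Soundness (provability implies a polynomial time solution, effectively).} I would argue by induction on the length of an \emph{extended} $\arfour$-proof, maintaining the stronger statement that for every sentence occurring in it the construction yields, effectively, a polynomial time solution of its $^\dagger$-image together with an explicit polynomial bound for the time complexity. The base cases are the axioms. Axioms 1--7 are elementary and true in the standard model, hence won without moves by the do-nothing strategy in constant time. For Axioms 8 and 9 one exhibits explicit HPMs: on the environment's choice of a value $a$ for $x$, compute $a\successor$, respectively $a\zero$ (i.e.\ $2a$), on the work tape and make the corresponding move; these run in time linear in $|a|$, and an explicit polynomial bound is immediate. For the rules: an application of LC is handled verbatim by Theorem~\ref{feb9d}, whose ``furthermore'' part supplies exactly the required effective construction of the solution of the conclusion and of an explicit polynomial functional composing the bounds of the premises (this is why we work with extended proofs, where each LC step carries an attached $\cltw$-proof to feed into that procedure). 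The genuinely new case is $\arfour$-Induction. Given polynomial time solutions, with explicit polynomial bounds, of the basis $\ada\bigl(F(0)\bigr)$, the left step $\ada\bigl(F(x)\mli F(x\zero)\bigr)$ and the right step $\ada\bigl(F(x)\mli F(x\one)\bigr)$, one builds a solution of $\ada\bigl(F(x)\bigr)$ which, on the environment's input $n$, starts from the basis strategy (which wins $F(0)$) and then, reading the binary digits of $n$ from the most significant end, routes the ``current'' subgame through the left or the right inductive-step strategy according to the digit, $|n|$ times in all, finally playing $F(n)$.

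The crucial point --- and the sole place polynomial boundedness of $F$ is used --- is that the polynomial sizebounds attached to every $\ada$- and $\ade$-subformula of $F$ keep every value chosen at every one of the $|n|$ stages of size bounded by a single fixed polynomial in $|n|$; hence each stage runs within a fixed polynomial of the background, there are at most $|n|$ stages, and the total time stays polynomial. The explicit polynomial bound for the composite is assembled from those of the three premises by a bounded iteration that, again thanks to the sizebounds, collapses to an ordinary polynomial. This is precisely the interactive reincarnation of the classical fact that Buss's PIND is sound for polynomial time, and it is the only non-routine part of this direction. Since every step of the induction is effective, the ``furthermore'' clause follows.

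\emph{Completeness (a polynomial time solution implies provability).} Fix an HPM $\cal M$ and a polynomial $h$ with $\cal M$ an $h$ time solution of $A=F^\dagger$ for some sentence $F$. In general $F$ itself is not provable --- by G\"{o}del's phenomenon this is hopeless --- so I would construct a \emph{new} representation $F'$ of the same problem $A$, tailored to $\cal M$ and $h$, and prove $F'$ in $\arfour$. This proceeds in two stages. First, a bootstrapping phase establishing, inside $\arfour$, the provable computability/decidability of a working toolkit of elementary polynomial time operations on binary numerals --- comparison, binary arithmetic, bit extraction and concatenation, sequence coding and decoding, bounded search, and the like --- each proved by $\arfour$-Induction with a polynomially bounded induction formula; here Axioms~8, 9 and $\arfour$-Induction do the real work, while $\pa$, available through Fact~\ref{nov7}, discharges the elementary verifications. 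Second, a master simulation lemma: using the toolkit, one shows that for any HPM $\cal N$ and any explicit polynomial $p$, $\arfour$ proves the computability of the ``configuration after $p(\ell)$ steps'' function of $\cal N$, proved by $\arfour$-Induction iterating the (elementary) one-step transition of $\cal N$ along a number coding the time bound, the boundedness discipline guaranteeing that all configurations stay of polynomial size. Applying this to $\cal N=\cal M$ and $p=h$ yields an $\arfour$-proof that the strategy embodied by $\cal M$ can be carried out; the formula $F'$ is then built with the same choice structure as $F$ but with each choice resolved by consulting this provably computable simulation, so that $\arfour\vdash F'$ while ${F'}^\dagger=F^\dagger=A$. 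Combining the two directions gives Theorem~\ref{tt1}.

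The main obstacle is the completeness direction, and within it the master simulation lemma: arithmetizing HPM computation faithfully and then pushing an $\arfour$-Induction through the step function while keeping every intermediate quantity inside a fixed polynomial size bound requires a substantial amount of careful, if essentially routine, coding, and it is here that one must be most disciplined about the polynomial boundedness of every induction formula invoked. The soundness direction is conceptually lighter, its only non-routine case being $\arfour$-Induction.
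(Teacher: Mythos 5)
Your soundness half follows essentially the paper's route, but your completeness half stalls exactly where the real work is. You rightly observe that $F$ itself need not be provable and that one must instead prove some other representation $F'$ of the same problem, but the construction you offer --- ``same choice structure, with each choice resolved by consulting the provably computable simulation'' --- does not resolve the central tension: for $F'^{\dagger}=F^{\dagger}$ whatever you add to $F$ must be vacuous in the standard model, while for $\arfour\vdash F'$ the \emph{correctness} of the simulated machine's choices must become provable, and plain simulation gives you neither. Concretely, take $X=\ada x\bigl(Y(x)\add Z(x)\bigr)$ with $Y,Z$ elementary: the simulating strategy copies whichever disjunct $\cal M$ chooses, but turning this into an $\arfour$-proof requires $\pa$ to prove ``if $\cal M$ chooses the left disjunct on input $x$ then $Y(x)$'', i.e.\ the $\pa$-provable correctness of $\cal M$ --- precisely what is unavailable in general. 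The paper's device is the single false sentence $\mathbb{L}$ (``$\cal M$ does not win $X$ in time $\chi$''): $\overline{X}$ is obtained by replacing every politeral $L$ of $X$ by $L\mld\mathbb{L}$, so that $\overline{X}^{\dagger}=X^{\dagger}$ because $\mathbb{L}$ is false, while $\pa$ \emph{does} prove ``if $\cal M$ chooses left then $Y(x)\mld\mathbb{L}$'' (had $Y(x)$ been false, $\cal M$ would have lost, hence $\mathbb{L}$). Your ``toolkit'' and ``master simulation lemma'' correspond to Section \ref{s17} and the appendix (provable traceability, Lemma \ref{m2a}), but without the $\mathbb{L}$-disjuncts (or an equivalent device), and without the attendant case analysis --- machine silent for $\chi(|z|)$ steps (critical vs.\ non-critical formulas, Lemmas \ref{august20b} and \ref{august20a}), machine moves legally (follow it), machine moves illegally ($\mathbb{L}$ is true) --- organized as an induction on the formula as in Lemma \ref{m2c}, your $F'$ either fails to denote $A$ or fails to be provable. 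This missing idea is the heart of the completeness proof.

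In the soundness direction your architecture matches the paper's (axioms; LC via Theorem \ref{feb9d}; a chain of $|n|$ copies of the inductive-step strategies linked to the basis strategy and to the real play by copycat), but two points you wave through need genuine care. First, the sizebounds of a polynomially bounded $F$ do not by themselves bound the moves: a winning strategy may make ``unreasonable'' moves (oversized constants for bounded variables), losing only the corresponding subcomponent, and such moves copied along the chain would wreck the time analysis; the paper therefore first replaces the premise machines by their reasonable counterparts and uses moderated synchronization at the interface with the real play. Second, ``since every step of the induction is effective, the furthermore clause follows'' is a non sequitur: stepwise-effective constructions can double machine sizes at every step, making the overall extraction exponential in the length of the proof; the paper needs the GHPM coding trick at the end of Section \ref{sectsound}, together with keeping the bounds as explicit polynomial functions rather than terms, to secure efficiency. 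Both points are repairable, but as written they are gaps.
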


\begin{proof} The soundness (``if'') part of this theorem will be proven in Section \ref{sectsound}, and the completeness (``only if'') part in Section 
\ref{sectcompl}.\vspace{-7pt}
\end{proof}

\section{Some more taste of \arfour}\label{s17}

In this section we establish several $\arfour$-provability facts. In view of the soundness of $\arfour$, each such fact tells us about the efficient solvability  of the associated number-theoretic  computational problem. 

As mentioned, the present work has been written not merely as a research paper but also as an advanced tutorial on CoL-based applied theories. The series of proofs given in this section can be treated as exercises aimed at developing the reader's feel for CoL-based systems and $\arfour$ in particular. But these proofs also provide certain necessary results relied upon later, in our proof of the extensional completeness of $\arfour$. 

As noted earlier, we shall exclusively rely on informal reasoning in $\arfour$, remembering that behind every such piece of reasoning is a formal $\arfour$-proof.

\begin{fact}\label{comlog}
 $\arfour\vdash \ade y(y\equals |x|)$ \ (i.e., $\arfour\vdash \ada x\ade y(y\equals |x|)$).
\end{fact}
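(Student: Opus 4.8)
The goal is to show $\arfour\vdash\ade y(y\equals|x|)$, i.e. that the length function $|x|$ is $\arfour$-provably computable. The natural approach is $\arfour$-Induction on $x$. The induction formula I would use is $F(x):\ \ade y\bigl(|x|\mleq|x| \mlc y\equals|x|\bigr)$, or rather some polynomially bounded variant — the point is that $F(x)$ must be of the required shape, so the $\ade$-quantifier over the output $y$ must carry a polynomial sizebound on $y$. Since $|\hspace{1pt}|x|\hspace{1pt}|\mleq|x|$ is $\pa$-provable (the length of $|x|$ is at most the length of $x$), a legitimate polynomially bounded induction formula is
\[F(x):\quad \ade y\bigl(|y|\mleq|x|\ \mlc\ y\equals|x|\bigr).\]
So the plan is: prove $\ada\bigl(F(0)\bigr)$, $\ada\bigl(F(x)\mli F(x\zero)\bigr)$, $\ada\bigl(F(x)\mli F(x\one)\bigr)$ by LC (invoking $\pa$-facts and the already-available Axioms 8, 9, and the provable binary $1$-successor of Example \ref{onesuc} as resources), conclude $\ada F(x)$ by $\arfour$-Induction, and then obtain $\ade y(y\equals|x|)$ from $\ada F(x)$ by LC (dropping the sizebound conjunct is a trivial logical step, and dropping the $\ada x$ is the standard identification of a formula with its $\ada$-closure).

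The key computational content lies in the two inductive steps. Here one uses the $\pa$-provable recursion for the length function: $|0|\equals 0$, and for $x\notequals 0$, both $|x\zero|$ and $|x\one|$ equal $|x|\successor$ (appending a bit increases the length by one); while for $x\equals 0$ one has the exceptional but harmless facts $|0\zero|\equals 0$ and $|0\one|\equals 1\equals 0\successor$. In CoL terms: given the resource $F(x)$, the machine extracts (from the antecedent) the value $v\equals|x|$; it then needs to produce $v\successor$ for the consequent in the generic case, which it does by invoking Axiom 8 ($\ada x\ade y(y\equals x\successor)$) as a recyclable resource; the $x\equals 0$ subcase is handled by first deciding zeroness (Example \ref{113}) and then outputting the appropriate small constant. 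The sizebound conjunct $|y|\mleq|x\zero|$ etc. is then just a $\pa$-fact to be checked, which is routine. All of this is purely logical once the relevant $\pa$-truths are placed in the antecedent, so by Thesis \ref{thesis} the informal strategy does translate into a $\cltw$-proof and hence the inductive-step sequents are derivable by LC.

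The main obstacle — really the only subtle point — is making sure the induction formula is genuinely \emph{polynomially bounded} in the precise sense required by the $\arfour$-Induction rule: the $\ade y$ must be guarded by a polynomial sizebound $S(y)$ for $y$, and the bound $|y|\mleq|x|$ does the job since $|x|$ is itself a $(0,\successor,\plus,\mult)$-combination of $|x|$ (trivially), and $\pa$ proves $|\hspace{1pt}|x|\hspace{1pt}|\mleq|x|$. One must also double-check that $F(x)$ has the literal syntactic form $S(x)\mlc H(x)$ under the $\ade$ as demanded; if the precise formulation of ``polynomial sizebound'' is read strictly, a tiny bit of care (or a trivially equivalent reformulation, e.g. using $|x|$ as the bounding term with $H(x)$ being $y\equals|x|$) suffices. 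Everything else is a straightforward assembly of $\pa$-provable length arithmetic with Axioms 8, 9 and the already-established facts of Examples \ref{onesuc} and \ref{113}.
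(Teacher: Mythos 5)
Your proposal is correct and follows essentially the same route as the paper: $\arfour$-Induction on $x$ with the polynomially bounded formula $\ade y(|y|\mleq|x|\mlc y\equals|x|)$, the basis settled by choosing $0$, the inductive steps handled by deciding zeroness via Example \ref{113} and otherwise applying Axiom 8 to the value extracted from the antecedent, and the sizebound conjunct finally discarded by LC. The only cosmetic differences are that the paper observes the right inductive step needs no case split on $x\equals 0$, and that Axiom 9 and Example \ref{onesuc} are not actually needed.
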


\begin{proof} Argue in $\arfour$. By $\arfour$-Induction on $x$, we first want to prove $ \ade y(|y|\mleq |x|\mlc y\equals |x|)$. 

The basis   $ \ade y(|y|\mleq |0|\mlc y\equals |0|)$ is solved by selecting $0$ for $y$. Such a move brings the game down to $|0|\mleq 0\mlc 0\equals |0|$. From $\pa$, we know that the latter is true. So, we win. 

The left inductive step is $\ade y(|y|\mleq |x|\mlc y\equals |x|)\mli \ade y(|y|\mleq |x\zero|\mlc y\equals |x\zero|)$.  Using Example \ref{113}, we figure our whether $x\equals 0$ or not. If $x\equals 0$, then (we know from $\pa$ that) $|x\zero|=0$,  so we win the game by selecting $0$ for $y$ in the consequent. Now suppose $x\notequals 0$. We wait till Environment selects a constant $a$ for $y$ in the antecedent\footnote{In informal arguments like this, we usually do not try to be consistent in using different metavariables for constants and variables. Notice that the status of $x$ in the present informal argument is also ``constant'' (chosen by Environment for the variable $x$) just like the status of $a$ but we, out of reluctance to introduce new names, continue using  the expression ``$x$'' for it, even though in earlier sections we tried to reserve the metanames $x,y,z,\ldots$ (as opposed to $a,b,c,\ldots$) for variables rather than constants.} (if this does not happen, we win).  
Then, using Axiom 8, we compute the value (constant) $b$ with $b\equals a\successor$, and choose $b$ for $y$ in the consequent. We win because, from $\pa$, we know that  when $x$ is not $0$, the resulting position  
$|a|\mleq |x|\mlc a\equals |x|\mli  |b|\mleq |x\zero|\mlc b\equals |x\zero|$, i.e. $|a|\mleq |x|\mlc a\equals |x|\mli  |a\successor|\mleq |x\zero|\mlc a\successor\equals |x\zero|$ (whatever the value of $x$ is) is true. 
The right inductive step is similar but simpler, as we do not need to separately consider the case of $x\equals 0$.

So, we know how to win (the $\ada$-closure of) $\ade y(|y|\mleq |x|\mlc y\equals |x|)$. But then, of course, ignoring the  $|y|\mleq |x|$ conjunct, we also know how to win (the $\ada$-closure of the weaker)  $\ade y(y\equals |x|)$. Putting it in precise terms, $\ada x\ade y(y\equals |x|)$ is an (easy) logical consequence of $\ada x\ade y(|y|\mleq |x|\mlc y\equals |x|)$ --- that is, the sequent  $\ada x\ade y(|y|\mleq |x|\mlc y\equals |x|)\intimpl \ada x\ade y(y\equals |x|)$ is (easily) provable in $\cltw$. 
\end{proof}

Compare the $\ada$-closures of the following formulas: 
\begin{eqnarray}
\label{f1}\cle y(x\equals y\zero\mld x\equals y\one)  & & \\
\label{f2}\ade y(x\equals y\zero\mld x\equals y\one)  & & \\
\label{f3} \cle y(x\equals y\zero\add x\equals y\one) & & \\
\label{f4} \ade y(x\equals y\zero\add x\equals y\one) & &  
\end{eqnarray}
All four sentences ``say the same'' about the arbitrary ($\ada$) number represented by $x$, but in different ways. (\ref{f1}) is the weakest, least informative, of the four.  It says that  $x$ has a binary predecessor $y$, and that $x$ is even (i.e., is the binary $0$-successor of its binary predecessor) or odd (i.e., is the binary $1$-successor of its binary predecessor). This is an almost trivial piece of information. (\ref{f2}) and (\ref{f3}) carry stronger information. According to (\ref{f2}), $x$ not just merely {\em has} a binary predecessor $y$, but  such a predecessor can be actually and efficiently {\em found}. (\ref{f3}) strengthens (\ref{f1}) in another way. It says that $x$ can be efficiently determined to be even or odd. As for (\ref{f4}), it is the strongest. It carries two pieces of good news at once: we can efficiently find the binary predecessor $y$ of $x$ and, simultaneously, tell whether $x$ is even or odd. According to the following fact, (\ref{f4}) is provable. As we may guess, so are the weaker (\ref{f3}), (\ref{f2}), (\ref{f1}).

\begin{fact}\label{m4gg}
 $\arfour\vdash \ade y(x\equals y\zero\add x\equals y\one)$.
\end{fact}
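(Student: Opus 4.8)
The plan is to prove $\arfour\vdash\ade y(x\equals y\zero\add x\equals y\one)$ by $\arfour$-Induction on $x$, exactly in the style of Example \ref{113} and Fact \ref{comlog}. Note first that the formula $\ade y(x\equals y\zero\add x\equals y\one)$ is \emph{not} polynomially bounded as written, so to legitimately apply $\arfour$-Induction I would instead induct on the polynomially bounded formula $F(x)=\ade y\bigl(|y|\mleq|x|\mlc(x\equals y\zero\add x\equals y\one)\bigr)$ (here $|y|\mleq|x|$ is a genuine polynomial sizebound for $y$, so $F(x)$ has the required shape $\ade y(S(y)\mlc H(y))$), and then at the end observe that $\ada\ade y(x\equals y\zero\add x\equals y\one)$ is an easy logical consequence of $\ada\bigl(F(x)\bigr)$ via $\cltw$ (just drop the $|y|\mleq|x|$ conjunct), invoking the soundness/adequacy of $\cltw$ (Theorem \ref{feb9d}, Thesis \ref{thesis}).

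For the informal argument inside $\arfour$ I would reason as follows. \textbf{Basis:} $F(0)$, i.e.\ $\ade y\bigl(|y|\mleq|0|\mlc(0\equals y\zero\add 0\equals y\one)\bigr)$. Choose $0$ for $y$; since $\pa$ proves $|0|\mleq|0|$ and $0\equals 0\zero$ (i.e.\ $0\equals 2\cdot 0$), select the left $\add$-disjunct, and we win. \textbf{Left inductive step:} $F(x)\mli F(x\zero)$. Here we simply want to witness the binary predecessor and parity of $x\zero$; but $x\zero$ has binary predecessor $x$ itself and is even, so we do not even need the antecedent resource for the witness $y$. However, $F(x\zero)$ asks us to \emph{produce} a value for $y$ in the consequent, and that value must equal $x$ (the constant chosen by Environment); so, using the antecedent $F(x)$, we wait until Environment picks a constant $a$ for $y$ there, and $\pa$ tells us that whichever $\add$-disjunct holds in the antecedent, we have $a\zero\equals x$ or $a\one\equals x$, hence $x\zero$ has the form $(\text{something})\zero$ with that something being $x$ — but more directly: use Axioms 8,9 and Example \ref{113} to just compute $x$ itself as a constant (it already is one) and choose it for $y$ in the consequent, then select the left ($y\zero$) disjunct, noting $\pa\vdash x\zero\equals x\zero$ and $|x|\mleq|x\zero|$. \textbf{Right inductive step:} $F(x)\mli F(x\one)$ is symmetric, choosing $x$ for $y$ and the right ($y\one$) disjunct, using $\pa\vdash x\one\equals x\one$.

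The only genuine subtlety — and the step I expect to require the most care — is the bookkeeping of \emph{which constant to choose for $y$} in the inductive steps so that the strategy is ``purely logical'' in the sense of Thesis \ref{thesis}, i.e.\ does not secretly rely on distinct numerals naming distinct objects. In the left step, the correct move is: take the value $a$ that Environment supplies for $y$ in the antecedent copy of $F(x)$, then (using Axiom 9) compute $a\zero$, and choose \emph{that} for $y$ in the consequent; $\pa$ then guarantees that from $x\equals a\zero\add x\equals a\one$ one can logically derive $x\zero\equals (a\zero)\zero\add x\zero\equals (a\zero)\one$ — wait, this doubles the predecessor, so actually the right choice is to choose $a$-style bookkeeping more carefully: since $x\zero$'s binary predecessor is $x$, and the antecedent only gives a predecessor $a$ of $x$, I instead do not route the witness through the antecedent at all but observe that for $x\zero$ we want predecessor $x$; here $x$ is available as the very constant Environment chose when starting the $\ada$-closure, so we choose $x$ for $y$ and pick the left disjunct, justified by the trivially $\pa$-provable $x\zero\equals x\zero$. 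This makes the argument logical because it uses only $\pa$-provable identities, not model-specific facts about numerals. I would then record the weaker claims (\ref{f1}), (\ref{f2}), (\ref{f3}) as immediate logical consequences of Fact \ref{m4gg}, as the prose preceding the statement already anticipates.
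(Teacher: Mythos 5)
Your final argument coincides with the paper's own proof: induct on the polynomially bounded formula $\ade y\bigl(|y|\mleq|x|\mlc(x\equals y\zero\add x\equals y\one)\bigr)$, in each inductive step witness $y$ by the very constant Environment chose for $x$ (the antecedent is simply not needed), choose the left resp.\ right $\add$-disjunct justified by the $\pa$-provable $x\zero\equals x\zero$, $x\one\equals x\one$ and the size bounds, and recover the unbounded statement by LC. The mid-proof detour that routes the witness through the antecedent's constant $a$ would indeed be wrong, but you explicitly discard it, so the proposal as finally stated is correct and essentially identical to the paper's.
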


\begin{proof} Argue in $\arfour$. By $\arfour$-Induction on $x$, we first want to prove $ \ade y\bigl(|y|\mleq |x|\mlc (x\equals y\zero\add x\equals y\one)\bigr)$.

The basis 
 $\ade y\bigl(|y|\mleq |0|\mlc (0\equals y\zero\add 0\equals y\one)\bigr)$ is obviously solved by selecting $0$ for $y$ and then choosing the left $\add$-disjunct, which results in the true (according to $\pa$) sentence $|0|\mleq |0|\mlc 0\equals 0\zero $. 
The left inductive step  $  \ade y(x\equals y\zero\add x\equals y\one)\mli \ade y(x\zero\equals y\zero\add x\zero\equals y\one) $ is solved by selecting for $y$ the same constant as the one selected by Environment for $x$, and then choosing the left $\add$-disjunct. Similarly, the right inductive step  
$ \ade y(x\equals y\zero\add x\equals y\one)\mli \ade y(x\one\equals y\zero\add x\one\equals y\one) $ is solved by selecting for $y$ the same constant as the one selected by Environment for $x$, and then choosing the right $\add$-disjunct.

Now,  $ \ade y (x\equals y\zero\add x\equals y\one) $ is a straightforward logical consequence of $ \ade y\bigl(|y|\mleq |x|\mlc (x\equals y\zero\add x\equals y\one)\bigr)$.
\end{proof}

The preceding fact established the provable computability of binary predecessor. The following fact does the same for unary predecessor:

\begin{fact}\label{m4h}
 $\arfour\vdash   x\notequals 0\mli \ade y(x\equals y\successor) $.
\end{fact}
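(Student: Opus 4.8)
The plan is to argue informally in $\arfour$ and, as in the proofs of Facts \ref{comlog} and \ref{m4gg}, to first establish by $\arfour$-Induction a polynomially bounded strengthening and then extract the target by Logical Consequence. Concretely, I would prove by $\arfour$-Induction on $x$ the formula
\[F(x)\ :=\ \ade y\bigl(|y|\mleq |x|\ \mlc\ (x\equals y\successor\add x\equals 0)\bigr),\]
which is polynomially bounded because the matrix of its $\ade y$ has the required shape $S(y)\mlc H(y)$ with $S(y)$ the polynomial sizebound $|y|\mleq |x|$. Intuitively, $F(x)$ asserts that we can efficiently either produce the unary predecessor of $x$ or certify that $x\equals 0$.

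For the \underline{basis} $F(0)$, I would select $0$ for $y$ and then the right $\add$-disjunct; the resulting position $|0|\mleq|0|\mlc 0\equals 0$ is $\pa$-true. For the \underline{left inductive step} $F(x)\mli F(x\zero)$, I would consult the antecedental resource $F(x)$: if it certifies $x\equals 0$, then $x\zero\equals 0$ by $\pa$, so I select $0$ for $y$ and the right $\add$-disjunct in the consequent; if instead it supplies a $c$ with $|c|\mleq|x|$ and $x\equals c\successor$, I compute the binary $1$-successor $d\equals c\one$ (provably computable by Example \ref{onesuc}), select $d$ for $y$ in the consequent and choose the left $\add$-disjunct --- correct since $\pa$ proves $x\zero\equals(c\one)\successor$ and $|c\one|\equals|c|\plus 1\mleq|x|\plus 1\equals|x\zero|$ whenever $x\equals c\successor$. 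For the \underline{right inductive step} $F(x)\mli F(x\one)$, no use of the antecedent is needed: compute $e\equals x\zero$ by Axiom 9, select $e$ for $y$ in the consequent and choose the left $\add$-disjunct, which is correct because $\pa$ proves $x\one\equals(x\zero)\successor$ and $|x\zero|\mleq|x\one|$.

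Finally I would observe that the $\ada$-closure of $x\notequals 0\mli\ade y(x\equals y\successor)$ is a logical consequence of the just-proven $\ada$-closure of $F(x)$; i.e., that the sequent $\ada x F(x)\intimpl \ada x\bigl(x\notequals 0\mli\ade y(x\equals y\successor)\bigr)$ is $\cltw$-provable. An interpretation-independent strategy: wait until the provider of the resource selects a constant $c$ for $y$ and resolves the $\add$; if it picks the left disjunct ($x\equals c\successor$), echo $c$ into the consequent's $\ade y$, so that the disjunct $x\equals c\successor$ of $\gneg(x\notequals 0)\mld x\equals c\successor$ is true; if it picks the right disjunct ($x\equals 0$), the disjunct $\gneg(x\notequals 0)$ of the consequent is true; and if the provider never makes one of these moves, or violates the sizebound, it loses the antecedent and we win automatically. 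By Thesis \ref{thesis} this argument translates into an actual $\cltw$-proof, which is all that is needed.

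I do not expect a genuine obstacle here; the only point requiring a little care is the left inductive step, where the case split ``$x\equals 0$ versus $x\equals y\successor$'' must be read off from the $\add$ already built into $F(x)$ rather than re-derived, and where the elementary facts $x\zero\equals(c\one)\successor$ together with the accompanying sizebound must be checked to be $\pa$-provable under the hypothesis $x\equals c\successor$.
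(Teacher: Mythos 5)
Your proof is correct and follows essentially the same route as the paper's: $\arfour$-Induction on $x$ applied to a size-bounded strengthening, with Example \ref{onesuc} supplying the binary $1$-successor in the left inductive step and Axiom 9 the binary $0$-successor in the right one, followed by LC to peel off the sizebound. The only difference is cosmetic: the paper's induction formula is $x\notequals 0\mli \ade y(|y|\mleq |x|\mlc x\equals y\successor)$, keeping the zero case as a classical guard (so the basis is won by doing nothing), whereas you fold it into a $\add$-disjunct and read the case split off the antecedental resource; both work equally well.
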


\begin{proof} Argue in $\arfour$. By $\arfour$-Induction on $x$, we  want to prove 
$x\notequals 0\mli \ade y(|y|\mleq |x|\mlc x\equals y\successor)$, 
from which the target $ x\notequals 0\mli \ade y(x\equals y\successor)$ follows immediately by LC.

The basis   $ 0\notequals 0\mli \ade y(|y|\mleq |0|\mlc 0\equals y\successor)$ is solved trivially by a strategy that does nothing. A win is guaranteed because the antecedent is false.  

The left inductive step is $ \bigl(x\notequals 0\mli \ade y(|y|\mleq |x|\mlc x\equals y\successor)\bigr)\mli \bigl(x\zero\notequals 0\mli \ade y(|y|\mleq |x\zero|\mlc x\zero\equals y\successor)\bigr) $.   If $x\zero\notequals 0$   (and if not, we win the game), then --- according to $\pa$ --- $x\notequals 0$. So, Environment will have to choose a constant $a$ for $y$ in the antecedent, or else it loses. We may assume that $a$ is (indeed) the unary predecessor of $x$, or else, again, having chosen a wrong $a$, Environment loses. We know from $\pa$ that then the unary predecessor $b$ of $x\zero$ equals $a\one$, and that $|b|\mleq |x\zero |$. This $b$  can be computed using (the resource provided by)  Example \ref{onesuc}. We choose $b$ for $y$ in the consequent and win. 

The right inductive step   $ \bigl(x\notequals 0\mli \ade y(|y|\mleq |x|\mlc x\equals y\successor)\bigr)\mli \bigl(x\one\notequals 0\mli \ade y(|y|\mleq |x\one|\mlc x\one\equals y\successor)\bigr) $ is even easier to handle.   Prom $\pa$, the unary predecessor of $x\one$ is $x\zero$, and $|x\zero |\mleq |x\one |$. Using Axiom 9, we compute the value $b$ of $x\zero$ and choose $b$ for $y$ in the consequent.\vspace{-7pt}
\end{proof}

\begin{fact}\label{comad}
 $\arfour\vdash  \ade z(z\equals x\plus y)$.
\end{fact}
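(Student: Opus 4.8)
The plan is to prove the statement by $\arfour$-Induction after a suitable strengthening. A naive induction directly on $\ade z(z\equals x\plus y)$ fails for two reasons: that formula is not polynomially bounded (the $\ade z$ carries no sizebound), and, more essentially, one cannot obtain $x\zero\plus y$ (i.e. $2x\plus y$) from $x\plus y$, since doubling $x$ forces a shift in $y$ that a fixed $y$ does not allow. Both problems are cured by inducting on $x$ with $F(x)$ taken to be
\[ \ada y\Bigl(|y|\mleq|m|\ \mli\ \ade z\bigl(|z|\mleq|x|\plus|y|\ \mlc\ z\equals x\plus y\bigr)\Bigr), \]
where $m$ is a fresh free variable acting as a size bound for $y$. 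This $F(x)$ is polynomially bounded: its $\ada y$-subformula has the required shape $S(y)\mli H(y)$ with $S(y)$ the polynomial sizebound $|y|\mleq|m|$, and its $\ade z$-subformula has the shape $S(z)\mlc H(z)$ with $S(z)$ the polynomial sizebound $|z|\mleq|x|\plus|y|$ (legitimate since $\pa\vdash|x\plus y|\mleq|x|\plus|y|$).

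First I would dispatch the basis $\ada\bigl(F(0)\bigr)$: since $\pa\vdash(|y|\mleq|0|\mli y\equals 0)$, only $y\equals 0$ need be considered, and then, by Axiom 3, choosing $0$ for $z$ wins. For the left inductive step $\ada\bigl(F(x)\mli F(x\zero)\bigr)$: wait for Environment to specify a value $y$ (with $|y|\mleq|m|$, else we win) in the consequent; using Fact \ref{m4gg}, compute the binary predecessor $y'$ of $y$ together with the information whether $y\equals y'\zero$ or $y\equals y'\one$; since $\pa\vdash|y'|\mleq|y|\mleq|m|$, the antecedental resource $F(x)$ may be queried with $y'$, returning some $a$ with $a\equals x\plus y'$. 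As $x\zero\plus y$ equals $2(x\plus y')$ in the first subcase and $2(x\plus y')\plus 1$ in the second, we choose $a\zero$ (Axiom 9), respectively $a\one$ (Example \ref{onesuc}), for $z$ in the consequent; that this $z$ satisfies $z\equals x\zero\plus y$ and $|z|\mleq|x\zero|\plus|y|$ is then verified by $\pa$. Note that $F(x)$ is used exactly once, as $\arfour$-Induction requires. The right inductive step $\ada\bigl(F(x)\mli F(x\one)\bigr)$ is symmetric: here $x\one\plus y$ is $2(x\plus y')\plus 1$ or $2\bigl((x\plus y')\successor\bigr)$ in the two subcases, so $z$ is taken to be $a\one$, respectively $(a\successor)\zero$, using Example \ref{onesuc} and Axioms 8 and 9.

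$\arfour$-Induction then delivers $\ada\bigl(F(x)\bigr)$. Finally, the target $\ade z(z\equals x\plus y)$ follows by LC: instantiating $m$ by $y$ turns the hypothesis $|y|\mleq|m|$ into the $\pa$-truth $|y|\mleq|y|$, so we obtain $\ade z\bigl(|z|\mleq|x|\plus|y|\mlc z\equals x\plus y\bigr)$, and dropping the sizebound conjunct gives $\ade z(z\equals x\plus y)$; the corresponding sequent is $\cltw$-provable. The main difficulty, as indicated, lies in finding this strengthening: the auxiliary bound $m$ is what reconciles polynomial-boundedness with an essentially unrestricted $\ada y$, while reducing to the binary predecessor of $y$ is precisely what lets the inductive step get away with a single use of the induction hypothesis — which is all that $\arfour$-Induction provides.
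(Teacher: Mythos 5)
Your overall route is exactly the paper's: induct on $x$ over the strengthened formula $\ada y\bigl(|y|\mleq |m|\mli \ade z(|z|\mleq |x|\plus|y|\mlc z\equals x\plus y)\bigr)$ (the paper uses the name $t$ for your auxiliary bound variable $m$), handle the inductive steps by extracting the binary predecessor of Environment's $y$ via Fact \ref{m4gg}, querying the antecedental copy of $F(x)$ exactly once, and producing the answer by doubling (Axiom 9) or doubling-and-incrementing (Example \ref{onesuc}, Axiom 8) Environment's response, relying on the same four $\pa$-facts about $s\zero\plus r\zero$, $s\zero\plus r\one$, $s\one\plus r\zero$, $s\one\plus r\one$; and finally recover the unbounded target by LC using $\cla y(|y|\mleq |y|)$. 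The inductive steps and the concluding LC step are fine and match the paper.

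However, your basis strategy is wrong as stated. The basis is $F(0)$, i.e. $\ada y\bigl(|y|\mleq |m|\mli \ade z(|z|\mleq |0|\plus|y|\mlc z\equals 0\plus y)\bigr)$: the induction is on $x$, and $x$ does not occur in the sizebound on $y$, so substituting $0$ for $x$ leaves the hypothesis $|y|\mleq |m|$ intact --- it does not become $|y|\mleq |0|$. Hence Environment may legitimately choose any $y$ with $|y|\mleq |m|$, and if $y\notequals 0$ your prescribed move of choosing $0$ for $z$ loses, since $0\equals 0\plus y$ is then false. The repair is immediate, and is what the paper does: in the basis, choose for $z$ the very constant Environment chose for $y$; the resulting $|y|\mleq |0|\plus|y|\mlc y\equals 0\plus y$ is true by $\pa$. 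With that one correction your proof coincides with the paper's.
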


\begin{proof} Argue in $\arfour$. By $\arfour$-Induction on $x$, we  want to prove 
\[\ada y \bigl(|y|\mleq |t|\mli \ade z(|z|\mleq |x|\plus |y|\mlc z\equals x\plus y)\bigr),\]
from which, together with the $\pa$-provable fact $\cla y(|y|\mleq |y|)$, the target $\ade z(z\equals x\plus y)$  follows by LC.

The  basis is $\ada y \bigl(|y|\mleq |t|\mli \ade z(|z|\mleq |0|\plus |y|\mlc z\equals 0\plus y)\bigr)$, which (as always) we prefer to simply write as $|y|\mleq |t|\mli \ade z(|z|\mleq |0|\plus |y|\mlc z\equals 0\plus y) $. It is won by selecting (the value of) $y$ for $z$, because we know, from $\pa$, that the resulting $|y|\mleq |t|\mli  |y|\mleq |0|\plus |y|\mlc y\equals 0\plus y)$ is true.

In inductive steps, we will rely on the fact that $\pa$ proves (the $\cla$-closures of) the following formulas:
\begin{eqnarray}
s\zero\plus r\zero\equals(s\plus r)\zero, & \mbox{i.e.,} & 2s\plus 2r\equals 2(s\plus r);\label{obs1}\\
s\zero\plus r\one\equals(s\plus r)\one, & \mbox{i.e.,} & 2s\plus (2r\plus 1)\equals 2(s\plus r)\plus 1;\label{obs2}\\
s\one\plus r\zero\equals(s\plus r)\one , & \mbox{i.e.,} & (2s\plus 1)\plus 2r\equals 2(s\plus r)\plus 1 ;\label{obs3}\\
s\one\plus r\one\equals\bigl((s\plus r)\one\bigr)\successor, & \mbox{i.e.,} & (2s\plus 1)\plus (2r\plus 1)\equals \bigl(2(s\plus r)\plus 1\bigr)\plus 1.\label{obs4}
\end{eqnarray}

The left inductive step is 
\begin{equation}\label{m4e}
\ada y \bigl(|y|\mleq |t|\mli \ade z(|z|\mleq |x|\plus |y|\mlc z\equals x\plus y)\bigr)\mli \ada y \bigl(|y|\mleq |t|\mli \ade z(|z|\mleq |x\zero|\plus |y|\mlc z\equals x\zero\plus y)\bigr).
\end{equation}
To solve it, we wait till Environment chooses a constant $a$ for $y$ in the consequent, after which (\ref{m4e}) will be brought down to 
\begin{equation}\label{m4f}
\ada y \bigl(|y|\mleq |t|\mli \ade z(|z|\mleq |x|\plus |y|\mlc z\equals x\plus y)\bigr)\mli  \bigl(|a|\mleq |t|\mli \ade z(|z|\mleq |x\zero|\plus |a|\mlc z\equals x\zero\plus a)\bigr).
\end{equation}
Using Fact \ref{m4gg}, we find the binary predecessor $b$ of $a$, for which we will also know whether  $a\equals b\zero$ or (``or'' in the strong sense of $\add$) $a\equals b\one$.  We specify $y$ as $b$ in the antecedent of (\ref{m4f}), and wait till Environment selects a value $c$ for $z$ there.  

If 
$a\equals b\zero$, the game by now will be brought down to 
\[
 \bigl(|b|\mleq |t|\mli(|c|\mleq |x|\plus |b|\mlc c\equals x\plus b)\bigr)\mli  \bigl(|b\zero|\mleq |t|\mli \ade z(|z|\mleq |x\zero|\plus |b\zero|\mlc z\equals x\zero\plus b\zero)\bigr).
\]
Using Axiom 9, we   compute the value $d$ of $c\zero$, and specify $z$ as $d$ in the consequent. The resulting position 
\[\bigl(|b|\mleq |t|\mli(|c|\mleq |x|\plus |b|\mlc c\equals x\plus b)\bigr)\mli  \bigl(|b\zero|\mleq |t|\mli  |c\zero|\mleq |x\zero|\plus |b\zero|\mlc c\zero\equals x\zero\plus b\zero\bigr),\]
in view of (\ref{obs1}) (and certain additional, straightforward $\pa$-provable facts), is true, so we win.   

Quite similarly, if $a\equals b\one$, the game by now will be brought down to 
\[\bigl(|b|\mleq |t|\mli(|c|\mleq |x|\plus |b|\mlc c\equals x\plus b)\bigr)\mli  \bigl(|b\one|\mleq |t|\mli \ade z(|z|\mleq |x\zero|\plus |b\one|\mlc z\equals x\zero\plus b\one)\bigr).\]
Using Example \ref{onesuc}, we   compute the value $d$ of $c\one$, and  specify $z$ as $d$ in the consequent. The resulting position 
\[ \bigl(|b|\mleq |t|\mli(|c|\mleq |x|\plus |b|\mlc c\equals x\plus b)\bigr)\mli  \bigl(|b\one|\mleq |t|\mli |c\one|\mleq |x\zero|\plus |b\one|\mlc c\one\equals x\zero\plus b\one\bigr),\]
in view of (\ref{obs2}), is true, so we win.   

The right inductive step will be handled in a similar way, only relying on (\ref{obs3}) and (\ref{obs4}) instead of (\ref{obs1}) and (\ref{obs2}).\vspace{-7pt}  
\end{proof}

\begin{fact}\label{commul}
 $\arfour\vdash \ade z(z \equals x\mult  y)$.
\end{fact}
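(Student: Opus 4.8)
The plan is to follow the template of Fact~\ref{comad}: prove, by $\arfour$-Induction on $x$, a size-bounded sharpening of the statement, and then recover $\ade z(z\equals x\mult y)$ from it via LC. Working in $\arfour$, I would fix a fresh variable $t$ and prove
\[\ada y\bigl(|y|\mleq|t|\mli \ade z(|z|\mleq|x|\plus|y|\mlc z\equals x\mult y)\bigr).\]
This formula is polynomially bounded --- the conjunct $|y|\mleq|t|$ is a polynomial sizebound for $y$ and $|z|\mleq|x|\plus|y|$ is one for $z$ --- so $\arfour$-Induction on $x$ legitimately applies. The number-theoretic facts I would borrow from $\pa$ are the two recurrences $s\zero\mult r\equals(s\mult r)\zero$ and $s\one\mult r\equals(s\mult r)\zero\plus r$ (that is, $2s\cdot r=2(sr)$ and $(2s\plus 1)\cdot r=2(sr)\plus r$), together with the length estimate $|u\mult v|\mleq|u|\plus|v|$.

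For the basis, I would choose $0$ for $z$; the resulting elementary position is $\pa$-true, since $0\mult y\equals 0$ and $|0|\mleq|0|\plus|y|$. For the left inductive step, after waiting for Environment to pick a value $a$ for $y$ in the consequent, I would feed the same $a$ to the $\ada y$ of the antecedent, wait for the antecedent to return a value $c$ with $c\equals x\mult a$ and $|c|\mleq|x|\plus|a|$ (if it fails to, $\pp$ wins outright), compute $d$ with $d\equals c\zero$ using Axiom~9, and choose $d$ for $z$ in the consequent. This wins, because $\pa$ proves $x\zero\mult a\equals(x\mult a)\zero$, so $d\equals c\zero\equals(x\mult a)\zero\equals x\zero\mult a$, while $|d|\equals|x\zero\mult a|\mleq|x\zero|\plus|a|$ by the length estimate. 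The right inductive step is the same, except that after obtaining $c$ I would first form $c\zero$ by Axiom~9 and then, invoking the already-established Fact~\ref{comad}, compute $d$ with $d\equals c\zero\plus a$; here $\pa$ gives $x\one\mult a\equals(x\mult a)\zero\plus a$, so $d\equals x\one\mult a$, and $|d|\equals|x\one\mult a|\mleq|x\one|\plus|a|$ again by the length estimate.

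Finally, $\ada x\ada y\ade z(z\equals x\mult y)$ is a logical consequence of the formula just proven: taking the universally bound $t$ to be $y$ and using the $\pa$-provable $\cla y(|y|\mleq|y|)$ discharges the antecedent $|y|\mleq|t|$, and dropping the conjunct $|z|\mleq|x|\plus|y|$ only weakens the succedent, so the target follows by LC. The only real work here is bookkeeping --- checking that the carried size bounds genuinely suffice in the inductive steps (the edge cases such as $x\equals 0$ or $a\equals 0$ being absorbed into "additional straightforward $\pa$-provable facts"), and confirming that the induction formula meets the syntactic polynomial-boundedness requirement so that $\arfour$-Induction is available.
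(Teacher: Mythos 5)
Your proof is correct, and it follows the same overall template the paper uses (the paper's own proof of Fact \ref{commul} is only a sketch saying the scheme is ``very similar'' to that of Fact \ref{comad}): $\arfour$-Induction on $x$ applied to the polynomially bounded strengthening $\ada y\bigl(|y|\mleq|t|\mli\ade z(|z|\mleq|x|\plus|y|\mlc z\equals x\mult y)\bigr)$, followed by LC together with $\cla y(|y|\mleq|y|)$. Where you genuinely diverge is inside the inductive steps. The paper's intended argument, mimicking Fact \ref{comad}, also decomposes $y$: it waits for $a$, extracts its binary predecessor and parity via Fact \ref{m4gg}, queries the antecedent at that predecessor, and then reassembles using the four identities $s\zero\mult r\zero\equals(s\mult r)\zero\zero$, $s\zero\mult r\one\equals(s\mult r)\zero\zero\plus s\zero$, $s\one\mult r\zero\equals(s\mult r)\zero\zero\plus r\zero$, $s\one\mult r\one\equals(s\mult r)\zero\zero\plus(s\plus r)\one$, thereby relying on Fact \ref{m4gg}, Example \ref{onesuc}, Axiom 9 and Fact \ref{comad}. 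You instead keep $y$ intact, query the antecedent at the same value, and use only the two recurrences $x\zero\mult y\equals(x\mult y)\zero$ and $x\one\mult y\equals(x\mult y)\zero\plus y$, so that only Axiom 9 and one call to Fact \ref{comad} (in the right step) are needed. This is a legitimate simplification: peeling a bit off $y$ was forced in the proof of Fact \ref{comad} itself, since addition cannot invoke addition, but once Fact \ref{comad} is available as a proven resource that necessity disappears for multiplication. Your induction formula satisfies the polynomial-boundedness requirement exactly as the paper's does, the size bookkeeping goes through since $|x\mult y|\mleq|x|\plus|y|$ is $\pa$-provable, the antecedent of each inductive step is used only once as required, and the concluding LC step is identical to the paper's; so both routes are sound, with yours being the leaner of the two.
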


\begin{proof} The general scheme of proof here is very similar to the one employed in the proof of Fact \ref{comad}, and details are left as an exercise to the reader. Here we shall only point out that, the four basic $\pa$-provable facts that play the same role here as facts (\ref{obs1})-(\ref{obs4}) in the proof of Fact \ref{comad} are the following: 
\begin{eqnarray*}
s\zero\mult  r\zero\equals (s\mult  r)\zero\zero, & \mbox{i.e.,} & 2s\mult 2r\equals 4(s\mult r); \\
s\zero\mult  r\one\equals (s\mult  r)\zero\zero\plus s\zero, & \mbox{i.e.,} & 2s\mult (2r\plus 1)\equals 4(s\mult r)\plus 2s; \\
s\one\mult  r\zero\equals (s\mult  r)\zero\zero\plus r\zero , & \mbox{i.e.,} & (2s\plus 1)\mult 2r\equals 4(s\mult r)\plus 2r; \\
s\one\mult  r\one\equals (s\mult  r)\zero\zero\plus (s\plus r)\one, & \mbox{i.e.,} & (2s\plus 1)\mult (2r\plus 1)\equals 4(s\mult r)\plus 2(s\plus r)\plus 1. 
\end{eqnarray*}
Also, where the previous proof relied on Axiom 9 and  Example \ref{onesuc}, the present proof, in addition, will rely on Fact \ref{comad}.  
\end{proof}

The following fact establishes the provable computability of all (polynomial) functions represented through  terms: 

\begin{fact}\label{com}
For any term $\tau$ (not containing $z$),  
 $\arfour\vdash  \ade z (z\equals \tau) $.
\end{fact}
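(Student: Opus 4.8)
The plan is to proceed by (meta-level) induction on the complexity of $\tau$ --- say, the number of occurrences of function letters in $\tau$ --- invoking the rule LC together with the already-established facts at each step. Throughout, recall that $\arfour\vdash\ade z(z\equals\tau)$ is shorthand for $\arfour\vdash\ada\bigl(\ade z(z\equals\tau)\bigr)$, the $\ada$-closure being taken over the free variables of $\tau$.

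For the base cases: if $\tau$ is the constant $0$, then $0\equals 0$ is classically valid, so $\intimpl 0\equals 0$ is $\cltw$-provable by Wait, and $\ade z(z\equals 0)$ follows from it by one application of $\ade$-Choose; hence $\arfour\vdash\ade z(z\equals 0)$ by LC. If $\tau$ is a variable $x$, then $x\equals x$ is classically valid, and $\ada x\ade z(z\equals x)$ is obtained in $\cltw$ by $\ade$-Choose followed by Wait, so again the claim holds by LC.

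For the inductive step I would assume the claim for all terms of smaller complexity and split into three cases. If $\tau=\psi\successor$, then $\arfour\vdash\ade z(z\equals\psi)$ by the induction hypothesis, and the intuitive strategy for $\ade z(z\equals\psi\successor)$ is: using the resource $\ade z(z\equals\psi)$ find the value $a$ of $\psi$, then, using Axiom 8, find the value $b$ of $a\successor$, and choose $b$ for $z$ (compare Example \ref{onesuc}); by Thesis \ref{thesis} this makes $\ade z(z\equals\psi\successor)$ a logical consequence of $\ade z(z\equals\psi)$ and Axiom 8, so the claim follows by LC. If $\tau=\psi\plus\chi$, then $\arfour\vdash\ade z(z\equals\psi)$ and $\arfour\vdash\ade z(z\equals\chi)$ by the induction hypothesis; the strategy finds the values $a$ of $\psi$ and $b$ of $\chi$ using these two resources, then, using Fact \ref{comad}, finds the value $c$ of $a\plus b$ and chooses it for $z$, making $\ade z(z\equals\psi\plus\chi)$ a logical consequence of $\ade z(z\equals\psi)$, $\ade z(z\equals\chi)$ and $\ada x\ada y\ade z(z\equals x\plus y)$; so the claim again follows by LC. The case $\tau=\psi\mult\chi$ is handled identically, with Fact \ref{commul} replacing Fact \ref{comad}.

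I do not expect a genuine obstacle here; the one point requiring a little care is that the logical-consequence steps are between $\ada$-closures, so the free variables of the subterms $\psi,\chi$ --- which are among those of $\tau$ --- must be matched by instantiating the corresponding $\ada$-quantifiers of $\ada\bigl(\ade z(z\equals\tau)\bigr)$, but this is routine and, like the $\cltw$-derivability of the sequents mentioned above, is guaranteed by Thesis \ref{thesis}.
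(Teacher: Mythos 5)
Your proposal is correct and follows essentially the same route as the paper: metainduction on the complexity of $\tau$, with the base cases $0$ and a variable settled by $\cltw$-provability, and the inductive cases for $\successor$, $\plus$ and $\mult$ obtained by LC from the induction hypothesis together with Axiom 8, Fact \ref{comad} and Fact \ref{commul}, respectively. The extra remarks about $\ada$-closures and Thesis \ref{thesis} are fine and consistent with how the paper handles such logical-consequence steps.
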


\begin{proof} We prove this fact by (meta)induction on the complexity of $\tau$. The base cases are those of $\tau$ being the constant $0$ or a variable $x$. Both of the corresponding sentences $\ade z(z\equals 0)$ and $\ada x\ade z(z\equals x)$ are provable in $\cltw$ and hence also in $\arfour$.  Next, assume  $\tau$ is $\theta\successor$. By the induction hypothesis, $\arfour$ proves $ \ade z(z\equals \theta)$. $\arfour$ also proves $\ade y(y\equals x\successor)$ (Axiom 8). The desired $\ade z(z\equals \theta\successor)$ is a logical consequence of these two. The remaining cases of $\tau$ being $\theta_1\plus \theta_2$ or $\theta_1\mult\theta_2$ are handled in a similar way, relying on Facts \ref{comad} and \ref{commul}, respectively.   
 \end{proof}

The formula of the following fact, as a computational problem, is about finding the (nonnegative) difference $z$  between any two numbers $x$ and $y$ and then telling whether this difference is $x\minus y$ or $y\minus x$.  

\begin{fact}\label{minus}
 $\arfour\vdash \ade z(x\equals y\plus z\add y\equals x\plus z)$.  
\end{fact}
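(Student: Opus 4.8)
The plan is to imitate the proof of Fact~\ref{comad}. Writing $\phi(x,y,z)$ as a shorthand for $x\equals y\plus z\add y\equals x\plus z$, I would first prove, by $\arfour$-Induction on $x$,
\[\ada y\Bigl(|y|\mleq|t|\ \mli\ \ade z\bigl(|z|\mleq|x|\plus|y|\ \mlc\ \phi(x,y,z)\bigr)\Bigr),\]
where $t$ is a fresh variable. This formula is polynomially bounded in the sense the rule requires: the $\ada y$-part carries the polynomial sizebound $|y|\mleq|t|$ in its antecedent, and the $\ade z$-part carries the polynomial sizebound $|z|\mleq|x|\plus|y|$ as a conjunct (note that the true difference of two numbers has length at most the larger, hence at most the sum, of their lengths). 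Since $\cla y(|y|\mleq|y|)$ is provable in $\pa$, the target $\ade z\,\phi(x,y,z)$ --- i.e.\ its $\ada$-closure --- is a logical consequence of the sizebounded version (instantiate $t$ by $y$ and drop the now-vacuous sizebound conjunct), so it follows by LC, exactly as at the end of the proof of Fact~\ref{comad}.

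The basis $|y|\mleq|t|\mli\ade z\bigl(|z|\mleq|0|\plus|y|\mlc\phi(0,y,z)\bigr)$ is won by waiting for Environment's choice $a$ for $y$, selecting $a$ for $z$, and choosing the right $\add$-disjunct: the resulting elementary formula $|a|\mleq|t|\mli\bigl(|a|\mleq|0|\plus|a|\mlc a\equals 0\plus a\bigr)$ is $\pa$-true. For the left inductive step $F(x)\mli F(x\zero)$ the strategy waits for Environment's choice $a$ for $y$ in the consequent, uses Fact~\ref{m4gg} to compute the binary predecessor $b$ of $a$ together with the information whether $a\equals b\zero$ or $a\equals b\one$ (and $\pa$ gives $|b|\mleq|a|\mleq|t|$, so $b$ may legitimately be fed into the antecedent), feeds $b$ for $y$ into the antecedent, and waits there for Environment's choice $c$ for $z$, learning whether $x\equals b\plus c$ or $b\equals x\plus c$. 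If $a\equals b\zero$, it computes $c\zero$ via Axiom~9, selects $c\zero$ for $z$ in the consequent and the left $\add$-disjunct if $x\equals b\plus c$ (the identity $x\zero\equals a\plus c\zero$ following in $\pa$ from $2x\equals 2b\plus 2c$) or the right one if $b\equals x\plus c$; the bound $|c\zero|\mleq|x\zero|\plus|a|$ is a $\pa$-consequence of $|c|\mleq|x|\plus|b|$ and $|b|\mleq|a|$. If $a\equals b\one$, an extra split (via Example~\ref{113}) on whether $c\equals 0$ is needed: in the sub-case $x\equals b\plus c$ with $c\equals 0$ we have $a\equals(x\zero)\successor$, so select $0\successor$ for $z$ and the right disjunct; with $c\notequals 0$, write $c$ as $d\successor$ using Fact~\ref{m4h}, compute $d\one$ via Example~\ref{onesuc}, select it for $z$ and the left disjunct (using the $\pa$-fact $2x\equals(2b\plus 1)\plus(2d\plus 1)$); in the sub-case $b\equals x\plus c$, compute $c\one$ via Example~\ref{onesuc}, select it for $z$ and the right disjunct (using $2b\plus 1\equals 2x\plus(2c\plus 1)$). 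The right inductive step $F(x)\mli F(x\one)$ is handled identically, with $2x\plus 1$ in place of $2x$.

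The only genuine work --- and thus the hard part --- is this parity bookkeeping in the inductive steps: when the bit appended to the inducting variable $x$ clashes with the low bit of the parameter $y$, the sign of the difference can flip, which is precisely why one must test whether the recursively obtained difference $c$ is zero and, if not, subtract one from it via the unary-predecessor Fact~\ref{m4h}. No new idea beyond those already used for Facts~\ref{comad}, \ref{m4gg} and \ref{m4h} is needed; in each of the finitely many cases one has only to check that the value selected for $z$ is produced by an already-available resource (Axiom~9, Example~\ref{onesuc}, or simply written down) and that its length stays within the prescribed sizebound, both being immediate $\pa$-facts.
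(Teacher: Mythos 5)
Your proposal is correct and follows essentially the same route as the paper's own proof: the same sizebounded induction formula $\ada y\bigl(|y|\mleq|t|\mli\ade z(|z|\mleq|x|\plus|y|\mlc\phi(x,y,z))\bigr)$, the same basis, the same case analysis on the low bit of $y$ and on the sign and zeroness of the recursively obtained difference, and the same final LC step via $\cla y(|y|\mleq|y|)$. The only (immaterial) deviation is in the subcase $a\equals b\one$, $x\equals b\plus c$, $c\notequals 0$, where you produce the witness as $(c\minus 1)\one$ via Fact~\ref{m4h} followed by Example~\ref{onesuc} while the paper produces the identical number as $(c\zero)\minus 1$ via Axiom~9 followed by Fact~\ref{m4h}.
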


\begin{proof}  Argue in $\arfour$. By $\arfour$-Induction on $x$, we want to show 
\begin{equation}\label{m5a}
\ada y\Bigl(|y|\mleq |t|\mli \ade z\bigl(|z|\mleq |x|\plus |y|\mlc( x\equals y\plus z\add y\equals x\plus z)\bigr)\Bigr). 
\end{equation}

The {\em basis}  $|y|\mleq |t|\mli \ade z\bigl(|z|\mleq |0|\plus |y|\mlc (0\equals y\plus z\add y\equals 0\plus z)\bigr)$ is obviously\footnote{Here and often elsewhere  implicitly relying on $\pa$.} solved by the strategy that chooses the value of $y$ for the variable $z$  and then selects the right $\add$-disjunct.  

To solve the {\em left inductive step}
\begin{equation}\label{m5b}
\begin{array}{l}
\ada y\Bigl(|y|\mleq |t|\mli \ade z\bigl(|z|\mleq |x|\plus |y|\mlc (x\equals y\plus z\add y\equals x\plus z)\bigr)\Bigr) \mli \\ 
\ada y\Bigl(|y|\mleq |t|\mli \ade z\bigl(|z|\mleq |x\zero|\plus |y|\mlc (x\zero\equals y\plus z\add y\equals x\zero\plus z)\bigr)\Bigr),  
\end{array}
\end{equation}
 we wait  till Environment specifies a constant $a$ for $y$ in the consequent. Then, using Fact \ref{m4gg}, we compute the binary predecessor $b$ of $a$, and also figure out whether $a=b\zero$ or ($\add$) $a\equals b\one$. 

{\em Case 1}: $a\equals b\zero$. We specify $y$ as $b$ in the antecedent of (\ref{m5b}), this way forcing Environment to choose a constant $c$ for $z$ there (unless $|b|\mleq |t|$ is false, in which case $|a|\mleq |t|$ is also false and we win), and also choose one of the disjuncts of $ x\equals b\plus c\add b\equals x\plus c$. Using Axiom 9, we calculate the value $d$ of $c\zero$, and specify $z$ as $d$ in the consequent of (\ref{m5b}). Further, if Environment has chosen $ x\equals b\plus c$ in the antecedent, we choose the left $\add$-disjunct in the consequent.  This means that, by now, 
(\ref{m5b}) is brought down to 
\[(|b|\mleq |t|\mli  |c|\mleq |x|\plus |b|\mlc x\equals b\plus c ) \mli  
 (|b\zero|\mleq t\mli  |c\zero|\mleq |x\zero|\plus |b\zero|\mlc x\zero\equals b\zero\plus c\zero) .\] 
From $\pa$, the above is true, so we win. Similarly, if Environment has chosen $b\equals x\plus c$  in the antecedent of (\ref{m5b}), then we choose the right $\add$-disjunct in the consequent, and again win.

{\em Case 2}: $a\equals b\one$. Again, we specify $y$ as $b$ in the antecedent of (\ref{m5b}), this way forcing Environment to choose a constant $c$ for $z$ there, and also to choose one of the disjuncts of $ x\equals b\plus c\add b\equals x\plus c$. 

{\em Subcase 2.1}:  $b\equals x\plus c$ is chosen. Using Example \ref{onesuc}, we calculate the value $d$ of $c\one$, specify $z$ as $d$ in the consequent of (\ref{m5b}), and choose the right $\add$-disjunct there.  By now, 
(\ref{m5b}) is brought down to 
\[  (|b|\mleq |t|\mli  |c|\mleq |x|\plus |b|\mlc b\equals x\plus c ) \mli  
 (|b\one|\mleq |t|\mli  |c\one|\mleq |x\zero|\plus |b\one|\mlc b\one\equals x\zero\plus c\one) . 
\]
According to $\pa$, the above is true, so we win. 

{\em Subcase 2.2}: $x\equals b\plus c$ is chosen. First, using Example \ref{113}, we figure our whether $c\equals 0$ or ($\add$) $c\notequals 0$. 

{\em Subsubcase 2.2.1}: $c\equals 0$. Using Axiom 8, we calculate the value $d$ of $0\successor$, specify $z$ as $d$ in the consequent of (\ref{m5b}), and choose the right $\add$-disjunct there.  This means that, by now, 
(\ref{m5b}) is brought down to 
\[\begin{array}{l}
 (|b|\mleq |t|\mli    |0|\mleq |x|\plus |b|\mlc x\equals b\plus 0     ) \mli   
 (|b\one|\mleq |t|\mli    |0\successor|\mleq |x\zero|\plus |b\one|\mlc   b\one\equals x\zero\plus 0\successor) 
\end{array}
\]
which, by $\pa$, is true, so we win. 

{\em Subsubcase 2.2.2}: $c\notequals 0$. Using Axiom 9 and Fact \ref{m4h}, we calculate $d$ with $ c\zero\equals d\successor$, i.e. $d\equals c\zero\minus 1$,  specify $z$ as $d$ in the consequent of (\ref{m5b}), and choose the left $\add$-disjunct there.  This means that, by now, 
(\ref{m5b}) is brought down to the following true (by $\pa$) position, so we win:
\[
  (|b|\mleq |t|\mli  |c|\mleq |x|\plus |b|\mlc x\equals b\plus c ) \mli  
 \bigl(|b\one|\mleq |t|\mli  |c\zero\minus 1|\mleq |x\zero|\plus |b\one|\mlc x\zero\equals b\one\plus (c\zero\minus 1)\bigr) . 
\]

The {\em right inductive step} is handled in a rather similar way, and it is left as an exercise.

Thus, we have proven (\ref{m5a}). Now, the target sentence  $\ada x \ada y\ade z( x\equals y\plus z\add y\equals x\plus z)\bigr)$ can be easily seen to be a logical consequence of (\ref{m5a}) and the $\pa$-provable fact $\cla y(|y|\mleq |y|)$.\vspace{-7pt}
\end{proof}

\begin{fact}\label{comid}
 $\arfour\vdash  x\equals y\add x\notequals y $.
\end{fact}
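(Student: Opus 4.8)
The plan is to obtain this directly by Logical Consequence, with no further induction needed: decidability of equality is an easy corollary of what has already been established. Recall that Fact \ref{minus} gives $\arfour\vdash\ade z(x\equals y\plus z\add y\equals x\plus z)$ (the nonnegative difference, together with the information which of the two numbers is the larger), and that Example \ref{113} gives $\arfour\vdash z\equals 0\add z\notequals 0$ (its $\ada$-closure). The key elementary observation, provable in $\pa$, is that whenever $x\equals y\plus z$ or $y\equals x\plus z$, we have $x\equals y$ exactly when $z\equals 0$, and $x\notequals y$ exactly when $z\notequals 0$.

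First I would argue informally in $\arfour$. Consider any $x$ and $y$. Using (the resource provided by) Fact \ref{minus}, obtain a value $z$ for which we additionally know whether $x\equals y\plus z$ or ($\add$) $y\equals x\plus z$ was reported. Next, using Example \ref{113} applied to $z$, determine whether $z\equals 0$ or ($\add$) $z\notequals 0$. In the first subcase, choose the left $\add$-disjunct $x\equals y$ of the target; the resulting elementary position is true by $\pa$, because $z\equals 0$ together with $x\equals y\plus z$ (respectively $y\equals x\plus z$) entails $x\equals y$. In the second subcase, choose the right $\add$-disjunct $x\notequals y$; again the position is true by $\pa$, because $z\notequals 0$ together with $x\equals y\plus z$ (respectively $y\equals x\plus z$) entails $x\notequals y$.

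To cast this as a formal proof I would note that $x\equals y\add x\notequals y$ is a logical consequence of $\ade z(x\equals y\plus z\add y\equals x\plus z)$, of $z\equals 0\add z\notequals 0$, and of the relevant $\pa$-provable elementary facts listed above (each available in $\arfour$ by Fact \ref{nov7}); the corresponding sequent is routinely $\cltw$-provable by using $\ada$-Choose and $\ade$-Choose to feed $x,y$ (and then $z$) into the antecedental resources, followed by $\add$-Choose in the succedent and Wait at the leaves. By Thesis \ref{thesis}, the mere existence of the $^*$-independent strategy just described already guarantees this sequent is $\cltw$-provable, so there is no need to exhibit an explicit $\cltw$-derivation.

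I expect no real obstacle: the statement falls out of Fact \ref{minus} and Example \ref{113}. The only point deserving a moment's care is bookkeeping in the informal strategy — one must observe that the final choice in the succedent is dictated solely by whether $z\equals 0$, and is insensitive to which of the two antecedental disjuncts $x\equals y\plus z$, $y\equals x\plus z$ happened to be reported, so that both cases are handled uniformly.
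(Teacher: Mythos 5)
Your proposal is correct and follows essentially the same route as the paper's own proof: use Fact \ref{minus} to obtain the difference $z$, use Example \ref{113} to decide whether $z\equals 0$, and choose the corresponding $\add$-disjunct, with the supporting elementary facts supplied by $\pa$. The extra bookkeeping you note (insensitivity to which antecedental disjunct was reported) is the same point the paper leaves implicit.
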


\begin{proof} 
Argue in $\arfour$. In order to solve $x\equals y\add x\notequals y$, using Fact \ref{minus}, we find the difference $a$ between $x$ and $y$. Further, using Example \ref{113}, we figure out whether $a\equals 0$ or $a\notequals 0$. If $a\equals 0$, we choose the left $\add$-disjunct, otherwise we choose the right $\add$-disjunct. 
\end{proof}

For natural numbers $n$ and $i$ --- as always identified with the corresponding binary numerals --- such that $i\mless |n|$,  in our metalanguage,  we let 
$[n]_i$ mean bit $\# i$ of $n$, where    the count of the bits of $n$ starts from $0$ rather than $1$, and proceeds from left to right. So, for instance, if $n=100$, then $1$ is its bit $\#0$, and the $0$s are its bits $\#1$ and $\#2$. We  treat $[n]_i$ as a pseudoterm just like  $|x|$, meaning that we can feel free to write expressions such as $[x]_y\equals z$, understood as abbreviations, in formulas of $\arfour$.

\begin{fact}\label{combit}
 $\arfour\vdash  y\mless |x|\mli \ade z(z\equals [x]_y)$.  
\end{fact}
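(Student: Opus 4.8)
The plan is to establish, by $\arfour$-Induction on $x$, the polynomially bounded formula
\[F(x)\ :=\ \ y\mless|x|\ \mli\ \ade z\bigl(|z|\mleq 1\ \mlc\ z\equals[x]_y\bigr),\]
and then obtain the target $y\mless|x|\mli\ade z(z\equals[x]_y)$ from $\ada\bigl(F(x)\bigr)$ by LC, simply dropping the sizebound conjunct (exactly as $\ada x\ade y(y\equals|x|)$ was extracted from $\ada x\ade y(|y|\mleq|x|\mlc y\equals|x|)$ at the end of Fact~\ref{comlog}). The only role of the attached sizebound $S(z):=\bigl(|z|\mleq 1\bigr)$ is to make $F(x)$ eligible for $\arfour$-Induction; any trivial polynomial sizebound on $z$ (such as this one, or $|z|\mleq|x|\plus 1$ if one prefers a bound mentioning a variable) would serve. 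The whole argument rests on the elementary $\pa$-provable facts that $x\zero$ and $x\one$ append a bit on the right, and so preserve all earlier bits: whenever $y\mless|x|$ we have $[x\zero]_y\equals[x]_y$ and $[x\one]_y\equals[x]_y$, while the freshly created last bit satisfies $[x\zero]_{|x|}\equals 0$ and $[x\one]_{|x|}\equals 1$; moreover $|x\one|\equals|x|\plus 1$ always, and $|x\zero|\equals|x|\plus 1$ whenever $x\notequals 0$.

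The basis $F(0)$ is vacuous, since $|0|\equals 0$ makes the antecedent $y\mless|0|$ false, so inaction wins. For the left inductive step $F(x)\mli F(x\zero)$, I would first use Example~\ref{113} to test whether $x\equals 0$: if so, then $x\zero\equals 0$, hence $|x\zero|\equals 0$, the antecedent of $F(x\zero)$ is false, and inaction wins. Otherwise $|x\zero|\equals|x|\plus 1$; compute $m\equals|x|$ by Fact~\ref{comlog}, and compare $y$ with $m$ using Fact~\ref{minus} (together with Example~\ref{113} for the zeroness test on the resulting difference). If $y\equals m$, then $[x\zero]_y$ is the last bit of $x\zero$, namely $0$, so choose $0$ for $z$ in the consequent (and $|0|\mleq 1$). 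If $y\mless m\equals|x|$, then $[x\zero]_y\equals[x]_y$, and since the antecedent $y\mless|x|$ of the resource $F(x)$ now holds, that resource supplies a $z$ with $z\equals[x]_y$ and $|z|\mleq 1$, which we reuse verbatim in the consequent; correctness is a $\pa$ fact. If $y\mgreater m$, then $y\mgeq|x\zero|$, the antecedent of $F(x\zero)$ is again false, and inaction wins. The right inductive step $F(x)\mli F(x\one)$ is analogous, with two simplifications: $|x\one|\equals|x|\plus 1$ holds unconditionally, so no preliminary zeroness test on $x$ is needed; and in the case $y\equals|x|$ the new last bit is $1$, so we choose $1$ for $z$, the constant $1$ being obtained as $0\successor$ via Axiom~8.

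I do not anticipate any genuine difficulty here. The one point that must be handled deliberately rather than carelessly is the weakening of the goal by the trivial sizebound $|z|\mleq 1$, without which $\arfour$-Induction does not apply; the rest is the routine case bookkeeping in the inductive steps — deciding whether the requested bit position $y$ lies within range, and, if so, whether it is the last position $|x|$ or an earlier one — which is carried out using the comparison and length facilities already secured in Example~\ref{113}, Fact~\ref{comlog}, and Fact~\ref{minus}, together with Axiom~8 and plain $\pa$ reasoning. As in the preceding facts of this section, the informal argument translates into a formal $\arfour$-proof via $\arfour$-Induction and LC.
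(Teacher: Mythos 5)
Your proof is correct and follows essentially the same route as the paper's: $\arfour$-Induction on $x$ applied to the formula in which the $\ade z$ is guarded by the constant sizebound ($|z|\mleq 0\successor$), a comparison of $y$ with $|x|$ in the inductive steps (output the freshly appended bit when $y\equals|x|$, copy the antecedent's answer when $y\mless|x|$, do nothing when the consequent's antecedent is false), and a final application of LC to discard the sizebound. The only difference is cosmetic: the paper additionally wraps $y$ inside the induction formula as $\ada y(|y|\mleq|x|\mli\cdots)$, whereas you leave $y$ free, which is permissible since the polynomial-boundedness condition constrains only choice-quantified variables (as the paper itself does, e.g., with the free $z$ in the induction formula of Fact \ref{comexp}), and it slightly simplifies your concluding LC step.
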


\begin{proof} Argue in $\arfour$. By induction on $x$, we want to show 
\begin{equation}\label{m5c}
\ada y \Bigl(|y|\mleq |x|\mli \bigl(y\mless |x|\mli \ade z(|z|\mleq 0\successor \mlc z\equals [x]_y)\bigr)\Bigr).
\end{equation}
The basis 
 $|y|\mleq |0|\mli \bigl(y\mless |0|\mli \ade z(|z|\mleq 0\successor \mlc z\equals [0]_y)\bigr)$ is obviously solved by a strategy that makes no moves.

To solve the left inductive step  
\[\ada y \Bigl(|y|\mleq |x|\mli \bigl(y\mless |x|\mli \ade z(|z|\mleq 0\successor \mlc z\equals [x]_y)\bigr)\Bigr)\mli  
\ada y \Bigl(|y|\mleq |x\zero|\mli \bigl(y\mless |x\zero|\mli \ade z(|z|\mleq 0\successor \mlc z\equals [x\zero]_y)\bigr)\Bigr),\]
we wait till Environment selects a value $a$ for $y$ in the consequent. 
Using   Facts \ref{comlog} and \ref{comid}, we figure out whether $a\equals |x|$ or not. If yes, we select $0$ for $z$ in the consequent. If not, we choose $a$ for  $y$  in the antecedent   and wait till Environment responds by selecting a constant $b$ for $z$ there, after which we choose the same constant $b$ for $z$ in the consequent.  With a little thought, this strategy can be seen to win. 

The right inductive step has a similar strategy, with the difference that, if $a\equals |x|$, it chooses the value of $0\successor$ (found using Axiom 8) for $z$ in the consequent. 

Now, the target $y\mless |x|\mli \ade z(z\equals [x]_y)$ can be seen to be a logical consequence of (\ref{m5c}) and the $\pa$-provable fact $\cla (y\mless |x|\mli |y|\mleq |x|) $. 
\end{proof}

The exponentiation function $2^x$ increases the size of its argument exponentially and hence, in view of the soundness of $\arfour$, cannot be provably computable. According to the following fact, however, the same is not the case for a  limited version of the function:  

\begin{fact}\label{comexp} $\arfour$ proves both of the following:
\begin{eqnarray}
&  x\mleq |z|\mli \ade y  (y\equals 2^{x}) ; & \label{sos2}\\
&   \ade y  (y\equals 2^{|r|}). & \label{sos1}
\end{eqnarray}  
\end{fact}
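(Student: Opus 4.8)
The plan is to prove (\ref{sos2}) by $\arfour$-Induction on $x$, and then obtain (\ref{sos1}) as a near-immediate logical consequence of (\ref{sos2}) together with Fact~\ref{comlog}. The key observation that makes (\ref{sos2}) provable is that the bound $x \mleq |z|$ forces $2^x \mleq 2^{|z|} \mleq 2(z\plus 1)$, so the value $y$ we must produce has size polynomially (indeed linearly) bounded in the sizes of the inputs; thus a polynomially bounded induction formula is available. Specifically, I would run $\arfour$-Induction on $x$ to prove the polynomially bounded formula
\[
\ada z\Bigl(x\mleq |z|\mli \ade y\bigl(|y|\mleq |z|\successor \mlc y\equals 2^{x}\bigr)\Bigr),
\]
from which the target (\ref{sos2}) follows by LC (dropping the size conjunct and the now-redundant inner $\ada z$). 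Note that here the role of the ``$t$'' variable in the sizebound is played by $z$ itself, and $|y|\mleq |z|\successor$ is a legitimate polynomial sizebound for $y$.

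The three premises are handled as follows. For the \textbf{basis}, $\ada z\bigl(0\mleq |z|\mli \ade y(|y|\mleq |z|\successor\mlc y\equals 2^{0})\bigr)$, we simply select the constant $0\successor$ (i.e.\ $1$, computed via Axiom~8) for $y$; since $\pa$ proves $1\equals 2^0$ and $|1|\mleq |z|\successor$, we win. For the \textbf{left inductive step},
\[
\ada z\Bigl(x\mleq |z|\mli \ade y\bigl(|y|\mleq |z|\successor\mlc y\equals 2^{x}\bigr)\Bigr)\mli
\ada z\Bigl(x\zero\mleq |z|\mli \ade y\bigl(|y|\mleq |z|\successor\mlc y\equals 2^{x\zero}\bigr)\Bigr),
\]
we wait for Environment to pick a value $a$ for $z$ in the consequent. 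If $x\zero\mleq |a|$ fails we win trivially; otherwise $\pa$ gives $x\mleq |a|$, so we feed $a$ to the antecedental $\ada z$ and wait for Environment's value $b$ with $b\equals 2^{x}$. Now $\pa$ proves $2^{x\zero}\equals 2^{2x}\equals (2^{x})^2\equals b\mult b$, and by Fact~\ref{commul} we can compute $c$ with $c\equals b\mult b$; we select $c$ for $y$ in the consequent. We win because $\pa$ proves that, under $x\zero\mleq |a|$, we have $|c|\equals |b\mult b|\mleq |a|\successor$ (since $c\equals 2^{x\zero}\mleq 2^{|a|}\mleq 2(a\plus 1)$). The \textbf{right inductive step} is entirely analogous, using $\pa$-provability of $2^{x\one}\equals 2^{2x\plus 1}\equals 2\mult(2^{x})^2$, and computing the needed product-then-double via Fact~\ref{commul} and Axiom~9.

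For (\ref{sos1}): by Fact~\ref{comlog} we have $\arfour\vdash \ade s(s\equals |r|)$, and by (\ref{sos2}) with $z\mathbin{:=}r$ we have $\arfour\vdash s\mleq |r|\mli\ade y(y\equals 2^{s})$. Since $\pa$ trivially proves $\cla r(|r|\mleq |r|)$, the sentence $\ade y(y\equals 2^{|r|})$ is a logical consequence of these (first obtain $s\equals|r|$, feed it as the value of the $\ada$-bound variable in the instance of (\ref{sos2}), then use the fact that $2^s\equals 2^{|r|}$ under $s\equals|r|$), hence provable in $\arfour$ by LC. The main obstacle is getting the size-bookkeeping in the inductive steps exactly right --- i.e.\ verifying in $\pa$ that the produced witness $c$ satisfies $|c|\mleq |a|\successor$ under the hypothesis that the new argument is $\mleq|a|$ --- but this is a routine $\pa$-provable arithmetic fact about $2^x$ once one notes $2^x\mleq 2(z\plus 1)$ whenever $x\mleq|z|$; all genuine computation (squaring, doubling, successor) is already available from Axioms~8--9 and Facts~\ref{comad}--\ref{commul}.
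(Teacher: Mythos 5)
Your overall route is exactly the paper's: strengthen (\ref{sos2}) to $x\mleq |z|\mli \ade y(|y|\mleq |z|\successor\mlc y\equals 2^x)$, prove it by $\arfour$-Induction on $x$ (basis via Axiom 8; left step by squaring the antecedent's witness via Fact \ref{commul}; right step by additionally doubling via Axiom 9), and then get (\ref{sos1}) from Fact \ref{comlog} together with an instance of (\ref{sos2}), all the size bookkeeping being $\pa$-provable. The derivation of (\ref{sos1}) and the handling of both inductive steps coincide with the paper's proof.

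There is, however, one genuine technical misstep: the induction formula you actually feed to $\arfour$-Induction, namely $\ada z\bigl(x\mleq |z|\mli \ade y(|y|\mleq |z|\successor\mlc y\equals 2^{x})\bigr)$, is not polynomially bounded in the sense required by the rule. The definition demands that the body of every $\ada$-subformula have the form $S(z)\mli H(z)$ where $S(z)$ is a polynomial \emph{sizebound} for $z$, i.e., a formula $|z|\mleq\tau(|y_1|,\ldots,|y_n|)$ upper-bounding $|z|$; your guard $x\mleq|z|$ is a lower bound on $|z|$ (and bounds $x$, not $|x|$), so it is not a sizebound for $z$, and the rule as stated cannot be applied to this formula. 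The repair is immediate and is what the paper does: take $F(x)$ to be $x\mleq |z|\mli \ade y(|y|\mleq |z|\successor\mlc y\equals 2^{x})$ with $z$ left free, so that the only choice quantifier is the (properly bounded) $\ade y$; the rule's premises and conclusion are then the $\ada$-closures of $F(0)$, $F(x)\mli F(x\zero)$, $F(x)\mli F(x\one)$ and $F(x)$, which give you the same target game. With this change your inductive-step strategy even simplifies: $z$ is shared between antecedent and consequent, so there is no separate antecedental $\ada z$ to instantiate --- you just wait for the antecedent's witness $b$ and play $b\mult b$ (resp.\ the value of $(b\mult b)\zero$) in the consequent, exactly as in the paper. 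With that correction your proposal is correct and essentially identical to the paper's argument.
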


\begin{proof} Argue in $\arfour$. By $\arfour$-induction on $x$, we  want to prove 
$x\mleq |z|\mli \ade y  (|y|\mleq |z|\successor \mlc y\equals 2^{x})$, from which (\ref{sos2}) immediately follows by LC. 

 The  basis  $0\mleq |z|\mli \ade y  (|y|\mleq |z|\successor \mlc y\equals 2^{0})$ is obviously solved by choosing the value $a$ of $0\successor$ for $y$. Such an $a$ can be found using Axiom 8. 
The left inductive step is \[\bigl(x\mleq |z|\mli \ade y  (|y|\mleq |z|\successor \mlc y\equals 2^{x})\bigr)\mli \bigl(x\zero\mleq |z|\mli \ade y  (|y|\mleq |z|\successor \mlc y\equals 2^{x\izero})\bigr).\] To solve it, we wait till Environment chooses a value for $y$ in the antecedent. If such a value is never chosen, Environment loses unless  $x\mleq |z|$ is false. But, if $x\mleq |z|$ is false, then so is $x\zero\mleq |z|$, and we win. So, assume $a$ is the constant chosen by Environment in the antecedent for $y$. Using Fact \ref{commul}, we compute $b$ with $b\equals a^2$, and choose $b$ for $y$ in the consequent. We win because the game will have evolved to  $(x\mleq |z|\mli  |a|\mleq |z|\successor \mlc a\equals 2^{x}) \mli (x\zero\mleq |z|\mli   |b|\mleq |z|\successor \mlc b\equals 2^{x\izero})$, i.e. 
\[(x\mleq |z|\mli  |a|\mleq |z|\successor \mlc a\equals 2^{x}) \mli (2x \mleq |z|\mli   |a^2|\mleq |z|\successor \mlc a^2\equals 2^{2x})\] 
which, by $\pa$, is true. 
The right inductive step is similar, with the difference that here we shall choose $b$ to be $2a^2$ rather than $a^2$, computing which will take Axiom 9 in addition to Fact \ref{commul}. 

Thus, (\ref{sos2}) is proven. Now we solve (\ref{sos1}), i.e. $\ade y  (y\equals 2^{|r|})$, as follows. First, using Fact \ref{comlog}, we find the value  $a$ of  $|r|$. Next, using 
(\ref{sos2}) --- namely, specifying its $z$ and $x$ as $r$ and $a$, respectively --- we compute the value $b$ with $b\equals 2^a$, i.e. $b\equals 2^{|r|}$, and choose that $b$ for  $y$ in $\ade y  (y\equals 2^{|r|})$.\vspace{-7pt}  
\end{proof}

We generalize the earlier notation $[x]_{y}$ to $[x]_{y}^{z}$, additionally to  $y\mless |x|$ requiring that $y\plus z\mleq |x|$. It means ``the  substring  of $x$ of length $z$ which starts at the $y$th bit''.   For instance, if $x\equals 111010$, then $[x]_{2}^{3}\equals 101$, $[x]_{0}^{6}\equals x$ and, for each $i\in\{0,\ldots,5\}$, $[x]_{i}^{0}\equals 0$. As always, we identify the bit string $[x]_{y}^{z}$ with the number it represents in the binary notation.  Note that the old $[x]_y$ is the special case of $[x]_{y}^{z}$ with $z\equals 1$. 
The following fact states the provable computability of the function $[x]_{y}^{z}$.

\begin{fact}\label{m6a}
$\arfour\vdash y\mless |x|\mlc y\plus z\mleq |x| \mli \ade t(t\equals [x]_{y}^{z} )$.
\end{fact}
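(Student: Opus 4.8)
The plan is to follow the pattern of the proof of Fact~\ref{combit} (which is the special case $z\equals 1$), proving by $\arfour$-Induction on $x$ a polynomially bounded strengthening of the target. Concretely, I would prove
\[G(x)\ :=\ \ada y\ada z\Bigl(|y|\mleq |x|\mlc |z|\mleq |x|\successor\ \mli\ \bigl(y\mless |x|\mlc y\plus z\mleq |x|\ \mli\ \ade t(|t|\mleq |x|\successor\mlc t\equals [x]_{y}^{z})\bigr)\Bigr),\]
whose three displayed size-conditions are exactly the polynomial sizebounds (for $y$, $z$, $t$) required for $G(x)$ to count as polynomially bounded, so that $\arfour$-Induction on $x$ applies. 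The target then follows from $\ada x\,G(x)$ by LC, since it is a logical consequence of $\ada x\,G(x)$ together with the $\pa$-provable facts $\cla(y\mless |x|\mli |y|\mleq |x|)$ and $\cla(y\plus z\mleq |x|\mli |z|\mleq |x|)$.

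The basis $G(0)$ is won by making no moves: for any $a,b$ chosen by the environment for $y,z$, the condition $a\mless |0|$ (i.e.\ $a\mless 0$) is false, so the inner implication holds vacuously. For the left inductive step $G(x)\mli G(x\zero)$, I would wait for the environment to pick $a$ for $y$ and $b$ for $z$ in the consequent $G(x\zero)$. Using Axiom~9 and Fact~\ref{comlog} I compute $m\equals |x\zero|$, using Fact~\ref{comad} I compute $a\plus b$, and using Fact~\ref{comid} I test $a\mless m$ and $a\plus b\mleq m$; if either fails the inner antecedent is a false elementary formula and I win by moving nowhere. Otherwise $m\mgeq 1$, so by $\pa$ we have $x\zero\notequals 0$ and $m\equals |x|\plus 1$. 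Using Example~\ref{113} I test whether $b\equals 0$; if so $[x\zero]_{a}^{b}\equals 0$ by $\pa$ and I choose $0$ for $t$. If $b\notequals 0$, I use Facts~\ref{comlog} and~\ref{comid} to decide whether $a\plus b\mleq |x|$ or $a\plus b\equals |x|\plus 1$. In the first case the extracted window lies within the first $|x|$ bits, so $[x\zero]_{a}^{b}\equals [x]_{a}^{b}$ by $\pa$; moreover $b\mgeq 1$ and $a\plus b\mleq |x|$ give $a\mless |x|$, so I query the induction hypothesis $G(x)$, specifying $y$ as $a$ and $z$ as $b$ (the sizebounds $|a|\mleq |x|$, $|b|\mleq |x|\successor$ hold by $\pa$, using $a\mleq |x|$, $b\mleq |x|$ and $|n|\mleq n$), obtain $c$ with $c\equals [x]_{a}^{b}$ and $|c|\mleq |x|\successor$, and choose $c$ for $t$. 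In the second case the window reaches exactly one more bit than fits in $x$, namely bit $\#|x|$ of $x\zero$, which is $0$: if $b\equals 1$ (hence $a\equals |x|$) then $[x\zero]_{a}^{b}\equals 0$ by $\pa$ and I choose $0$ for $t$; if $b\mgeq 2$ then $[x\zero]_{a}^{b}\equals ([x]_{a}^{b\minus 1})\zero$ by $\pa$, so I compute $b\minus 1$ via Fact~\ref{m4h}, query $G(x)$ specifying $y$ as $a$ and $z$ as $b\minus 1$ (now $a\mless |x|$ and $a\plus(b\minus 1)\equals |x|$, and the sizebounds again hold), obtain $c\equals [x]_{a}^{b\minus 1}$ with $|c|\mleq |x|\successor$, compute $d\equals c\zero$ via Axiom~9, and choose $d$ for $t$. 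In each branch a short $\pa$-argument confirms both $t\equals [x\zero]_{a}^{b}$ and $|t|\mleq |x\zero|\successor$. The right inductive step $G(x)\mli G(x\one)$ is handled symmetrically, with $|x\one|\equals |x|\plus 1$ holding unconditionally, the new bit being $1$ rather than $0$, and Example~\ref{onesuc} used in place of Axiom~9 to append it.

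I do not expect any conceptual difficulty; the content is the case analysis plus the size-bookkeeping. The one point that must be arranged carefully is the choice of $G(x)$: its sizebounds must be loose enough that the values the environment is allowed to supply in the consequent of an inductive step automatically satisfy the (apparently tighter) sizebounds needed to re-invoke the induction hypothesis at the binary predecessor, yet tight enough that the witness $t$ returned meets the sizebound stated inside $G(x)$ itself. The bounds $|y|\mleq |x|$, $|z|\mleq |x|\successor$, $|t|\mleq |x|\successor$ achieve this balance, needing only the trivial $\pa$-facts $|n|\mleq n$ and the monotonicity of $|\cdot|$. All computational subroutines used --- length, equality test, addition, unary predecessor, the zeroness test, and the two binary successors --- are already available from Facts~\ref{comlog}, \ref{comid}, \ref{comad}, \ref{m4h}, Example~\ref{113}, and Axioms~8 and~9.
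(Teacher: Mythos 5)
Your proof is correct, but it takes a genuinely different route from the paper's. The paper does not induct on $x$: it inducts on an auxiliary variable $r$, proving $y\mless |x|\mlc y\plus |r|\mleq |x|\mli \ade t(|t|\mleq |x|\mlc t\equals [x]_{y}^{|r|})$, so that the window length is represented as the \emph{length} of the induction variable; the inductive step grows the substring by one bit, appending $[x]_{y\iplus |r|}$ (extracted via Fact \ref{combit}) to the right of the previously obtained $[x]_{y}^{|r|}$, and the general $z$ is recovered only at the very end by manufacturing, via Facts \ref{m4gg} and \ref{comexp}, a number $a$ with $|a|\equals z$ and instantiating $r$ as $a$. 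Your induction on $x$ instead relates windows of $x\zero$ and $x\one$ to windows of $x$, which is what forces the internal $\ada y\ada z$ quantifiers with their carefully matched sizebounds and the case analysis on whether the window stays inside $x$ or reaches the appended bit; in exchange it dispenses entirely with the single-bit extraction Fact \ref{combit} and with the limited exponentiation Fact \ref{comexp}, so it is the more self-contained argument, at the price of a more delicate induction formula. Two small points of hygiene: to satisfy the letter of the definition of ``polynomially bounded'' you should nest the bounds as $\ada y\bigl(|y|\mleq |x|\mli \ada z(|z|\mleq |x|\successor\mli \cdots)\bigr)$ rather than conjoining them in a single antecedent; and the order tests $a\mless m$ and $a\plus b\mleq m$ require Fact \ref{minus} together with Example \ref{113} rather than the bare equality test of Fact \ref{comid} (equality does suffice for your later discrimination between $a\plus b\mleq |x|$ and $a\plus b\equals |x|\successor$, since $a\plus b\mleq |x\zero|$ is already known at that point).
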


\begin{proof} Argue in $\arfour$.  First, by $\arfour$-induction on $r$, we want to prove  
\begin{equation}\label{m6r}
y\mless |x|\mlc  y\plus |r|\mleq |x| \mli \ade t(|t|\mleq |x|\mlc t\equals [x]_{y}^{|r|} ).
\end{equation}

The basis  $y\mless |x|\mlc  y\plus |0|\mleq |x| \mli \ade t(|t|\mleq |x|\mlc t\equals [x]_{y}^{|0|} )$ is solved straightforwardly by choosing $0$ for $t$. That is because   $0$ stands for the empty bit string, and so does $ [x]_{y}^{|0|}$ for whatever $x,y$ with $y\mless |x|$. 

The right inductive step is 
\begin{equation}\label{m6s}
\bigl(y\mless |x|\mlc y\plus |r|\mleq |x| \mli \ade t(|t|\mleq |x|\mlc t\equals [x]_{y}^{|r|} )\bigr)\mli \bigl(y\mless |x|\mlc y\plus |r\one|\mleq |x| \mli \ade t(|t|\mleq |x|\mlc t\equals [x]_{y}^{|r\ione|} )\bigr).
\end{equation}
Assume $y\mless |x|\mlc y\plus |r\one|\mleq |x|$. Then we also have $y\mless |x|\mlc y\plus |r|\mleq |x|$. Using the antecedental resource $\ade t(|t|\mleq |x|$ 
$\mlc t\equals [x]_{y}^{|r|} )$, we find  $a$ with $|a|\mleq |x|$ such that $a\equals [x]_{y}^{|r|}$. Using   Facts \ref{comlog}, \ref{comad} and \ref{combit}, we further find $b$ with $b\equals [x]_{y\iplus |r|}$. Now the sought value of $t$ is the value of $a\zero\plus b$, which we compute using Axiom 9 and Fact \ref{comad}.

To solve the left inductive step, first we figure out (using Example \ref{113}) whether $r\equals 0$ or not. If $r\equals 0$, we ignore the antecedent and act in the consequent as the strategy for the basis of induction did. Otherwise, we act as the strategy for the right inductive step did.  

(\ref{m6r}) is thus proven. To solve the target  $y\mless |x|\mlc  y\plus z\mleq |x| \mli \ade t(t\equals [x]_{y}^{z} )$,  assume $y\mless |x|\mlc y\plus z\mleq |x|$. Then $z\mleq |x|$. Using
Facts \ref{m4gg} and \ref{comexp},  we find the value $a$ of the binary predecessor of $2^z$. Note that $|a|\equals z$. Now, using (\ref{m6r}), we find the value $b$ of $[x]_{y}^{|a|}$, i.e. of $[x]_{y}^{z}$.   Selecting $b$ for $t$ solves the problem.
\end{proof}

\section{The soundness of $\arfour$}\label{sectsound}
This section is devoted to proving the soundness part of Theorem \ref{tt1}. It means showing that any $\arfour$-provable sentence $X$ (as always, identified with its standard interpretation $X^\dagger$) has a polynomial time solution, and that, furthermore, such a solution for $X$, together with an explicit polynomial bound $\tau$ for its time complexity, can be efficiently extracted from any extended $\arfour$-proof of $X$. Consider any sentence $X$ with a fixed $\arfour$-proof. 

For presentational considerations,  by  induction on the length of the  proof of $X$, we will first simply show that a polynomial time solution of $X$ exists. Only after that, at the end of this section,  we will show that such a solution, together with an explicit polynomial bound for its time complexity, is or can be constructed efficiently.  

Assume $X$ is an axiom of $\arfour$. If $X$ is a Peano axiom, then it is a true  elementary sentence and therefore is won by a machine that makes no moves.  
If $X$ is  $\ada x\ade y(y\equals x\successor)$ (Axiom 8), then it is won by a machine that (for the constant $x$ chosen by Environment for the variable $x$)    computes  the value $a$ of $x\plus 1$, and makes $a$ as its only move in the play. Similarly, if $X$ is $\ada x\ade y(y\equals x\zero)$ (Axiom 9),  it is won by a machine that computes  the value $a$ of $2x$, and makes $a$ as its only move in the play. Needless to point out that all of the above  machines  run in polynomial time.

Next, suppose $X$ is obtained from premises $Y_1,\ldots,Y_n$ by LC. By the induction hypothesis, for each $i\in\{1,\ldots,n\}$, we already have a solution (HPM) ${\cal N}_i$ of $Y_i$ together with an explicit polynomial bound  $\xi_i$ for the time complexity of ${\cal N}_i$. Of course, every such HPM ${\cal N}_i$ can as well be seen as an $n$-ary GHPM that simply ignores its inputs. Then, by Theorem \ref{feb9d}, we can (efficiently) construct an $n$-ary GHPM $\cal M$, together with an explicit polynomial bound $\tau(\xi_1,\ldots,\xi_n)$ for the time complexity of the HPM 
${\cal M}(\code{{\cal N}_1},\ldots,\code{{\cal N}_n})$ such that the latter solves $X$.  

Finally, suppose $X$ is (the $\ada$-closure of) $F(x)$, where $F(x)$ is a polynomially bounded formula, and $X$ is obtained by $\arfour$-Induction on $x$. So, the premises are (the $\ada$-closures of) $F(0)$, $F(x)\mli F(x\zero)$ and $F(x)\mli F(x\one)$. By the induction hypothesis, there are HPMs ${\cal N}$, ${\cal K}_0$,${\cal K}_1$ --- with explicit polynomial bounds $\xi,\zeta_0,\zeta_1$ for their time complexities, respectively --- that solve these three premises, respectively.  Fix them.  

We need certain auxiliary concepts. Consider  any polynomially bounded formula $H$,  any legal position $\Phi$ of $\ada H$, and any legal move $\alpha$ (by whichever player $\xx$) in position $\Phi$. 
In this context, we say that $\alpha$ is {\bf unreasonable} if it signifies a choice of a constant $c$ for a variable $y$ in a $\ade y\bigl(S(y,\vec{z})\mlc G\bigr)$ or $\ada y\bigl(S(y,\vec{z})\mli G\bigr)$ (depending on whether $\xx=\pp$ or $\xx=\oo$) subcomponent of $H$, such that $c$ violates the conditions on its size imposed by the sizebound $S(y,z)$. Rather than trying to turn this otherwise clear intuitive explanation into a strict definition, providing an example would be sufficient. Let  $H$ be the formula $0\equals 0\mlc \ade y(|y|\mleq |z|\successor\mlc z\equals y\zero)$. Then the move $1.1111$ is unreasonable in position $\seq{\oo 11}$. That is because
this move signifies choosing the constant $1111$ for $y$. And the move $\oo 11$ of the position has set the value of $z$ to $11$ and hence the value of $|z|$ to $2$. So, the condition $|y|\mleq |z|\successor$, i.e. $|1111|\mleq |11|\successor$, is violated.  Any other move $1.n$ with $|n|\mgreater 3$, such as $1.1000$ or $1.111111111$, would also be unreasonable in that position.

We replace  ${\cal N}$   by its ``{\bf reasonable counterpart}'' ${\cal N}'$ --- an HPM  which never makes unreasonable moves but otherwise is essentially the same as $\cal N$.   Namely, ${\cal N}'$ is a machine that works just like $\cal N$, with the difference that, every time $\cal N$ makes an unreasonable move that chooses some (offensively long) constant $c$ for a variable bound by a (bounded) quantifier of $F(0)$, ${\cal N}'$ chooses (the always safe) $0$ instead. Note that this does not decrease the chances of the machine to win, as unreasonable moves always result in the corresponding subgames' being lost, anyway. Obviously ${\cal N}'$ can be efficiently constructed from $\cal N$. Further, the corresponding explicit polynomial bound $\xi'$ can also be efficiently indicated (the latter will depend on $\xi$ and the sizebounds of the $\ade$-bound variables of $F(0)$). 
In a similar fashion, we replace ${\cal K}_0$,${\cal K}_1$ by their ``reasonable counterparts'' ${\cal K}'_0$,${\cal K}'_1$ and the corresponding explicit polynomial bounds $\zeta'_0,\zeta'_1$ for their time complexities. For simplicity, we further replace the three bounds $\xi',\zeta'_0,\zeta'_1$ by the (generously taken) common  bound $\phi\equals \xi'\plus \zeta'_0\plus\zeta'_1$ for the time complexities of all three machines ${\cal N}'$, ${\cal K}'_0$ and ${\cal K}'_1$.

We now describe an HPM $\cal M$ that solves the conclusion $F(x)$. In this description, 
we use the term ``{\bf synchronizing}''\label{imatching} to mean applying copycat between two (sub)games of the form $A$ and $\gneg A$. This means copying one player's moves in $A$ as the other player's moves in $\gneg A$, and vice versa. The effect achieved this way is that the games to which $A$ and $\gneg A$  eventually evolve (the final positions hit by them, that is) will be of the form $A'$ and $\gneg A'$, that is, one will remain the negation of the other, so that one will be won by a given player iff the other is lost by the same player.  {\bf Moderated synchronization} means the same, with the only difference that, whenever a player makes an unreasonable move by choosing an (offensively long) constant $c$ for a variable bound by a bounded quantifier, the move is copied by the synchronizer with $c$ replaced by $0$. 

Throughout our description and analysis of the work of $\cal M$, we assume that its adversary never makes illegal moves, for otherwise $\cal M$ easily detects illegal behavior and retires with victory. 

At the beginning,  $\cal M$  waits for Environment to choose constants for the free variables of $F(x)$.   Assume $k$ is the length of the constant chosen for the variable $x$, and the bits of that constant, in the left-to-right order, are   $b_1,b_{2},\ldots b_k$. We shall also assume here that $k\notequals 0$, for otherwise the case is straightforward.   Let $d_0$ be the constant $0$ and, for each $i\in\{1,\ldots,k\}$, let $d_i$ be the constant $b_1\ldots b_i$. So, the constant chosen by Environment for $x$ is $d_k$. For each $i\in\{1,\ldots,k\}$, let ${\cal K}'_{b_i}$ stand for ${\cal K}'_0$ if $b_i=0$, and for ${\cal K}'_1$ if $b_i=1$. Similarly, let $xb_i$ stand for $x\zero$ if $b_i=0$, and for $x\one$ if $b_i=1$. 

After Environment chooses constants for all free variables of $F(x)$, the work of $\cal M$ consists in continuously polling its run tape to see if Environment has made any new moves, combined with simulating, in parallel, one play of $\ada\bigl(F(0)\bigr)$  by ${\cal N}'$ and --- for each $i\in\{1,\ldots,k\}$ --- 
one play  of $\ada\bigl(F(x)\mli F(x b_i)\bigr)$ by ${\cal K}'_{b_i}$. In the simulation of ${\cal N}'$, $\cal M$ lets the imaginary adversary of 
${\cal N}'$ choose, at the very beginning of the play, the same constants for the free variables of $F(0)$ as $\cal M$'s adversary chose for those variables in the real play.  In the simulation of each ${\cal K}'_{b_i}$, $\cal M$ lets the imaginary adversary of ${\cal K}'_{b_i}$
 choose, at the very beginning of the play, the constant $d_{i - 1}$ for $x$ and the same constants for all other free variables of $F(x)\mli F(x b_i)$ as $\cal M$'s adversary chose for those variables in the real play.

 Let $F'(x)$ be the result of substituting (see Definition \ref{sov}) in $F(x)$ each free variable of $F(x)$ other than $x$ by the constant chosen by (the real) Environment for that variable. Thus, after Environment's initial moves, $\ada\bigl(F(x)\bigr)$ has been brought down to $F'(d_k)$. Similarly, after the initial moves by the imaginary adversary of ${\cal N}'$, $\ada\bigl(F(0)\bigr)$ will be brought down to $F'(0)$. And similarly, for each $i\in\{1,\ldots,k\}$, after the initial moves by the imaginary adversary of ${\cal K}'_{b_i}$, $\ada\bigl(F(x)\mli F(x b_i)\bigr)$ will be brought down to $F'(d_{i-1})\mli F'(d_i)$.  

What $\cal M$ does after the above initial moves in the real and simulated plays is that it synchronizes $k\plus 1$ pairs of (sub)games, real or imaginary. Namely:

\begin{itemize}
\item It synchronizes --- in the {\em moderated} sense ---  the consequent of the imaginary play of $F'(d_{k-1})\mli F'(d_k)$ by ${\cal K}'_{b_k}$ with the real play of $F'(d_k)$.  
\item For each $i\in\{1,\ldots,k\minus 1\}$,  it synchronizes the consequent of the imaginary play of  $F'(d_{i-1})\mli F'(d_i)$ by ${\cal K}'_{b_i}$ with the antecedent of the 
 imaginary play of $F'(d_{i})\mli F'(d_{i+1})$ by ${\cal K}'_{b_{i+1}}$. 
\item It synchronizes  the  imaginary play of $F'(0)$ (i.e. of $F'(d_0)$) by ${\cal N}'$ with the antecedent of the imaginary play of $F'(d_{0})\mli F'(d_1)$ by ${\cal K}'_{b_1}$. 
\end{itemize}

Below is an illustration of such synchronization arrangements --- indicated by arcs --- for the case $d_k\equals d_4\equals 1001$:

\begin{center}
\begin{picture}(260,175)
\put(0,165){\em machine:\hspace{30pt} status: \hspace{70pt} game:}
\put(13,135){${\cal N}'$}
\put(68,135){imaginary}
\put(179,135){$F'(0)$}
\put(13,110){${\cal K}'_1$}
\put(68,110){imaginary}
\put(140,110){$F'(0)\mli F'(1)$}

\put(13,85){${\cal K}'_0$}
\put(68,85){imaginary}
\put(140,85){$F'(1)\mli F'(10)$}

\put(13,60){${\cal K}'_0$}
\put(68,60){imaginary}
\put(140,60){$F'(10)\mli F'(100)$}

\put(13,35){${\cal K}'_1$}
\put(68,35){imaginary}
\put(140,35){$F'(100)\mli F'(1001)$}

\put(13,10){${\cal M}$}
\put(68,10){real}
\put(225,10){$F'(1001)$}

\put(156,119){\line(3,1){34}}

\put(155,94){\line(3,1){34}}

\put(157,69){\line(3,1){34}}

\put(159,44){\line(3,1){34}}

\put(242,19){\line(-3,1){34}}

\end{picture}
\end{center}

This completes our description of $\cal M$. 
Remembering our assumption that (${\cal N},{\cal K}_0,{\cal K}_1$ and hence) ${\cal N}',{\cal K}'_0,{\cal K}'_1$ win the corresponding games, with a little thought it can be seen  that $\cal M$ wins $F'(d_k)$ and hence $\ada\bigl(F(x)\bigr)$, as desired. It now remains to show that the time complexity of $\cal M$ is also as desired.  

Remembering that the machines ${\cal N}',{\cal K}'_0,{\cal K}'_1$ are ``reasonable'' and that the synchronization between the real play of $F'(d_k)$ and the consequent of $F'(d_{k-1})\mli F'(d_k)$ is moderated, one can easily write a term $\eta(\ell)$ with a single variable $\ell$ such that, if $\ell$ is greater than or equal to the size of any of the constants chosen by Environment for the free variables of $F(x)$,  the sizes of no moves ever made by $\cal M$ or the simulated ${\cal N}',{\cal K}'_0,{\cal K}'_1$ exceed $\eta(\ell)$. For instance, if $F(x)$ is \(\ade u\bigl(|u|\mleq |x|\mult |z|\mlc\ada v(|v|\mleq |u|\plus |x|\mli G)
\bigl)\) where $G$ is elementary, then $\eta(\ell)$ can be taken to be $\ell\mult \ell\plus \ell\plus 0\successor\successor\successor\successor$ (here $0\successor\successor\successor\successor$ is to account for the size of the prefix  ``$.$'', ``$1.$''  or ``$0.1.$'' that any legal move by $\pp$ in any of the plays that we consider would take).  

For the rest of this proof, pick and fix an arbitrary play (computation branch) of $\cal M$, and an arbitrary clock cycle $\mathfrak{c}$ on which $\cal M$ makes a move $\alpha$ in the real play of $F(x)$. Let $\hbar$ and $\ell$ be the timecost and the background (see Section \ref{s7}) of this move, respectively. Let 
$d_0,\ldots,d_k$  be as in the description of the work of $\cal M$. Note that $\ell$ is not smaller than the size of the greatest of the constants chosen by Environment for the free variables of $F(x)$. Hence, where $\eta$ is as in the previous paragraph, we have:
\begin{equation}\label{m12a}
\mbox{\em The sizes of no moves ever made by $\cal M$ or the simulated ${\cal N}',{\cal K}'_0,{\cal K}'_1$ exceed $\eta(\ell)$.}
\end{equation}   

The polling, simulation and copycat performed by $\cal M$ do impose some time overhead. But the latter is only (fixed) polynomial and, in our subsequent analysis, can be safely ignored. Namely, for the sake of simplicity, we are going to pretend that  $\cal M$ copies moves in its copycat routine instantaneously (as soon as detected), and that the times that $\cal M$ ever spends ``thinking'' about what move to make are the times during which it is waiting for simulated machines to make one or several moves. Furthermore, we will pretend that the polling and the several simulations happen in a truly parallel fashion, in the sense that $\cal M$ spends a single clock cycle on tracing a single computation step of  {\em all} $k\plus 1$ machines simultaneously, as well as on checking out its run tape to see if Environment has made a new move.

Let $\beta_1,\ldots,\beta_m$ be the moves by simulated machines that $\cal M$ detects by time $\mathfrak{c}$, arranged according to the times $t_1\mleq \ldots\mleq t_m$ of their detections (which, by our simplifying assumptions, coincide with the timestamps of those moves in the corresponding simulated plays).      
Let   $d\equals\mathfrak{c}\minus \hbar$. Let $j$ be the smallest integer among $1,\ldots,m$ such that $t_j\mgeq d$. Since each simulated machine  runs in time $\phi$, in view of (\ref{m12a}) it is clear that $t_j\minus d$ does not exceed $\phi(\eta(\ell))$. Nor does $t_{i}\minus t_{i\iminus 1}$ for any $i$ with 
$j\mless i\mleq m$. Therefore $t_m\minus d\mleq  (m\minus j\plus 1)\mult \phi(\eta(\ell))$. Since $m,j\mgeq 1$, let us be generous and simply say that $t_m\minus d\mleq  m\mult \phi(\eta(\ell))$.
But notice that $\beta_m$ is a move made by ${\cal K}'_{b_k}$ in the consequent of $F'(d_{k-1})\mli F(d_k)$, immediately (by our simplifying assumptions) copied by $\cal M$ in the real play when it made its move $\alpha$. In other words, $\mathfrak{c}\equals t_m$. And $\mathfrak{c}\minus d\equals \hbar$. So, $\hbar$ does not exceed $m\mult  \phi(\eta(\ell))$.   And, by (\ref{m12a}), the size of $\alpha$ does not exceed $m\mult  \phi(\eta(\ell))$, either. But observe that $k\mleq \ell$, and that $m$ cannot exceed   $k\plus 1$ times the depth (see Section \ref{nncg}) $\mathfrak{d}$ of $F(0)$; therefore,    $m\mleq  \mathfrak{d}\mult (\ell\plus 1)$. Thus, (as long as we pretend that there is no polling/simulation/copycat overhead) neither the timecost nor the size of $\alpha$ exceed $\mathfrak{d}\mult (\ell\plus 1)\mult \phi(\eta(\ell))$. 

An upper bound for the above function $\mathfrak{d}\mult (\ell\plus 1)\mult \phi(\eta(\ell))$, even after ``correcting'' the latter so as to precisely  account for the so far suppressed polling/simulation/copycat overhead, can be written as an explicit polynomial function $\tau$.   The latter expresses the sought   polynomial bound for the time complexity of $\cal M$. 

Thus, we have shown how to construct, from a proof of $X$, an HPM $\cal M$ and an explicit  polynomial function $\tau$ such that $\cal M$ solves $X$ in time $\tau$. Obviously our construction is effective. It remains to see that it also is --- or, at least, can be made --- efficient. 
Of course, at every step of our inductive construction (for each sentence  of the proof, that is),  the solution $\cal M$ of the step and its time complexity bound $\tau$ is obtained efficiently from previously constructed $\cal M$s and $\tau$s. This, however, does not guarantee that the entire construction will be efficient as well. For instance, if the proof has $n$ steps and the size of each HPM $\cal M$ that we construct for each step is twice the size of the previously constructed HPMs, then the size of the eventual HPM will exceed $2^n$ and thus the construction will not be efficient, even if each of the $n$ steps of it is so. 

A trick that we can use to avoid an exponential growth of the sizes of the machines that we construct and thus achieve the efficiency of the entire construction is to deal with GHPMs instead of HPMs. Namely, assume the proof of $X$ is the sequence $X_1,\ldots,X_n$ of sentences, with $X=X_n$. Let ${\cal M}_1,\ldots,{\cal M}_n$ be the HPMs constructed as we constructed ${\cal M}$s earlier at the corresponding  steps of our induction. Remember that each such ${\cal M}_i$ was defined in terms of  ${\cal M}_{j_1},\ldots,{\cal M}_{j_k}$for some $j_1,\ldots,j_k< i$. For simplicity and uniformity, we may just as well say that each ${\cal M}_i$ was defined in terms of all  ${\cal M}_{1},\ldots,{\cal M}_{n}$, with those ${\cal M}_j$s that were not among ${\cal M}_{j_1},\ldots,{\cal M}_{j_k}$ simply ignored in the description of the work of ${\cal M}_i$. 
Now, for each such ${\cal M}_i$, let ${\cal M}'_i$ be the $n$-ary GHPM whose description is obtained from that of ${\cal M}_i$  by replacing  each reference to (any previously constructed) ${\cal M}_j$  by ``${\cal M}'_j(\code{{\cal M}'_1},\ldots,\code{{\cal M}'_n})$ where, for each $e\in\{1,\ldots,n\}$,  ${\cal M}'_e$ is the machine encoded by the $e$th input''.\footnote{For simplicity, here we assume that every number is a code of some $n$-ary GHPM; alternatively, ${\cal M}'_i$ can be defined so that it does nothing if any of its relevant inputs is not the code of some $n$-ary GHPM.} As it is easy to see by induction on $i$, ${\cal M}_i$ and ${\cal M}'_i(\code{{\cal M}'_1},\ldots,\code{{\cal M}'_n})$ are essentially the same, in the sense that our earlier analysis of the play and time complexity of the former applies to the latter just as well. So, ${\cal M}'_n(\code{{\cal M}'_1},\ldots,\code{{\cal M}'_n})$ wins $X_n$, i.e. $X$. At the same time, note that the size of each GHPM ${\cal M}'_i$ is independent of the sizes of the other (previously constructed) GHPMs. Based on this fact, with some analysis, one can see that then the HPM ${\cal M}'_n(\code{{\cal M}'_1},\ldots,\code{{\cal M}'_n})$  is indeed constructed efficiently. 

As for the explicit polynomial bounds $\tau_1,\ldots,\tau_n$ for the time complexities of the $n$ HPMs ${\cal M}'_1(\code{{\cal M}'_1},\ldots,\code{{\cal M}'_n})$, \ldots, ${\cal M}'_n(\code{{\cal M}'_1},\ldots,\code{{\cal M}'_n})$,  their sizes  can be easily seen to be polynomial in the size of the proof. That is because, for each $i\in\{1,\ldots,n\}$, the size of   $\tau_i$  only increases the sizes of the earlier constructed $\tau_j$s by adding (rather than multiplying by) a certain polynomial quantity.\footnote{The fact that we represent complexity bounds as explicit polynomial functions rather than polynomial tree-terms or even graph-terms is relevant here.} Thus, the explicit bound $\tau_n$ for the time  complexity of the eventual HPM    ${\cal M}'_n(\code{{\cal M}_1},\ldots,\code{{\cal M}_n})$ is indeed constructed efficiently. 

\section{The extensional completeness of $\arfour$}\label{sectcompl}\label{s19}
This section is devoted to proving the completeness part of Theorem \ref{tt1}. It means showing that, for any arithmetical problem $A$ that has a polynomial time solution,  there is a theorem of $\arfour$ which, under the standard interpretation, equals (``expresses'') $A$. 

\subsection{$X$, $\cal X$ and $\chi$}
So, let us pick an arbitrary polynomial-time-solvable arithmetical problem $A$. By definition, $A$ is an arithmetical problem because, for some sentence $X$ of the language of $\arfour$, $A=X^\dagger$. For the rest of this section, we fix such a sentence \(X,\label{ix}\)  and fix \({\cal X}\) as an HPM that solves $A$ (and hence $X^\dagger$) in polynomial time. Specifically, we assume that $\cal X$ runs in time 
$\chi$, 
where $\chi$, which we also fix for the rest of this section,  is a single-variable term of the language of $\pa$ --- and hence can as well be seen/written as an explicit polynomial function --- with $\chi(x)\mgeq x$ for all $x$.  For readability, we also agree that, throughout the rest of this section, ``{\bf formula}'' exclusively means a subformula of $X$, in which some variables may  be renamed. 

$X$ may not necessarily be provable in $\arfour$, and our goal is to construct another sentence $\overline{X}$ for which, just like for $X$, we have $A=\overline{X}^\dagger$ and which, perhaps unlike $X$, is  provable in $\arfour$.

Remember our convention about identifying formulas  of the language of $\arfour$ with (the games that are) their standard interpretations. So, in the sequel, just as we have done so far, we shall typically write $E,F,\ldots$ to mean either $E,F,\ldots$ or $E^\dagger,F^\dagger,\ldots$.  
Similar conventions apply to terms as well. In fact, we have already used this convention when saying that $\cal X$ runs in time $\chi$. What was really meant was that it runs in time $\chi^\dagger$.

\subsection{Preliminary insights}\label{gggg}
 Our proof is a little long and, in the process of going through it, it is easy to get lost in the forest and stop seeing it for the trees. Therefore, it might be worthwhile to try to get some preliminary insights into the basic idea behind this proof before venturing into its details.

Let us consider the simplest nontrivial special case  where $X$ is \[\ada x\bigl(Y(x)\add Z(x)\bigr) \] for some elementary formulas $Y(x)$ and $Z(x)$
(perhaps $Z(x)$ is $\gneg Y(x)$, in which case $X$ expresses   an ordinary decision problem --- the problem of deciding  the predicate $Y(x)$). 

The assertion ``$\cal X$ does not win $X$ in time $\chi$'' can be formalized in the language of $\pa$ through as a certain sentence $\mathbb{L}$. Then we let the earlier mentioned  $\overline{X}$ be the sentence 
\[\ada x\Bigl(\bigl(Y(x)\mld \mathbb{L}\bigr)\add\bigl(Z(x)\mld\mathbb{L}\bigr)\Bigr).\]
Since $\cal X$ {\em does} win the game $X$ in time $\chi$, $\mathbb{L}$ is false. Hence $Y(x)\mld \mathbb{L}$ is equivalent to $Y(x)$, and  $Z(x)\mld \mathbb{L}$ is equivalent to $Z(x)$. This means that $\overline{X}$ and $X$, as games, are the same, that is,  $\overline{X}^\dagger=X^\dagger$. It now remains to understand why $\arfour\vdash \overline{X}$.  

A central lemma here is one establishing that the work of $\cal X$ is ``{\em provably traceable}''.\label{iprovtr1} Roughly, in our present case this means the provability of the fact that, for any ($\ada$) value chosen  for $x$ by Environment --- let us continue referring to that value as $x$ --- we can tell ($\ade$)  the configuration of $\cal X$ in the corresponding play of $\overline{X}$ at any given time $t$. Letting $\cal X$ work for $\chi(x)$ steps,  one of the following four eventual scenarios should take place, and the provable traceability of the work of $\cal X$ can be shown to imply that $\arfour$ proves the $\add$-disjunction of sentences describing those scenarios:

\begin{description}
\item[Scenario 1:] $\cal X$ makes the move $0$ (and no other moves).
\item[Scenario 2:] $\cal X$ makes the move $1$ (and no other moves).
\item[Scenario 3:] $\cal X$ does not make any moves.  
\item[Scenario 4:] $\cal X$ makes an illegal move (perhaps after   first making a legal move $0$ or $1$). 
\end{description}

In the case of Scenario 1, the play over $\overline{X}$ hits $Y(x)\mld \mathbb{L}$. And $\arfour$ --- in fact, $\pa$ --- proves that, in this case, $Y(x)\mld \mathbb{L}$ is true. The truth of $Y(x)\mld \mathbb{L}$ is indeed very easily established: if it was false, then $Y(x)$ should be false, but then the play of $\cal X$ over $X$ hits the false $Y(x)$ and hence is lost, but then $\mathbb{L}$ is true, but then $Y(x)\mld \mathbb{L}$ is true. Thus, 
\(\arfour\vdash (\mbox{\em Scenario 1})\mli Y(x)\mld \mathbb{L}$, from which, by LC, $\arfour\vdash (\mbox{\em Scenario 1})\mli \overline{X}$.  
The case of Scenario 2 is symmetric.

In the case of Scenario  3, ($\arfour$ proves that) $\cal X$ loses, i.e.   $\mathbb{L}$ is true, and hence, say, $Y(x)\mld \mathbb{L}$ (or $Z(x)\mld \mathbb{L}$ if you like) is true. That is, $\arfour\vdash   (\mbox{\em Scenario 3})\mli Y(x)\mld \mathbb{L}$, from which, by LC, $\arfour\vdash (\mbox{\em Scenario 3})\mli\overline{X}$.  
The case of Scenario 4 is similar. 

Thus, for each $i\in\{1,2,3,4\}$, $\arfour\vdash (\mbox{\em Scenario i})\mli \overline{X}$.  
And, as mentioned,  we also have  
\[\arfour\vdash (\mbox{\em Scenario 1})\add (\mbox{\em Scenario 2})\add (\mbox{\em Scenario 3})\add (\mbox{\em Scenario 4}).\]
The desired $\arfour\vdash \overline{X}$ follows from the above provabilities by LC.

The above was about the pathologically simple case of $X=\ada x\bigl(Y(x)\add Z(x)\bigr)$, and the general case will be much more complex, of course. Among other things,  showing the provability of $\overline{X}$ would require a certain metainduction on its complexity, which we did not need in the present case. But the   idea that we have just tried to explain would still remain valid and central, only requiring certain --- nontrivial yet doable ---  adjustments and refinements.

\subsection{The sentence $\mathbb{L}$}

By a {\bf literal} we mean $\twg$, $\tlg$, or an atomic formula  with or without negation $\gneg$. By a {\bf politeral}\label{ipoliteral} of a formula we mean a positive (not in the scope of $\gneg$) occurrence of a literal in it. For instance, the occurrence of $p$, as well as of $\gneg q$ (but not $q$), is a politeral of $p\mlc\gneg q$.
While a politeral is not merely a literal but a literal $L$  {\em together} with a fixed occurrence, we shall often refer to it just by the name $L$ of the literal, assuming that it is clear from the context which (positive) occurrence of $L$ is meant.

We assume that the reader is sufficiently familiar with G\"{o}del's technique of encoding and arithmetizing. Using that technique, 
we can construct a sentence   \(\mathbb{L}\label{illl}\) of the language of $\pa$     which asserts ``$\cal X$ does not win $X$ in time $\chi$''.  
Namely, let $E_1(\vec{x}),\ldots,E_n(\vec{x})$ be  all   subformulas of $X$, where all free variables of each $E_i(\vec{x}) $ are among $\vec{x}$ (but not necessarily vice versa). 
Then  
$\mathbb{L}$   is the $\mld$-disjunction of  natural formalizations of the following statements: 
\begin{quote} {\em 
\begin{enumerate}
\item There is a $\pp$-illegal position of $X$ spelled on the run tape of $\cal X$ on some clock  cycle of some computation branch of $\cal X$. 
\item There is a clock cycle $c$ in some computation branch of $\cal X$ on which $\cal X$ makes a move whose timecost exceeds   $\chi(\ell)$, where $\ell$ is the background of $c$. 
\item There is a (finite) legal run $\Gamma$ of $X$ generated by $\cal X$ and a tuple $\vec{c}$ of constants ($\vec{c}$ of the same length as $\vec{x}$) such that:
\begin{itemize}
\item  $\seq{\Gamma}X=E_1(\vec{c})$, and we have $\gneg  \elz{E_1(\vec{c})} $ (i.e., $ \elz{E_1(\vec{c})} $ is false),  
\item  or $\ldots$, or 
\item $\seq{\Gamma}X=E_n(\vec{c})$, and we have $\gneg  \elz{E_n(\vec{c})} $ (i.e., $ \elz{E_n(\vec{c})} $ is false). 
\end{itemize}
\end{enumerate} } 
\end{quote}

   \subsection{The overline notation}
As we remember, our goal is to construct a formula $\overline{X}$ which expresses the same problem as $X$ does and which is provable in $\arfour$. For any   formula $E$ --- including $X$ --- we let 
\(\overline{E}\label{ipver}\)
be the result of replacing in $E$ every politeral $L$ by $L\mld\mathbb{L}$.

\begin{lemma}\label{august12}
Any  literal  $L $ is equivalent (in the standard model of arithmetic) to  $L\mld\mathbb{L}$. 
\end{lemma}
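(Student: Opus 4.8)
The plan is to prove this by a straightforward case analysis on the form of the literal $L$, exploiting the single crucial fact that $\mathbb{L}$ is \emph{false} in the standard model of arithmetic. The falsity of $\mathbb{L}$ is immediate from the setup: $\mathbb{L}$ asserts ``$\cal X$ does not win $X$ in time $\chi$'', but $\cal X$ was fixed precisely as an HPM that solves $X^\dagger$ in time $\chi$. So in the standard model, $\mathbb{L}$ evaluates to $\tlg$.

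Once $\mathbb{L}$ is known to be false, the argument is essentially a tautology of classical propositional logic, lifted to the standard model. First I would note that $L\mld\mathbb{L}$, as an elementary formula interpreted in the standard model, is true at a valuation $e$ iff $L$ is true at $e$ or $\mathbb{L}$ is true; since $\mathbb{L}$ is false (at every valuation, being a sentence), this holds iff $L$ is true at $e$. Hence $L$ and $L\mld\mathbb{L}$ have the same truth value at every valuation, which is exactly what ``equivalent in the standard model'' means. It is worth recording that this covers all the cases allowed by the definition of ``literal'': $L$ being $\twg$, $\tlg$, or an atomic formula $\tau_1\equals\tau_2$ possibly prefixed with $\gneg$ — in each case $L$ is elementary, so $L\mld\mathbb{L}$ is elementary too and the $\mld$ is genuine classical disjunction, making the reasoning uniform.

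There is no real obstacle here; the only thing to be careful about is the \emph{logical status} of the claim. The lemma speaks of equivalence ``in the standard model of arithmetic'', i.e., semantic equivalence of elementary formulas under $^\dagger$, not provability in $\arfour$ or $\pa$. So I would not need (and should not attempt) to formalize the falsity of $\mathbb{L}$ inside $\pa$ — that would be false anyway by reflection-type considerations, and is not what is being asserted. The proof is simply: (i) $\mathbb{L}^\dagger$ is the false proposition because $\cal X$ genuinely wins $X^\dagger$ in time $\chi^\dagger$ (this is the definition of $\cal X$); (ii) therefore, for any valuation $e$, $\win{L\mld\mathbb{L}}{e}\emptyrun=\pp$ iff $\win{L}{e}\emptyrun=\pp$ or $\win{\mathbb{L}}{}\emptyrun=\pp$, iff $\win{L}{e}\emptyrun=\pp$; (iii) hence $L$ and $L\mld\mathbb{L}$ denote the same elementary game, as claimed.

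The single step I expect to require the most care — though it is still routine — is spelling out (i) at the right level of precision: unwinding that $\mathbb{L}$ is literally the $\mld$-disjunction of arithmetizations of statements (1)–(3), and observing that each of those three statements is false under the hypothesis ${\cal X}\models X^\dagger$ in time $\chi^\dagger$ (no $\pp$-illegal position arises, no move exceeds the timecost bound $\chi(\ell)$, and no generated legal run ends at a false elementarization). Since a disjunction of false statements is false, $\mathbb{L}$ is false. Everything else is immediate.
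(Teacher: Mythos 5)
Your proposal is correct and follows essentially the same route as the paper: the paper's proof likewise notes that $L$ trivially implies $L\mld\mathbb{L}$ and that the disjunct $\mathbb{L}$ cannot be true because $\cal X$ does win $X$ in time $\chi$, so $L$ must be the true disjunct. Your additional unwinding of why each disjunct of $\mathbb{L}$ is false, and the remark that the equivalence is semantic rather than provable in $\pa$, are harmless elaborations of the same argument.
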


\begin{proof} That $L$ implies $L\mld\mathbb{L}$ is immediate, as the former is a disjunct of the latter. For the opposite direction, suppose     $L\mld\mathbb{L}$ is true at a given valuation $e$. Its second disjunct cannot be true, because $\cal X$ {\em does} win $X$ in time $\chi$, contrary to what $\mathbb{L}$ asserts.  So, the first disjunct, i.e. $L$, is true.\vspace{-7pt} 
\end{proof}

\begin{lemma}\label{august12a}
For any  formula $E$, including $X$, we have  $E^\dagger=\overline{E}^\dagger$. 
\end{lemma}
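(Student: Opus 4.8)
The plan is to prove Lemma~\ref{august12a} by a straightforward structural induction on the formula $E$, using Lemma~\ref{august12} at the base case. Recall that $\overline{E}$ is obtained from $E$ by replacing every politeral $L$ (i.e., every positive occurrence of a literal) with $L\mld\mathbb{L}$. Since the standard interpretation $^\dagger$ commutes with all logical operators, and since the only difference between $E$ and $\overline{E}$ is at the politeral level, it suffices to track how the game-equality propagates up through the connectives and quantifiers.

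First I would handle the base case. If $E$ is a literal $L$ (i.e., $\twg$, $\tlg$, or an atomic formula with or without $\gneg$), then $\overline{E}= L\mld\mathbb{L}$. By Lemma~\ref{august12}, $L$ and $L\mld\mathbb{L}$ are equivalent in the standard model, i.e., they are true at exactly the same valuations; since both are elementary formulas, this means $L^\dagger$ and $(L\mld\mathbb{L})^\dagger$ are the same game (equal elementary games, which are determined by their truth value at each valuation). Hence $E^\dagger=\overline{E}^\dagger$. Note that we must also cover the subcase where $E$ is an atom occurring inside a negation higher up; but since $\gneg$ is only officially applied to atoms and the notion of politeral already accounts for positive occurrences, the recursion on subformulas built from politerals via $\gneg,\mlc,\mld,\adc,\add,\ada,\ade,\cla,\cle$ is well-founded, with $\gneg L$ itself being a literal and hence a base case.

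For the inductive step, suppose $E$ is a compound formula, say $E = E_1 \mlc E_2$ (and symmetrically for $\mld,\adc,\add$). Observe that $\overline{E}$ is obtained by doing the overline replacement inside each $E_i$, so $\overline{E} = \overline{E_1}\mlc\overline{E_2}$. By the induction hypothesis, $E_1^\dagger=\overline{E_1}^\dagger$ and $E_2^\dagger=\overline{E_2}^\dagger$. Since $^\dagger$ commutes with $\mlc$ (and with all the other operators), we get $E^\dagger = E_1^\dagger\mlc E_2^\dagger = \overline{E_1}^\dagger\mlc\overline{E_2}^\dagger = \overline{E}^\dagger$. The cases of $\gneg E_1$ and of the quantifiers $\ada x E_1$, $\ade x E_1$, $\cla x E_1$, $\cle x E_1$ are handled identically: $\overline{\cdot}$ distributes into the subformula, $^\dagger$ commutes with the operator, and the induction hypothesis closes the gap. (One minor point worth stating: the overline operation does not touch bound variables or introduce variable collisions, since $\mathbb{L}$ is a fixed sentence with no free variables, so substitution and the replacement commute cleanly.)

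I do not expect any real obstacle here — this lemma is essentially bookkeeping, and its whole content is already localized in Lemma~\ref{august12}. The only thing to be careful about is making the structural induction genuinely exhaustive: every $\arfour$-formula, hence every subformula of $X$, is built from literals (politerals) by the operators listed above, so the induction covers all cases. The statement ``including $X$'' in the lemma is not a separate case — it is just emphasis that the general claim applies to the particular formula $X$ we care about, which then immediately gives $X^\dagger=\overline{X}^\dagger$, i.e., that $\overline{X}$ expresses the same arithmetical problem $A$ as $X$.
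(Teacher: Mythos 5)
Your proof is correct and follows essentially the same route as the paper, which disposes of this lemma with "immediately from Lemma \ref{august12} by induction on the complexity of $E$" --- your write-up just spells out the bookkeeping (base case via equality of elementary games, inductive step via commutation of $\overline{\phantom{E}}$ and $^\dagger$ with the connectives and quantifiers). No gap to report.
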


\begin{proof} Immediately from Lemma \ref{august12} by induction on the complexity of $E$.
\end{proof}

In view of the above lemma, what now remains to do for the completion of our completeness proof is to show that $\arfour\vdash\overline{X}$. The rest of the present section is entirely devoted to this task.

\begin{lemma}\label{jan4d}
For any    formula $E$, $\arfour\vdash \mathbb{L} \mli \cla \overline{E}$. 
\end{lemma}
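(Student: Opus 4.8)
The claim is that $\arfour\vdash\mathbb{L}\mli\cla\overline{E}$ for every formula $E$ (recall that, in the $\arfour$ context, this means $\arfour\vdash\ada(\mathbb{L}\mli\cla\overline{E})$, but since $\mathbb{L}$ is a sentence and the succedent is a $\cla$-closure, the real content is a single closed sentence). The plan is to argue by induction on the complexity of $E$, exploiting the fact that $\overline{E}$ is built from $\overline{}$-decorated politerals $L\mld\mathbb{L}$. The base case is when $E$ is a literal $L$: then $\overline{E}=L\mld\mathbb{L}$, and $\mathbb{L}\mli\cla(L\mld\mathbb{L})$ is an elementary first-order validity (from $\mathbb{L}$ one infers $L\mld\mathbb{L}$ for every instantiation of the free variables), hence provable in $\pa$ and therefore, by Fact \ref{nov7} and LC, in $\arfour$. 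For the inductive step I would go through the clauses of Definition \ref{op}. If $E$ is $E_1\mlc\cdots\mlc E_k$ (or $\mld$, $\adc$, $\add$), then $\overline{E}$ is the same operator applied to $\overline{E_1},\ldots,\overline{E_k}$; by the induction hypothesis $\arfour\vdash\mathbb{L}\mli\cla\overline{E_i}$ for each $i$, and then the sequent $\mathbb{L}\mli\cla\overline{E_1},\,\ldots,\,\mathbb{L}\mli\cla\overline{E_k}\intimpl\mathbb{L}\mli\cla\overline{E}$ is $\cltw$-provable — intuitively, given the single reusable resource $\mathbb{L}$ (which, once it holds, lets us win any $\overline{E_i}$ by the IH), we can win $\cla\overline{E}$: e.g. for $\add$/$\adc$ just pick the first disjunct/run copycat, for $\mld$/$\mlc$ play all components in parallel. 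The quantifier cases $E=\ada xE_1(x)$, $E=\ade xE_1(x)$, $E=\cla xE_1(x)$, $E=\cle xE_1(x)$ are handled the same way: $\overline{E}$ is the quantifier applied to $\overline{E_1}$, and from $\mathbb{L}\mli\cla\overline{E_1(x)}$ one derives $\mathbb{L}\mli\cla\overline{\ada xE_1(x)}$ etc. by Wait / $\ade$-Choose / Replicate, all inside $\cltw$, hence by LC inside $\arfour$.

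The cleanest way to package this is probably to observe first that $\overline{E}$ never contains a positive occurrence of $\twg$ or a negative occurrence of $\tlg$ that is not "guarded" by $\mld\mathbb{L}$, and that every politeral of $\overline{E}$ is literally of the form $L\mld\mathbb{L}$. Then the key observation is purely about $\cltw$: for any formula $F$ all of whose politerals have the shape $L\mld\mathbb{L}$ for a fixed sentence $\mathbb{L}$, the sequent $\mathbb{L}\intimpl\cla F$ is $\cltw$-provable. This can be shown by a straightforward induction following the structure of $F$, using $\add$-Choose / $\adc$-Condition / $\ade$-Choose / $\ada$-Condition to peel off choice operators and Wait for the remaining (classical) skeleton — the Stability Condition at the Wait steps reduces to the classical validity of $\mathbb{L}\mli\elz{\cla F}$, which holds because each surviving politeral $L\mld\mathbb{L}$ is classically implied by $\mathbb{L}$. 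Given that $\cltw$ fact, Lemma \ref{jan4d} follows immediately by LC, since $\overline{E}$ is exactly such an $F$.

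I expect the only mildly delicate point to be bookkeeping around the surface-occurrence notation $F[E]$ and the requirement in the choice rules that the replaced subformula be a \emph{surface} occurrence: when I peel off a choice operator from $\overline{E}$ I must make sure the $\mathbb{L}$-guarded politerals I am relying on for the Wait step are indeed at the surface, which they are, because $\overline{}$ only inserts $\mld\mathbb{L}$ around politerals (positive literal occurrences) and does not introduce new choice operators. A second thing to keep track of is the free-variable discipline: $E$ may have free variables, $\cla\overline{E}$ binds them with blind quantifiers, and $\mathbb{L}$ is a sentence, so there is no collision; the $\cla$-Condition of Wait supplies fresh witnesses $y$ as needed, matching the (implicit) convention that free and bound variables are disjoint. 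With those caveats, the induction is routine, and the whole lemma is essentially a corollary of the soundness-irrelevant, purely syntactic fact about $\cltw$ stated above together with Fact \ref{nov7} and the rule LC.
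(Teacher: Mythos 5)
Your proposal is correct and follows essentially the same route as the paper: induction on the complexity of $E$, with the base case where $E$ is a literal (so that $\mathbb{L}$ is a disjunct of $\overline{E}$) and each inductive step discharged by LC from the induction hypotheses for the immediate subformulas. Your alternative packaging --- a single $\cltw$-provable sequent $\mathbb{L}\intimpl\cla\overline{E}$ obtained by peeling off choice operators and closing with Wait --- is just the same induction pushed inside $\cltw$ and is fine, provided you apply Wait only after all surface $\add$/$\ade$ occurrences have been removed by Choose (otherwise the elementarization contains $\tlg$ and the Stability Condition would fail).
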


\begin{proof} Induction on the complexity of $E$. The base, which is about the cases where $E$ is a literal, is straightforward, as then $\mathbb{L}$ is a disjunct of $\overline{E}$.
If $E$ has the form $H_0\mlc H_1$, $H_0\mld H_1$, $H_0\adc H_1$ or $H_0\add H_1$ then, by the induction hypothesis, $\arfour$ proves $\mathbb{L} \mli  \cla \overline{H_0}$ and $\mathbb{L} \mli  \cla \overline{H_1}$, from which $\mathbb{L} \mli  \cla \overline{E}$ follows by LC.   Similarly, if $E$ has the form 
$\cla xH(x)$, $\cle x H(x)$, $\ada xH(x)$ or $\ade xH(x)$, then, by the induction hypothesis,  $\arfour$ proves $\mathbb{L} \mli  \cla \overline{H(x)}$, from which  
$\mathbb{L} \mli  \cla \overline{E}$ follows by LC.\vspace{-7pt}
\end{proof}

\subsection{The single-circle and double-circle notations}\label{sds}
The way we encode configurations through natural numbers will be precisely described later in Section \ref{sA1}. For now, it would be sufficient to say that 
the size of the code of a configuration is always greater than the background (see Section \ref{s7}) of the corresponding clock cycle in the corresponding play. For readability, we will often identify configurations with their codes and say something like ``$a$ is a configuration'' when what is precisely meant is ``$a$ is the code of a configuration''.  

By a {\bf legitimate configuration} we shall mean a configuration of $\cal X$ that might have occurred in some computation branch $B$ of $\cal X$ such that the run spelled by $B$ is a legal run of $X$. The {\bf yield} of such a configuration is the game $\seq{\Phi}X$, where  $\Phi$ is the position spelled on the run tape in that configuration. 

By a {\bf deterministic successor} of a legitimate configuration $x$ we mean the configuration $y$ such that $y$ immediately follows $x$ (in one transition) in the scenario where Environment does not move during the cycle described by $x$. For $n\geq 0$, the {\bf $n$th deterministic successor} of $x$ is defined inductively by stipulating that the $0$th deterministic successor of $x$ is $x$, and the $(n\plus 1)$th deterministic successor of $x$ is the deterministic successor of the $n$th deterministic successor of $x$. 

Let $E(\vec{s})$ be a   formula all of whose  free variables are among  $\vec{s}$ (but not necessarily vice versa), and $z$ be a variable not among $\vec{s}$. We 
 will write   $E^\circ(z,\vec{s})$ 
to denote an elementary formula whose free variables are $z$ and those of $E(\vec{s})$,  and which is a natural arithmetization of the predicate that,  for any constants $a,\vec{c}$ in the roles of $z,\vec{s}$, holds (that is, $E^\circ(a,\vec{c})$ is true) iff $a$ is a legitimate configuration and its yield is $E(\vec{c})$.
Further,  we 
 will write   $E^{\circ}_{\circ}(z,\vec{s})$ 
to denote an elementary formula whose free variables are $z$ and those of $E(\vec{s})$, and which  is a natural arithmetization of the predicate that,  for any constants $a,\vec{c}$  in the roles of $z,\vec{s}$, holds iff $E^\circ(a,\vec{c})\mlc E^\circ(b,\vec{c})$ is true, where $b$ is the $\chi(|a|)$th deterministic successor of $a$.

We say that a formula $E$  is {\bf critical}\label{icritical} iff one of the following conditions is satisfied:\vspace{-3pt} 
\begin{itemize}
\item $E$ is of the form $G_0\add G_1$ or $\ade y G$;\vspace{-5pt}
\item $E$ is of the form $\cla y G$ or $\cle y G$, and $G$ is critical;\vspace{-5pt}
\item $E$ is of the form  $G_0\mld G_1$, and both $G_0$ and $G_1$ are critical;\vspace{-5pt}
\item $E$ is of the form  $G_0\mlc G_1$, and at least one of $G_0,G_1$ is critical.
\end{itemize}

\begin{lemma}\label{august20b}
Assume $E(\vec{s})$ is a non-critical  formula all of whose  free variables are among  $\vec{s}$. 
Then
\[\pa\vdash \cla\bigl(E^{\circ}_{\circ}(z,\vec{s})  \mli \elz{\overline{E(\vec{s})}}\bigr).\] 
\end{lemma}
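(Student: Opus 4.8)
The plan is to reason entirely inside $\pa$ and to split on the closed sentence $\mathbb{L}$. Since $\pa$ trivially proves $\mathbb{L}\mld\gneg\mathbb{L}$, it suffices to establish two implications: (a) $\pa\vdash\mathbb{L}\mli\cla\,\elz{\overline{E(\vec{s})}}$, and (b) $\pa\vdash\gneg\mathbb{L}\mlc E^{\circ}_{\circ}(z,\vec{s})\mli\elz{\overline{E(\vec{s})}}$. The lemma follows by combining these with the excluded middle on $\mathbb{L}$. Fact (a) is exactly where the non-criticality hypothesis is used, while (b) is a formalized instance of the ``provable traceability'' of ${\cal X}$ described informally in Subsection \ref{gggg}.

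For (a) I would use induction on the complexity of the non-critical formula $E$. If $E$ is a literal $L$, then $\elz{\overline{E}}$ is $L\mld\mathbb{L}$ and the implication is immediate. If $E$ is $G_0\adc G_1$ or $\ada y\,G$, then $\elz{\overline{E}}$ is $\twg$ and there is nothing to prove. In each remaining case $E$ is $G_0\mlc G_1$, $G_0\mld G_1$, $\cla y\,G$ or $\cle y\,G$; here $\elz{\overline{\,\cdot\,}}$ commutes with the relevant connective, and $E$ being non-critical forces the corresponding components to be non-critical too --- both conjuncts of a $\mlc$, at least one disjunct of a $\mld$, and the matrix of a blind quantifier --- so that the induction hypothesis applies and yields the claim. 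This is precisely the role of the hypothesis: a critical $\add$- or $\ade$-formula would make $\elz{\overline{E}}$ reduce to $\tlg$, and then $\mathbb{L}\mli\elz{\overline{E}}$ would be false whenever $\mathbb{L}$ is true.

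For (b) I would argue in $\pa$ under the assumptions $\gneg\mathbb{L}$ and $E^{\circ}_{\circ}(z,\vec{s})$, writing $\vec{c}$ for the values of $\vec{s}$. First, ${\cal X}$ makes no move during the deterministic (environment-idle) computation leading from $z$ to its $\chi(|z|)$th deterministic successor $b$: a legal move strictly decreases the depth of the current game, so after it the yield could never return to $E(\vec{c})$ at $b$, while an illegal move would prevent $b$ from being legitimate --- either way $E^{\circ}(b,\vec{c})$ would fail. Second, any move made by ${\cal X}$ in the deterministic continuation from $z$ must occur within $\chi(|z|)$ steps of $z$: a later move would, since no move is made in the meantime by either player, have timecost exceeding $\chi(|z|)$, hence exceeding $\chi$ of its background --- which, by the encoding of configurations (Subsection \ref{sds}), is below $|z|$, and $\chi$ may be assumed strictly increasing --- so clause~2 in the definition of $\mathbb{L}$ would hold, contradicting $\gneg\mathbb{L}$. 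Combining the two observations, ${\cal X}$ never moves in the deterministic continuation from $z$; therefore the run it generates in that branch is exactly the position $\Phi$ on its run tape at $z$, which is a finite legal run of $X$ with $\seq{\Phi}X=E(\vec{c})$. Were $\elz{E(\vec{c})}$ false, clause~3 in the definition of $\mathbb{L}$ would then apply --- recall that $E$ is, up to renaming of variables, a subformula of $X$ --- again contradicting $\gneg\mathbb{L}$. Hence $\elz{E(\vec{c})}$ is true; and, under $\gneg\mathbb{L}$, every surface politeral $L$ of $E$ satisfies $L\leftrightarrow L\mld\mathbb{L}$, so $\elz{\overline{E(\vec{c})}}$ is equivalent to $\elz{E(\vec{c})}$ and is true as well. (Here I rely on the standard fact that $\win{E(\vec{c})}{}\emptyrun=\pp$ iff $\elz{E(\vec{c})}$ is true.)

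The main obstacle is making step (b) fully rigorous within $\pa$: one must verify that the computation branch of ${\cal X}$ determined by $z$ together with the ``no environment moves'' stipulation is definable in the language of $\pa$, that the notion of ``$n$th deterministic successor'' behaves as intended, and that clauses 1--3 of $\mathbb{L}$ faithfully capture all the ways this branch could falsify ``${\cal X}$ wins $X$ in time $\chi$''. The supporting facts --- each legal move strictly decreases depth, the code of a configuration exceeds the corresponding background, and (without loss of generality) strict monotonicity of $\chi$ --- are routine, but all are needed; the precise encoding of configurations deferred to Section \ref{sA1} is what ultimately underwrites these verifications.
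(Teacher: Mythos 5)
Your proposal is correct and is essentially the paper's own proof, just reorganized: instead of the paper's reductio (assume $E^{\circ}_{\circ}(z,\vec{s})$ true and $\elz{\overline{E(\vec{s})}}$ false, trace the environment-idle run of ${\cal X}$ to conclude $\mathbb{L}$, then refute this via the induction over non-critical formulas), you case-split on $\mathbb{L}\mld\gneg\mathbb{L}$, with your part (a) being exactly that induction and your part (b) the same traceability argument read contrapositively through clauses 2 and 3 of $\mathbb{L}$. The extra details you spell out (depth decrease under legal moves, background below $|z|$, monotonicity of $\chi$) only make explicit what the paper leaves informal.
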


\begin{proof} Assume the conditions of the lemma. Argue in $\pa$. Consider arbitrary $(\cla$) values of $z$ and $\vec{s}$, which we continue writing as $z$ and $\vec{s}$. Suppose, for a contradiction, that $E^{\circ}_{\circ}(z,\vec{s})$ is true but $\elz{\overline{E(\vec{s})}}$ is false.   The falsity of $\elz{\overline{E(\vec{s})}}$ implies the falsity of $\elz{E(\vec{s})}$. This is so because the only difference between the two formulas is that, wherever the latter has some politeral $L$, the former has a $\mld$-disjunction containing $L$ as a disjunct.  

The truth of $E^{\circ}_{\circ}(z,\vec{s})$ implies that $\cal X$ reaches the configuration (computation step) $z$ and, in the scenario where Environment does not move, $\cal X$ does not move either for at least $\chi(|z|)$ steps afterwards. If $\cal X$ does not move even after $\chi(|z|)$ steps, then it has lost the game, because the eventual position hit by the latter is $E(\vec{s})$ and the elementarization of the latter is false (it is not hard to see that every such game is indeed lost). And if $\cal X$ does make a move sometime after $\chi(|z|)$ steps, then it violates its time complexity bound $\chi$, because the background of that move is smaller than  $|z|$ but the timecost is at least $\chi(|z|)$. Thus, in either case, $\cal X$ does not win $X$ in time $\chi$, that is,  
\begin{equation}\label{jan4b}
\mbox{\em $\mathbb{L}$ is true.}
\end{equation}

Consider any non-critical   formula $G$.  By induction on the complexity of $G$, we are going to show that $\elz{\overline {G}}$ is true for any ($\cla$) values of its free variables. Indeed:

If $G$ is a literal, then  $\elz{\overline {G}}$ is $G\mld \mathbb{L}$ which, by (\ref{jan4b}), is true. 

If $G$ is $H_0\adc H_1$ or $\ada xH(x)$, then $\elz{\overline {G}}$ is $\twg$ and is thus true.

$G$ cannot be $H_0\add H_1$ or $\ade xH(x)$, because then it would be critical. 

If $G$ is $\cla yH(y)$ or $\cle y H(y)$, then $\elz{\overline {G}}$ is $\cla y\elz{\overline{H(y)}}$ or $\cle y\elz{\overline{H(y)}}$. In either case $\elz{\overline{G}}$ is true because, by the induction hypothesis, $\elz{\overline{H(y)}}$ is true for every value of its free variables, including variable $y$.

 If $G$ is $H_0\mlc H_1$, then both  $H_0$ and $H_1$ are non-critical. Hence, by the induction hypothesis, both $\elz{\overline{H_0}}$ and 
$\elz{\overline{H_1}}$ are  true. Hence so is  $\elz{\overline{H_0}}\mlc\elz{\overline{H_1}}$ which, in turn, is nothing but $\elz{\overline{G}}$. 

Finally, if  $G$ is $H_0\mld H_1$, then one of the formulas $H_i$ is non-critical. Hence, by the induction hypothesis, $\elz{\overline{H_i}}$ is  true. Hence so is  $\elz{\overline{H_0}}\mld\elz{\overline{H_1}}$ which, in turn, is nothing but $\elz{\overline{G}}$.

Thus, for any non-critical formula $G$, $\elz{\overline {G}}$ is true. This includes the case $G= E(\vec{s}) $  which, however,  contradicts our assumption that $\elz{E(\vec{s})}$ is false.\vspace{-7pt}  
\end{proof}

\begin{lemma}\label{august20a}
Assume $E(\vec{s})$ is a  critical   formula all of whose  free variables are among  $\vec{s}$. Then
\begin{equation}\label{m13a}
\arfour\vdash   \cle E^{\circ}_{\circ} (z,\vec{s})  \mli \cla\overline{E(\vec{s})}.
\end{equation} 
\end{lemma}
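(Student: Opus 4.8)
The plan is to argue by induction on the complexity of the critical formula $E(\vec{s})$, mirroring the four clauses in the definition of ``critical''. Throughout, I would work inside $\arfour$ and freely use Lemma \ref{jan4d} (which gives $\arfour\vdash \mathbb{L}\mli\cla\overline{E}$ for any $E$) and Lemma \ref{august20b} (which handles non-critical subformulas via $\pa$), together with the provable-traceability machinery that must be established just before this lemma --- the fact, alluded to in Section \ref{gggg}, that $\arfour$ proves that from a legitimate configuration $z$ we can effectively (and with polynomially bounded size) compute its $\chi(|z|)$th deterministic successor, read off the position on its run tape, and hence determine the yield. Concretely, I expect to need an auxiliary ``tracing'' fact of the shape $\arfour\vdash E^{\circ}_{\circ}(z,\vec{s})\mli \bigl(\text{disjunction over the immediate ways the play can evolve}\bigr)$, proved by $\arfour$-Induction on the number of simulated steps $\chi(|z|)$; this is the analogue of ``$\arfour\vdash (\text{Scenario 1})\add\cdots\add(\text{Scenario 4})$'' from the overview, and it is what lets the strategy for $\cla\overline{E(\vec{s})}$ inspect what $\cal X$ does.

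For the base cases of the induction --- $E$ of the form $G_0\add G_1$ or $\ade yG$ --- I would proceed exactly as in the $\ada x(Y(x)\add Z(x))$ sketch of Section \ref{gggg}, but now ``one level up'': given $\cle z E^{\circ}_{\circ}(z,\vec{s})$, use the tracing fact to determine, in time polynomial in $|z|$, whether the $\chi(|z|)$-step simulation of $\cal X$ from configuration $z$ has (i) made the move $0$, reaching a configuration whose yield is $G_0$ (resp. $H(\mathfrak t)$ for some $\mathfrak t$), (ii) made the move $1$, reaching a configuration with yield $G_1$, (iii) made no move, or (iv) made an illegal move. In cases (iii) and (iv), $\mathbb{L}$ is true (by the very definition of $\mathbb{L}$ and the timecost/background bookkeeping), so $\cla\overline{E(\vec{s})}$ follows by Lemma \ref{jan4d}; I would make a $\add$-choice arbitrarily, say the left one, and then win because the chosen disjunct contains $\mathbb{L}$. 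In cases (i)/(ii), the reached configuration is again legitimate with a smaller-complexity yield $G_i$, so either $G_i$ is non-critical and Lemma \ref{august20b} gives truth of $\elz{\overline{G_i}}$ directly (recovering $\overline{G_i}$ since $\overline{G_i}$'s elementarization being true suffices once we also know $\overline{G_i}$'s choice-subformulas are handled), or $G_i$ is critical and the induction hypothesis applies to it with a new existential witness for its configuration; in $\ade$ case one additionally extracts the witnessing constant $\mathfrak t$ (with a polynomial size bound, which is where one uses that $\cal X$ runs in time $\chi$ and hence its moves have size $\le\chi(|z|)$) and makes the corresponding $\ade$-move.

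For the inductive cases: if $E$ is $\cla yG$ or $\cle yG$ with $G$ critical, the position/yield predicate is insensitive to $y$ in the appropriate sense, so $\cle z E^{\circ}_{\circ}(z,\vec{s})$ reduces to the same statement for $G$ with $y$ adjoined to $\vec{s}$, and one applies the induction hypothesis after the (blind) quantifier move; the blind quantifiers are harmless because our strategy never needs the value of $y$, only the configuration $z$ and the resources Lemma \ref{jan4d}/\ref{august20b}. If $E$ is $G_0\mlc G_1$ with, say, $G_0$ critical, then from a configuration with yield $G_0\mlc G_1$ one can compute (again via tracing) a configuration with yield $G_0$, apply the induction hypothesis to $G_0$ to win $\cla\overline{G_0}$, and handle the $\cla\overline{G_1}$ component either by Lemma \ref{august20b} (if $G_1$ non-critical) or by the induction hypothesis (if $G_1$ critical too, with its own configuration witness); then assemble $\cla\overline{G_0\mlc G_1}$ by LC. If $E$ is $G_0\mld G_1$ with both $G_i$ critical, one plays the two parallel disjuncts against the same underlying simulation, winning whichever $\cla\overline{G_i}$ the simulation steers into, again using the induction hypothesis; and if along the way $\mathbb{L}$ turns out true, Lemma \ref{jan4d} closes the case.

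The main obstacle I anticipate is the tracing fact itself --- rigorously establishing, by $\arfour$-Induction (with a genuinely polynomially bounded induction formula, using the size bounds $|z|$ and $\chi(|z|)$), that $\arfour$ proves one can follow $\cal X$'s configurations for $\chi(|z|)$ deterministic steps and correctly read off the resulting yield and any move made, all while the intermediate objects stay within the size bounds demanded by $\arfour$-Induction. Everything downstream is a bookkeeping induction on formula complexity gluing together Lemmas \ref{jan4d}, \ref{august20b}, and the induction hypothesis via LC; the delicate points there are merely (a) choosing the right polynomially bounded size bounds for the $\ade$-witnesses extracted from $\cal X$'s moves, and (b) making sure the $\mld$/$\mlc$ cases correctly route the shared configuration witness to the critical subformula(s).
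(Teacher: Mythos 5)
There is a genuine gap here, and it stems from missing the one observation that makes this lemma nearly trivial: \emph{the elementarization of any critical formula is false} (an easy structural induction on the definition of ``critical'': $\add$- and $\ade$-subformulas elementarize to $\tlg$, and the clauses for $\mlc,\mld,\cla,\cle$ propagate falsity). Once you have this, the lemma needs no induction on the structure of $E$, no tracing machinery, and no case analysis on what move $\cal X$ makes. The hypothesis $E^{\circ}_{\circ}(z,\vec{s})$ already asserts that $\cal X$ has sat at a configuration with yield $E(\vec{s})$ for $\chi(|z|)$ deterministic steps without moving; since $\elz{E(\vec{s})}$ is false, $\cal X$ either never moves again (and loses on the false elementarization) or moves later (and violates the time bound $\chi$, since the background of that move is below $|z|$ while its timecost is at least $\chi(|z|)$). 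Either way $\mathbb{L}$ is true, and this entire argument is classical and formalizable in $\pa$, giving $\pa\vdash \cle E^{\circ}_{\circ}(z,\vec{s})\mli\mathbb{L}$. Combining with Lemma \ref{jan4d} by LC finishes the proof. That is the paper's argument, and it is about five lines long.

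Your proposed induction, by contrast, has concrete defects. First, your cases (i), (ii) and (iv) --- where $\cal X$ makes a (legal or illegal) move within the $\chi(|z|)$-step window --- are inconsistent with the hypothesis: $E^{\circ}_{\circ}$ (as opposed to the single-circle $E^{\circ}$, which is what Lemma \ref{m2a} analyzes) already rules them out, so only your case (iii) can occur, and there your conclusion that $\mathbb{L}$ is true silently depends on the falsity of the elementarization, which you never isolate or prove for general critical formulas. Second, the inductive cases for $G_0\mlc G_1$ and $G_0\mld G_1$ do not go through: there is no configuration ``with yield $G_0$'' to which the induction hypothesis could be applied --- the yield of a configuration is the whole game $\seq{\Phi}X$, and a proper subformula of a yield is never itself the yield of any configuration, so $G_0^{\circ}_{\circ}$ is not something you can feed to the induction hypothesis. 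Finally, the heavy tracing apparatus you flag as the main obstacle belongs to Lemma \ref{m2a} (and is developed in the appendix); it is not needed here, where the only resources required are the $\pa$-formalizable falsity argument and Lemma \ref{jan4d}.
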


\begin{proof} Assume the conditions of the lemma. By induction on complexity, one can easily see that the elementarization of any critical formula is false. Thus, for whatever ($\cla$) values of $\vec{s}$, $\elz{E(\vec{s})}$ is false. Arguing further as we did in the proof of Lemma \ref{august20b} when deriving (\ref{jan4b}), we find that,  if  $E^{\circ}_{\circ} (z,\vec{s})$ is true for whatever ($\cle$) values of $z$ and $\vec{s}$,  then so is $\mathbb{L}$. And this argument can be formalized in $\pa$, so that we have 
\(\pa\vdash\cle E^{\circ}_{\circ} (z,\vec{s})\mli \mathbb{L}.\)
This, together with Lemma \ref{jan4d}, can be easily seen to imply (\ref{m13a}) by LC.\vspace{-7pt}   
\end{proof}

\subsection{Q.E.D.}
In this subsection we finish the extensional  completeness proof for $\arfour$. Well, almost finish. The point is that our argument relies on Lemma    \ref{m2a}   whose  proof  is postponed to Appendix \ref{sA}. Here we only present a brief intuitive explanation of the proof idea for it. A reader satisfied by  our explanation will have no reasons to go through the technical appendix given at the end of this paper, whose only purpose is to provide a relatively detailed proof for Lemma    \ref{m2a}.  

Let $E$ be a formula not containing the variable $y$. We say that a formula $H$ is a {\bf $(\oo,y)$-development} of  $E$ iff $H$ is the result of replacing in $E$: 
\begin{itemize}
\item either a surface occurrence of a subformula $F_0\adc F_1$ by $F_i$ ($i=0$ or $i=1$), 
\item  or  a surface occurrence of a subformula $\ada xF(x)$ by $F(y)$.   
\end{itemize}

{\bf $(\pp ,y)$-development} is defined in the same way, only with $\add,\ade$ instead of $\adc,\ada$. 

\begin{lemma}\label{m2a}
Assume $E(\vec{s})$ is a formula all of whose free variables are among $\vec{s}$, and $y$ is a variable not occurring in $E(\vec{s})$. Then:      

(a) For every $(\oo,y)$-development $H_{i}(y,\vec{s})$ of $E(\vec{s})$,  $\arfour$ proves $E^{\circ}_{\circ}  (z,\vec{s})  \mli \ade u H_{i}^{\circ}(u,y,\vec{s})\).

(b) Where $H_1(y,\vec{s}),\ldots,H_n(y,\vec{s})$ are all of the $(\pp,y)$-developments of $E(\vec{s})$, $\arfour$ proves
\begin{equation}\label{m2e}   E^{\circ}  (z,\vec{s})  \mli E^{\circ}_{\circ}  (z,\vec{s})\add \mathbb{L}\add \ade u\ade y H_{1}^{\circ}(u,y,\vec{s})\add\ldots\add\ade u\ade y H_{n}^{\circ}(u,y,\vec{s})  . \end{equation}
\end{lemma}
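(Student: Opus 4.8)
The plan is to prove both parts of Lemma \ref{m2a} by tracing the computation of $\cal X$ using the ``provable traceability'' idea sketched in Section \ref{gggg}, and then doing the required bookkeeping about which $\add$/$\ada$ or $\mld/\mlc$ subformula the machine's move acts on. The key preliminary input will be a separately established fact (the genuine content of the technical appendix) that says: $\arfour$ proves that, given a legitimate configuration $z$ (encoded as $E^\circ(z,\vec s)$), one can $\ade$-compute the $t$th deterministic successor of $z$ for any $t$ that is size-polynomial in $|z|$ --- this uses Axioms 8, 9 and Facts \ref{comad}, \ref{commul}, \ref{com}, \ref{comexp}, \ref{m6a} to simulate $\cal X$'s transition function step by step within polynomial time, exactly in the spirit of the soundness proof in Section \ref{sectsound}. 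With that in hand, the two parts become a case analysis on what the configuration $b = $ ``$\chi(|z|)$th deterministic successor of $z$'' looks like.

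For part (a): fix a $(\oo,y)$-development $H_i(y,\vec s)$ of $E(\vec s)$, obtained by resolving one surface occurrence of a $\adc$-conjunction $F_0\adc F_1$ (choosing $F_i$) or a $\ada xF(x)$ (choosing $F(y)$). Argue in $\arfour$. Assume $E^{\circ}_{\circ}(z,\vec s)$, i.e. both $z$ and its $\chi(|z|)$th deterministic successor $b$ are legitimate configurations with yield $E(\vec s)$. Since the yield did not change over those $\chi(|z|)$ no-environment-move steps, and since the occurrence in question is a \emph{surface} one (not inside a choice operator) through which \emph{Environment} must move to resolve a $\adc$ or $\ada$, we want the configuration that results from letting Environment now make precisely that move. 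Concretely, we $\ade$-pick $u$ to be the configuration obtained from $b$ by appending Environment's move ``$j.i$'' (resolving the $\adc$ to $F_i$) or ``$j.c$'' (resolving $\ada xF(x)$ to $F(c)$ with $c$ the value chosen for $y$); computing this $u$ from $b$ is an explicit polynomial operation on codes, available by the appendix lemma and the arithmetical facts of Section \ref{s17}. One then checks --- this is a $\pa$-provable statement about the arithmetization, so it goes through by LC --- that $u$ is legitimate and its yield is exactly $H_i(y,\vec s)$, i.e. $H_i^{\circ}(u,y,\vec s)$ holds. This gives $E^{\circ}_{\circ}(z,\vec s)\mli\ade u H_i^{\circ}(u,y,\vec s)$.

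For part (b): again argue in $\arfour$, assuming $E^{\circ}(z,\vec s)$, so $z$ is a legitimate configuration with yield $E(\vec s)$. Using the appendix lemma, $\ade$-compute $b$, the $\chi(|z|)$th deterministic successor of $z$, together with (by Facts \ref{comad}, \ref{commul}, \ref{com}) the data needed to inspect $b$'s run tape. Now perform a case distinction, formalizable because equality and the relevant predicates are provably decidable (Facts \ref{comid}, \ref{combit}, \ref{m6a}, Example \ref{113}): either (i) during those $\chi(|z|)$ steps $\cal X$ made \emph{no} move and the run tape still spells the same position, in which case $z$ and $b$ have the same yield and $E^{\circ}_{\circ}(z,\vec s)$ holds --- first $\add$-disjunct; or (ii) $\cal X$ made a move that is \emph{illegal} for $\pp$ in $E(\vec s)$ (wrong format, out of the allowed $\mld$/$\mlc$ surface slots, or an unreasonable/time-violating move) --- then disjunct 1 or 2 of $\mathbb{L}$'s definition is witnessed, so $\mathbb{L}$ holds, giving the $\add\mathbb{L}$ disjunct; or (iii) $\cal X$ made a legal $\pp$-move, which (since by criticality of $E$ every surface choice slot for $\pp$ is one of the $(\pp,y)$-developments $H_1,\dots,H_n$) resolves the play to some $H_k(y,\vec s)$ with $y$ bound to the chosen value --- then, exactly as in part (a), we $\ade$-compute $u$ from $b$ recording this move and verify $H_k^{\circ}(u,y,\vec s)$, landing in the $\ade u\ade y H_k^{\circ}(u,y,\vec s)$ disjunct. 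Assembling these cases by $\add$-introduction (and LC to glue the $\arfour$-derivations) yields (\ref{m2e}).

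The main obstacle I expect is the appendix lemma itself --- the provable, polynomial-time simulation of $\cal X$'s deterministic transitions inside $\arfour$, i.e. showing $\arfour\vdash E^\circ(z,\vec s)\mli\ade b\,(\text{``}b\text{ is the }\chi(|z|)\text{th deterministic successor of }z\text{''})$ with an honest polynomial size bound on $b$; this is where one must carefully encode configurations (as promised in Section \ref{sA1}), verify that $|b|$ stays polynomial in $|z|$ (no blow-up of tape contents beyond what $\chi$ allows), and push the whole step-counting induction through $\arfour$-Induction on the step count using the sizebound discipline of polynomially bounded formulas. The combinatorial content of Lemma \ref{m2a} proper --- matching surface occurrences of $\adc,\ada$ (resp.\ $\add,\ade$) against $\pp$'s and $\oo$'s legal moves, and the fact that a critical $E$ has no non-choice surface ``escape'' for $\pp$ --- is routine once the arithmetization conventions are pinned down, and is exactly what forces us to split into parts (a) and (b) along the $\oo$/$\pp$ line.
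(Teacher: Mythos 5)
Your overall plan — arithmetize configurations, provably trace $\cal X$'s deterministic evolution inside $\arfour$ by $\arfour$-Induction with explicit size bounds, then case-analyze on what move (if any) $\cal X$ makes — is indeed the approach the paper takes, and your treatment of part (a) is essentially the paper's (the paper appends Environment's labmove to the run tape of the one-step deterministic successor of $z$ rather than of the $\chi(|z|)$th successor, which makes part (a) need only the single-step successor lemma, but your variant works too). The problem is in part (b): your key preliminary lemma, ``compute the $\chi(|z|)$th deterministic successor $b$ of $z$,'' is not a strong enough tool, and your three-way case analysis built on it has a hole. Nothing prevents $\cal X$ from making \emph{two or more} legal moves within the $\chi(|z|)$-step window (two quick moves do not violate the time bound, since timecost is measured from the previous move, so this cannot be excluded even for the real $\cal X$, let alone provably inside $\arfour$ where $\gneg\mathbb{L}$ is not available). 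In that scenario $E^{\circ}_{\circ}(z,\vec s)$ is false, $\mathbb{L}$ need not be true, and the yield of $b$ is the result of \emph{both} moves, hence not any single $(\pp,y)$-development $H_k(y,\vec s)$; so taking ``$u$ from $b$ recording this move'' does not satisfy $H_{k}^{\circ}(u,y,\vec s)$. Nor can the needed intermediate configuration be reconstructed from $b$ alone, since $b$'s work-tape contents and head positions at time $\chi(|z|)$ do not determine those at the moment of the first move; and a witness for $\ade u\ade y H_{k}^{\circ}(u,y,\vec s)$ must be an actual legitimate configuration in which $\cal X$ itself has made that $\pp$-labmove, so it cannot be faked.

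The fix is exactly what the paper's Lemma \ref{m4b} supplies: strengthen the tracing lemma so that, from $\mathbb{C}(z)$, one gets provably either a certificate that $\cal X$ makes no move within $|r|$ steps (the $\mathbb{A}'$ branch, which yields the $E^{\circ}_{\circ}$ disjunct) or the code of the \emph{earliest} configuration at or after $z$ in which $\cal X$ is in a move state (the $\mathbb{B}$ branch); one then reads off the single move $\alpha$ made there, takes $u$ to be the deterministic successor of that configuration, and matches $\alpha$ against the finitely many legal $\pp$-move formats, with $\mathbb{L}$ chosen only when $\alpha$ is illegal. Your step-counting $\arfour$-Induction framework accommodates this strengthened statement with the same size-bound discipline, so the gap is repairable within your plan, but as written the witness construction in case (iii) fails. (Minor point: your parenthetical appeal to criticality of $E$ in case (iii) is spurious — that every legal $\pp$-move on a surface $\add$ or $\ade$ occurrence yields a $(\pp,y)$-development is just the definition of such developments; criticality plays its role in Lemmas \ref{august20b} and \ref{august20a}, not here.)
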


\begin{idea} $E^{\circ} (z,\vec{s})$ implies that $z$ is a configuration reached by $\cal X$ in some play, and the game by that time has been brought down to $E (\vec{s})$.  $E^{\circ}_{\circ} (z,\vec{s})$ additionally implies that this situation persists ``for a while'' after $z$. 

For clause (a), assume $E^{\circ}_{\circ} (z,\vec{s})$. For any  $(\oo,y)$-development $H_{i}(y,\vec{s})$ of $E(\vec{s})$ and any value  of $y$, $H_{i}(y,\vec{s})$ is the game to which $E (\vec{s})$ is brought down by a certain labmove $\oo\alpha$. To solve $\ade u H_{i}^{\circ}(u,y,\vec{s})$ --- i.e., make  $H_{i}^{\circ}(u,y,\vec{s})$ true --- we can choose   $u$  to be the result of appending such a labmove $\oo\alpha$ to the run tape content of the deterministic successor of configuration $z$.  After properly formalizing encoding for configurations, this argument can be reproduced in $\arfour$. 

For clause (b), assume $E^{\circ}(z,\vec{s})$. We can trace, within $\arfour$, the work of $\cal X$ for ``sufficiently many'' --- namely, $\chi(|z|)$  --- steps in the scenario where Environment does not move. If  $\cal X$ does not move during those $\chi(|z|)$ steps either, then $E^{\circ}_{\circ}  (z,\vec{s})$ is true and we can choose it in the consequent of (\ref{m2e}). Suppose now $\cal X$ makes a move $\alpha$ within $\chi(|z|)$ steps. If $\alpha$ is illegal, then $\mathbb{L}$ is true, and we choose the latter in the consequent of (\ref{m2e}). Otherwise, if $\alpha$ is a legal move, then it brings $E(\vec{s})$ down to one of (the instances of) its $(\pp,y)$-developments $H_{i}(y,\vec{s})$ for a certain value of $y$. We choose the corresponding $\add$-disjunct in the consequent of  (\ref{m2e}); further, by tracing the work of $\cal X$, we will be able to compute the value of the configuration $u$ in which $\cal X$ made the above move, as well the above-mentioned ``certain value of $y$''. By choosing these values for the variables $u$ and $y$ in  $\ade u\ade y H_{1}^{\circ}(u,y,\vec{s})$, we win the game. Again, after properly formalizing encoding for configurations, the entire argument can be reproduced in $\arfour$.\vspace{-7pt}   
\end{idea}

\begin{lemma}\label{m2c}
Assume $E(\vec{s})$ is a formula all of whose free variables are among $\vec{s}$. 
Then $\arfour$ proves $E^{\circ}  (z,\vec{s})  \mli \overline{E(\vec{s})}$.  
\end{lemma}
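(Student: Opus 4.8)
The plan is to prove Lemma \ref{m2c} by induction on the complexity of $E(\vec{s})$, with the case analysis shadowing the clauses in the definition of \emph{critical}. The engine of the argument is Lemma \ref{m2a}(b), which tells us that $\arfour$ proves
\[E^{\circ}(z,\vec{s})\ \mli\ E^{\circ}_{\circ}(z,\vec{s})\add\mathbb{L}\add\ade u\ade y H_1^{\circ}(u,y,\vec{s})\add\cdots\add\ade u\ade y H_n^{\circ}(u,y,\vec{s}),\]
where $H_1(y,\vec{s}),\ldots,H_n(y,\vec{s})$ are the $(\pp,y)$-developments of $E(\vec{s})$ (possibly $n=0$). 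Hence, by LC, it is enough to derive $\overline{E(\vec{s})}$ from each of these $\add$-disjuncts. The disjunct $\mathbb{L}$ is dispatched by Lemma \ref{jan4d}. For the $i$-th development disjunct: the induction hypothesis applies to $H_i$ (strictly simpler than $E$) and gives $\arfour\vdash H_i^{\circ}(u,y,\vec{s})\mli\overline{H_i(y,\vec{s})}$; moreover, since $H_i$ is obtained from $E$ by resolving a surface $\add$ or $\ade$, a single $\pp$-move brings $\overline{E(\vec{s})}$ down to $\overline{H_i}$ with its variable $y$ instantiated by the value $\pp$ picks, so $\cltw$ proves $\ade u\ade y\,\overline{H_i(y,\vec{s})}\mli\overline{E(\vec{s})}$ ($\add$-Choose or $\ade$-Choose in the relevant surface position, followed by Wait to absorb $\ade u$ and $\ade y$; note $u$ does not occur in $\overline{H_i}$). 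Composing, $\arfour\vdash\ade u\ade y H_i^{\circ}(u,y,\vec{s})\mli\overline{E(\vec{s})}$.

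The remaining --- and only genuinely delicate --- disjunct is $E^{\circ}_{\circ}(z,\vec{s})$. If $E(\vec{s})$ is critical, this is precisely Lemma \ref{august20a}: the elementarization of a critical formula is false, so $E^{\circ}_{\circ}(z,\vec{s})$ pins $\cal X$ forever in a lost position of $X$ unless it eventually moves in violation of its bound $\chi$; either way $\mathbb{L}$ holds, and we fall back on Lemma \ref{jan4d}. If $E(\vec{s})$ is non-critical, Lemma \ref{august20b} only yields $\pa\vdash E^{\circ}_{\circ}(z,\vec{s})\mli\elz{\overline{E(\vec{s})}}$, which is not in itself a strategy for the (possibly non-elementary) game $\overline{E(\vec{s})}$. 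The strategy is to wait for the environment. As every move is a surface move, if the environment never moves then $\pp$'s do-nothing strategy wins, because the winner of the empty run of a game is the winner of the empty run of its elementarization and $\elz{\overline{E(\vec{s})}}$ is true by Lemma \ref{august20b}. If the environment does make a move, that move brings $\overline{E(\vec{s})}$ down to $\overline{E'}$ for some $(\oo,y)$-development $E'$ of $E$, with $y$ instantiated to the chosen value when applicable; by Lemma \ref{m2a}(a), $\arfour\vdash E^{\circ}_{\circ}(z,\vec{s})\mli\ade u\,{E'}^{\circ}(u,y,\vec{s})$, so a witnessing configuration $u$ can be computed, and then the induction hypothesis for $E'$ (again strictly simpler) supplies $\arfour\vdash{E'}^{\circ}(u,y,\vec{s})\mli\overline{E'}$, finishing the play. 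Assembling this via LC --- from Lemma \ref{august20b}, one instance of Lemma \ref{m2a}(a) per $(\oo,y)$-development, and the induction hypothesis on each such development --- gives $\arfour\vdash E^{\circ}_{\circ}(z,\vec{s})\mli\overline{E(\vec{s})}$. The base case, $E$ a literal, is the bottom instance of the non-critical branch: there are no developments, $\overline{E(\vec{s})}=E(\vec{s})\mld\mathbb{L}$ is elementary, and Lemma \ref{august20b} directly gives $\pa\vdash E^{\circ}_{\circ}(z,\vec{s})\mli\overline{E(\vec{s})}$.

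This completes the proof of Lemma \ref{m2c}, and with it the completeness half of Theorem \ref{tt1}: the initial configuration $z_0$ of $\cal X$ is a legitimate configuration with yield $X$, so $X^{\circ}(z_0)$ is $\pa$-provable and hence $\arfour\vdash\ade z\,X^{\circ}(z)$; applying Lemma \ref{m2c} with $E=X$ (a sentence, so $\vec{s}$ empty) and LC yields $\arfour\vdash\overline X$, while $\overline X^{\dagger}=X^{\dagger}=A$ by Lemma \ref{august12a}. I expect the main obstacle to be the non-critical $E^{\circ}_{\circ}$-disjunct, where Lemma \ref{august20b} delivers only elementarization-level information and must be married to the development machinery of Lemma \ref{m2a}(a) to actually play $\overline{E(\vec{s})}$ out against an active environment. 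A second, more clerical, difficulty is verifying that each surface move of $\overline{E(\vec{s})}$ really corresponds to the overline of the appropriate $(\oo,y)$- or $(\pp,y)$-development, with the correct instantiation of $y$ (legitimate because $\mathbb{L}$ is a sentence and so commutes with substitution), and that the induction runs over the class of formulas closed under taking developments rather than literally over subformulas of $X$ --- which is fine since developments strictly decrease complexity. Everything, of course, still rests on Lemma \ref{m2a}, whose own proof --- the provable traceability of $\cal X$'s computation --- is carried out in Appendix \ref{sA}.
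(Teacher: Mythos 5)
Your proposal is correct and follows essentially the same route as the paper: induction on the complexity of $E(\vec{s})$, decomposition via Lemma \ref{m2a}(b), Lemma \ref{jan4d} for the $\mathbb{L}$-disjunct, the induction hypothesis plus a single $\pp$-move for the $(\pp,y)$-development disjuncts, and, for the $E^{\circ}_{\circ}$-disjunct, Lemma \ref{august20a} in the critical case and Lemma \ref{august20b} combined with Lemma \ref{m2a}(a) and the induction hypothesis on $(\oo,y)$-developments in the non-critical case. The only difference is presentational: the paper packages all of this as one informal strategy argued inside $\arfour$, while you phrase it as a composition of logical consequences.
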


\begin{proof} We prove this lemma by (meta)induction on the complexity of $E(\vec{s})$. By the induction hypothesis, for any $(\oo,y)$- or $(\pp,y)$-development $H_i(y,\vec{s})$ of $E(\vec{s})$ (if there are any), $\arfour$ proves 
\begin{equation}\label{m2d}
H_{i}^{\circ}  (u,y,\vec{s})  \mli \overline{H_i(y,\vec{s})}.
\end{equation}

Argue in $\arfour$ to justify $E^{\circ}  (z,\vec{s})  \mli \overline{E(\vec{s})}$. Consider any values (constants) $b$ and $\vec{a}$ chosen by Environment for $z$ and $\vec{s}$, respectively.\footnote{Here, unlike the earlier followed practice, for safety, we are reluctant to use the names $z$ and $\vec{s}$ for those constants.}   Assume  $E^{\circ}  (b,\vec{a})$ is true (if not, we win). Our goal is to show how to win $\overline{E(\vec{a})}$.  From clause (b) of Lemma 
\ref{m2a},  the resource (\ref{m2e}) with $b,\vec{a}$ plugged for $z,\vec{s}$ is at our disposal. Since the  antecedent of the latter is true, the provider of (\ref{m2e}) will have to choose one of the $\add$-disjuncts in the consequent  
\begin{equation} \label{m2f}   E^{\circ}_{\circ}  (b,\vec{a})\add \mathbb{L}\add \ade u\ade y H_{1}^{\circ}(u,y,\vec{a})\add\ldots\add\ade u\ade y H_{n}^{\circ}(u,y,\vec{a})  . \end{equation}

{\em Case 1}: $\mathbb{L}$ is chosen in (\ref{m2f}). It has to be true, or else the provider loses. By Lemma \ref{jan4d}, we also have the resource $\mathbb{L}\mli \cla \overline{E(\vec{s})}$. Since its antecedent $\mathbb{L}$ is true, we know how to win the consequent $\cla \overline{E(\vec{s})}$. But a strategy that wins the latter, of course, also wins our target    $\overline{E(\vec{a})}$.   

{\em Case 2}: One of $\ade u\ade y H_{i}^{\circ}(u,y,\vec{a})$ is chosen in (\ref{m2f}). This should be followed by a further choice of some constants $c$ and $d$ for $u$ and $y$, yielding $H_{i}^{\circ}(c,d,\vec{a})$. Plugging $\vec{a}$, $c$ and $d$ for $\vec{s}$, $u$ and $y$ in (\ref{m2d}), we get $H_{i}^{\circ}  (c,d,\vec{a})  \mli \overline{H_i(d,\vec{a})}$. We may assume that the antecedent of the latter is true, or else the provider of (\ref{m2f}) lied when bringing the latter down to  $H_{i}^{\circ}(c,d,\vec{a})$. Thus, the consequental resource $\overline{H_i(d,\vec{a})}$ is at our disposal. But, remembering that the formula $H_i(y,\vec{s})$ is a $(\pp,y)$-development of the formula $E(\vec{s})$, we can now win $\overline{E(\vec{a})}$ by making a move $\alpha$  that brings the latter down to  $\overline{H_i(d,\vec{a})}$, which we already know how to win.   For example, imagine $E(\vec{s})$ is $Y(\vec{s})\mli Z(\vec{s})\add T(\vec{s})$ and $H_i(y,\vec{s})$ is $Y(\vec{s})\mli Z(\vec{s})$, so that   $\overline{E(\vec{a})}$ is $\overline{Y(\vec{a})}\mli \overline{Z(\vec{a})}\add \overline{T(\vec{a})}$ and $\overline{H_i(d,\vec{a})}$ is $\overline{Y(\vec{a})}\mli \overline{Z(\vec{a})}$. Then the above move $\alpha$ will be ``$1.0$''. It indeed brings $\overline{Y(\vec{a})}\mli \overline{Z(\vec{a})}\add \overline{T(\vec{a})}$ down to $\overline{Y(\vec{a})}\mli \overline{Z(\vec{a})}$. As another example, imagine $E(\vec{s})$ is $Y(\vec{s})\mli \ade w Z(w,\vec{s})$ and $H_i(y,\vec{s})$ is $Y(\vec{s})\mli Z(y,\vec{s})$, so that   $\overline{E(\vec{a})}$ is $\overline{Y(\vec{a})}\mli \ade w\overline{Z(w,\vec{a})}$ and $\overline{H_i(d,\vec{a})}$ is $\overline{Y(\vec{a})}\mli \overline{Z(d,\vec{a})}$. Then the above move $\alpha$ will be ``$1.d$''. It indeed brings $\overline{Y(\vec{a})}\mli \ade w\overline{Z(w,\vec{a})}$ down to $\overline{Y(\vec{a})}\mli \overline{Z(d,\vec{a})}$.

{\em Case 3}:  $E^{\circ}_{\circ}  (b,\vec{a})$ is chosen in (\ref{m2f}). It has to be true, or else the provider loses. 

{\em Subcase 3.1}: The formula $E(\vec{s})$ is critical. Since $E^{\circ}_{\circ}  (b,\vec{a})$ is true, so is $\cle E^{\circ}_{\circ}  (z,\vec{s})$. By  Lemma \ref{august20a}, we also have $\cle E^{\circ}_{\circ}  (z,\vec{s})\mli  \cla \overline{E(\vec{s})}$. So, we have a strategy that wins $ \cla \overline{E(\vec{s})}$. Of course, the same strategy also wins $\overline{E(\vec{a})}$. 

{\em Subcase 3.2}:  The formula $E(\vec{s})$ is not critical. By Lemma \ref{august20b}, we find that the elementarization of $\overline{E(\vec{a})}$ is true. This obviously means that if Environment does not move in  $\overline{E(\vec{a})}$, we win the latter. So, assume Environment makes a move $\alpha$ in $\overline{E(\vec{a})}$. 
The move should be legal, or else we  win. Of course, the same move is a legal move of $ E(\vec{a})$ and, for one of the $(\oo,y)$-developments $H_i(y,\vec{s})$ of the formula $E(\vec{s})$ and some constant $c$, it brings $ E(\vec{a})$  down to $H_i(c,\vec{a})$ as well as $ \overline{E(\vec{a})}$  down to $\overline{H_i(c,\vec{a})}$.  For example, if  $E(\vec{s})$ is $Y(\vec{s})\mli Z(\vec{s})\adc T(\vec{s})$, $\alpha$ could be the move ``$1.0$'', which brings $Y(\vec{a})\mli Z(\vec{a})\adc T(\vec{a})$ down to $Y(\vec{a})\mli Z(\vec{a})$; the formula $Y(\vec{s})\mli Z(\vec{s})$ is indeed a $(\oo,y)$-development of the formula $Y(\vec{s})\mli Z(\vec{s})\adc T(\vec{s})$. As another example, imagine $E(\vec{s})$ is $Y(\vec{s})\mli \ada w Z(w,\vec{s})$. Then the above move $\alpha$ could be ``$1.c$'',  which brings $Y(\vec{a})\mli \ada wZ(w,\vec{a})$ down to $Y(\vec{a})\mli Z(c,\vec{a})$; the formula $Y(\vec{s})\mli Z(y,\vec{s})$ is indeed a $(\oo,y)$-development of the formula $Y(\vec{s})\mli \ada wZ(w,\vec{s})$. Fix the above formula $H_i(y,\vec{s})$ and constant $c$. Choosing $b$, $\vec{a}$ and $c$ for $z$, $\vec{s}$ and $y$ in the resource provided 
by clause (a) of Lemma 
\ref{m2a}, we get the resource $E^{\circ}_{\circ}  (b,\vec{a})  \mli \ade u H_{i}^{\circ}(u,c,\vec{a})\). Since the antecedent   of the latter is true by our assumptions,  the consequent $\ade u H_{i}^{\circ}(u,c,\vec{a})$ is at our disposal. The provider will have to choose a constant $d$ for $u$ in it such that $H_{i}^{\circ}(d,c,\vec{a})$ is true. Hence, by choosing $d$, $c$ and $\vec{a}$ for $u$, $y$ and $\vec{s}$ in (\ref{m2d}), we get the resource $\overline{H_i(c,\vec{a})}$. That is, we have a strategy for the game $\overline{H_i(c,\vec{a})}$ to which $\overline{E(\vec{a})}$ has evolved after Environment's move $\alpha$. We switch to that strategy and win. 
\end{proof}
 
Now we are ready to claim the target result of this section.  Let $a$ be the code of the start configuration of $\cal X$, and $\hat{a}$ be a standard variable-free term representing $a$, such as $0$ followed by $a$  ``$\successor$''s.  Of course, $\pa$ and hence $\arfour$ proves $X^\circ(\hat{a})$. By Fact \ref{com}, $\arfour$ proves $\ade z(z\equals\hat{a})$. 
 By Lemma \ref{m2c},   $\arfour$ also proves $\ada z\bigl(X^{\circ}  (z)  \mli \overline{X}\bigr)$. These three can be seen to imply  $\overline{X}$ by LC. Thus, $\arfour\vdash\overline{X}$, as desired.

\section{Inherent extensional incompleteness in the general case}\label{sincom}
The extensional completeness of $\arfour$ is not a result that could be taken for granted. In this short section we argue  that, if one replaces {\em polynomial time computability}  by simply {\em computability} in our semantical treatment of  $\arfour$,  extensional completeness is impossible to achieve for whatever recursively axiomatizable sound extension of $\arfour$ or other systems of clarithmetic.

Our extensional incompleteness argument goes like this. Consider any system $\bf S$ in the style of $\arfour$ whose  proof predicate --- throughout this section understood in the ``superextended'' sense of Section \ref{ss11} --- is decidable and hence the theoremhood predicate is recursively enumerable.   Assume $\bf S$ is sound in the same strong sense as $\arfour$ --- that is, there is an effective  procedure that extracts an algorithmic solution (HPM) for the problem represented by any sentence $F$ from any  $\bf S$-proof of $F$. 
 
Let then $A(s)$ be the predicate which is true iff:
\begin{itemize}
\item $s$ is (the code of) an $\bf S$-proof of some sentence of the form $\ada x\bigl(\gneg E(x)\add E(x)\bigr)$, where $E$ is elementary,  
\item and $E(s)$  is false.
\end{itemize}   

On our assumption of the soundness of $\bf S$, $A(s)$ is a decidable predicate. Namely, it  is decided by a procedure that first checks if $s$ is the code of  an  $\bf S$-proof of some sentence of the form $\ada x\bigl(\gneg E(x)\add E(x)\bigr)$, where $E$ is elementary. If not, it rejects. If yes, the procedure extracts from $s$ an HPM $\cal H$ which solves $\ada x\bigl(\gneg E(x)\add E(x)\bigr)$, and then simulates the play of $\cal H$ in the scenario where,  at the very beginning of the play, Environments makes the move $s$, thus bringing the game down to $\gneg E(s)\add E(s)$. If, in this play, $\cal H$ responds by choosing $\gneg E(s)$, then the procedure accepts $s$; and if $\cal H$ responds by choosing $E(s)$, then the procedure rejects $s$. Obviously this procedure indeed decides the predicate $A$.

 Now, assume that $\bf S$ is extensionally complete. Since $A$ is decidable, the problem $\ada x\bigl(\gneg A(x)\add A(x)\bigr)$ has an algorithmic solution. So, for some sentence $F$ with $F^\dagger=\ada x\bigl(\gneg A(x)\add A(x)\bigr)$ and some $c$, we should have that $c$ is (the code of) an   {\bf S}-proof of $F$. Obviously $F$ should have the form $\ada x\bigl(\gneg E(x)\add E(x)\bigr)$, where $E(x)$ is an elementary formula with $E^\dagger(x)=A(x)$. We are now dealing with the absurd of $A(c)$ being true iff it is false.

\section{On the intensional strength of $\arfour$}\label{sculprit}

\begin{theorem}\label{feb15}
Let $X$ and $\mathbb{L}$ be as in Section \ref{s19}.  Then $\arfour\vdash \gneg \mathbb{L}\mli X$.
\end{theorem}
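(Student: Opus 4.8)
The plan is to reuse the machinery built in Section~\ref{s19} for the extensional completeness proof, now applied to the very sentence $X$ with which we started, rather than to the modified sentence $\overline{X}$. Recall the key ingredients already at our disposal: Lemma~\ref{m2c}, which gives $\arfour\vdash E^{\circ}(z,\vec{s})\mli \overline{E(\vec{s})}$ for every formula $E$; Lemma~\ref{jan4d}, which gives $\arfour\vdash \mathbb{L}\mli \cla\overline{E}$; Fact~\ref{com}, which gives $\arfour\vdash \ade z(z\equals\hat{a})$ where $\hat{a}$ names the code $a$ of the start configuration of $\cal X$; and the $\pa$-provable (hence $\arfour$-provable) fact $X^\circ(\hat{a})$. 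As in the ``Q.E.D.'' subsection, these combine by LC to yield $\arfour\vdash\overline{X}$. What I additionally need is a bridge from $\overline{X}$ back to $X$ under the hypothesis $\gneg\mathbb{L}$.

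First I would establish the auxiliary fact that, for every formula $E$, $\arfour\vdash \gneg\mathbb{L}\mli(\overline{E}\mli E)$, or more conveniently its $\cla$-closed form $\arfour\vdash \gneg\mathbb{L}\intimpl \cla(\overline{E}\mli E)$. This is proved by induction on the complexity of $E$, entirely in parallel with Lemma~\ref{jan4d}. For the base case, where $E$ is a literal $L$, the formula $\overline{E}$ is $L\mld\mathbb{L}$, and $\gneg\mathbb{L}\mlc(L\mld\mathbb{L})\mli L$ is classically valid, so $\gneg\mathbb{L}\intimpl \cla(L\mld\mathbb{L}\mli L)$ is provable in $\cltw$ by Wait, hence in $\arfour$ by LC. For the inductive cases ($E$ of the form $H_0\mlc H_1$, $H_0\mld H_1$, $H_0\adc H_1$, $H_0\add H_1$, or a quantification $\cla xH$, $\cle xH$, $\ada xH$, $\ade xH$), the reduction $\overline{E}\mli E$ is obtained from the reductions $\overline{H_i}\mli H_i$ supplied by the induction hypothesis by purely logical reasoning that commutes with each connective, so the step goes through by LC exactly as in Lemma~\ref{jan4d}. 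The resource $\gneg\mathbb{L}$ is reusable (it sits in the antecedent of a $\intimpl$-sequent, hence recyclable through Replicate), so invoking it at every politeral causes no difficulty.

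Having this, the conclusion is immediate: we already have $\arfour\vdash\overline{X}$, and now also $\arfour\vdash \gneg\mathbb{L}\intimpl \cla(\overline{X}\mli X)$, hence in particular $\arfour\vdash \gneg\mathbb{L}\intimpl(\overline{X}\mli X)$ (since $X$ is a sentence, its $\cla$-closure is itself). From $\overline{X}$ and $\gneg\mathbb{L},\overline{X}\intimpl X$ we obtain $\gneg\mathbb{L}\intimpl X$, i.e. $\arfour\vdash \gneg\mathbb{L}\mli X$, by LC. Concretely, $X$ is a logical consequence of the two proven formulas $\overline{X}$ and $\bigl(\gneg\mathbb{L}\mlc\overline{X}\bigr)\mli X$ together with the hypothesis $\gneg\mathbb{L}$; the corresponding $\cltw$-derivability of $\overline{X},\ \gneg\mathbb{L}\mlc\overline{X}\mli X,\ \gneg\mathbb{L}\intimpl X$ is routine (it is essentially a propositional manipulation, provable by Wait since its elementarization is classically valid, the choice operators inside $X$ and $\overline{X}$ being handled by synchronization/copycat through LC's completeness as guaranteed by Thesis~\ref{thesis}).

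The main obstacle, such as it is, lies in the inductive lemma $\arfour\vdash \gneg\mathbb{L}\intimpl \cla(\overline{E}\mli E)$: one must be careful that the replacement of politerals by $L\mld\mathbb{L}$ interacts correctly with negated subformulas and with the surface/non-surface structure, and that the induction is set up over exactly the class of subformulas of $X$ (with renamed variables) that Section~\ref{s19} restricts attention to. But this is the mirror image of Lemma~\ref{jan4d}, whose proof is explicitly ``straightforward,'' so no genuinely new difficulty arises; the work is bookkeeping. Everything else is assembled from lemmas already proved in Section~\ref{s19}.
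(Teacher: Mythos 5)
Your proposal is correct and takes essentially the same route as the paper: the paper proves, by induction on the complexity of $E$, that $\arfour\vdash\cla(\overline{E}\mlc\gneg\mathbb{L}\mli E)$ — just the uncurried form of your auxiliary lemma — and then combines the instance for $X$ with the already-established $\arfour\vdash\overline{X}$ via LC to obtain $\gneg\mathbb{L}\mli X$. The only (cosmetic) slip is that $\arfour$ proves formulas rather than sequents, so your auxiliary statements should be read as the formulas $\cla(\gneg\mathbb{L}\mlc\overline{E}\mli E)$, with the $\intimpl$-sequents appearing only inside the $\cltw$-justifications of the LC steps; the content is unaffected, and reusability of $\gneg\mathbb{L}$ is anyway a non-issue since $\mathbb{L}$ is elementary.
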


\begin{proof} Let $X$ and $\mathbb{L}$ be as in Section \ref{s19}, and so be the meaning of the overline notation. First, by induction on the complexity of $E$, we  want to show that
\begin{equation}\label{m14a}
\mbox{\em For any formula $E$, $\arfour\vdash \cla (\overline{E}\mlc \gneg \mathbb{L}\mli E)$.} 
\end{equation}
If  $E$ is a literal, then $\cla(\overline{X}\mlc \gneg \mathbb{L}\mli E)$ is nothing but 
$\cla\bigl((E\mld \mathbb{L})\mlc \gneg \mathbb{L}\mli E\bigr)$. This is a classically valid elementary sentence, and hence it is provable in $\arfour$ (by LC from the empty set of premises). 
Next, suppose $E$ is $F_0\mlc F_1$. By the induction hypothesis, $\arfour$ proves both $\cla (\overline{F_0}\mlc \gneg \mathbb{L}\mli F_0)$ and  $\cla (\overline{F_1}\mlc \gneg \mathbb{L}\mli F_1)$. These two, by LC, imply $\cla \bigl((\overline{F_0}\mlc\overline{F_1})\mlc \gneg \mathbb{L}\mli F_0\mlc F_1\bigr)$. And the latter is nothing but the desired $\cla (\overline{E}\mlc \gneg \mathbb{L}\mli E)$. 
The remaining cases where $E$ is $F_0\mld F_1$, $F_0\adc F_1$, $F_0\add F_1$, $\ada xF(x)$, $\ade xF(x)$, $\cla xF(x)$ or $\cle xF(x)$ are handled in a similar way.
 (\ref{m14a}) is thus proven.

(\ref{m14a}) implies that $\arfour$ proves $\overline{X}\mlc \gneg \mathbb{L}\mli X$.  As established in Section \ref{s19}, $\arfour$ also proves $\overline{X}$. From these two, by LC, $\arfour$ proves  $\gneg \mathbb{L}\mli X$, as desired. 
\end{proof} 

Remember that, in Section \ref{s19},  $X$ was an arbitrary $\arfour$-sentence assumed to have a polynomial time solution  under the standard interpretation $^\dagger$. And $\gneg\mathbb{L}$ was a certain true sentence of the language of classical Peano arithmetic. We showed in that section that $\arfour$ proved a certain sentence $\overline{X}$ with $\overline{X}^\dagger=X^\dagger$. That is, we showed that $X$ was ``extensionally provable''. 
According to our present Theorem \ref{feb15}, in order to make $X$ also provable in the intensional sense, all we need is to add to the axioms of $\arfour$ the true elementary sentence $\gneg\mathbb{L}$. 

In philosophical terms, the import of Theorem \ref{feb15} is that the culprit of the intensional incompleteness of $\arfour$ is the (G\"{o}del's) incompleteness of its classical, elementary part. Otherwise, the ``nonelementary rest'' of $\arfour$ --- the two extra-Peano axioms and the $\arfour$-Induction  rule ---  as a bridge from classical arithmetic to polynomial-time-computability-oriented clarithmetic, is 
complete in a certain very strong and natural sense. Namely, it guarantees not only extensional but also intensional provability of every polynomial time computable problem   as long as all necessary  true elementary sentences are taken care of. This means that if, instead of $\pa$,  we take the truth arithmetic {\bf Th(N)} (the set of all true sentences of the language of $\pa$) as the base arithmetical theory,   the corresponding version of $\arfour$ will be not only extensionally, but also intensionally complete. Unfortunately, however, such a system will no longer be recursively axiomatizable.

To summarize, in order to make $\arfour$ intensionally stronger, it would be sufficient to add to it new true elementary (classical) sentences only.  Note that this sort of an extension, even if in a language more expressive than that of $\pa$, would automatically remain sound and extensionally complete: virtually nothing in this paper relies on the fact that $\pa$ is not stronger than it really is. Thus, basing applied theories on CoL allows us to construct ever more expressive and intensionally  strong  theories without worrying about how to preserve soundness and extensional completeness. Among the main goals of this paper was to illustrate the scalability of CoL rather than the virtues of the particular system $\arfour$ based on it. The latter is in a sense arbitrary, as is $\pa$ itself: in the role of the classical part of $\arfour$, we could have chosen not only any true extension of $\pa$, but certain weaker-than-$\pa$ theories as well, for our proof of the extensional completeness of $\arfour$ does not require the full strength of $\pa$. The reason for not having done so is purely ``pedagogical'': $\pa$ is the simplest and best known arithmetical theory, and reasoning in it is much more relaxed, easy and safe than in weaker versions. $\arfour$ is thus the simplest and nicest representative of the wide class of clarithmetical theories for polynomial time computability, all enjoying the same relevant properties  as $\arfour$ does.  

As pointed out in Section \ref{intr}, among the potential applications of $\arfour$-style systems is using them as formal tools (say, after developing reasonable theorem-provers) for systematically finding efficient solutions for problems, and the stronger such a system is, the better the chances that a solution for a given problem will be found. Of course, what matters in this context is intensional rather than extensional strength. So, perfect strength is not achievable, but we can keep moving ever closer to it.  

One may ask why not think of simply using $\pa$ (or even, say, {\bf ZFC}) instead of $\arfour$ for the same purposes: after all,   $\pa$ is strong enough to allow us reason about polynomial time computability. This is true,  but $\pa$ is far from being a reasonable alternative to $\arfour$. First of all, 
 as a tool for finding  solutions, $\pa$ is very indirect and hence hopelessly inefficient. Pick any of the basic functions of Section \ref{s17} and try to generate, in $\pa$, a full formal proof of the fact that the function is polynomial-time computable (or even just {\em express} this fact) to understand the difference. Such a proof would have to proceed by clumsy reasoning about {\em non-number} objects such as Turing machines and computations, which, only by good luck, happen to be amenable to being understood as numbers through encoding. In contrast, reasoning in $\arfour$ is directly about numbers and their properties, without having to encode any foreign (metaarithmetical or complexity-theoretical) beasts and then try to reason about them as if they were just kind and innocent natural numbers. Secondly, even if an unimaginably strong theorem-prover succeeded in finding such a proof, there would be no direct use of it because, from a proof of the existence of a solution we cannot directly extract a solution. Furthermore, even knowing that a given HPM $\cal X$ solves the problem in {\em some} polynomial time $\chi$, would have no practical significance without knowing {\em what} particular polynomial $\chi$ is, in order to assess whether it is reasonable for our purposes, or takes us beyond the number of nanoseconds in the lifespan of the universe (after all, $\ell^{999^{999}}$ is also a polynomial function!). In order to actually obtain a solution and a polynomial bound for it,   one would need a {\bf constructive} proof, that is, not just a proof that a polynomial function $\chi$ and a $\chi$-time solution exist, but a proof of the fact that certain particular numbers $a$ and $b$ are (the codes of) a polynomial term $\chi$ and a $\chi$-time solution $\cal X$. Otherwise, a theorem-prover would have to be used not just once for a single target formula, but an indefinite (intractably many) number of times, once per each possible pair of values of $a,b$ until the ``right'' values are encountered.  To summarize, $\pa$ does not provide any reasonable mechanism for handling queries in the style ``{\em find} a polynomial time solution for problem $A$'': in its standard form, $\pa$ is merely a YES/NO kind of a ``device''.   

The above dark picture can be somewhat brightened by switching from $\pa$ to Heyting's arithmetic {\bf HA} --- the version of $\pa$  based on intuitionistic logic instead of classical logic,  which is known to  allow us to directly extract, from a proof of a formula $\cle  xF(x)$, a particular value of $x$ for which $F(x)$ is true. But the question is why intuitionistic logic and not CoL? Both claim to be ``constructive logics'', but the constructivistic claims of CoL have a clear semantical meaning and justification, while intuitionistic logic is essentially an ad hoc invention whose constructivistic claims are mainly based on certain syntactic and hence circular considerations,\footnote{What creates circularity is the common-sense fact that syntax is merely to serve a meaningful semantics, rather than vice versa. It is hard   not to remember the following words from \cite{Japfin} here: ``The reason for the failure of $P\add\gneg P$ in CoL is not that this principle \ldots is not included in its axioms. Rather, the failure of this principle is exactly the reason why this principle, or anything else entailing it, would not be among the axioms of a sound system for CoL''.}   without being supported by a convincing and complete {\em formal} constructive semantics.   And, while {\bf HA} is immune to the second one of the two problems pointed out in the previous paragraph,  it  still suffers from the first problem. At the same time, as a reasoning tool, {\bf HA} is inferior to $\pa$, for it is intensionally weaker and, from the point of view of the philosophy of CoL, is so for no good reasons. As a simple example, consider the function $f$ defined by ``$f(x)=x$ if $\pa$ is either consistent or inconsistent, and $f(x)=2x$ otherwise''. This is a legitimately defined function, and we all --- just as $\pa$ --- know that extensionally it is the same as the identity function $f(x)=x$. Yet, {\bf HA} can be seen to fail to prove --- in the intensional sense --- its computability.  Despite its name, intuitionistic logic is not so ``intuitive'' after all!

A natural question to ask is: {\em Is there a sentence $X$ of the language of $\arfour$ whose polynomial time computability is constructively provable in $\pa$ yet $X$ is not provable in $\arfour$?} Remember that, as we agreed just a while ago, by {\bf constructive provability} of the polynomial time computability of $X$ in $\pa$ we mean that, for some particular HPM $\cal X$ and a particular polynomial (term) $\chi$, $\pa$ proves that $\cal X$ is   a $\chi$-time solution of $X$. If the answer to this question was positive, then $\pa$, while indirect and inefficient, would still have at least {\em something} to say   in its defense when competing with $\arfour$ as a problem-solving tool. But, as seen from the following theorem, the answer to the question is negative:

\begin{theorem}\label{jan30}
Let $X$ be any sentence of the language of $\arfour$ such that $\pa$ constructively proves (in the above sense) the polynomial time computability of $X$. Then $\arfour\vdash X$. 
\end{theorem}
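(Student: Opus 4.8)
The plan is to reduce Theorem~\ref{jan30} to Theorem~\ref{feb15} together with Fact~\ref{nov7}. Theorem~\ref{feb15} already tells us that, for the $X$ and $\mathbb{L}$ of Section~\ref{s19}, $\arfour\vdash\gneg\mathbb{L}\mli X$, so the only thing standing between us and $\arfour\vdash X$ is the provability of the single elementary sentence $\gneg\mathbb{L}$. The whole point of the constructivity hypothesis is precisely to supply this: it guarantees that $\gneg\mathbb{L}$ is not merely true but provable in $\pa$, and then Fact~\ref{nov7} lifts that provability into $\arfour$.

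In more detail, I would first unpack the hypothesis. By assumption there are a particular HPM $\cal X$ and a particular polynomial term $\chi$ such that $\pa$ proves (some natural formalization of) the statement ``$\cal X$ is a $\chi$-time solution of $X$''. Replacing $\chi$ by the term $\chi(x)+x$ if necessary, we may assume $\chi(x)\mgeq x$ for all $x$, since a $\chi$-time machine is a fortiori a $\bigl(\chi(x)+x\bigr)$-time machine and $\pa$ verifies this trivial monotonicity. Since $\pa$ is sound, $\cal X$ genuinely wins $X^\dagger$ in time $\chi$; in particular $X$ has a polynomial time solution, so the entire construction of Section~\ref{s19} applies to this $X$, and we are free to take the solution and time bound fixed there to be exactly this $\cal X$ and this $\chi$. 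Let $\mathbb{L}$ be the sentence built from $X$, $\cal X$, $\chi$ as in Section~\ref{s19}.

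Next I would observe that $\gneg\mathbb{L}$ is, by its very construction, a natural formalization of ``$\cal X$ does win $X$ in time $\chi$'', i.e.\ of ``$\cal X$ is a $\chi$-time solution of $X$'' --- the same finitary claim about runs, timecosts, and elementarizations of the finitely many subformulas of $X$ that $\pa$ was assumed to prove. Hence $\pa\vdash\gneg\mathbb{L}$, and so, by Fact~\ref{nov7}, $\arfour\vdash\gneg\mathbb{L}$. Combining this with $\arfour\vdash\gneg\mathbb{L}\mli X$ from Theorem~\ref{feb15} (applied to this $X$ and this $\mathbb{L}$), LC yields $\arfour\vdash X$, which is the desired conclusion.

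The main point requiring care --- really the only nontrivial step --- is the inference ``$\pa$ proves `$\cal X$ is a $\chi$-time solution of $X$' $\ \Longrightarrow\ \pa\vdash\gneg\mathbb{L}$''. Formally $\mathbb{L}$ is the specific $\mld$-disjunction of the three arithmetized statements of Section~\ref{s19}, whereas the constructivity hypothesis may arrive dressed in some other reasonable formalization of $\chi$-time solvability; one must check that these two formalizations are provably equivalent already in $\pa$ (both being routine G\"odel-style renderings of the same statement about the behavior of the arithmetized $\cal X$), so that $\pa$ does prove $\gneg\mathbb{L}$ itself. Apart from this bookkeeping and the harmless normalization $\chi(x)\mgeq x$, there is nothing further to do.
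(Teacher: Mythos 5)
Your proposal is correct and follows essentially the same route as the paper's own proof: take $\cal X$ and $\chi$ to be those supplied by the constructive $\pa$-proof, note that this is exactly what $\mathbb{L}$ denies so that $\pa\vdash\gneg\mathbb{L}$, lift this to $\arfour$ (the paper does this implicitly, you via Fact~\ref{nov7}), and conclude by LC from Theorem~\ref{feb15}. Your extra remarks on normalizing $\chi$ so that $\chi(x)\mgeq x$ and on matching the hypothesized formalization with $\mathbb{L}$ are just careful bookkeeping that the paper glosses over, not a different argument.
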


\begin{proof} Consider   any sentence $X$ of the language of $\arfour$. Assume $\pa$ constructively proves the polynomial time computability of $X$, meaning that,
for a certain HPM $\cal X$ and a certain term $\chi$,  
$\pa$ proves that $\cal X$ solves $X$ in    time $\chi$. But this is exactly what the sentence $\mathbb{L}$ of Section  \ref{s19} denies. So, $\pa\vdash\gneg \mathbb{L}$. But, by  
Theorem \ref{feb15}, we also have $\arfour\vdash\gneg \mathbb{L}\mli X$. Consequently, $\arfour\vdash X$.  
\end{proof} 

An import of the above theorem is that, if we tried to add to $\arfour$ some new nonelementary axioms in order to achieve a properly greater intensional strength, the fact that such axioms are computable in time $\chi$ for some particular polynomial $\chi$ would have to be unprovable in $\pa$, and hence would have to be ``very nontrivial''.  The same applies to attempts to extend $\arfour$ through some new rules of inference.  

\section{Give Caesar what belongs to Caesar}\label{scesar} Beginning from Buss's seminal work \cite{Buss}, many complexity-sensitive or complexity-oriented arithmetical and logical systems have been developed by various authors  (\cite{bbb2,Bussint, bbb3, bbb4, bbb5, bbb6, Sch} and more). Most of those achieve control over complexity in ways very different from ours, such as by type information rather than by explicit bounds on quantifiers, for which reason we do not attempt any direct comparison here. As pointed out in Section \ref{intr}, Buss's  \cite{Buss} original work on bounded arithmetic is closest to --- and the most immediate precursor of --- our approach. In fact,  in a broad sense,  $\arfour$ {\em is} a system of  bounded arithmetic, only based on CoL instead of classical logic or intuitionistic logic on which the other systems of bounded arithmetic have been traditionally based.

The main relevant results in the studies of classical-logic-based bounded arithmetic, extensive surveys of which can be found in \cite{Hajek,bbb7}, can be summarized saying that, by appropriately weakening the induction axiom of $\pa$ and  restricting it to bounded formulas of certain forms, and correspondingly readjusting the nonlogical vocabulary and axioms of $\pa$, certain  soundness and completeness   for the resulting system(s) $\bf S$ can be achieved. Such soundness results typically read like   ``If $\bf S$ proves a formula of the form $\cla x\cle y F(x,y)$, where $F$ satisfies such and such constraints, then there is function of such and such computational complexity which, for each $a$, returns a $b$ with $F(a,b)$''. And completeness results typically read  like ``For any function $f$ of such and such computational complexity, there is an $\bf S$-provable formula of the form  $\cla x\cle y F(x,y)$ such that, for any $a$ and $b$, \  $F(a,b)$ is true iff $b=f(a)$''.   
  
Among the characteristics that  make our approach very different from the above (as well as any other complexity-oriented systems of arithmetic known to the author), one should point out that it {\em extends} rather than {\em restricts} the language and the deductive power of $\pa$. Restricting  $\pa$ can be seen as   throwing out the baby with the bath water. Not only does it expel from the system many complexity-theoretically unsound yet otherwise meaningful and useful theorems, but it also reduces --- even if only in the intensional rather than extensional sense --- the class of complexity-theoretically  correct provable principles. This is a necessary sacrifice, related to the inability of the underlying classical logic  to clearly differentiate between constructive ($\adc,\add,\ada,\ade$) and ``ordinary'', non-constructive versions ($\mlc,\mld,\cla,\cle$) of operators. The inadequacy of classical logic as a basis for constructive systems also shows itself in the fact that  the above-mentioned soundness and completeness results are only partial.

The above problem of partiality is partially overcome when one bases a complexity-oriented arithmetic on intuitionistic logic (\cite{Bussint,Sch}) instead of classical logic.  In this case, soundness/completeness extends to all formulas of the form $\cla x\cle y F(x,y)$, without the ``$F$ satisfies such and such constraints'' condition (the reason why we still consider this sort of soundness/completeness partial is that it remains   limited to formulas of the form $\cla x\cle y F(x,y)$, i.e. functions, which, for us, are only special cases of computational problems). However, for reasons pointed out in the previous section, switching to intuitionistic logic signifies throwing out even more of the ``baby'' from the bath tub, further decreasing the intensional strength of the theory and probably losing its intuitive clarity or appeal in the eyes of the classically-minded majority. 

Both classical-logic-based and intuitionistic-logic-based  systems of bounded arithmetic happen to be {\em inherently weak} theories, as opposed to our CoL-based version, which is as strong as G\"{o}del's incompleteness phenomenon permits, and which can be indefinitely strengthened without losing computational soundness. 
We owe this achievement to the fact that  CoL gives Caesar  what belongs to Caesar, and  God  what belongs to God. As we had a chance to see throughout this paper, classical ($\mlc,\mld,\cla,\cle$) and constructive ($\adc,\add,\ada,\ade$) logical constructs can peacefully coexist and complement each other in one natural system that seamlessly extends the classical, constructive, resource- and complexity-conscious visions and concepts, and does so not by mechanically putting things together, but rather on the basis of one natural, all-unifying, complete game semantics. Unlike most other approaches where only few, special-form expressions (if any) have clear computational interpretations, in our case every formula is a meaningful computational problem. Further, we can capture not only computational problems in the traditional sense, but also problems in the more general --- interactive --- sense.

Classical logic and classical arithmetic, so close (unlike, say, intuitionistic logic or {\bf HA}) to the heart and mind of all of us,  do not at all need to be rejected or tampered with in order to achieve constructive heights. Just the opposite, they can be put in faithful and useful service to this noble goal. Our heavy reliance on reasoning in $\pa$ throughout this paper is an eloquent illustration of it. Overall, the present work can be seen as an illustration of the  fruitfulness  of two independently conceived lines of thought ---  bounded arithmetic and computability logic --- through a successful marriage between them.  

The forthcoming paper \cite{cla5} constructs three new, incrementally strong systems of clarithmetic, named {\bf CLA5}, {\bf CLA6} and {\bf CLA7}. In the same sense as $\arfour$ is sound and complete with respect to polynomial time computability, these systems are shown to be sound and complete with respect to polynomial space computability, elementary recursive time computability and primitive recursive time computability, respectively (as for elementary recursive space and primitive recursive space, they simply coincide with elementary recursive time and primitive recursive time). The simplicity and elegance of those systems serves as additional empirical evidence for the naturalness of basing applied theories on CoL instead of the more traditional alternatives, and for the flexibility and scalability of our approach. All three systems, on top of Axioms 1-7, have Axiom 8 as the only extra-Peano axiom (Axiom 9 simply becomes derivable and hence redundant due to the presence of a stronger induction rule). $\cltw$ continues to serve as the logical basis for these systems, and what varies is only the induction rule. The induction rule of {\bf CLA5} differs from that of $\arfour$ in that, while the (two) inductive steps of the latter are based on binary successors, the (single) inductive step of the former is based on unary successor, i.e., is the kind old $F(x)\mli F(x\successor)$, with $F(x)$ still required to be a polynomially bounded formula. The system {\bf CLA6} is obtained from {\bf CLA5} by relaxing this requirement in the induction rule and, instead, requiring that $F(x)$ be exponentially bounded. And the system {\bf CLA7} is obtained from {\bf CLA6} by removing all conditions on $F(x)$ whatsoever, thus leaving the realm of bounded arithmetic. 

The earlier mentioned system {\bf CLA1} of \cite{Japtowards} further strengthens the above series. Its logical basis, just like that of all clarithmetical theories we have seen, is $\cltw$. And the nonlogical axioms, just as in  the case of {\bf CLA5}, {\bf CLA6} and {\bf CLA7}, are Axioms 1-8 of Section \ref{ss11}.  As we may guess, the only difference between {\bf CLA1} and the weaker systems {\bf CLA4}-{\bf CLA7} is (again) related to how the induction rule operates. Here the difference is of a qualitative character due to the fact that {\bf CLA1}, unlike {\bf CLA4}-{\bf CLA7}, is a {\em natural deduction} system. Namely, while an inductive step of {\bf CLA7} is a {\em formula} $F(x)\mli F(x\successor)$, the corresponding inductive step in {\bf CLA1} is a {\em derivation} of $F(x\successor)$ from $F(x)$. In classical systems, according to the deduction theorem, a formula $E$ is derivable from a formula $G$ iff the formula $G\mli F$ is provable, so switching to natural deduction in the style of {\bf CLA1} would create no difference. The situation, however, is very different in (the resource-conscious) CoL-based systems, where deriving $E$ from $G$ is generally easier than proving $G\mli E$. This is so because a derivation may ``recycle'' its premises while, on the other hand, the antecedent of a $\mli$-combination  may be ``unrecyclable''. For instance, $E\mlc E$ is always derivable from $E$ but, as we had a chance to see from Exercise \ref{feb1a},  $E\mli E\mlc E$ is not always provable (and/or valid). While derivability of $F(x\successor)$ from $F(x)$ thus does not generally imply provability of $F(x)\mli F(x\successor)$, the latter {\em does} always imply the former. Consequently, {\bf CLA1} is at least as strong as {\bf CLA7}, meaning that {\bf CLA1}, just like {\bf CLA7}, can extensionally prove all primitive recursive time (and/or space) computable problems. A natural expectation here is that, at the same time, {\bf CLA1} takes us ``far beyond'' primitive recursive time (and/or space) computability, even though exactly {\em how far} still remains to be understood.  

\appendix

\section{Appendix}\label{sA}
Throughout the rest of this appendix, the sole purpose of which is to prove Lemma \ref{m2a}, $X$, $\cal X$ and $\chi$ are  as in  Section \ref{s19}. The terms ``configuration'', ``state'', ``tape symbol'' etc. exclusively refer to ones of $\cal X$. We assume that $0$ and $1$ are among the tape symbols. $\mbox{\scriptsize {\bf blank}}$ will stand for the blank tape symbol. 

The proofs given in this appendix will heavily and repeatedly rely on $\pa$ and the results of Section \ref{s17}. It is important to note that, almost always,  this reliance will be only {\em implicit}.

\subsection{Encoding configurations}\label{sA1}
In order to prove Lemma \ref{m2a}, we need to introduce a system of encoding for various objects of relevance. Whenever $O$ is such an object, $\code{O}$ will stand for its code. 

 Let   $\cal A$ be the set consisting of all states (of $\cal X$), and four versions $\hat{a}$, $\check{a}$, $\underline{\hat{a}}$, $\underline{\check{a}}$ of every tape symbol $a$.   As opposed to {\em tape symbols}, we refer to the elements of $\cal A$ (simply) as {\bf symbols}. As we are going to see shortly, in our encoding of configurations,  $\hat{a}$ (resp. $\check{a}$) means the tape symbol $a$ written on the work (resp. run) tape, and the presence (resp. absence) of an underline  indicates that the head of the corresponding tape is (resp. is not) currently looking at the cell containing $a$. 

We extend the \ $\hat{ }$, $\check{ }$ notation from tape symbols to strings of tape symbols. Namely, for any such string  $\alpha$,  $\hat{\alpha}$ means the result of replacing every symbol $a$ by $\hat{a}$ in  $\alpha$. Similarly for $\check{\alpha}$. 

We pick and fix a sufficiently large integer $\mathfrak{k}$  and, with $\mathfrak{K}$ standing for $2^{\mathfrak{k}}$ throughout the rest of this paper, encode each symbol $a$ as a natural number $\code{a}$ with $|\code{a}|=\mathfrak{K}$.    As practiced earlier, terminologically and notationally we identify such an  $\code{a}$ with the corresponding binary numeral. Thus, the codes of symbols are bit strings, all of length $\mathfrak{K}$ and none starting with a $0$. Needless to mention that different symbols are required to have different codes.
 
Further, where $a_1,\ldots,a_k$ is a sequence of symbols, we encode it as the binary numeral --- again, identified with the corresponding number --- $\code{a_1}\ldots\code{a_k}$. We will not always be careful about differentiating objects from their codes, and may say something like ``the symbol $b$'' where, strictly speaking, the code of that symbol is meant, or vice versa. 

We need to make clear what, exactly, is meant by a configuration. According to our earlier informal explanation, this is a full description of some ``current'' situation in $\cal X$, namely, a list indicating the state of $\cal X$, the locations of its two scanning heads, and the contents of its two tapes. The tapes, however, are infinite, and we need to agree on how to represent their contents by finite means. Remember our convention that a head of an HPM can never move past the leftmost blank cell of the corresponding tape, and that the work-tape head can never write the blank symbol. This means that every cell to the left of a blank cell will also be blank and, accordingly, when describing a configuration, it would be sufficient to describe the contents of its tapes up to (including) the leftmost blank  cells.   Precisely, we agree to understand each configuration $C$ as the following sequence of symbols:
\begin{equation}\label{m66}
a,\ \hat{b}_{0},\ \ldots\ \hat{b}_{i\iminus 1},\ \underline{\hat{b}}_{i},\ \hat{b}_{i\iplus 1},\ \ldots,\ \hat{b}_{m},\  \ \check{c}_{0},\ 
\ldots\ \check{c}_{j\iminus 1},\ \underline{\check{c}}_{j},\ \check{c}_{j\iplus 1},\ \ldots\ \check{c}_{n}
\end{equation}  
where, in the context of $C$,  $a$ is the (``current'') state of $\cal X$,   $b_{0}\ldots b_m$ (resp. $c_0\ldots c_n$) are the contents of cells $\# 0$ through $\# m$ 
(resp. $\# n$)  of the work (resp. run) tape, and $i$ (resp. $j$) is the cell $\#$ of the cell scanned by the head of the work (resp. run) tape. In addition,  both $b_{m}=c_{n}=\mbox{\scriptsize {\bf blank}}$ while no other $b_k$ or $c_k$ is \mbox{\scriptsize {\bf blank}}. We encode the above configuration as any other sequence of symbols, i.e., as the binary numeral 
\[\code{a}\ \code{\hat{b}_{0}}\ \ldots\ \code{\hat{b}_{i\iminus 1}}\ \code{\underline{\hat{b}}_{i}}\ \code{\hat{b}_{i\iplus 1}}\ \ldots\ \code{\hat{b}_{m}}\  \ \code{\check{c}_{0}}\ 
\ldots\ \code{\check{c} _{j\iminus 1}}\ \code{\underline{\check{c}}_{j}}\ \code{\check{c}_{j\iplus 1}}\ \ldots\ \code{\check{c}_{n}}.\]

In the sequel, we will be using the pseudoterm $x\circ y$ and  several  elementary formulas with special names, each one being a natural arithmetization of the corresponding predicate shown below:    
\begin{itemize}
\item $x\circ y$ abbreviates $x\mult 2^{|y|}\plus y$. Note that, when $x$ and $y$ are the codes of some sequences $a_1,\ldots,a_m$ and $b_1,\ldots,b_n$ of symbols, $x\circ y$ is the code of the {\bf concatenation} $a_1,\ldots,a_m,b_1,\ldots,b_n$ of those sequences.\vspace{-4pt}  
\item $\mathbb{N}(x,y)$ says ``if $x$ is $\code{\hat{b}_1,\ldots,\hat{b}_k}$ for some symbols $b_1,\ldots,b_k$ ($k\geq 0$), then $y$ is $\code{\check{b}_1,\ldots,\check{b}_k}$ .'' So, for instance, $\mathbb{D}(\code{\hat{1},\hat{.},\hat{0}},\code{\check{1},\check{.},\check{0}})$ is true.\vspace{-4pt} 
\item $\mathbb{C}(x)$ says ``$x$ is the code of a configuration''.\vspace{-4pt}
\item  $\mathbb{I}(x,y)$ says ``$x$ is the code of a configuration of the form (\ref{m66}), and $i\equals y$''.\vspace{-4pt} 
\item $\mathbb{J}(x,y)$ says ``$x$ is  the code of a configuration of the form (\ref{m66}), and $j\equals y$''.\vspace{-4pt}
\item $\mathbb{M}(x,y)$ says ``$x$ is the code of a configuration of the form (\ref{m66}), and $m\equals y$''.\vspace{-4pt}   
\item $\mathbb{E}(x,y)$ says ``$y$ is the code of the sequence of symbols resulting from changing every $0$ to $\check{0}$ and every $1$ to $\check{1}$ in the binary numeral representing number  $x$''. So, for instance, $\mathbb{E}(101,\code{\check{1},\check{0},\check{1}})$ is true.\vspace{-4pt}
\item $\mathbb{D}(x,y)$ says ``$x$ is $\code{\hat{b}_1,\ldots,\hat{b}_k}$ for some bits  $b_1,\ldots,b_k$ ($k\geq 0$) where $b_1$ (if present) is $1$, and $y$ is the number represented  by the numeral $b_1\ldots b_k$''. So, for instance, $\mathbb{D}(\code{\hat{1},\hat{0},\hat{1}},101)$ is true.\vspace{-4pt}
\item $\mathbb{S}(x,y)$ says ``$x$ is the code of a  configuration and $y$ is the code of the deterministic successor (see Section \ref{sds}) of that configuration''.\vspace{-4pt}
\item $\mathbb{A}(z,x,y)$ says ``$z$ is the code of a legitimate configuration $C$ (see Section \ref{sds}), $x$ is the code of the $y$th deterministic successor of $C$, and, for any $i$ with $0\mleq i\mleq y$, the state of the $i$th deterministic successor of $C$ is not a move state''. $\mathbb{A}(z,x,y)$ thus asserts that, after the (legitimate) configuration $z$, if Environment does not move, $\cal X$ reaches the configuration $x$ within $y$ ($y\mgeq 0$) steps, and it does not move during those steps, either.\vspace{-4pt}  
\item $\mathbb{A}'(z,y)$ abbreviates $\cle x \mathbb{A}(z,x,y)$. $\mathbb{A}'(z,y)$ thus says that, after reaching the (legitimate) configuration $z$, during the subsequent $y$ (including $0$) steps, if Environment does not move, neither does $\cal X$.\vspace{-4pt}   
\item $\mathbb{B}(z,x)$ says ``$z$ is the code of a legitimate configuration $C$, $x$ is the code of the $i$th deterministic successor of $C$ for some $i\mgeq 0$ and, for each   
$j$ with $0\mleq j\mleq i$, the state of the $j$th deterministic successor of $C$ is a move state iff $j\equals i$''.   $\mathbb{B}(z,x)$  thus asserts that $z$ is a legitimate configuration and $x$ is the earliest configuration after (and including) $z$ in which $\cal X$ moves in the scenario where Environment does not move.  
\end{itemize}

\begin{lemma}\label{m6bb} $\arfour \vdash \ade z(z\equals x\circ y)$.
\end{lemma}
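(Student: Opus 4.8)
The statement to prove is $\arfour \vdash \ade z(z\equals x\circ y)$, where $x\circ y$ abbreviates $x\mult 2^{|y|}\plus y$. The plan is to reduce this to provability facts already established in Section \ref{s17} together with the limited exponentiation fact.

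First I would recall that, by definition, the pseudoterm $x\circ y$ stands for the genuine term-level expression $x\mult 2^{|y|}\plus y$ --- except that $2^{|y|}$ is itself only a pseudoterm. So the real content is: we need to show provable computability of the function $(x,y)\mapsto x\cdot 2^{|y|}+y$, assembling it from the computable building blocks. By Fact \ref{comexp}, $\arfour$ proves $\ade w(w\equals 2^{|y|})$, i.e. the (limited) function $2^{|y|}$ is provably computable. By Fact \ref{commul}, $\arfour$ proves $\ade z(z\equals x\mult w)$, so multiplication is provably computable. By Fact \ref{comad}, $\arfour$ proves $\ade z(z\equals x\plus y)$, so addition is provably computable. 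Chaining these three via \textbf{LC}: given $x$ and $y$ (chosen by Environment), use Fact \ref{comexp} to obtain the value $a$ with $a\equals 2^{|y|}$; then use Fact \ref{commul} to obtain $b$ with $b\equals x\mult a$, i.e. $b\equals x\mult 2^{|y|}$; then use Fact \ref{comad} to obtain $c$ with $c\equals b\plus y$, i.e. $c\equals x\mult 2^{|y|}\plus y\equals x\circ y$; and choose $c$ for $z$ in $\ade z(z\equals x\circ y)$. Formally, $\ada x\ada y\ade z(z\equals x\circ y)$ is a logical consequence of $\ade y(y\equals 2^{|x|})$ (an instance of (\ref{sos1})), $\ada x\ada y\ade z(z\equals x\mult y)$ (Fact \ref{commul}), and $\ada x\ada y\ade z(z\equals x\plus y)$ (Fact \ref{comad}), together with the $\pa$-provable identity $\cla x\cla y\bigl(x\circ y\equals x\mult 2^{|y|}\plus y\bigr)$ which unfolds the pseudoterm; so it is derivable by \textbf{LC}.

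Concretely the argument runs: argue in $\arfour$; consider any $x,y$; using (\ref{sos1}) of Fact \ref{comexp} compute $a\equals 2^{|y|}$; using Fact \ref{commul} compute $b\equals x\mult a$; using Fact \ref{comad} compute $c\equals b\plus y$; by $\pa$, such a $c$ satisfies $c\equals x\circ y$; choose $c$ for $z$. This is essentially the same recipe used in the proof of Fact \ref{com} (assembling computability of composite terms from computability of the basic functions), only extended by one extra ingredient, the limited exponentiation $2^{|y|}$.

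I do not expect any real obstacle here; the only mildly delicate point is the bookkeeping around the pseudoterm $2^{|y|}$ --- one must be careful that Fact \ref{comexp} gives exactly $\ade y(y\equals 2^{|r|})$ (the \emph{limited} version, which is all we need since the exponent is $|y|$, not $y$), and that the $\pa$-level identity defining $x\circ y$ is stated with $2^{|y|}$ in the same sense. Everything else is a routine application of \textbf{LC} chaining together Facts \ref{comad}, \ref{commul}, and \ref{comexp}, exactly in the spirit of the proofs already given in Section \ref{s17}.
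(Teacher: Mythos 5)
Your proof is correct and follows exactly the route the paper intends: the paper's own proof consists of the single remark that the lemma is immediate from the results of Section \ref{s17}, and your chaining of Fact \ref{comexp} (the limited exponential $2^{|y|}$), Fact \ref{commul} and Fact \ref{comad} via LC is precisely the spelled-out version of that remark. No gaps; the attention you pay to using the limited form (\ref{sos1}) rather than unrestricted exponentiation is the right (and only) delicate point.
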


\begin{proof} Immediate in view of the results of Section \ref{s17}. \end{proof}

In the sequel, whenever we write $\mathfrak{K}$  within a formula or while reasoning in $\arfour$, it is to be understood as a standard variable-free term representing it. For clarity, let us say that this term is $0$ followed by $\mathfrak{K}$ \ $\successor$s.  We shall implicitly rely on the fact  that $\arfour\vdash\ade z(z\equals \mathfrak{K})$ (Fact \ref{com}). Similarly for $\mathfrak{k}$, as well as $\code{b}$ where $b$ is a symbol. 

\begin{lemma}\label{m8d} $\arfour \vdash   \ade y\mathbb{N}(x,y)$.
\end{lemma}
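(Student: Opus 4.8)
The predicate $\mathbb{N}(x,y)$ relates a string of ``hatted'' symbol-codes (work-tape symbols) to the corresponding string of ``checked'' symbol-codes (run-tape symbols), bit-by-bit over blocks of length $\mathfrak{K}$. Since $\hat{b}$ and $\check{b}$ are distinct symbols with distinct codes $\code{\hat{b}}$, $\code{\check{b}}$, both of fixed length $\mathfrak{K}$, the function sending $x=\code{\hat{b}_1,\ldots,\hat{b}_k}$ to $y=\code{\check{b}_1,\ldots,\check{b}_k}$ is obtained by scanning $x$ in consecutive $\mathfrak{K}$-bit chunks and replacing each chunk $\code{\hat{b}}$ by $\code{\check{b}}$. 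The plan is to prove the $\add$-version of this by $\arfour$-Induction on a witness that bounds the number of chunks, then obtain the target by LC, exactly in the style of Facts \ref{comad}, \ref{combit}, \ref{m6a}.

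Concretely, I would first prove by $\arfour$-Induction on $r$ the polynomially bounded statement
\[\ade y\bigl(|y|\mleq |x| \mlc \mathbb{N}'(x,r,y)\bigr),\]
where $\mathbb{N}'(x,r,y)$ says ``$\mathbb{N}(x,y)$ holds provided $x$ codes at most $|r|$ hatted symbols'' (and, say, sets $y\equals 0$ otherwise) --- note that $|\code{\check b_1,\ldots,\check b_k}|\equals |\code{\hat b_1,\ldots,\hat b_k}|\equals k\mult\mathfrak{K}$, so the sizebound $|y|\mleq|x|$ is $\pa$-provably correct. The basis ($r\equals 0$) is solved by choosing $0$ for $y$, since the empty string of symbols maps to the empty string. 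For the inductive steps, given a witness $a$ for $\ade y(\ldots\mlc\mathbb{N}'(x,r,y))$, one must handle one more chunk: using Fact \ref{m6a} (the provable computability of substrings $[x]_{u}^{z}$) together with Facts \ref{comlog}, \ref{comad}, \ref{comid}, \ref{combit}, one extracts the next $\mathfrak{K}$-bit block of $x$, checks whether it equals some $\code{\hat b}$ (a finite case distinction over the fixed, finitely many tape symbols $b$, using Fact \ref{comid} on the fixed constants $\code{\hat b}$), and if so appends the corresponding constant $\code{\check b}$ to $a$ via concatenation $a\circ\code{\check b}$ (Lemma \ref{m6bb}), obtaining the new witness; if no block remains, one reuses $a$. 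This is a ``little thought'' argument of the kind repeatedly carried out in Section \ref{s17}. Since $\mathbb{N}$ imposes no constraint when $x$ does not have the right block structure, all degenerate cases are won trivially. Finally, $\ade y\,\mathbb{N}(x,y)$ is a logical consequence of $\ade y(|y|\mleq|x|\mlc\mathbb{N}'(x,x,y))$ together with the $\pa$-provable fact that $|x|$ ``counts enough chunks'' (formally, that if $x$ codes $k$ hatted symbols then $k\mleq|x|$, hence $|x|$ works in the role of $r$); so the result follows by LC.

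\textbf{Main obstacle.} The genuinely routine-but-delicate part is setting up $\mathbb{N}'$ and the chunk-extraction so that everything stays inside the polynomially bounded fragment required by $\arfour$-Induction, and so that the arithmetizations $\mathbb{N}$, $\mathbb{N}'$ match up $\pa$-provably; once Fact \ref{m6a} and Lemma \ref{m6bb} are in hand, however, this is bookkeeping of exactly the sort the paper has already done several times. I expect the author's actual proof to be extremely short --- most likely a one-line appeal to ``the results of Section \ref{s17}'' plus Lemma \ref{m6bb} and Fact \ref{m6a}, leaving the induction implicit, just as in the proof of Lemma \ref{m6bb} itself and of Fact \ref{commul}.
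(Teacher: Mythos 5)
Your overall route is the same as the paper's: $\arfour$-Induction on a parameter whose size counts the number of $\mathfrak{K}$-bit chunks processed so far, chunk extraction via Fact \ref{m6a}, a finite case distinction over the finitely many symbol codes via Fact \ref{comid}, appending $\code{\check{b}}$ by concatenation via Lemma \ref{m6bb}, and a concluding application of LC. (Your guess that the author leaves the induction implicit is off --- the paper carries it out explicitly --- but that is immaterial.)

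There is, however, a genuine mismatch between the inductive formula you state and the inductive step you describe. You define $\mathbb{N}'(x,r,y)$ as ``$\mathbb{N}(x,y)$ holds provided $x$ codes at most $|r|$ hatted symbols, and $y\equals 0$ otherwise''. In the step from $r$ to $r\zero$, consider the critical case where $x$ codes exactly $|r|\plus 1$ symbols: the antecedent's witness $a$ then falls under the ``otherwise'' clause, so $a\equals 0$ and carries no information about the translation of the first $|r|$ chunks of $x$. Hence ``appending $\code{\check{b}}$ to $a$'' does not produce the required translation of all of $x$, and producing it from scratch in this one step is precisely the task the induction was supposed to accomplish. The invariant must carry the partial translation: the inductive formula should be about a prefix of $x$, as in the paper, namely $\mathfrak{K}\mult |z|\mleq |x|\mli \ade y\bigl(|y|\mleq |x|\mlc \mathbb{N}([x]_{0}^{\mathfrak{K}\mult |z|},y)\bigr)$, so that the stage-$z$ witness is the checked translation of the first $|z|$ chunks and can be extended by one chunk. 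Note that after this repair your final LC step also changes: you can no longer instantiate the parameter by $x$ itself (since $\mathfrak{K}\mult |x|\mgreater |x|$); instead, as the paper does, one computes the number $k$ of chunks as the $\mathfrak{k}$th binary predecessor of $|x|$ (recall $\mathfrak{K}\equals 2^{\mathfrak{k}}$), then a constant $b$ with $|b|\equals k$ (the binary predecessor of $2^{k}$, via Facts \ref{m4gg} and \ref{comexp}), plugs $b$ in for $z$, and uses the fact that $[x]_{0}^{\mathfrak{K}\mult k}\equals x$ whenever $x$ is indeed the code of a sequence of symbols.
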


\begin{proof}   Argue in $\arfour$. By $\arfour$-Induction on $z$, we first want to show 
\begin{equation}\label{m19a}
\mathfrak{K}\mult |z|\mleq |x|\mli \ade y\bigl(|y|\mleq |x|\mlc  \mathbb{N}([x]_{0}^{\mathfrak{K}\imult |z|},y)\bigr).
\end{equation}
The base $\mathfrak{K}\mult |0|\mleq |x|\mli \ade y\bigl(|y|\mleq |x|\mlc  \mathbb{N}([x]_{0}^{\mathfrak{K}\imult |0|},y)\bigr)$ is solved by choosing $0$ (the code of the empty sequence of symbols) for $y$. 
The left inductive step is 
\[\Bigr(\mathfrak{K}\mult |z|\mleq |x|\mli \ade y\bigl(|y|\mleq |x|\mlc  \mathbb{N}([x]_{0}^{\mathfrak{K}\imult |z|},y)\bigr)\Bigr)\mli \Bigr(\mathfrak{K}\mult |z\zero|\mleq |x|\mli \ade y\bigl(|y|\mleq |x|\mlc  \mathbb{N}([x]_{0}^{\mathfrak{K}\imult |z\izero|},y)\bigr)\Bigr).\]
To solve it,  we first figure our whether $z\equals 0$ or $z\notequals 0$. If $z\equals 0$, we ignore the antecedent and do in the consequent the same as what we did in the base case. Otherwise, if $z\notequals 0$, we wait till Environment chooses a constant $a$ for $y$ in the antecedent. Then we compute the value $b$ of $ [x]_{|z|\imult\mathfrak{K}}^{\mathfrak{K}}$ (remember Fact \ref{m6a}). If $b$ is $\code{\hat{c}}$ for some symbol $c$ (which can be established by using Fact \ref{comid} as many times as the number of symbols), then, using Lemma \ref{m6bb}, we compute $a\circ \code{\check{c}}$ and choose the computed value for $y$ in the consequent. Otherwise it does not matter what we choose for $y$, so choose $0$. The right inductive step is similar but simpler, as we do not need to give the case $z\equals 0$ a special consideration. Thus, (\ref{m19a}) is proven.  

Remember that $\mathfrak{K}\equals 2^{\mathfrak{k}}$. 
Let $a$ be the $\mathfrak{k}$th binary predecessor of $|x|$, that is, we have \(|x|\ \equals \ a \mathfrak{b}_1\ldots \mathfrak{b}_{\mathfrak{k}},\) where each $\mathfrak{b}_i$ is either $\zero$ or $\one$. 
Such an $a$ can be found by first computing the value of $|x|$ and then, starting from that value, repeatedly computing binary successor  (the constant) $\mathfrak{k}$ times.   Let  $b$ be the value of the binary predecessor of $2^a$. Note that $|b|\equals a$. Plugging $b$ for $z$ in  (\ref{m19a}), we make this resource compute a value  $c$ for which  $ \mathbb{N}([x]_{0}^{\mathfrak{K}\imult |b|},c)$, i.e. $ \mathbb{N}([x]_{0}^{\mathfrak{K}\imult a},c)$, is true. Now notice that, if $x$ is indeed the code of a sequence of symbols,   $a$ is the number of symbols in that  sequence  and, as the length of the code of each symbol is $\mathfrak{K}$, we have 
$\mathfrak{K}\mult a\equals |x|$; hence, we also have $[x]_{0}^{\mathfrak{K}\imult a}\equals x$; hence, as $ \mathbb{N}([x]_{0}^{\mathfrak{K}\imult a},c)$ is true, so is $ \mathbb{N}(x,c)$. This means that 
we win the target $ \ade y\mathbb{N}(x,y)$ by choosing $c$ for $y$.\vspace{-7pt} 
\end{proof}

\begin{lemma}\label{m6b} $\arfour$ proves each of the following:

1. $\mathbb{C}(x)\mli \ade y\mathbb{I}(x,y)$.

2. $\mathbb{C}(x)\mli \ade y\mathbb{J}(x,y)$.

3. $\mathbb{C}(x)\mli \ade y\mathbb{M}(x,y)$.
\end{lemma}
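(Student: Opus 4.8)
\textbf{Proof plan for Lemma \ref{m6b}.}

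The plan is to handle all three clauses in a uniform way by $\arfour$-Induction on $x$, exploiting the observation that a configuration, as a bit string, is the concatenation of the code of a state followed by a block of $\mathfrak{K}$-bit symbol-codes, so that reading off any of the counters $i$, $j$, $m$ amounts to scanning that block one $\mathfrak{K}$-bit chunk at a time and detecting distinguished symbols. Concretely, for clause 1, I would first prove by $\arfour$-Induction on $z$ the auxiliary statement
\[
\mathfrak{K}\mult|z|\mleq|x| \mli \ade y\bigl(|y|\mleq|x|\mlc\mathbb{I}'([x]_{0}^{\mathfrak{K}\imult|z|},y)\bigr),
\]
where $\mathbb{I}'(u,y)$ says that $u$ is a prefix of a configuration that contains an underlined work-tape symbol and $y$ is the position (counted among work-tape cells) of that underlined cell (and, say, $y\equals 0$ by default if no underlined work-tape symbol occurs in $u$). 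The basis is trivial (choose $0$ for $y$); in the inductive steps, I would use Example \ref{113} to branch on $z\equals 0$ versus $z\notequals 0$, then use Fact \ref{m6a} to extract the chunk $[x]_{|z|\imult\mathfrak{K}}^{\mathfrak{K}}$, use repeated applications of Fact \ref{comid} to test whether that chunk equals $\code{\underline{\hat{c}}}$ for one of the finitely many tape symbols $c$, and, if so, set $y$ to the current chunk index (computed via Facts \ref{comlog}, \ref{comad}); otherwise I would carry over the value supplied by the induction hypothesis (incrementing the work-tape-cell count only when the current chunk is of the form $\code{\hat{c}}$ or $\code{\underline{\hat{c}}}$, i.e.\ a work-tape cell rather than a run-tape cell or the leading state). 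This is exactly the bookkeeping pattern already used in the proof of Lemma \ref{m8d}.

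Having established the auxiliary statement for a $z$ with $|z|$ large enough --- namely, taking $z$ to be the binary predecessor of $2^a$ where $a$ is the $\mathfrak{k}$th binary predecessor of $|x|$, so that $\mathfrak{K}\mult|z|\equals|x|$ whenever $x$ genuinely codes a configuration, a step borrowed verbatim from Lemma \ref{m8d}'s proof and relying on Facts \ref{m4gg}, \ref{comexp} --- I would conclude, from $\mathbb{C}(x)$, that $[x]_{0}^{\mathfrak{K}\imult|z|}\equals x$ and hence that the witness $y$ produced satisfies $\mathbb{I}(x,y)$; this gives clause 1 by LC. Clauses 2 and 3 are entirely analogous: for clause 2 the scanning looks for an underlined run-tape symbol $\code{\underline{\check{c}}}$ and counts only run-tape chunks; for clause 3 it looks for the (first, equivalently only) work-tape \mbox{\scriptsize {\bf blank}} chunk $\code{\hat{\mbox{\scriptsize {\bf blank}}}}$ and returns its work-tape-cell index. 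In each case the $\arfour$-provable ingredients are the same: Example \ref{113}, Fact \ref{m6a}, Fact \ref{comid}, Facts \ref{comlog}, \ref{comad}, \ref{comexp}, \ref{m4gg}, and LC, together with a handful of $\pa$-provable arithmetical identities about how $\mathfrak{K}$-bit chunk indices relate to cell counts.

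The only mildly delicate point --- and the one I expect to be the main obstacle --- is making sure the auxiliary inductive predicate is \emph{polynomially bounded}, as required for $\arfour$-Induction to apply: the quantifier $\ade y(\ldots)$ must carry an explicit polynomial sizebound on $y$. This is unproblematic here, since each counter $i$, $j$, $m$ is at most $|x|$ (a configuration prefix of length $\mathfrak{K}\mult|z|$ has fewer than $|x|$ cells), so the sizebound $|y|\mleq|x|$ shown in the displayed formula above is exactly of the permitted form; I just have to be careful to keep it in the formula throughout the induction and then discard it by LC at the end, precisely as done in Lemmas \ref{m8d} and \ref{m6a}. A secondary nuisance is that "detecting a distinguished symbol" requires finitely many equality tests (one per tape symbol), but since $\cal X$ is fixed the number of tape symbols is a fixed constant, so this is a bounded --- hence harmless --- amount of case analysis, each case being an application of Fact \ref{comid}. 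With these points dispatched, the three clauses follow.
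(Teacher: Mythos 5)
Your proposal is correct and matches the paper's proof in all essentials: the paper likewise proves clause 1 by $\arfour$-Induction on a scanning variable $z$ whose length counts $\mathfrak{K}$-bit chunks, keeps a polynomially bounded $\ade y(|y|\mleq |x|\mlc\ldots)$ throughout and discards the bound by LC at the end, uses chunk extraction (Fact \ref{m6a}) plus a constant amount of symbol testing, and dismisses clauses 2 and 3 as ``similar''. The only difference is cosmetic bookkeeping: where you relativize the invariant to the prefix $[x]_{0}^{\mathfrak{K}\imult |z|}$ with a default witness (following Lemma \ref{m8d}) and finish by choosing $z$ with $\mathfrak{K}\mult |z|\equals |x|$, the paper keeps the whole configuration $x$ in the invariant and carries a disjunct saying the head position exceeds $|z|$ (its formula $\mathbb{I}'(x,|z|)\add \ade y(|y|\mleq |x|\mlc \mathbb{I}(x,y))$), which it eliminates at the end via the $\pa$-provable bound on the head position.
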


\begin{proof} Here we only prove clause 1 of the lemma. The remaining clauses are similar. 

Argue in $\arfour$. Let $\mathbb{I}'(x,z)$ be (a natural formalization of) the  predicate 
 ``$x$ is the code of a configuration  of the form (\ref{m66}), and $z\mless i$''.  By $\arfour$-Induction on $z$, we want to prove 
\begin{equation}\label{m6c}
\mathbb{C}(x)\mli \mathbb{I}'(x,|z|) \add \ade y(|y|\mleq |x|\mlc \mathbb{I}(x,y)\bigr).
\end{equation}

The basis is $\mathbb{C}(x)\mli \mathbb{I}'(x,|0|) \add \ade y(|y|\mleq |x|\mlc \mathbb{I}(x,y)\bigr)$. We find $[x]_{\mathfrak{K}} ^{\mathfrak{K}}$. If the latter is $\underline{\hat{a}}$ for some tape symbol $a$, we choose  the right $\add$-disjunct in the consequent and then choose $0$ for $y$. Otherwise we choose the left disjunct. 

The left inductive step is 
\begin{equation}\label{m7b}
\bigl(\mathbb{C}(x)\mli \mathbb{I}'(x,|z|) \add \ade y(|y|\mleq |x|\mlc \mathbb{I}(x,y)\bigr)\mli \bigl(\mathbb{C}(x)\mli \mathbb{I}'(x,|z\zero|) \add \ade y(|y|\mleq |x|\mlc \mathbb{I}(x,y)\bigr). 
\end{equation}
To solve it, we wait till Environment chooses a $\add$-disjunct in   the antecedent. If the left disjunct is chosen,  we find $[x]_{|z\izero|\imult \mathfrak{K}}^{\mathfrak{K}}$. If the latter is $\underline{\hat{a}}$ for some tape symbol $a$, we choose  the right $\add$-disjunct in the consequent of  (\ref{m7b}), and specify $y$ as the value of $|z\zero |$;  otherwise we choose the left $\add$-disjunct there. Suppose now the right disjunct is chosen by Environment in the  antecedent of (\ref{m7b}). We further wait till a constant $c$ is chosen for $y$ there. Then we choose the right $\add$-disjunct in the consequent of  (\ref{m7b}), and specify $y$ as $c$ in it. 
The right inductive step is virtually the same. It is not hard to see that our strategy is successful. 

Now, the target $\mathbb{C}(x)\mli \ade y\mathbb{I}(x,y)$ can be seen to be a logical consequence of (\ref{m6c}), the $\pa$-provable fact $\cla x\cla z\bigl(z\equals |x|\mli \gneg \mathbb{I}'(x,z)\bigr)$ and the $\arfour$-provable sentence $\ada x\ade z(z\equals |x|)$.\vspace{-7pt} 
\end{proof}

\begin{lemma}\label{m8c} $\arfour \vdash \ade y\mathbb{E}(x,y)$.
\end{lemma}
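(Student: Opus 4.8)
\textbf{Proof plan for Lemma \ref{m8c}.}
The plan is to prove $\arfour\vdash\ade y\mathbb{E}(x,y)$ by essentially the same ``$\arfour$-Induction on a length-parametrized version, then collapse'' technique already used for Lemmas \ref{m8d}, \ref{m6b}. Recall that $\mathbb{E}(x,y)$ asserts that $y$ is the code of the sequence of run-tape symbols $\check{b}_1,\ldots,\check{b}_k$ obtained by reading the binary numeral $b_1\ldots b_k$ representing $x$ and ``checking'' each bit; since each such code $\code{\check{0}}$ or $\code{\check{1}}$ has length exactly $\mathfrak{K}$, the number $y$ has size $\mathfrak{K}\mult|x|$, which is polynomially bounded in $|x|$. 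So the natural auxiliary formula to prove by $\arfour$-Induction on $x$ is
\begin{equation}\label{m8cplan}
\ade y\bigl(|y|\mleq \mathfrak{K}\mult|x| \mlc \mathbb{E}(x,y)\bigr),
\end{equation}
which is polynomially bounded and from which the target $\ade y\mathbb{E}(x,y)$ follows by LC (just forget the size conjunct, using the $\cltw$-provability of $\ada x\ade y(S\mlc G)\intimpl\ada x\ade y\, G$).

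First I would verify the three premises of the induction. The \emph{basis} is $\ade y\bigl(|y|\mleq\mathfrak{K}\mult|0|\mlc\mathbb{E}(0,y)\bigr)$: since $0$ is the empty bit string, $\mathbb{E}(0,0)$ holds, so we choose $0$ for $y$ and win (by $\pa$, $|0|\mleq 0$). For the \emph{left inductive step}
\[\ade y\bigl(|y|\mleq\mathfrak{K}\mult|x|\mlc\mathbb{E}(x,y)\bigr)\mli\ade y\bigl(|y|\mleq\mathfrak{K}\mult|x\zero|\mlc\mathbb{E}(x\zero,y)\bigr),\]
first figure out (using Example \ref{113}) whether $x\equals 0$; if so, ignore the antecedent and act as in the basis (since $x\zero\equals 0$, its numeral is still empty). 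If $x\notequals 0$, wait for Environment to choose a constant $a$ for $y$ in the antecedent, so we know $\mathbb{E}(x,a)$; then, using Lemma \ref{m6bb}, compute $b\equals a\circ\code{\check{0}}$ (appending the code of the ``checked'' digit $0$ on the right, matching the fact that $x\zero$'s numeral is $x$'s numeral with a $0$ appended), and choose $b$ for $y$. By $\pa$ we have $|b|\equals|a|\plus\mathfrak{K}\mleq\mathfrak{K}\mult|x|\plus\mathfrak{K}\equals\mathfrak{K}\mult|x\zero|$ (when $x\notequals 0$, so $|x\zero|\equals|x|\plus 1$), and $\mathbb{E}(x\zero,b)$ holds. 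The \emph{right inductive step} is identical but appends $\code{\check{1}}$ instead, so that $\mathbb{E}(x\one,b)$ holds; here we do not even need the special case $x\equals 0$ since $x\one\equals 1$ has numeral $1$, and $a\equals 0$, $b\equals 0\circ\code{\check{1}}\equals\code{\check{1}}$ works uniformly. Throughout, the computability of $\mathfrak{K}$, of $\code{\check{0}}$, $\code{\check{1}}$, and of $\circ$ (Lemma \ref{m6bb}) as $\arfour$-resources is what makes these moves effective, and all the arithmetical identities relating the numeral of $x\zero$ (resp.\ $x\one$) to that of $x$ are routine $\pa$-facts.

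The last step is the collapse: having proven \eqref{m8cplan} by $\arfour$-Induction on $x$, observe that $\ada x\ade y\mathbb{E}(x,y)$ is a logical consequence of it (the sequent $\ada x\ade y(|y|\mleq\mathfrak{K}\mult|x|\mlc\mathbb{E}(x,y))\intimpl\ada x\ade y\mathbb{E}(x,y)$ is $\cltw$-provable: wait for Environment's choice $a$ for $x$, pass it along, wait for the provider's choice $b$ for $y$, pick the same $b$, and the remaining $\mld$-combination follows by Wait). I do not anticipate a genuine obstacle here: the only mildly delicate point is being careful about the degenerate cases $x\equals 0$ (empty numeral) and the interaction of ``$0$ is another name for $\epsilon$'' with the digit-appending identities, but these are exactly the kind of $\pa$-bookkeeping that the excerpt has repeatedly swept into phrases like ``by $\pa$''. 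Accordingly I would present this proof in the same terse, informal-argument-in-$\arfour$ style, explicitly flagging the use of Example \ref{113}, Lemma \ref{m6bb}, Fact \ref{com} (for $\ade z(z\equals\mathfrak{K})$ etc.), and closing with the LC collapse.
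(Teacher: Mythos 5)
Your proposal is correct and follows essentially the same route as the paper's own proof: $\arfour$-Induction on $x$ applied to the polynomially bounded auxiliary formula $\ade y\bigl(|y|\mleq \mathfrak{K}\mult|x|\mlc\mathbb{E}(x,y)\bigr)$, with the basis choosing $0$, the inductive steps appending $\code{\check{0}}$ or $\code{\check{1}}$ via Lemma \ref{m6bb} (with the $x\equals 0$ special case handled only on the left), and the final collapse to $\ade y\mathbb{E}(x,y)$ by LC. The only difference is that you spell out the $\pa$-bookkeeping in slightly more detail than the paper does.
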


\begin{proof} Argue in $\arfour$. By $\arfour$-Induction on $x$, we want to show $\ade y\bigl(|y|\mleq \mathfrak{K}\mult |x|\mlc \mathbb{E}(x,y)\bigr)$, from which the target $\ade y\mathbb{E}(x,y)$ immediately follows by LC. 

The base $\ade y\bigl(|y|\mleq \mathfrak{K}\mult |0|\mlc \mathbb{E}(0,y)\bigr)$ is solved by choosing $0$ (the code of the empty sequence of symbols) for $y$. For the left inductive step 
\[\ade y\bigl(|y|\mleq \mathfrak{K}\mult |x|\mlc \mathbb{E}(x,y)\bigr)\mli \ade y\bigl(|y|\mleq \mathfrak{K}\mult |x\zero|\mlc \mathbb{E}(x\zero,y)\bigr), \]
we first figure out whether $x\equals 0$ or not. If yes, we ignore the antecedent and act in the consequent in the same way as in the basis case. Otherwise, we
wait till Environment chooses a constant $a$ for $y$ in the antecedent. Then, using Lemma \ref{m6bb}, we compute the value $b$ of $a\circ \code{\check{0}}$ and choose $b$ for $y$ in the consequent. The right inductive step is similar, with the difference that $b$ should be the value of $a\circ \code{\check{1}}$ there; also, the case of $x\equals 0$ does not require a special handling.\vspace{-7pt}  
\end{proof}

\begin{lemma}\label{m6p} $\arfour \vdash \gneg\cle y\mathbb{D}(x,y)\add \ade y\mathbb{D}(x,y)$.
\end{lemma}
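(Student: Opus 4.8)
\textbf{Proof plan for Lemma \ref{m6p}.}

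The statement $\arfour\vdash \gneg\cle y\mathbb{D}(x,y)\add \ade y\mathbb{D}(x,y)$ asserts the provable decidability of the property ``$x$ is the code of a sequence $\hat{b}_1,\ldots,\hat{b}_k$ of (hatted) bits whose leading bit, if any, is $1$'', together with the provable computability of the number $b_1\ldots b_k$ that such a code encodes. The plan is to argue directly in $\arfour$, proceeding by $\arfour$-Induction on $x$, exactly in the style of the preceding lemmas (Lemmas \ref{m8d}, \ref{m6b}, \ref{m8c}). The natural polynomially bounded formula to induct on is
\[
\bigl(\gneg\cle y\mathbb{D}(x,y)\add \ade y(|y|\mleq |x|\mlc \mathbb{D}(x,y))\bigr),
\]
from which the target follows by LC (dropping the size conjunct in the $\ade$-disjunct and noting that $\gneg\cle y\mathbb{D}(x,y)$ is unchanged). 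Here I rely on the $\pa$-provable fact that whenever $\mathbb{D}(x,y)$ holds, the number $y$ it yields has length at most $|x|$, since $y$ has one bit per hatted symbol of $x$ and each symbol code has fixed length $\mathfrak{K}\geq 1$.

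For the \emph{basis}, $x$ is $0$, which is the code of the empty sequence of symbols; $\mathbb{D}(0,0)$ is $\pa$-provably true, so we choose the right $\add$-disjunct and then $0$ for $y$. For the \emph{left inductive step} (and symmetrically the right one), given the resource $\gneg\cle y\mathbb{D}(x,y)\add \ade y(|y|\mleq |x|\mlc \mathbb{D}(x,y))$, we must decide the same property for $x\zero$ and, in the positive case, output the corresponding number. First, using Example \ref{113}, we figure out whether $x\equals 0$. If $x\equals 0$, then $x\zero\equals 0$ as well, and we act as in the basis. Otherwise we consult the antecedental resource. If it selects its left $\add$-disjunct ($\gneg\cle y\mathbb{D}(x,y)$ is true), then $x$ is not the code of a well-formed hatted-bit sequence, hence neither is $x\zero$ (appending a block of $\mathfrak{K}$ bits on the right cannot repair malformedness, and $\pa$ proves this), so we select the left $\add$-disjunct in the consequent. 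If instead the resource provides a constant $c$ for $y$ with $\mathbb{D}(x,c)$, we compute, using Fact \ref{m6a}, the trailing block $[x\zero]_{\mathfrak{K}\imult|x|}^{\mathfrak{K}}$ — i.e., the code of the last appended symbol — and check, using Fact \ref{comid} finitely many times (once per tape symbol), whether it equals $\code{\hat{0}}$ or $\code{\hat{1}}$. If it is $\code{\hat{0}}$, the new number is $c\zero$, computed by Axiom 9; if $\code{\hat{1}}$, it is $c\one$, computed via Example \ref{onesuc}; in either case we select the right $\add$-disjunct and output that value (whose size is $|c|\plus 1\mleq|x|\plus 1\mleq|x\zero|$, as required). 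If the trailing block is neither $\code{\hat{0}}$ nor $\code{\hat{1}}$, then $x\zero$ is malformed and we select the left $\add$-disjunct. All the arithmetical facts invoked — that a code's well-formedness and its decoded value propagate correctly under appending one symbol-block, and the various size bounds — are routine $\pa$-provabilities on which we rely implicitly, as throughout Section \ref{s17} and this appendix.

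I do not anticipate a genuine obstacle here: this lemma is of the same character as Lemmas \ref{m8d} and \ref{m8c}, differing only in bookkeeping. The mildly delicate point is the case analysis on the leading-bit-is-$1$ constraint in $\mathbb{D}$: one must check that appending on the \emph{right} never affects the leading symbol, so the constraint's truth value is inherited from $x$ to $x\zero$ unless $x\equals 0$ (handled separately, since the leading symbol of a one-symbol sequence is its only symbol). Another routine-but-necessary observation is that extracting the final $\mathfrak{K}$-bit block of $x\zero$ requires knowing $|x|$, obtained via Fact \ref{comlog}, before invoking Fact \ref{m6a} with $y\equals |x|\imult\mathfrak{K}$ and $z\equals\mathfrak{K}$; the side conditions $|x|\imult\mathfrak{K}\mless|x\zero|$ and $|x|\imult\mathfrak{K}\plus\mathfrak{K}\mleq|x\zero|$ hold $\pa$-provably precisely when $x$ has length a multiple of $\mathfrak{K}$, and when it does not, $x$ (hence $x\zero$) is malformed and the left disjunct is selected. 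Packaging these checks costs only finitely many appeals to the already-established facts of Section \ref{s17}, so the induction goes through and the target follows by LC.
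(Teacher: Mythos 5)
There is a genuine gap, and it lies at the heart of your induction scheme: you induct on $x$ itself, so the inductive step passes from $x$ to $x\zero$, which appends a \emph{single bit} to the binary representation of $x$ — not a whole symbol block of $\mathfrak{K}$ bits. The code structure that $\mathbb{D}$ refers to advances in blocks of length $\mathfrak{K}$, so knowing $\gneg\cle y\mathbb{D}(x,y)\add\ade y\mathbb{D}(x,y)$ for the binary predecessor $x$ gives essentially no information about $x\zero$: if $x$ is well formed (length a multiple of $\mathfrak{K}$), then $x\zero$ has length $|x|\plus 1$ and is automatically malformed, while $x\zero$ can be well formed only when $x$ itself is \emph{not}, its last block then consisting of the final $\mathfrak{K}\minus 1$ bits of $x$ followed by the appended $0$. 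Consequently your key transfer claims are false: malformedness is not preserved under appending bits on the right (e.g.\ a too-short prefix of $\code{\hat{1}}$ becomes well formed after enough appended zeros, and $\pa$ certainly does not prove the contrary), and the value $c$ with $\mathbb{D}(x,c)$ is not ``the decoding of $x\zero$ minus its last symbol''. The index bookkeeping betrays the same confusion: the substring $[x\zero]_{\mathfrak{K}\imult|x|}^{\mathfrak{K}}$ does not exist, since $|x\zero|\equals|x|\plus 1\mless \mathfrak{K}\mult|x|\plus\mathfrak{K}$ for every $x$, so the stated side conditions of Fact \ref{m6a} never hold, rather than holding ``precisely when $x$ has length a multiple of $\mathfrak{K}$''. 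The analogy with Lemma \ref{m8c} is misleading because there the induction variable is the \emph{decoded} number, each of whose bits corresponds to one output symbol, so the bit-step and the block-step coincide; here the induction variable is the \emph{code}, and they do not.

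The paper avoids exactly this mismatch by not inducting on $x$ at all: it inducts on an auxiliary variable $z$ and proves $\mathfrak{K}\mult|z|\mleq|x|\mli \gneg\cle y\mathbb{D}([x]_{0}^{\mathfrak{K}\imult|z|},y)\add\ade y\mathbb{D}([x]_{0}^{\mathfrak{K}\imult|z|},y)$, so that each induction step extends the block-aligned prefix of $x$ by one full symbol block $[x]_{\mathfrak{K}\imult|z|}^{\mathfrak{K}}$ (compared against $\code{\hat{0}}$ and $\code{\hat{1}}$, with $c\zero$ or $c\one$ produced accordingly); the target then follows by instantiating $z$ so that $|z|$ equals the number of blocks of $x$, via the same ``$\mathfrak{k}$th binary predecessor of $|x|$, then binary predecessor of $2^a$'' trick used in Lemma \ref{m8d}. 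If you want to salvage a direct induction on $x$ you would have to strengthen the induction formula considerably (e.g.\ carry along the residue of $|x|$ modulo $\mathfrak{K}$ and data about the incomplete last block), which is precisely the bookkeeping the auxiliary-variable prefix induction is designed to sidestep. Your basis case and your use of Facts \ref{comlog}, \ref{comid}, \ref{m6a}, Axiom 9 and Example \ref{onesuc} as computational resources are fine; it is the inductive transfer from $x$ to $x\zero$, $x\one$ that does not go through as written.
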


\begin{proof} Argue in $\arfour$. By $\arfour$-Induction on $z$, we want to show that 
\begin{equation}\label{m6q}
 \mathfrak{K}\mult |z|\mleq |x|\mli \gneg \cle y\mathbb{D}([x]_{0}^{\mathfrak{K}\imult |z|},y)\add\ade y\mathbb{D}([x]_{0}^{\mathfrak{K}\imult |z|},y ).
\end{equation}
The basis $\mathfrak{K}\mult |0|\mleq |x|\mli \gneg \cle y\mathbb{D}([x]_{0}^{\mathfrak{K}\imult |0|},y)\add\ade y\mathbb{D}([x]_{0}^{\mathfrak{K}\imult |0|},y )$ is obviously solved by choosing the right $\add$-disjunct and specifying $y$ as $0$ in it.  

The left inductive step is 
\[\bigl(\mathfrak{K}\mult |z|\mleq |x|\mli \gneg \cle y\mathbb{D}([x]_{0}^{\mathfrak{K}\imult |z|},y)\add\ade y\mathbb{D}([x]_{0}^{\mathfrak{K}\imult |z|},y )\bigr)\mli 
\bigl( \mathfrak{K}\mult |z\zero|\mleq |x|\mli \gneg \cle y\mathbb{D}([x]_{0}^{\mathfrak{K}\imult |z\izero|},y)\add\ade y\mathbb{D}([x]_{0}^{\mathfrak{K}\imult |z\izero|},y )\bigr).\]
It is solved as follows. If $z\equals 0$, we do in the consequent the same as in the basis case. Suppose now $z\notequals 0$. We wait till Environment selects a $\add$-disjunct in the antecedent (if there is no such selection, we win). If the left disjunct is selected there, we also select the left disjunct in the consequent  and rest our case. Suppose now Environment selects the right disjunct. We wait till it further selects a constant $a$ for $y$ there. Then we compute the value $b$ of $[x]^{\mathfrak{K}}_{\mathfrak{K}\imult |z|}$. If $b\equals \code{\hat{0}}$, we select the right $\add$-disjunct in the consequent, compute  the value $c$ of  $a\zero$, and specify $y$ as $c$ there. If $b\equals \code{\hat{1}}$, we again select the right $\add$-disjunct in  the consequent, compute  the value $c$ of  $a\one$, and specify $y$ as $c$ there. Finally, if neither $b\equals \code{\hat{0}}$ nor $b\equals \code{\hat{1}}$, we select the left disjunct and retire in victory.  
The right inductive step is the same, with the only difference that the case  $z\equals 0$ does not require a special handling there. Obviously our strategy wins, and  (\ref{m6q}) is thus proven.

Now, to solve the target $ \gneg\cle y\mathbb{D}(x,y)\add \ade y\mathbb{D}(x,y)$, we use a trick similar to the one employed in the proof of Lemma \ref{m8d}. Namely, we compute the value $a$ of the $\mathfrak{k}$th binary predecessor of $|x|$, and then the value   $b$ of the binary predecessor of $2^a$. Note that $|b|\equals a$. Taking into account that $\mathfrak{K}\mult a\mleq |x|$, by plugging $b$ for $z$ in  (\ref{m6q}),   we essentially turn  this resource into 
\[\gneg \cle y\mathbb{D}([x]_{0}^{\mathfrak{K}\imult a},y)\add\ade y\mathbb{D}([x]_{0}^{\mathfrak{K}\imult a},y ). \]
This way we would either know that $\gneg \cle y\mathbb{D}([x]_{0}^{\mathfrak{K}\imult a},y)$ is true, or find a constant $c$ for which (we know that) 
$\mathbb{D}([x]_{0}^{\mathfrak{K}\imult a},c )$ is true. Note that if $x$ is the code of a sequence of symbols, then $\mathfrak{K}\imult a$ is nothing but $|x|$ and hence $[x]_{0}^{\mathfrak{K}\imult a}$ is nothing but $x$. With this observation in mind, if $\gneg \cle y\mathbb{D}([x]_{0}^{\mathfrak{K}\imult a},y)$ is true, then choosing the left disjunct of  $ \gneg\cle y\mathbb{D}(x,y)\add \ade y\mathbb{D}(x,y)$ wins this game; and, if $\mathbb{D}([x]_{0}^{\mathfrak{K}\imult a},c )$ is true, then choosing the right disjunct and specifying $y$ as $c$ in it wins the game.\vspace{-7pt}   
\end{proof}

\begin{lemma}\label{m4a}
$\arfour\vdash \mathbb{C}(x)\mli \ade y \mathbb{S}(x,y)$.
\end{lemma}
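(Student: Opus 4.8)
The plan is to argue in $\arfour$ and show how, given (the code of) a configuration $x$, we can efficiently compute the code $y$ of its deterministic successor --- the configuration that results from running one transition step of $\cal X$ under the assumption that Environment makes no move during the cycle described by $x$. The key observation is that the transition function of $\cal X$ is a fixed finite object, so the computation of the successor configuration from $x$ decomposes into a bounded number of ``local'' operations on the code of $x$, each of which is expressible using the polynomial-time-computable functions established in Section \ref{s17} and the coding lemmas already proven in this appendix.

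First I would use Lemma \ref{m6b} to extract from $x$ the positions $i$ (work-tape head), $j$ (run-tape head) and $m$ (the index of the leftmost blank on the work tape), as well as the current state $a$, which is just $[x]_{0}^{\mathfrak{K}}$ (computable by Fact \ref{m6a}). Since none of these exceeds $|x|$ in size, all subsequent arithmetic stays polynomially bounded. Next, using repeated applications of Fact \ref{comid} (there are only finitely many states and tape symbols, so a bounded number of equality tests suffices), I would determine the symbols $\code{\hat b_i}$ and $\code{\check c_j}$ currently scanned, and then read off from the fixed transition table of $\cal X$: the new state $a'$, the new work-tape symbol $b'$ to be written in cell $\#i$, and the two head-move directions. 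The case where $a$ is a move state needs slightly special handling, since then the $\pp$-prefixed string spelled by cells $\#0$ through $\#(i-1)$ of the work tape gets appended to the run tape content; but extracting that substring is exactly what Fact \ref{m6a} gives, converting it from $\hat{\ }$-decorated form to $\check{\ }$-decorated form is handled by a variant of Lemma \ref{m8d}, and concatenation is Lemma \ref{m6bb}. Reassembling the full successor code --- splicing in the new symbol at position $i$, shifting the two underlines to their new head positions, possibly extending the work tape by one blank cell if the head moved right past the old leftmost blank --- is then a bounded sequence of $[\cdot]_{\cdot}^{\cdot}$-extractions and $\circ$-concatenations, so $\ade y\mathbb{S}(x,y)$ follows by LC from these ingredients together with the $\pa$-provable fact that the number so constructed does indeed code the deterministic successor.

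I expect the main obstacle to be purely bookkeeping: carefully itemizing all the cases (head at the leftmost cell, head moving onto a blank cell, the machine in versus not in a move state, the move being empty, etc.) and verifying that in every case the reassembly is expressible via the size-bounded operations of Section \ref{s17}, so that the whole computation genuinely respects a polynomial size bound. Since the transition function of $\cal X$ is fixed and finite, there is no induction on $x$ needed here --- unlike the earlier lemmas in this appendix --- so the argument is ``only'' a finite combination of previously established provably-computable functions; the real work is making sure the encoding conventions of Section \ref{sA1} line up with the slicing operations, which is why, as the excerpt warns, the reliance on $\pa$ and on Section \ref{s17} will be heavy but almost entirely implicit.
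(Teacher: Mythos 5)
Your proposal is correct and follows essentially the same route as the paper's own proof: extract the state and head positions via Lemma \ref{m6b} and $[\cdot]_{\cdot}^{\cdot}$, consult the fixed transition table, and reassemble the successor code by a bounded sequence of slicings and $\circ$-concatenations, using Lemma \ref{m8d} (which is exactly, not merely a variant of, the needed $\hat{\ }$-to-$\check{\ }$ conversion) for the move-state case. The paper likewise notes that no induction on $x$ is needed and leaves the same bookkeeping details to the reader.
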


\begin{proof} Argue in $\arfour$ to justify $\mathbb{C}(x)\mli \ade y \mathbb{S}(x,y)$.  
Assume that $\mathbb{C}(x)$ is true, namely, that $x$ is (the code of) the configuration (\ref{m66}).

We find the state $a$ of   $x$, which is nothing but $[x]_{0}^{\mathfrak{K}}$.  Next, using clauses 1 and 2 of Lemma \ref{m6b}, we find  the locations $i$ and $j$ of the two scanning heads. Then, using these $i$ and $j$, we find the symbols $b$ and $c$ seen by the two heads.  This allows us to find (within $\pa$) the state $d$ of the deterministic successor $y$ of $x$, the symbol $e$ that will overwrite the old symbol $b$ on the work tape, and the directions in which the heads move. We now correspondingly  update $x$  in several steps. First of all, we change the state of $x$ to $d$. Technically, this is done by computing $\code{d}\circ [x]_{\mathfrak{K}}^{|x|\iminus\mathfrak{K}}$. In a similar fashion, details of which are left to the reader, we change the old underlined symbol of the work tape to $e$, and move the two underlines according to the directions in which the corresponding   heads move. In addition, if $a$ is a move state, we find the content of the work tape of the original configuration $x$ up to the location of the work-tape head, and append that content, $\pp$-prefixed and with each symbol $\hat{\mathfrak{s}}$ changed to $\check{\mathfrak{s}}$ using Lemma \ref{m8d}, to the content of the run tape of $x$. Finally, if the previously blank cell $\# m$ of the work tape is no longer blank, we insert a blank cell to the right of it in our representation of the configuration (again, technical details about how, exactly, all this can be done, are left as an easy exercise for the reader). The eventual value of $x$, after the above updates, will be exactly the sought value of the deterministic successor of the original $x$, that is, the value that we should choose for $y$ in the consequent of  $ \mathbb{C}(x)\mli \ade y \mathbb{S}(x,y)$.\vspace{-7pt} 
\end{proof}

\begin{lemma}\label{m4b}
$\arfour\vdash \mathbb{C}(z)\mli  \mathbb{A}'(z,|r|) \add \ade x \mathbb{B}(z,x)   $.
\end{lemma}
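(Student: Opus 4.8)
The plan is to prove Lemma \ref{m4b} by $\arfour$-Induction on $r$, closely following the template already used for Lemmas \ref{m6b}, \ref{m6p} and \ref{m4a}. The inductive statement I would carry through is essentially
\[
\mathbb{C}(z)\mli  \mathbb{A}'(z,|r|) \add \ade x\bigl(|x|\mleq \mathfrak{K}\mult (|z|\plus |r|)\mlc \mathbb{B}(z,x)\bigr),
\]
from which the target $\mathbb{C}(z)\mli \mathbb{A}'(z,|r|)\add\ade x\mathbb{B}(z,x)$ follows immediately by LC. Here the size-bound on $x$ is what makes the formula polynomially bounded so that the $\arfour$-Induction rule applies --- this is legitimate because each deterministic successor step increases the code length by only a bounded amount (at most $\mathfrak{K}$, to account for the blank cell possibly inserted on the work tape, plus a bounded amount to account for an appended move on the run tape, and this latter contribution is itself polynomially bounded in $|z|$ since a move made from configuration $z$ has size bounded by the work-tape content of $z$). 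For simplicity I would actually use the generous bound $\mathfrak{K}\mult(|z|\plus|r|)\plus$ (a fixed polynomial in $|z|$) and not worry about optimizing it.

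For the \emph{basis}, $r\equals 0$: if the state of $z$ is a move state, then $z$ itself is the earliest configuration after $z$ in which $\cal X$ moves, so I choose the right $\add$-disjunct and specify $x$ as $z$; $\mathbb{B}(z,z)$ is then true. If the state of $z$ is not a move state, then (trivially, since there are $0$ intermediate steps) $\mathbb{A}'(z,|0|)$, i.e. $\mathbb{A}'(z,0)$, holds, so I choose the left $\add$-disjunct. Deciding whether the state of $z$ --- which is $[z]_{0}^{\mathfrak{K}}$, computable by Fact \ref{m6a} --- is one of the finitely many move states is done by finitely many applications of Fact \ref{comid}.

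For the \emph{left inductive step}, with antecedent $\mathbb{C}(z)\mli\mathbb{A}'(z,|r|)\add\ade x(\ldots\mlc\mathbb{B}(z,x))$ and conclusion the same with $r\zero$ in place of $r$: assuming $\mathbb{C}(z)$, I first wait for Environment to pick a $\add$-disjunct in the antecedent. If the right disjunct is chosen, Environment further supplies a constant $a$ with $\mathbb{B}(z,a)$ true; then $a$ is already the earliest moving configuration at or after $z$, so I choose the right disjunct in the consequent and specify $x$ as $a$ --- $\mathbb{B}(z,a)$ is still exactly what is needed (the value of $r$ plays no role in $\mathbb{B}$), and the size bound is satisfied a fortiori. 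If the left disjunct $\mathbb{A}'(z,|r|)$ is chosen, I know that during the first $|r|$ steps after $z$, in the no-Environment-move scenario, $\cal X$ does not move. I then need to check step $|r\zero|$. I compute the $|r|$th deterministic successor $b$ of $z$ --- this requires iterating $\mathbb{S}$ (Lemma \ref{m4a}) a number of times equal to $|r|$, which is done by an auxiliary $\arfour$-Induction on $r$ producing a function $(z,r)\mapsto$ ``$|r|$th deterministic successor of $z$'', exactly as in the spirit of the proof of Lemma \ref{m8d}; I would state and prove this auxiliary fact first. Having $b$, I compute its deterministic successor $b'$ via Lemma \ref{m4a} and inspect the state of $b'$. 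If that state is not a move state, then $\mathbb{A}'(z,|r\zero|)$ holds and I choose the left disjunct in the consequent. If it is a move state, then $b'$ is the earliest moving configuration after $z$, so I choose the right disjunct and specify $x$ as $b'$, with $\mathbb{B}(z,b')$ true. (Strictly, $|r\zero|\equals|r|\plus 1$, so ``step $|r\zero|$'' is the deterministic successor of the $|r|$th deterministic successor, which is $b'$; the $\pa$-provable arithmetic of $|r\zero|$ is used implicitly here.) The \emph{right inductive step} is virtually identical, since $|r\one|\equals|r|\plus 1$ as well; the only cosmetic difference is that the case $z$-irrelevant distinctions drop out, just as in Lemmas \ref{m6b}--\ref{m6p}.

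I expect the main obstacle to be the bookkeeping around the \emph{size bound}: one must verify that $|x|\mleq \mathfrak{K}\mult(|z|\plus|r|)\plus(\text{poly in }|z|)$ is genuinely preserved under a single deterministic-successor step, so that the inductive formula is polynomially bounded in the technical sense required by the $\arfour$-Induction rule. This amounts to checking that a deterministic transition of $\cal X$ lengthens the code of a configuration by at most: $\mathfrak{K}$ symbols for a possibly-inserted blank work cell, plus (when the state is a move state) the length $\mathfrak{K}\mult(\text{number of symbols in the emitted move})$ of the $\check{}$-ified, $\pp$-prefixed move appended to the run tape --- and the number of symbols in that move is at most the number of work-tape cells to the left of the work head, which is itself bounded polynomially in $|z|$ because $\cal X$ runs in time $\chi$ and $\chi$ is a polynomial with $\chi(x)\mgeq x$. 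Since only $|r|$ steps occur before the one under scrutiny, summing these increments gives the claimed bound; formalizing this estimate inside $\pa$ is routine but must be done carefully, and is the one place where the complexity hypothesis on $\cal X$ is actually invoked. Everything else is a direct reprise of the already-established proof patterns of Section \ref{s17} and the preceding lemmas of this appendix.
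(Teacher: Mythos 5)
Your proposal is correct in substance, but it takes a genuinely different route from the paper at one key point. The paper also proceeds by $\arfour$-Induction on $r$, but it strengthens the invariant on the \emph{left} disjunct as well: it proves $\mathbb{C}(z)\mli \ade x\bigl(|x|\mleq |z|\plus |r| \mlc \mathbb{A}(z,x,|r|)\bigr) \add \ade x \bigl(|x|\mleq (|z|\plus |r|)\zero \mlc \mathbb{B}(z,x)\bigr)$, so that when Environment chooses the left disjunct in the antecedent of an inductive step, it must itself supply (a constant $c$ coding) the $|r|$th deterministic successor of $z$; the machine then needs only one application of Lemma \ref{m4a} to get its deterministic successor $d$, inspects $d$'s state, and chooses the appropriate disjunct in the consequent with $x:=d$. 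The target, whose left disjunct is the elementary $\mathbb{A}'(z,|r|)=\cle x\mathbb{A}(z,x,|r|)$, then follows by LC. You instead keep the weaker elementary $\mathbb{A}'(z,|r|)$ in the invariant and compensate by an auxiliary $\arfour$-Induction computing the map $(z,r)\mapsto$ ``$|r|$th deterministic successor of $z$''. That auxiliary lemma is indeed provable in the same style (its size bound is polynomial in $|z|,|r|$ and the step is one application of Lemma \ref{m4a}), so your argument goes through; the paper's choice of invariant simply buys a single induction with no iteration lemma, while yours derives the target in its literal shape at the cost of the extra lemma. Two minor remarks: your appeal to the time bound $\chi$ for the size bookkeeping is unnecessary (and would be shaky anyway, since $\mathbb{C}(z)$ does not require $z$ to be legitimate); the needed bound is purely combinatorial, because during the steps covered by $\mathbb{A}'$ no moves are appended and the work tape grows by at most one cell (i.e., $\mathfrak{K}$ bits) per step, with a single appended move of size at most roughly $|z|\plus\mathfrak{K}\mult(|r|\plus 1)$ at the end. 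And your exact constant $\mathfrak{K}\mult(|z|\plus|r|)$ would need the generous additive slack you already allow for, which is the same level of constant-factor informality as the paper's own bounds $|z|\plus|r|$ and $(|z|\plus|r|)\zero$.
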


\begin{proof} Argue in $\arfour$. By $\arfour$-Induction on $r$, we want to prove 
\[\mathbb{C}(z)\mli \ade x\bigl(|x|\mleq |z|\plus |r| \mlc \mathbb{A}(z,x,|r|)\bigr) \add \ade x \bigl(|x|\mleq (|z|\plus |r|)\zero \mlc  \mathbb{B}(z,x)\bigr),\] 
from which the target $\mathbb{C}(z)\mli  \mathbb{A}'(z,|r|) \add \ade x \mathbb{B}(z,x) $ easily follows by LC.

To solve the base   \(\mathbb{C}(z)\mli \ade x\bigl(|x|\mleq |z|\plus |0| \mlc \mathbb{A}(z,x,|0|)\bigr) \add \ade x\bigl(|x|\mleq (|z|\plus |0|)\zero\mlc  \mathbb{B}(z,x)\bigr)
\), we figure out whether the state of $z$ is a move state or not. If yes, we choose the right $\add$-disjunct; if not, we choose the left $\add$-disjunct. In either case, we further choose the value of $z$ for the variable $x$ and win.  

The left inductive step is 
\begin{equation}\label{m4c}
\begin{array}{l}
\Bigl(\mathbb{C}(z)\mli \ade x\bigl(|x|\mleq |z|\plus |r| \mlc \mathbb{A}(z,x,|r|)\bigr) \add \ade x\bigl(|x|\mleq (|z|\plus |r|)\zero\mlc  \mathbb{B}(z,x)\bigr)\Bigr)\ \mli \\
\Bigl(\mathbb{C}(z)\mli \ade x\bigl(|x|\mleq |z|\plus |r\zero| \mlc \mathbb{A}(z,x,|r\zero|)\bigr) \add \ade x\bigl(|x|\mleq (|z|\plus |r\zero|)\zero\mlc  \mathbb{B}(z,x)\bigr)\Bigr).
\end{array}
\end{equation}
If $r\equals 0$, (\ref{m4c}) is won by solving its consequent in the same ways as the basis case was solved. Suppose now $r\notequals 0$. To solve (\ref{m4c}), we wait till Environment selects one of  the two $\add$-disjuncts in the  antecedent. 

If the right $\add$-disjunct is selected, we wait further till a constant $c$ for $x$ is selected there. Then  we  select the right $\add$-disjunct in the consequent, and choose the same $c$ for $x$ in it.  

Suppose now the left $\add$-disjunct is selected in the antecedent of (\ref{m4c}). Wait further till a constant $c$ for $x$ is selected there. We may assume that $\mathbb{A}(z,c,|r|)$  is true, or else we win the game. Using Lemma \ref{m4a}, we find  the deterministic successor $d$ of the configuration $c$. With a little thought, one can see that the size of $d$ cannot exceed the sum of the sizes of $z$ and $r\zero$ more than twice, so that $|d|\mleq (|z|\plus |r\zero|)\zero$ holds. We figure out whether the state of $d$ is a move state or not. If not, we select the left $\add$-disjunct in the consequent of  (\ref{m4c}), otherwise,  select the right disjunct. In either case, we further choose $d$ for $x$  and win.   

The right inductive step is virtually the same, with the only difference that the case $z\equals 0$ does not require a special handling.\vspace{-7pt} 
\end{proof}

\subsection{Proof of clause (a) of Lemma \ref{m2a}}\label{sA2}
Assume the conditions of clause (a) of Lemma \ref{m2a}. 

First, let us consider the case where $H_i(y,\vec{s})$ is the result of replacing in $E(\vec{s})$ a surface occurrence of a subformula $F_0\adc F_1$ by $F_j$ ($j\equals 0$ or $j\equals 1$). 
Let $\oo\alpha$ be the labmove that brings $E(\vec{s})$ down to $H_i(y,\vec{s})$.  For instance, if $E(\vec{s})$ is $G\mli F_0\adc F_1$ and $H_i(y,\vec{s})$ is $G\mli F_0$, then $\oo\alpha$ is ``$\oo 1.0$''.  

Argue in $\arfour$ to justify $ E^{\circ}_{\circ} (z,\vec{s})  \mli \ade u  H_{i}^{\circ}(u,y,\vec{s})$. 
Assume $E^{\circ}_{\circ} (z,\vec{s})$. This implies that $z$ is a legitimate configuration with yield $E(\vec{s})$, and that the same holds for the deterministic successor $b$ of $z$, which we compute using Lemma \ref{m4a}. Then the result of appending $\oo\alpha$ to the run tape contents of (the configuration encoded by) $b$ is a legitimate configuration with yield $H_i(y,\vec{s})$. The code of such a configuration is $b\circ \code{\check{\oo}\check{\alpha}}$. We compute the value $c$ of the latter using Lemma \ref{m6bb}, and win $ E^{\circ}_{\circ} (z,\vec{s})  \mli \ade u H_{i}^{\circ}(u,y,\vec{s})$ by choosing $c$  for $u$.

Next, consider the  case where $H_i(y,\vec{s})$ is the result of replacing in $E(\vec{s})$ a surface occurrence of a subformula $\ada x F(x)$ by $F(y)$. Let $\alpha$ be the string such that, for any constant $c$, the labmove $\oo\alpha c$ brings $E(\vec{s})$ down to $H_i(c,\vec{s})$. For instance, if $E(\vec{s})$ is $G\mli \ada xF(x)\mld J$ and $H_i(y,\vec{s})$ is $G\mli F(y)\mld J$, then $\alpha$ is ``$1.0.$''; and if $E(\vec{s})$ is just $\ada xF(x)$, then $\alpha$ is the empty string.  

Argue in $\arfour$ to justify $ E^{\circ}_{\circ} (z,\vec{s})  \mli \ade u H_{i}^{\circ}(u,y,\vec{s})$. Let $c$ be the value satisfying $\mathbb{E}(y,c)$. We compute the latter using Lemma \ref{m8c}.     
Assume $E^{\circ}_{\circ} (z,\vec{s})$. This implies that $z$ is a legitimate configuration with yield $E(\vec{s})$. Then the same holds for the deterministic successor $b$ of $z$, which we compute using Lemma \ref{m4a}. Then the result of appending $\oo\alpha y$ to the run tape contents of (the configuration encoded by) $b$ is a legitimate configuration with yield $H(y,\vec{s})$. The code of such a configuration is $b\circ \code{\check{\oo}\check{\alpha}}\circ c$.   We compute the value $d$ of the latter using Lemma \ref{m6bb}, and win $ E^{\circ}_{\circ} (z,\vec{s})  \mli \ade u H_{i}^{\circ}(u,y,\vec{s})$ by choosing $d$  for $u$.

\subsection{Proof of clause (b) of Lemma \ref{m2a}}\label{sA3}
Assume the conditions of clause (b) of Lemma \ref{m2a}. 
In $\arfour$, we can solve  (\ref{m2e}) as follows. Assume  $E^{\circ}  (z,\vec{s})$, which,  of course, implies $\mathbb{C}(z)$. We compute the value  of the binary predecessor of $2^{\chi(|z|)}$, and use that value to specify $r$ in the resource of  
Lemma \ref{m4b}. As a result, we get the resource
$\mathbb{A}'\bigl(z,\chi(|z|)\bigr) \add \ade x  \mathbb{B}(z,x)$. This means that we will either know that $\mathbb{A}'\bigl(z,\chi(|z|)\bigr)$ is true, or find a constant $a$ for which we will know that $\mathbb{B}(z,a)$ is true. 

If $\mathbb{A}'\bigl(z,\chi(|z|)\bigr)$ is true,  then so is $E^{\circ}_{\circ}  (z,\vec{s})$ and, by choosing the latter, we win (\ref{m2e}). 

Now, for the rest of this proof, suppose $\mathbb{B}(z,a)$ is true. Using clause 3 of Lemma \ref{m6b}, we find the number $i$ with $\mathbb{I}(a,i)$. Then we  find (the code of) the move $\alpha$ that $\cal X$ made in $a$. Namely, $\code{\hat{\alpha}}\equals  [a]_{\mathfrak{K}}^{\mathfrak{K}\imult i}$. Using Lemma \ref{m4a}, we also find the deterministic successor $b$ of $a$. Fix these $\alpha$ and $b$. 

Let $\beta_1,\ldots,\beta_m$ be all legal moves in position $E(\vec{s})$ that signify a choice of one of the two $\add$-disjuncts in some surface subformula $F_0\add F_1$ of $E(\vec{s})$. Let $H_1(y,\vec{s}), \ldots, H_m(y,\vec{s})$ be the corresponding $(\oo,y)$-developments of $E(\vec{s})$.

Further, let $\beta_{m+1},\ldots,\beta_{n}$ be all strings such that, any move signifying a choice of a constant $c$ for $x$ in some surface subformula $\ada xF(x)$ of $E(\vec{s})$ looks like $\beta_i c$ for some $i\in\{m\plus 1,\ldots,n\}$. Let $H_{m+1}(y,\vec{s}), \ldots, H_n(y,\vec{s})$ be the corresponding $(\oo,y)$-developments of $E(\vec{s})$. 

To solve (\ref{m2e}), we need to solve its consequent, which now  can  be rewritten as follows:   
\begin{equation}\label{m2ee}
\begin{array}{r}   E^{\circ}_{\circ}  (z,\vec{s})\add \mathbb{L}\add \ade u\ade y H_{1}^{\circ}(u,y,\vec{s})\add\ldots\add\ade u \ade yH_{m}^{\circ}(u,y,\vec{s})  \add \\ \ade u\ade y H_{m+1}^{\circ}(u,y,\vec{s})\add\ldots\add\ade u \ade yH_{n}^{\circ}(u,y,\vec{s}). \end{array}\end{equation}

This is how we solve (\ref{m2ee}). First, for each $i\in\{1,\ldots,m\}$, we compare $\code{\hat{\alpha}}$ with $\code{\hat{\beta_i}}$. {\bf If} they turn out to be the same, {\bf then} we choose the disjunct $\ade u \ade yH_{i}^{\circ}(u,y,\vec{s})$ in (\ref{m2ee}), and specify $u$ and $y$ as $b$ and $0$ in it, respectively. Here our choice of $0$ for $y$ is arbitrary and has no effect on the game, as $H_{i}^{\circ}(u,y,\vec{s})$ does not contain the variable $y$, anyway.

{\bf Otherwise},  for each $i\in\{m\plus 1,\ldots,n\}$, we compare $[\code{\hat{\alpha}}]_{0}^{r_i}$ with $\code{\hat{\beta}_i}$, where $r_i$ is  the size of $\code{\hat{\beta}_i}$. {\bf If} they turn out to be the same, {\bf then},  we figure out the (code of the) ``rest'' $\gamma$ of the string $\alpha$. That is, $\gamma$ is the string such that $\alpha=\beta_i \gamma$. Employing Lemma \ref{m6p}, we either find a number $c$ with $\mathbb{D}(\code{\hat{\gamma}},c)$, or ($\add$) find out that such a $c$ does not exist ($\gneg \cle$). In the former case, we choose the disjunct $\ade u \ade y H_{i}^{\circ}(u,y,\vec{s})$ in (\ref{m2ee}) and specify $u$ and $y$ as $b$ and $c$ in it, respectively. In the latter case, 
$\alpha$ is an illegal move, so we choose $\mathbb{L}$, which is true because $\cal X$, having made an illegal move, loses. 

{\bf Otherwise}, $\alpha$ is simply an illegal move, so we (again) choose $\mathbb{L}$.

It is left to the reader to convince himself or herself that our strategy succeeds.

\end{document}